\providecommand{\tabularnewline}{\\}
\providecommand{\algorithmname}{Algorithm}
\definecolor{ejc}{RGB}{255,0,0}
\definecolor{yxc}{RGB}{0,0,200}
\newcommand{\mcr}{\mathsf{MCR}}
\begin{document}
\theoremstyle{plain}\newtheorem{lem}{\textbf{Lemma}}\newtheorem{theorem}{\textbf{Theorem}}\newtheorem{fact}{\textbf{Fact}}\newtheorem{corollary}{\textbf{Corollary}}\newtheorem{assumption}{\textbf{Assumption}}\newtheorem{example}{\textbf{Example}}\newtheorem{definition}{\textbf{Definition}}\newtheorem{prop}{\textbf{Proposition}}

\theoremstyle{definition}

\theoremstyle{remark}\newtheorem{remark}{\textbf{Remark}}

\date{September 2016; ~Revised November 2016}

\title{The Projected Power Method:  An Efficient Algorithm for \\ Joint Alignment from Pairwise Differences}

\author{Yuxin Chen \thanks{Department of Statistics, Stanford
    University, Stanford, CA 94305, U.S.A.} \thanks{Department of Electrical Engineering, Princeton
    University, Princeton, NJ 08544, U.S.A.}  \and Emmanuel J. Cand\`es
  \footnotemark[1] \thanks{Department of Mathematics, Stanford
    University, Stanford, CA 94305, U.S.A.} }

\maketitle

\begin{abstract}

  Various applications involve assigning discrete label values to a
  collection of objects based on some pairwise noisy data. Due to the
  discrete---and hence nonconvex---structure of the problem, computing the optimal assignment (e.g.~maximum likelihood assignment) 
  becomes intractable at first
  sight.  This paper makes progress towards efficient computation 
   by focusing on a concrete joint alignment problem---that is,
  the problem of recovering $n$ discrete variables
  $x_i \in \{1,\cdots, m\}$, $1\leq i\leq n$ given noisy observations
  of their modulo differences $\{x_i - x_j~\mathsf{mod}~m\}$.  We
  propose a low-complexity and model-free procedure, which operates in a lifted
  space by representing distinct label values in orthogonal
  directions, and which attempts to optimize quadratic functions over
  hypercubes.  Starting with a first guess computed via a spectral
  method, the algorithm successively refines the iterates via
  projected power iterations. We prove that for a broad class of statistical models, 
  the proposed projected
  power method makes no error---and hence converges to the maximum likelihood estimate---in a
  suitable regime.  Numerical experiments have been carried out on
  both synthetic and real data to demonstrate the practicality of our
  algorithm. We expect this algorithmic framework to be effective for
  a broad range of discrete assignment problems.

\end{abstract}

\section{Introduction\label{sec:Intro}}

\subsection{Nonconvex optimization\label{sec:Nonconvex}}

Nonconvex optimization permeates almost all fields of science and engineering applications. For instance, consider the structured recovery problem where one wishes to recover some structured inputs $\bm{x}=[x_i]_{1\leq i\leq n}$ from noisy samples $\bm{y}$. The recovery procedure often involves solving some optimization problem (e.g.~maximum likelihood estimation)
\begin{eqnarray}
 \text{maximize}_{\bm{z}\in \mathbb{R}^n} \quad \ell(\bm{z};\bm{y}) \qquad  \text{subject to} \quad {\bm{z}\in \mathcal{S}}, \label{eq:opt}
\end{eqnarray}
where the objective function $\ell(\bm{z};\bm{y})$ measures how well a candidate $\bm{z}$ fits the samples. Unfortunately, this program (\ref{eq:opt}) may be highly nonconvex, depending on the choices of the goodness-of-fit measure $\ell(\cdot)$ as well as the feasible set $\mathcal{S}$. 
In contrast to convex optimization that has become the cornerstone of modern algorithm design, nonconvex problems are in general daunting to solve. 
Part of the challenges arises from the existence of (possibly exponentially many) local stationary points; in fact, oftentimes even checking local optimality for a feasible point proves NP-hard. 

Despite the general intractability, recent years have seen progress on
nonconvex procedures for several classes of problems, including
low-rank matrix recovery
\cite{keshavan2010few,keshavan2010matrix,jain2013low,sun2015guaranteed,chen2015fast,zheng2015convergent,tu2015low,ma2017implicit,zhao2015nonconvex,park2016provable,yi2016fast},
phase retrieval
\cite{netrapalli2013phase,candes2015phase,shechtman2014gespar,chen2015solving,cai2015optimal,sun2016geometric,chi2016kaczmarz,zhang2016provable,ma2017implicit,wang2016solving,zhang2016reshaped},
dictionary learning \cite{sun2015complete,sun2015nonconvex}, blind
deconvolution \cite{li2016rapid,ma2017implicit}, empirical risk minimization
\cite{mei2016landscape}, to name just a few.  For example, we have
learned that several problems of this kind provably enjoy benign
geometric structure when the sample complexity is sufficiently large,
in the sense that all local stationary points (except for the global
optimum) become saddle points and are not difficult to escape
\cite{sun2015nonconvex,sun2015complete,ge2016matrix,bhojanapalli2016global,lee2016gradient}. For
the problem of solving certain random systems of quadratic equations,
this phenomenon arises as long as the number of equations or sample
size exceeds the order of $n\log^3n$, with $n$ denoting the number of unknowns
\cite{sun2016geometric}.\footnote{This geometric property alone is not
  sufficient to ensure rapid convergence of an algorithm.} We have
also learned that it is possible to minimize certain non-convex random
functionals---closely associated with the famous phase retrieval
problem---even when there may be multiple local minima
\cite{candes2015phase,chen2015solving}. In such problems, one can find
a reasonably large basin of attraction around the global solution, in
which a first-order method converges geometrically fast. More
importantly, the existence of such a basin is often guaranteed even in
the most challenging regime with minimal sample complexity. Take the phase retrieval problem as an
example:  this basin exists as soon as the sample size is about the
order of $n$ \cite{chen2015solving}.  This motivates the development
of an efficient two-stage paradigm that consists of a
carefully-designed initialization scheme to enter the basin, followed
by an iterative refinement procedure that is expected to converge
within a logarithmic number of iterations
\cite{candes2015phase,chen2015solving}; see also
\cite{keshavan2010few,keshavan2010matrix} for related ideas in matrix completion.

In the present work, we extend the knowledge of nonconvex optimization
by studying a class of assignment problems in which each $x_i$ is
represented on a finite alphabet, as detailed in the next subsection.
Unlike the aforementioned problems like phase retrieval which are
inherently continuous in nature, in this work we are preoccupied with
an input space that is discrete and already nonconvex to start
with. We would like to contribute to understanding what is possible to
solve in this setting.

\subsection{A joint alignment problem\label{sec:Setup}}

This paper primarily focuses on the following joint discrete alignment problem. 
Consider a collection of $n$ variables $\left\{ x_{i}\right\} _{1\leq i\leq n}$,
where each variable can take $m$ different possible values, namely,
$x_{i}\in[m]:=\left\{ 1,\cdots,m\right\} $. Imagine we obtain a set of
pairwise difference samples $\left\{ y_{i,j}\mid(i,j)\in\Omega\right\} $
over some symmetric\footnote{We say $\Omega$ is symmetric if  $(i,j)\in \Omega$ implies $(j,i)\in \Omega$ for any $i$ and $j$.} index set $\Omega\subseteq\left[n\right]\times\left[n\right]$,
where $y_{i,j}$ is a noisy measurement of the modulo difference of the incident variables
\begin{equation}
  y_{i,j} ~\leftarrow~ x_{i}-x_{j}~\mathsf{mod}~m, \qquad (i,j) \in \Omega.
\end{equation}
For example, one might obtain a set of data $\{y_{i,j}\}$ where only 50\% of them are consistent with the truth $x_i - x_j~\mathsf{mod}~m$.  
The goal is to simultaneously recover all
$\left\{ x_{i}\right\} $ based on the measurements
$\left\{ y_{i,j}\right\} $, up to some unrecoverable global offset.\footnote{Specifically, it is impossible to distinguish
  the $m$ sets of inputs $\left\{ x_{i}\right\} _{1\leq i\leq n}$,
  $\left\{ x_{i}-1\right\} _{1\leq i\leq n}$, $\cdots$, and
  $\left\{ x_{i}-m+1\right\} _{1\leq i\leq n}$ even if we obtain perfect measurements of all pairwise
  differences $\{x_i-x_j~\mathsf{mod}~m: 1\leq i,j \leq n\}$. }

To tackle this problem, one is often led to the following program
\begin{eqnarray}
  \text{maximize}_{\left\{ z_{i}\right\} } &  & \sum_{ (i,j) \in \Omega} \ell\left(z_{i},z_{j};\text{ }y_{i,j}\right)  \label{eq:MLE}\\
  \text{subject to}\text{ } &  & z_{i}\in\left\{ 1,\cdots,m\right\} ,
  \quad i=1,\cdots,n, \nonumber 
\end{eqnarray}
where $\ell\left(z_{i},z_{j};\text{ }y_{i,j}\right)$ is some function
that evaluates how consistent the observed sample $y_{i,j}$
corresponds to the candidate solution $(z_{i},z_{j})$. For instance,
one possibility for $\ell$ may be
\begin{eqnarray}
  \ell\left(z_{i},z_{j};\text{ }y_{i,j}\right) =  \begin{cases}
  1,\qquad & \text{if }z_{i}-z_{j}=y_{i,j}~\mathsf{mod}~m,\\
  0, & \text{else},
  \end{cases}
  \label{eq:LL-consistency}
\end{eqnarray}
under which  the program (\ref{eq:MLE}) seeks a solution that maximizes the agreement between the paiwise observations and the recovery.   
Throughout the rest of the paper, we set $\ell(z_{i},z_{j};\text{ }y_{i,j})\equiv0$
whenever $(i,j)\notin\Omega$.

This joint alignment problem finds applications in multiple domains.  To begin with, the binary case (i.e.~$m=2$) deserves special attention, as it reduces to a graph partitioning problem. For instance, in a community detection scenario in which one wishes to partition all users into two clusters,  the variables $\{x_i\}$ to recover indicate the cluster assignments for each user, while $y_{i,j}$ represents the friendship between two users $i$ and $j$ (e.g.~\cite{karrer2011stochastic,coja2010graph,abbe2014exact,chaudhuri2012spectral,mossel2014consistency,jalali2011clustering,oymak2011finding,chen2012clustering,hajek2015achieving,chen2016community,javanmard2016phase,chin2015stochastic,globerson2015hard}). This allows to model, for example, the haplotype phasing problem arising in computational genomics \cite{si2014haplotype,chen2016community}. 
Another example is the problem of water-fat separation in magnetic resonance imaging (more precisely, in Dixon imaging). A crucial step is to determine, at each image pixel $i$, the phasor (associated with the field inhomogeneity) out of two possible candidates, represented by $x_i=1$ and $x_i=2$, respectively. The task takes as input some pre-computed pairwise cost functions $-\ell(x_i,x_j)$, which provides information about whether $x_i=x_j$ or not at pixels $i$ and $j$; see \cite{zhang2016resolving,hernando2010robust,berglund2011two} for details.

Moving beyond the binary case, this problem is motivated by the need
of jointly aligning multiple images/shapes/pictures that arises in
various fields. Imagine a sequence of images of the same physical
instance (e.g.~a building, a molecule), where each $x_i$ represents
the orientation of the camera when taking the $i$th image. A variety
of computer vision tasks (e.g.~3D reconstruction from multiple scenes)
or structural biology applications (e.g.~cryo-electron microscopy)
rely upon joint alignment of these images; or equivalently, joint
recovery of the camera orientations associated with each image.
Practically, it is often easier to estimate the relative camera
orientation between a pair of images using raw features
\cite{huang2013fine,wang2013exact,bandeira2014multireference}. The
problem then boils down to this: how to jointly aggregate such
pairwise information in order to improve the collection of camera pose
estimates?

\subsection{Our contributions}

In this work, we propose to solve the problem (\ref{eq:MLE}) via a novel model-free nonconvex procedure. Informally, the procedure starts by lifting each variable $x_i\in [m]$ to higher dimensions such that distinct values are represented in orthogonal directions, and then encodes the goodness-of-fit measure $\ell(x_i, x_j; y_{i,j})$ for each $(i,j)$ by an $m\times m$ matrix. This way of representation allows to recast (\ref{eq:MLE}) as a constrained quadratic program or, equivalently, a constrained principal component analysis (PCA) problem.  We then attempt optimization by means of projected power iterations, following an initial guess obtained via suitable low-rank factorization. 
This procedure proves effective for a broad family of statistical models, and might be interesting for many other Boolean assignment problems beyond joint alignment.

\section{Algorithm: projected power method}

In this section, we present a nonconvex procedure to solve the nonconvex problem (\ref{eq:MLE}), which entails a series of projected power iterations over a higher-dimensional space. In what follows, this algorithm will be termed a {\em projected power method} (PPM).

\subsection{Matrix representation\label{sec:matrix-rep}}

The formulation (\ref{eq:MLE}) admits an alternative matrix representation that is often more amenable to computation. To begin
with, each state $z_{i}\in\left\{ 1,\cdots,m\right\} $ can be represented
by a binary-valued vector $\bm{z}_{i}\in\left\{ 0,1\right\} ^{m}$
such that
\begin{equation}
  z_{i}=j \quad \Longleftrightarrow \quad 
  \bm{z}_{i} = \bm{e}_{j} \in \mathbb{R}^{m},
  \label{eq:state-vector}
\end{equation}
where $\bm{e}_{1}=\left[\begin{array}{c}
1\\
0\\
\vdots\\
0
\end{array}\right]$, $\bm{e}_{2}=\left[\begin{array}{c}
0\\
1\\
\vdots\\
0
\end{array}\right]$, $\cdots$, $\bm{e}_{m}=\left[\begin{array}{c}
0\\
0\\
\vdots\\
1
\end{array}\right]$ 
are the canonical basis vectors. In addition, for each pair $(i,j)$,
one can introduce an input matrix $\bm{L}_{i,j}\in\mathbb{R}^{m\times m}$
to encode $\ell(z_{i},z_{j};y_{i,j})$ given all possible input combinations
of $\left(z_{i},z_{j}\right)$; that is,
\begin{eqnarray}
  \left(\bm{L}_{i,j}\right)_{\alpha,\beta}  :=  \ell\left(z_{i}=\alpha,z_{j}=\beta;
  ~y_{i,j}\right),\qquad1\leq\alpha,\beta\leq m.
  \label{eq:Input-Matrix}
\end{eqnarray}
Take the choice (\ref{eq:LL-consistency}) of $\ell$ for example:  
\begin{equation}
  \left(\bm{L}_{i,j}\right)_{\alpha,\beta} = \begin{cases}
  1,\quad & \text{if }\alpha-\beta=y_{i,j} ~\mathsf{mod}~m,\\
  0, & \text{else},
  \end{cases}
  \quad\forall(i,j)\in\Omega;
  \label{eq:L-random-corruption}
\end{equation}
in words, $\bm{L}_{i,j}$ is a cyclic permutation matrix obtained by
circularly shifting the identity matrix $\bm{I}_{m} \in \mathbb{R}^{m\times m}$ by $y_{i,j}$
positions.
By convention, we take $\bm{L}_{i,i}\equiv\bm{0}$ for all $1\leq i\leq n$ and $\bm{L}_{i,j}\equiv\bm{0}$
for all $(i,j)\notin\Omega$. 

The preceding notation enables the quadratic form representation
\[
  \ell\left(z_{i},z_{j};y_{i,j}\right)= \bm{z}_{i}^{\top}\bm{L}_{i,j} \bm{z}_{j}.
\]
For notational simplicity, we stack all the $\bm{z}_{i}$'s and the
$\bm{L}_{i,j}$'s into a concatenated vector and matrix
\begin{equation}
  \bm{z}=\left[  \begin{array}{c}
  \bm{z}_{1}\\
  \vdots\\
  \bm{z}_{n}
  \end{array} \right] \in \mathbb{R}^{nm} \quad\text{and}\quad \bm{L} = \left[\begin{array}{ccc}
  \bm{L}_{1,1} & \cdots & \bm{L}_{1,n}\\
  \vdots & \ddots & \vdots\\
  \bm{L}_{n,1} & \cdots & \bm{L}_{n,n}
  \end{array}  \right]
  \in \mathbb{R}^{nm \times nm}
  \label{eq:state-vector-matrix}
\end{equation}
respectively, representing the states and log-likelihoods altogether.
As a consequence, our problem can be succinctly recast as a 
constrained quadratic program: 
\begin{eqnarray}
  \text{maximize}_{\bm{z}} &  & \bm{z}^{\top} \bm{L} \bm{z}     \label{eq:MLE-matrix}\\
  \text{subject to} &  & \bm{z}_{i}\in\left\{ \bm{e}_{1},\cdots,\bm{e}_{m}\right\} ,\quad i=1,\cdots,n.\nonumber 
\end{eqnarray}
This representation is appealing due to the simplicity of the objective function regardless of the landscape of $\ell(\cdot,\cdot)$, which allows one to focus on quadratic optimization rather than optimizing the (possibly complicated) function $\ell(\cdot,\cdot)$ directly.

There are other families of $\bm{L}$ that also lead to the problem (\ref{eq:MLE}).
We single out a simple, yet, important family obtained by enforcing
global scaling and offset of $\bm{L}$. Specifically, the solution
to (\ref{eq:MLE-matrix}) remains unchanged
if each $\bm{L}_{i,j}$ is replaced by\footnote{This is because
  $ \bm{z}_i^{\top} (a \bm{L}_{i,j} + b\bm{1} \bm{1}^{\top}) \bm{z}_j   = a \bm{z}_i^{\top} \bm{L}_{i,j}  \bm{z}_j + b$ given that $\bm{1}^\top \bm{z}_i = \bm{1}^\top \bm{z}_j = 1$.} 
\begin{equation}
  \bm{L}_{i,j}  \leftarrow a\bm{L}_{i,j}+b\cdot\bm{1}\bm{1}^{\top},  \qquad (i,j)\in\Omega
\end{equation}
for some numerical values $a>0$ and $b\in\mathbb{R}$. 
Another important instance in this family is the debiased
version of $\bm{L}$---denoted by $\bm{L}^{\mathrm{debias}}$---defined
as follows
\begin{eqnarray}
  \bm{L}_{i,j}^{\mathrm{debias}}  =  \bm{L}_{i,j}-\frac{\bm{1}^{\top}\bm{L}_{i,j}\bm{1}}{m^{2}}\cdot\bm{1}\bm{1}^{\top},\quad1\leq i,j\leq n,
  \label{eq:debiased-MLE}
\end{eqnarray}
which essentially removes the empirical average of ${\bm{L}_{i,j}}$
in each block.

\subsection{Algorithm \label{sec:Algorithm}}

One can interpret the quadratic program (\ref{eq:MLE-matrix})
as finding the principal component of $\bm{L}$ subject to certain
structural constraints. This motivates us to tackle this constrained
PCA problem by means of a power method, with the assistance of appropriate
regularization to enforce the structural constraints. More precisely,
we consider the following procedure, which starts from a suitable
initialization $\bm{z}^{(0)}$ and follows the update rule 
\begin{equation}
  \bm{z}^{(t+1)}=\mathcal{P}_{\Delta^{n}}\left(\mu_{t}\bm{L}\bm{z}^{(t)}\right),\quad \forall t\geq 0
  \label{eq:iterative-step}
\end{equation}
for some scaling parameter $\mu_{t}\in \mathbb{R}_+ \cup \{ \infty \}$. Here, $\mathcal{P}_{\Delta^{n}}$
represents block-wise projection onto the standard simplex, namely,
for any vector $\bm{z}=\left[ \bm{z}_i \right]_{1\leq i\leq n}\in\mathbb{R}^{nm}$,
\begin{equation}
  \mathcal{P}_{\Delta^{n}}\left(\boldsymbol{z}\right) := \left[ \begin{array}{c}
  \mathcal{P}_{\Delta}\left(\bm{z}_{1}\right)\\
  \vdots\\
  \mathcal{P}_{\Delta}\left(\bm{z}_{n}\right)
  \end{array} \right],
  \label{eq:block-wise-proj}
\end{equation}
where $\mathcal{P}_{\Delta}\left(\bm{z}_{i}\right)$ is the projection
of $\bm{z}_{i} \in \mathbb{R}^m$ onto the standard simplex 
\begin{equation}
  \Delta := \left\{ \bm{s}\in\mathbb{R}^{m}\mid\bm{1}^{\top}\bm{s}=1;\text{ }\bm{s}\text{ is non-negative}\right\} .\label{eq:simplex}
\end{equation}
In particular, when $\mu_{t}=\infty$, $\mathcal{P}_{\Delta}\left(\mu_{t}\bm{z}_{i}\right)$
reduces to a \emph{rounding procedure}. Specifically, if the largest entry of $\bm{z}_{i}$ is strictly larger than its second largest entry, then one has
\begin{equation}
	\lim_{\mu_t \rightarrow \infty}\mathcal{P}_{\Delta}\left(\mu_{t}\bm{z}_{i}\right) = \bm{e}_j
\end{equation}
with $j$ denoting the index of the largest entry of $\bm{z}_{i}$; see Fact \ref{fact:Proj-separation} for a justification.   

The key advantage of the PPM is its computational
efficiency: the most expensive step in each iteration lies in matrix
multiplication, which can be completed in nearly linear time,
 i.e.~in time  $O\left(|\Omega|m\log m\right)$.\footnote{Here and throughout, the standard notion $f(n)=O\left(g(n)\right)$
or $f(n)\lesssim g(n)$ mean there exists a constant $c>0$ such that
$|f(n)|\leq cg(n)$; $f(n)=o\left(g(n)\right)$ means $\underset{n\rightarrow\infty}{\lim}f(n)/g(n)=0$;
$f(n)\gtrsim g(n)$ means there exists a constant $c>0$ such that $|f(n)|\geq cg(n)$;
and $f(n)\asymp g(n)$ means there exist constants $c_{1}, c_{2}>0$
such that $c_{1}g(n)\leq |f(n)| \leq c_{2}g(n)$.} This arises
  from the fact that each block $\bm{L}_{i,j}$ is circulant, so  we
  can compute a matrix-vector product using at most two $m$-point
  FFTs. The projection step $\mathcal{P}_{\Delta}$ can be performed in $O\left(m\log m\right)$ flops via a
sorting-based algorithm (e.g.~\cite[Figure 1]{duchi2008efficient}), and hence $\mathcal{P}_{\Delta^n}(\cdot)$ is much cheaper than the matrix-vector multiplication $\bm{L} \bm{z}^{(t)}$ given that $|\Omega| \gg n$ occurs in most applications.

One important step towards guaranteeing rapid convergence is to identify
a decent initial guess $\bm{z}^{(0)}$. This is accomplished by low-rank factorization as follows
\begin{enumerate}
\item Compute the best rank-$m$ approximation of the input matrix $\bm{L}$,
namely,
\begin{equation}
  \hat{\bm{L}}:=\arg\min_{\bm{M}:\text{ rank} (\bm{M}) \leq m}  \Vert \bm{M}-\bm{L} \Vert _{\mathrm{F}},
  \label{eq:defn-Lm}
\end{equation}
where $\|\cdot\|$ represents the Frobenius norm;

\item Pick a random column $\hat{\bm{z}}$ of $\hat{\bm{L}}$ and set the
initial guess as $\bm{z}^{(0)}=\mathcal{P}_{\Delta^{n}}\left(\mu_{0}\hat{\bm{z}}\right)$. 

\end{enumerate}
\begin{remark}Alternatively, one can take $\hat{\bm{L}}$ to be the
best rank-$(m-1)$ approximation of the debiased input matrix $\bm{L}^{\mathrm{debias}}$
defined in (\ref{eq:debiased-MLE}), which can be computed in a slightly faster manner.
\end{remark}

\begin{remark}
A natural question arises as to whether the algorithm works with an arbitrary initial point. This question has been studied by \cite{bandeira2016low} for the more special stochastic block models, which shows that under some (suboptimal) conditions, all second-order critical points correspond to the truth and hence an arbitrary initialization works.  However, the condition presented therein is much more stringent than the optimal threshold  \cite{abbe2014exact,mossel2014consistency}. Moreover, it is unclear whether a local algorithm like the PPM can achieve optimal computation time without proper initialization. All of this would be interesting for future investigation. 
\end{remark}

The main motivation comes from the (approximate) low-rank structure
of the input matrix $\bm{L}$. As we shall shortly see, in many scenarios the data matrix 
is approximately of rank $m$ if the samples are noise-free. Therefore, a low-rank approximation
of $\bm{L}$ serves as a denoised version of the data, which is expected to reveal much information about the truth. 

 The low-rank factorization step  can be performed efficiently via the method of orthogonal iteration
(also called block power method) \cite[Section 7.3.2]{golub2012matrix}. Each power iteration consists of a matrix product of the form $\bm{L}\bm{U}$ as well as a QR decomposition of some matrix $\bm{V}$,  where $\bm{U}, \bm{V} \in \mathbb{R}^{nm\times m}$. The matrix product can be computed in $O( |\Omega| m^2 \log m )$ flops with the assistance of $m$-point FFTs, whereas the QR decomposition takes time $O(nm^3)$. In summary, each power iteration runs in time  $O( |\Omega| m^2 \log m + nm^3 )$. Consequently, the matrix product constitutes the main computational cost when $m \lesssim (|\Omega| \log m) / n$, while the QR decomposition becomes the  bottleneck when $m \gg |\Omega| \log m / n$.

It is noteworthy that both the initialization and the 
refinement we propose are model-free, which do not make any
assumptions on the data model. The whole algorithm is summarized in
Algorithm \ref{alg:PP}. There is of course the question of what
sequence $\{\mu_t\}$ to use, which we defer to Section
\ref{sec:theory}.

\begin{algorithm*}[t]
\caption{ Projected power method.\label{alg:PP}}
\label{Algorithm:ProjectedPower}
\begin{tabular}{>{\raggedright}p{1\textwidth}}
\textbf{Input}: 
the input matrix $\bm{L}=\left[\bm{L}_{i,j}\right]_{1\leq i,j\leq n}$; the
scaling factors $\left\{ \mu_{t}\right\} _{t\geq0}$. \vspace{0.7em}
\tabularnewline
\textbf{Initialize} $\boldsymbol{z}^{(0)}$ to be $\mathcal{P}_{\Delta^{n}}\left(\mu_{0}\hat{\bm{z}}\right)$
as defined in (\ref{eq:block-wise-proj}), where $\hat{\bm{z}}$ is
a random column of the best rank-$m$ approximation $\hat{\bm{L}}$
of $\bm{L}$. \vspace{0.3em}
\tabularnewline
\textbf{Loop: for $t=0,1,\cdots,T-1$ do} 
\begin{equation}
  \bm{z}^{(t+1)}=\mathcal{P}_{\Delta^{n}} \big( \mu_{t}\bm{L}\bm{z}^{(t)} \big),
  \label{eq:iterative-step-PP}
\end{equation}
where $\mathcal{P}_{\Delta^{n}}\left(\cdot\right)$ is as defined in
(\ref{eq:block-wise-proj}).
\tabularnewline
\textbf{Output }$\{ \hat{x}_{i} \} _{1\leq i\leq n}$, where $\hat{x}_{i}$
is the index of the largest entry of the block $\boldsymbol{z}_{i}^{(T)}$.
\tabularnewline
\end{tabular}
\end{algorithm*}

The proposed two-step algorithm, which is based on proper initialization followed by successive projection onto product of simplices, is  a new paradigm for solving a class of discrete optimization problems. As we will detail in the next section, it is provably effective for a family of statistical models. On the other hand, we 
remark that there exist many algorithms of a similar flavor to tackle other generalized eigenproblems, including but not limited to sparse PCA \cite{journee2010generalized,hein2010inverse,yuan2013truncated}, water-fat separation \cite{zhang2016resolving}, the hidden clique problem \cite{deshpande2015finding}, phase synchronization \cite{boumal2016nonconvex,liu2016estimation}, cone-constrained PCA \cite{deshpande2014cone}, and automatic network analysis \cite{wang2012automated}. These algorithms are variants of the projected power method, which 
 combine proper power iterations with additional procedures to promote sparsity or enforce other feasibility constraints. For instance, Deshpande et al.~\cite{deshpande2014cone} show that under some simple models, cone-constrained PCA can be efficiently computed using a generalized projected power method, provided that the cone constraint is convex.  The current work adds a new instance to this growing family of nonconvex methods.    

\section{Statistical models and main results}
\label{sec:theory}

This section explores the performance guarantees of the projected power method. 
We assume that $\Omega$ is obtained via random sampling at an observation
rate $p_{\mathrm{obs}}$ so that each $(i,j), ~i > j$ is included in
$\Omega$ independently with probability $p_{\mathrm{obs}}$, and
$\Omega$ is assumed to be independent of the measurement noise. In addition, we assume that the samples $\{y_{i,j} \mid i>j\}$ are independently generated. While the
independence noise assumption may not hold in reality, 
it serves as a starting point for us to develop a quantitative theoretical understanding for the effectiveness of the projected power method.
 This is also a common assumption in
  the literature (e.g.~\cite{singer2011angular,wang2013exact,
    huang2013consistent, chen2014near, 
    liu2016estimation,boumal2016nonconvex,PachauriKS13}).

With the above assumptions in mind, the MLE is exactly given by (\ref{eq:MLE}), with  $\ell\left(z_{i},z_{j};\text{ }y_{i,j}\right)$ representing the
log-likelihood (or some other equivalent function) of the candidate
solution $(z_{i},z_{j})$ given the outcome $y_{i,j}$.  Our key finding is that the PPM is not only  much more practical than computing the MLE directly\footnote{Finding the MLE here is an NP hard problem, and in general cannot be solved within polynomial time. Practically, one might attempt to compute it via convex relaxation (e.g.~\cite{huang2013consistent,bandeira2014multireference,chen2014near}), which is much more expensive than the PPM. }, but also  capable of achieving nearly identical statistical accuracy as the
MLE in a variety of scenarios.

Before proceeding to our results, we find it convenient to introduce a block sparsity
metric. Specifically, the {\em block sparsity} of a vector $\bm{h}=\left\{ \bm{h}_{i}\right\} _{1\leq i\leq n}$
is defined and denoted by
\[
  \|\bm{h}\|_{*,0}:=\sum _{i=1}^{n}\mathbb{I}\left\{ \bm{h}_{i} \neq 0\right\} ,
\]
where $\mathbb{I}\left\{ \cdot\right\} $ is the indicator function.
Since one can only hope to recover $\bm{x}$ up to some global offset,
we define the {\em misclassification rate} as the normalized block sparsity of the estimation error modulo the global shift
%
\begin{equation}
  \mcr(\hat{\bm{x}},\bm{x}) :=  \frac{1}{n} \min_{0\leq l<m}\left\Vert \hat{\bm{x}}-\mathsf{shift}_{l} (\bm{x}) \right\Vert _{*,0}.
   \label{eq:mcr}
\end{equation}
Here, $\mathsf{shift}_{l}\left(\bm{x}\right):=\left[\mathsf{shift}_{l}(\bm{x}_{i})\right]_{1\leq i\leq n}\in\mathbb{R}^{mn}$,
where $\mathsf{shift}_{l}(\bm{x}_{i})\in\mathbb{R}^{m}$ is obtained
by circularly shifting the entries of $\bm{x}_{i}\in\mathbb{R}^{m}$
by $l$ positions. Additionally, we let
$\log\left(\cdot\right)$ represent the natural logarithm throughout this paper.

\subsection{Random corruption model}

While our goal is to accommodate a general class of noise models, it is helpful to start with a concrete and simple example---termed a \emph{random
  corruption model}---such that
\begin{equation}
  y_{i,j}=
  \begin{cases}
    x_{i}-x_{j}\text{ }\mathsf{mod}\text{ }m,\qquad & \text{with probability \ensuremath{\pi_{0}}},\\
    \mathsf{Unif}\left(m\right), & \text{else},
  \end{cases}
  \qquad(i,j)\in\Omega,
  \label{eq:random-outlier}
\end{equation}
with $\mathsf{Unif}\left(m\right)$ being the uniform distribution over
$\left\{ 0,1,\cdots,m-1\right\} $. We will term the parameter $\pi_{0}$ 
the \emph{non-corruption rate}, since with probability $1-\pi_{0}$ the
observation behaves like a random noise carrying no information
whatsoever. Under this single-parameter model,  one can
write
\begin{eqnarray}
  \ell\left(z_{i},z_{j};\text{ }y_{i,j}\right) =  \begin{cases}
  \log\left(\pi_{0}+\frac{1-\pi_{0}}{m}\right),\qquad & \text{if }z_{i}-z_{j}=y_{i,j}~\mathsf{mod}~m,\\
  \log\left(\frac{1-\pi_{0}}{m}\right), & \text{else},
  \end{cases}
  \qquad   (i,j) \in \Omega.
  \label{eq:LL-random-noise}
\end{eqnarray}
Apart from its mathematical simplicity, the random corruption model somehow corresponds to the worst-case situation
since the uniform noise enjoys the highest entropy among all distributions
over a fixed range, thus forming a reasonable benchmark for practitioners.

 
Additionally, while Algorithm \ref{alg:PP}
can certainly be implemented using the formulation
\begin{equation}
  \left(\bm{L}_{i,j}\right)_{\alpha,\beta} = \begin{cases}
  \log\big( \pi_0 + \frac{1-\pi_0}{m} \big),\quad & \text{if }\alpha-\beta=y_{i,j} ~\mathsf{mod}~m,\\
  \log\big( \frac{1-\pi_0}{m} \big), & \text{else},
  \end{cases}
  \qquad  (i,j)\in\Omega,
  \label{eq:L-random-corruption-LL}
\end{equation}
we recommend taking (\ref{eq:L-random-corruption}) as the input matrix
in this case. It is easy to verify that (\ref{eq:L-random-corruption-LL}) and (\ref{eq:L-random-corruption}) are equivalent up to global scaling and offset, but  (\ref{eq:L-random-corruption}) is parameter free
and hence practically more appealing. 

We show that the PPM is guaranteed
to work even when the non-corruption rate $\pi_{0}$ is vanishingly
small, which corresponds to the scenario where almost all acquired
measurements behave like random noise. A formal statement is this:

\begin{theorem}\label{thm:ExactRecovery-outlier}
Consider the random corruption model (\ref{eq:random-outlier}) and the input matrix $\bm{L}$
given in (\ref{eq:L-random-corruption}). Fix $m>0$, and suppose
$p_{\mathrm{obs}}>c_{1}\log n/n$ and $\mu_{t}>c_{2}/\sigma_{2}\left(\bm{L}\right)$
for some sufficiently large constants $c_{1},c_{2}>0$. Then there
exists some absolute constant $0<\rho<1$ such that with probability
approaching one as $n$ scales, the iterates of Algorithm \ref{alg:PP} obey 
\begin{equation}
  \mcr(\bm{z}^{(t)},\bm{x}) \leq  0.49\rho^{t},\qquad\forall t\geq0,
  \label{eq:error-contraction-main}
\end{equation}
provided that the non-corruption rate $\pi_{0}$ exceeds\footnote{Theorem \ref{thm:ExactRecovery-outlier} continues to hold if we replace
1.01 with any other constant in (1,$\infty$).}
\begin{equation}
  \pi_{0}>2\sqrt{\frac{1.01\log n}{mnp_{\mathrm{obs}}}}.
  \label{eq:IT-sufficient-dense}
\end{equation}
\end{theorem}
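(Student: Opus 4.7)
The plan is to establish two ingredients: (i) the spectral initialization places $\bm{z}^{(0)}$ within a basin around the truth whose misclassification rate is at most a small absolute constant, and (ii) each projected power iteration contracts that rate by a universal factor $\rho<1$; a straightforward induction then yields (\ref{eq:error-contraction-main}).

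To set up both steps, I would decompose $\bm{L}=\bm{L}^{\star}+\bm{E}$, where $\bm{L}^{\star}:=\mathbb{E}[\bm{L}\mid\bm{x}]$ is the expected input matrix and $\bm{E}$ is a zero-mean perturbation. Under (\ref{eq:L-random-corruption}) and the random corruption model, each block satisfies $\bm{L}^{\star}_{i,j}=p_{\mathrm{obs}}\pi_{0}\bm{P}_{x_{i}-x_{j}}+p_{\mathrm{obs}}\tfrac{1-\pi_{0}}{m}\bm{1}\bm{1}^{\top}$ with $\bm{P}_{\alpha}$ the cyclic shift by $\alpha$, so after stripping the rank-one bias (annihilated by the simplex constraint $\bm{1}^{\top}\bm{z}_{i}=1$) the signal factorizes as $p_{\mathrm{obs}}\pi_{0}\bm{U}\bm{U}^{\top}$ with $\bm{U}$ the tall block matrix whose $i$-th block is $\bm{P}_{x_{i}}$, giving a rank-$m$ signal with spectrum $\asymp np_{\mathrm{obs}}\pi_{0}$. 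A matrix Bernstein bound for the independent blocks of $\bm{E}$ yields $\|\bm{E}\|\lesssim\sqrt{np_{\mathrm{obs}}\log n}$ with high probability, which (\ref{eq:IT-sufficient-dense}) is exactly calibrated to dominate, producing the gap $\sigma_{m}(\bm{L}^{\star})\gtrsim\|\bm{E}\|$ that powers everything downstream. A Davis--Kahan $\sin\Theta$ argument then places the top-$m$ subspace of $\hat{\bm{L}}$ close to that of $\bm{L}^{\star}$; since the columns of $\bm{L}^{\star}$ within block $j$ are shifted indicator patterns $\{\bm{e}_{\alpha+x_{i}-x_{j}}\}_{i}$ that match $\bm{x}$ up to a global offset, picking $\hat{\bm{z}}$ to be a random column of $\hat{\bm{L}}$ and applying $\mathcal{P}_{\Delta^{n}}$ with $\mu_{0}$ large yields $\mcr(\bm{z}^{(0)},\bm{x})\leq c_{0}$ for some small absolute constant.

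The core of the proof is the per-iteration contraction. Decompose the $i$-th block of the update as $(\bm{L}\bm{z}^{(t)})_{i}=\sum_{j}\bm{L}_{i,j}\bm{e}_{x_{j}}+\sum_{j}\bm{L}_{i,j}(\bm{z}_{j}^{(t)}-\bm{e}_{x_{j}})$. Because every block $\bm{L}_{i,j}$ is a shift matrix, each summand in the first sum is a single standard basis vector, so the sum is a histogram: by a Chernoff bound, coordinate $x_{i}$ receives a mean count of $(\pi_{0}+\tfrac{1-\pi_{0}}{m})np_{\mathrm{obs}}$ versus $\tfrac{1-\pi_{0}}{m}np_{\mathrm{obs}}$ at each other coordinate, a margin of order $\pi_{0}np_{\mathrm{obs}}$ against per-coordinate fluctuations of order $\sqrt{np_{\mathrm{obs}}\log(nm)/m}$; inequality (\ref{eq:IT-sufficient-dense}) is precisely what keeps this margin strictly positive uniformly over $i$ with high probability. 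The second sum is supported on the misclassified blocks, so its $\ell_{\infty}$ mass is at most a constant times $\mcr(\bm{z}^{(t)},\bm{x})\cdot np_{\mathrm{obs}}$; balancing this perturbation against the signal margin gives $\mcr(\bm{z}^{(t+1)},\bm{x})\leq\rho\,\mcr(\bm{z}^{(t)},\bm{x})$ once $\mu_{t}\gtrsim 1/\sigma_{2}(\bm{L})$ is large enough for the block projection to snap each block to its winning vertex.

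The principal obstacle is the statistical coupling between $\bm{z}^{(t)}$ and $\bm{L}$: the iterate is a complicated data-dependent function, so rows of $\bm{L}$ cannot be treated as independent of it, and a naive union bound over all possible iterates is far too lossy to reach the sharp threshold in (\ref{eq:IT-sufficient-dense}). I would address this through a leave-one-out device: for each $i$, introduce an auxiliary sequence $\bm{z}^{(t),(i)}$ obtained by running the PPM on a surrogate matrix in which the $i$-th block row and column of $\bm{L}$ are replaced by their means, propagate $\|\bm{z}^{(t)}-\bm{z}^{(t),(i)}\|$ inductively through the iteration to show the two sequences remain close, and exploit the independence between $\bm{z}^{(t),(i)}$ and the $i$-th noise block to obtain the entrywise concentration needed to drive the misclassification rate all the way down to zero rather than merely to a constant floor.
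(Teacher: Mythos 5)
Your overall architecture (spectral initialization into a constant-size basin, followed by a per-iteration signal/perturbation decomposition of $\bm{L}\bm{z}^{(t)}$) matches the paper's, and your identification of the per-block signal margin $\asymp \pi_{0}np_{\mathrm{obs}}$ together with the Chernoff calculus that produces the threshold (\ref{eq:IT-sufficient-dense}) is correct. The gap is in the contraction step. You bound the perturbation $\sum_{j}\bm{L}_{i,j}(\bm{z}_{j}^{(t)}-\bm{e}_{x_{j}})$ in $\ell_{\infty}$ by a constant times $\mcr(\bm{z}^{(t)},\bm{x})\cdot np_{\mathrm{obs}}$ and "balance" it against the margin. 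But the margin is only $\pi_{0}np_{\mathrm{obs}}$ with $\pi_{0}\asymp\sqrt{\log n/(mnp_{\mathrm{obs}})}\rightarrow 0$, so this comparison requires $\mcr(\bm{z}^{(t)},\bm{x})\lesssim\pi_{0}=o(1)$, whereas the initialization only delivers $\mcr(\bm{z}^{(0)},\bm{x})\leq 0.49$, a constant. As written, the argument neither keeps the iterate inside the basin nor produces a contraction factor $\rho<1$; it only shows one-step recovery from an unattainably small neighborhood. The missing idea is to split the perturbation as $\mathbb{E}[\bm{L}]\bm{h}+(\bm{L}-\mathbb{E}[\bm{L}])\bm{h}$ with $\bm{h}=\bm{z}^{(t)}-\bm{x}$: since $\bm{1}^{\top}\bm{h}_{j}=0$ annihilates the flat component of $\mathbb{E}[\bm{L}_{i,j}]$, the mean part is bounded by $np_{\mathrm{obs}}\pi_{0}\min\{\|\bm{h}\|_{*,0}/n,\|\bm{h}\|/\sqrt{n}\}$ --- it inherits the factor $\pi_{0}$ and therefore loses to the margin for any misclassification rate below $1/2$ --- while the zero-mean part is controlled in aggregate via $\|(\bm{L}-\mathbb{E}[\bm{L}])\bm{h}\|\leq\|\bm{L}-\mathbb{E}[\bm{L}]\|\,\|\bm{h}\|\lesssim\sqrt{np_{\mathrm{obs}}}\,\|\bm{h}\|$, so that at most a $\rho$-fraction of the currently erroneous blocks can see a fluctuation exceeding the margin. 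That counting step is precisely what yields the geometric rate $\rho$ (and hence exact recovery after $O(\log n)$ iterations); a uniform entrywise bound on the perturbation, even if it held, could not produce a proportional decrease.

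Relatedly, your stated obstacle --- that the coupling between $\bm{z}^{(t)}$ and $\bm{L}$ forces either a lossy union bound over iterates or a leave-one-out construction --- does not materialize. The operator-norm/order-statistics argument above is uniform over all feasible $\bm{z}$ in the basin (this is how the paper states its iterative-stage theorem: the contraction holds simultaneously for every $\bm{z}\in\Delta^{n}$ satisfying the basin condition, with no independence assumption), so no decoupling device is needed, and the sharp constant in (\ref{eq:IT-sufficient-dense}) is still attained because that constant is dictated entirely by the concentration of the clean term $\sum_{j}\bm{L}_{i,j}\bm{e}_{x_{j}}$, which does not involve the iterate. Your leave-one-out plan could likely be carried out, but it is unnecessary machinery here, and even with it you would still need the mean/fluctuation split above to escape the quantitative failure described in the previous paragraph.
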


\begin{remark}
\label{remark:sigma2}
Here and throughout,
$\sigma_{i}\left(\bm{L}\right)$ is the $i$th largest singular value
of $\bm{L}$. In fact, one can often replace $\sigma_{2}\left(\bm{L}\right)$
with $\sigma_{i}(\bm{L})$ for other $2\leq i< m$. But $\sigma_{1}\left(\bm{L}\right)$
is usually not a good choice unless we employ the debiased version of
$\bm{L}$ instead, because $\sigma_{1}\left(\bm{L}\right)$ typically
corresponds to the ``direct current'' component of $\bm{L}$ which
could be excessively large. In addition, we note that $\sigma_i(\bm{L})~(i\leq m)$ have been computed during spectral initialization and, as a result, will not result in extra computational cost.
\end{remark}

\begin{remark}
	As will be seen in Section \ref{sec:Iterative-stage-general}, a stronger version of error contraction arises such that
	\begin{equation} \label{eq:error-contraction-remark}
		\mathsf{MCR}(\bm{z}^{(t+1)}, \bm{x}) \leq \rho \hspace{0.1em} \mathsf{MCR}(\bm{z}^{(t)}, \bm{x}), \qquad \text{if }\mathsf{MCR}(\bm{z}^{(t)}, \bm{x}) \leq 0.49.
	\end{equation}
	This is a uniform result in the sense that (\ref{eq:error-contraction-remark}) occurs simultaneously for all $\bm{z}^{(t)}\in \Delta^n$ obeying $\mathsf{MCR}(\bm{z}^{(t)}, \bm{x}) \leq 0.49$, regardless of the preceding iterates $\{\bm{z}^{(0)}, \cdots, \bm{z}^{(t-1)} \}$ and the statistical dependency between $\bm{z}^{(t)}$ and $\{y_{i,j}\}$. In particular, if $\mu_t = \infty$, one has $\bm{z}^{(t)}\in \{ \bm{e}_1,\cdots, \bm{e}_m \}^n $, and hence  $\{ \bm{z}^{(t)} \}$ forms a sequence of feasible iterates with increasing accuracy. In this case, the iterates become accurate whenever  $\mathsf{MCR}(\bm{z}^{(t)}, \bm{x}) < 1/n$. 
\end{remark}

\begin{remark}
  The contraction rate $\rho$ can actually be as small as
  $O \left(1/(\pi_{0}^{2}np_{\mathrm{obs}}) \right)$, which is at most
  $O \left(1/\log n \right)$ if $m$ is fixed and if the condition
  (\ref{eq:IT-sufficient-dense}) holds. 
\end{remark}

According to Theorem \ref{thm:ExactRecovery-outlier}, convergence to
the ground truth can be expected in at most $O\left(\log n\right)$
iterations. This together with the per-iteration cost (which is on
the order of $|\Omega|$ since $\bm{L}_{i,j}$ is a cyclic permutation
matrix) shows that  the computational complexity of the iterative stage
is at most $O(|\Omega| \log n)$. This is nearly optimal since even
reading all the data and likelihood values take time about the order
of $|\Omega|$. All of this happens as soon as the corruption rate
does not exceed $1-O\left(\sqrt{\frac{\log n}{mnp_{\mathrm{obs}}}}\right)$,
uncovering the remarkable ability of the PPM to
tolerate and correct dense input errors.

As we shall see later in Section \ref{sec:Iterative-stage-general}, Theorem \ref{thm:ExactRecovery-outlier} holds as long as the algorithm starts with any  initial guess $\bm{z}^{(0)}$  obeying  
\begin{equation}
	\mcr ( \bm{z}^{(0)}, \bm{x} ) \leq  0.49, 
	\label{eq:init-condition}
\end{equation}
irrespective of whether $\bm{z}^{(0)}$ is independent of
the data $\{y_{i,j}\}$ or not.  Therefore, it often suffices to run the power method for a constant number of iterations during the initialization stage, which can be completed in  $O(|\Omega|)$ flops when $m$ is fixed.
The broader implication is that  Algorithm \ref{alg:PP}
remains successful if one adopts other initialization that can enter
the basin of attraction.

Finally, our result is sharp: to be sure, the error-correction
capability of the projected power method is statistically optimal, as
revealed by the following converse result.

\begin{theorem}
\label{thm:lower-bound-RCM}
Consider the random corruption
model (\ref{eq:random-outlier}) with any fixed $m>0$, and suppose
$p_{\mathrm{obs}}>c_{1}\log n/n$ for some sufficiently large constant
$c_{1}>0$. If \footnote{Theorem \ref{thm:lower-bound-RCM} continues to hold if we replace
0.99 with any other constant between 0 and 1.} 
\begin{equation}
  \pi_{0}<2\sqrt{\frac{0.99\log n}{mnp_{\mathrm{obs}}}},
  \label{eq:IT-necessary-dense}
\end{equation}
then the minimax probability of error 
\[
  \inf_{\hat{\bm{x}}}\max_{x_{i} \in [m], 1\leq i\leq n}
  \mathbb{P}\left( \mcr \left(\hat{\bm{x}},\bm{x}\right) > 0 \mid \bm{x} \right) \rightarrow 1,
  \qquad\text{as }n  \rightarrow  \infty,
\]
where the infimum is taken over all estimators and $\bm{x}$ is the
vector representation of $\left\{ x_{i}\right\}$  as before. 
\end{theorem}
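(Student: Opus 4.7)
The plan is to establish the converse via a Bayes-risk reduction followed by a second-moment argument on local ``one-coordinate swap'' failure events. Averaging over the uniform prior on $\bm{x}\in[m]^n$ lower-bounds the minimax risk, and the MLE (modulo the unrecoverable global shift) is Bayes-optimal for this prior; hence it suffices to show that with probability $1-o(1)$ there exists an alternative $\bm{x}'$ differing from $\bm{x}$ in exactly one coordinate whose likelihood strictly exceeds that of $\bm{x}$. Since the likelihood is invariant under the global shift and such a $\bm{x}'$ is never itself a shift of $\bm{x}$ for $n\geq 2$, its existence forces the MLE off every shift of $\bm{x}$ and thus guarantees $\mcr(\hat{\bm{x}}_{\mathrm{MLE}},\bm{x})>0$.

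Fix a node $i$ and set $A_i := \{\exists\, a\neq x_i : N_i(a) > N_i(x_i)\}$, where $N_i(a):=\#\{j:(i,j)\in\Omega,~y_{i,j}+x_j\equiv a\pmod m\}$. Conditional on $\bm{x}$ and $\Omega$, the symbols $\{y_{i,j}+x_j\}_{(i,j)\in\Omega}$ form $d_i:=|\{j:(i,j)\in\Omega\}|$ i.i.d.\ draws from a distribution with mass $p:=\pi_0+(1-\pi_0)/m$ on $x_i$ and $q:=(1-\pi_0)/m$ on every other label; a standard Chernoff argument shows $d_i=(1+o(1))np_{\mathrm{obs}}$ uniformly in $i$ with probability $1-o(1)$ when $p_{\mathrm{obs}}\gtrsim \log n/n$. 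Next, I derive a sharp lower bound on $\mathbb{P}(A_i)$: fixing any $a\neq x_i$ and conditioning on $k:=N_i(a)+N_i(x_i)$ reduces $\{N_i(a)>N_i(x_i)\}$ to a symmetric binomial tail, and a local-CLT / Stirling refinement of the Bhattacharyya-exponent computation $-\log\!\big(1-(\sqrt{p}-\sqrt{q})^2\big)=(1+o(1))\,m\pi_0^2/4$ (rather than merely the Chernoff upper bound) yields
\[
\mathbb{P}(A_i)\;\geq\; \exp\!\Big(-(1+o(1))\,np_{\mathrm{obs}}\,m\pi_0^2/4\Big)\;\geq\; n^{-0.995}
\]
whenever (\ref{eq:IT-necessary-dense}) holds, so $\mathbb{E}[W]\to\infty$ polynomially for $W:=\sum_{i=1}^n \mathbb{I}(A_i)$.

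Finally, I pass from $\mathbb{E}[W]\to\infty$ to $\mathbb{P}(W\geq 1)\to 1$ via Paley--Zygmund. For $(i,j)\notin\Omega$, $A_i$ and $A_j$ depend on disjoint observations and are exactly independent; for $(i,j)\in\Omega$, they share only the single entry $y_{i,j}$, and removing it yields decoupled events $A_i^{(-j)}$ and $A_j^{(-i)}$ (independent by construction) each with probability $(1+o(1))\,\mathbb{P}(A_i)$, since a single added sample shifts a binomial tail by a factor of only $1+o(1)$. This gives $\operatorname{Cov}(\mathbb{I}(A_i),\mathbb{I}(A_j))=o(\mathbb{P}(A_i)\mathbb{P}(A_j))$ for the dependent pairs, and the $\Theta(n^2 p_{\mathrm{obs}})$ such pairs contribute $o(\mathbb{E}[W]^2)$ to $\operatorname{Var}(W)$; Chebyshev then gives $\mathbb{P}(W=0)\to 0$.

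I expect the main technical obstacle to be the matching lower bound on the binomial tail with the sharp Bhattacharyya constant: the factor of $2$ in the threshold (\ref{eq:IT-necessary-dense}) arises precisely from $-\log(1-(\sqrt{p}-\sqrt{q})^2)\sim m\pi_0^2/4$, and any loose exponential bound would miss it. The local-CLT step must be executed carefully since $p$ and $q$ are both close to $1/m$ and the sample size $d_i \sim np_{\mathrm{obs}}$ only grows as $\log n$ in the sparsest regime, so uniform integrability of the tilted distribution and the precise polynomial prefactor in the Stirling expansion must be tracked to conclude $\mathbb{P}(A_i) \geq n^{-c}$ with $c$ strictly less than $1$.
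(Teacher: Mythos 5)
Your strategy is sound and reaches the correct threshold, but it is a genuinely different route from the paper's. The paper deduces Theorem \ref{thm:lower-bound-RCM} as a corollary of the general converse (Theorem \ref{thm:lower-bound}): it works with the log-likelihood-ratio score $\sum_j \log\frac{P_1(y_{i,j})}{P_0(y_{i,j})}$, obtains the sharp Gaussian-type exponent $\exp\{-(1+o(1))np_{\mathrm{obs}}\mathsf{KL}_{\min}/4\}$ from a moderate-deviation principle (Lemma \ref{lem:moderate-deviation}), and---crucially---sidesteps any covariance computation by partitioning the vertices: it restricts attention to $i\in\mathcal{V}_1=\{1,\dots,\delta n\}$, conditions on the sub-event $\mathcal{V}_2$ defined by the edges internal to $\mathcal{V}_1$, and then exploits that the scores over edges leaving $\mathcal{V}_1$ are \emph{exactly} independent across $i$, so that $1-\prod_i(1-p_i)\to 1$ follows directly. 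You instead exploit the trinomial structure specific to the random corruption model (your $\{N_i(a)>N_i(x_i)\}$ is the same one-coordinate-swap test, since the log-likelihood ratio is $(N_i(a)-N_i(x_i))\log(p/q)$), get the exponent $-\log(1-(\sqrt{p}-\sqrt{q})^2)\sim m\pi_0^2/4$ by Stirling/local CLT rather than by an abstract MDP, and handle dependence by a second-moment method with edge-removal decoupling. Your route is more elementary and self-contained for this model; the paper's is model-free and avoids the second moment entirely.

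The one step you should not wave at is the covariance bound. Your claim that deleting the single shared observation $y_{i,j}$ (and relaxing the strict inequality to $\geq$) changes the tail probability by a factor $1+o(1)$ is \emph{false as stated} at the boundary $np_{\mathrm{obs}}\asymp\log n$: the relevant sum $S=N_i(a)-N_i(x_i)$ has standard deviation $\asymp\sqrt{np_{\mathrm{obs}}/m}$ and the event sits $\asymp\sqrt{\log n}$ standard deviations out, so $\mathbb{P}(S\geq 0)/\mathbb{P}(S\geq 1)-1\asymp\sqrt{m\log n/(np_{\mathrm{obs}})}$, which is only $O(\sqrt{m/c_1})$, a constant, when $p_{\mathrm{obs}}=c_1\log n/n$. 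The argument still closes, but for a different reason in that regime: the number of dependent pairs is $|\Omega|\asymp n^2p_{\mathrm{obs}}$, so the covariance contribution to $\mathsf{Var}(W)/\mathbb{E}[W]^2$ is $O\bigl(p_{\mathrm{obs}}\sqrt{m\log n/(np_{\mathrm{obs}})}\bigr)=O(\sqrt{mp_{\mathrm{obs}}\log n/n})=o(1)$ uniformly over $p_{\mathrm{obs}}$; when $p_{\mathrm{obs}}\asymp 1$ the factor really is $1+o(1)$, and when it is not, $p_{\mathrm{obs}}=o(1)$ compensates. You need to state and prove this two-sided trade-off (and track the union over the $m-1$ choices of $a$ in the upper bound on $\mathbb{P}(A_i\cap A_j)$, which costs only a fixed factor $m^2$ and must therefore be beaten by $\mathbb{E}[W]\geq n^{0.01-o(1)}$, which it is). With that repair the proof is complete.
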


As mentioned before, the binary case $m=2$ bears some similarity with the community detection problem in the presence of two communities. Arguably the most popular model for community detection is the stochastic block model (SBM), where any two vertices within the same cluster (resp.~across different clusters) are connected by an edge with probability $p$ (resp.~$q$). The asymptotic limits for both exact and partial recovery have been extensively studied \cite{massoulie2014community, abbe2014exact, mossel2014consistency,hajek2015achieving, abbe2015community, chin2015stochastic, brito2016recovery,guedon2015community}. We note, however, that the primary focus of community detection lies in the  sparse regime $p,q\asymp 1/n$ or logarithmic sparse regime (i.e.~$p,q\asymp \log n/n$), which is in contrast to the joint alignment problem in which the measurements are often considerably denser. There are, however, a few theoretical results that cover the dense regime, e.g.~\cite{mossel2014consistency}. To facilitate comparison, consider the case where $p = \frac{1+\pi_0}{2}$ and $q = \frac{1-\pi_0}{2}$ for some $\pi_0>0$, then the SBM reduces to the random corruption model with $p_{\mathrm{obs}}=1$. One can easily verify that the limit $\pi_0 = \sqrt{2\log n/n}$ we derive matches the recovery threshold given in\footnote{Note that the model studied in \cite{mossel2014consistency} is an SBM with $2n$ vertices with $n$ vertices
belonging to each cluster. Therefore, the threshold chacterization \cite[Proposition 2.9]{mossel2014consistency}
should read 
\[
  \frac{\sigma\sqrt{n/2}}{p-q}\exp\left(-\frac{n(p-q)^{2}}{4\sigma^{2}}\right) \rightarrow0
\]
when applied to our setting, with $\sigma:=\sqrt{p(1-q)+q(1-p)} \approx \sqrt{1/2}$.
} \cite[Theorem 2.5 and Proposition 2.9]{mossel2014consistency}.

\subsection{More general noise models}

The theoretical guarantees we develop for the random corruption model are 
special instances of a set of more general results. In this subsection, we cover a far more general class of noise models such that
\begin{equation}
  y_{i,j}\overset{\text{ind.}}{=}x_{i}-x_{j}+\eta_{i,j} ~\mathsf{mod} ~m, \qquad(i,j)\in\Omega, \label{eq:statistical-model}
\end{equation}
where the additive noise $\eta_{i,j} $ ($i>j$) are
i.i.d.~random variables supported on $\{0,1,\cdots,m-1\}$. In what follows, we define $P_0(\cdot)$ to be the distribution of $\eta_{i,j}$, i.e.
\begin{equation}
	P_0\left(y\right) \hspace{0.1em} = \hspace{0.1em} \mathbb{P}\left(\eta_{i,j}=y\right),\qquad 0\leq y < m.
  \label{eq:noise-pdf}
\end{equation}
For instance,  the random corruption model (\ref{eq:random-outlier}) is a special case of (\ref{eq:statistical-model}) with the noise distribution
\begin{equation}
	P_0(y) = \begin{cases} \pi_0 + \frac{1-\pi_0}{m},  \qquad & \text{if }y=0; \\  \frac{1-\pi_0}{m},  \quad & \text{if }y=1,\cdots, m-1. \end{cases}
\end{equation}

To simplify notation, we set $P_0(y)=P_0(y~\mathsf{mod}~m)$ for all $y\notin \{0,\cdots,m-1\}$ throughout the paper. 
Unless otherwise noted, we
take $\eta_{j,i}= - \eta_{i,j}$ for all $(i,j)\in\Omega$, and  restrict attention to the class of \emph{symmetric}
noise distributions obeying  
\begin{equation}
	P_0\left(y\right) \hspace{0.1em} = \hspace{0.1em} P_0 \left(m-y\right),\qquad y = 1,\cdots, m-1,
  \label{eq:assumption-symmetric}
\end{equation}
which largely simplifies the exposition.

\subsubsection{Key metrics}

The feasibility of accurate recovery necessarily depends on the noise
distribution or, more precisely, the distinguishability of the output
distributions $\left\{ y_{i,j}\right\} $ given distinct inputs. In
particular, there are $m$ distributions $\left\{ P_{l}\right\} _{0\leq l< m}$
that we'd like to emphasize, where $P_{l}\left(\cdot\right)$ represents
the distribution of $y_{i,j}$ conditional on $x_{i}-x_{j}=l$. Alternatively,
$P_{l}$ is also the $l$-shifted distribution of the noise $\eta_{i,j}$
given by
\begin{equation}
  P_{l}(y):=\mathbb{P}(y_{i,j}=y\mid x_{i}-x_{j}=l)=\mathbb{P}\big(\eta_{i,j}=y-l\big).
  \label{eq:defn-Pl}
\end{equation}
Here and below, we write $a-b$ and $a-b\text{ }\mathsf{mod}\text{ }m$
interchangeably whenever it is clear from the context, and adopt the
cyclic notation $c_{l}=c_{l+m}$ ($l\in\mathbb{Z}$) for any quantity
taking the form $c_{l}$. 

We would like to quantify the distinguishability of these distributions
via some  distance metric. One candidate is
the Kullback\textendash Leibler (KL) divergence 
defined by 
\begin{eqnarray}
  \mathsf{KL}\left(P_{i}\hspace{0.3em}\|\hspace{0.3em}P_{l}\right) 
  :=  \sum\nolimits _{y}P_{i}(y) \log\frac{P_{i} (y)}{ P_l(y) },\qquad0\leq i,l<m,
  \label{eq:KL-defn}
\end{eqnarray}
which plays an important role in our main theory.

\subsubsection{Performance guarantees}

We now proceed to the main findings. To simplify matters, we shall concern ourselves primarily with the kind of noise distributions obeying the following assumption.  

\begin{assumption}
\label{assumption-Pmin}
$m\min_{y}P_{0}\left(y\right)$
is bounded away from 0.
\end{assumption}

\begin{remark}
	When $m=O(1)$, one can replace $m\min_{y}P_{0}\left(y\right)$ with $\min_{y}P_{0}\left(y\right)$ in 
	Assumption \ref{assumption-Pmin}. However, if $m$ is allowed to scale with $n$---which is the case in Section \ref{sub:large-m-extension}---then the prefactor $m$ cannot be dropped. 
\end{remark}

In words, Assumption \ref{assumption-Pmin} ensures that the noise
density is not exceedingly lower than the average density $1/m$ at any
point. The reason why we introduce this assumption is two-fold.  To
begin with, this enables us to preclude the case where the entries of
$\bm{L}$---or equivalently, the log-likelihoods---are too wild.  For
instance, if $P_{0}\left(y\right)=0$ for some $y$, then
$\log P_0\left(y\right)=-\infty$, resulting in computational
instability.  The other reason is to simplify the analysis and
exposition slightly, making it easier for the readers.  We note,
however, that this assumption is not crucial and can be dropped by
means of a slight modification of the algorithm, which will be
detailed later.

Another assumption that we would like to introduce is more subtle:

\begin{assumption}
\label{assumption-KL-D}
$\mathsf{KL}_{\max}/\mathsf{KL}_{\min}$ is bounded, where
\begin{eqnarray}
  \mathsf{KL}_{\min} := \min_{1\leq l<m} \mathsf{KL}\left(P_{0} \hspace{0.3em}\|\hspace{0.3em} P_{l}\right)
  \quad\text{and}\quad  \mathsf{KL}_{\max} := \max_{1\leq l<m} \mathsf{KL}\left( P_{0}\hspace{0.3em}\|\hspace{0.3em}P_{l} \right). 
  \label{eq:Hel-KL-TV-max}
\end{eqnarray}
\end{assumption}

Roughly speaking, Assumption \ref{assumption-KL-D} states that the mutual distances of the $m$ possible output distributions $\{P_l\}_{1\leq l\leq m}$ lie within a reasonable dynamic range, so that one cannot find a pair of them that are considerably more separated than other pairs. 
Alternatively, it is understood that the variation of the log-likelihood
ratio, as we will show later, is often governed by the KL divergence
between the two corresponding distributions. From this point of view,
Assumption \ref{assumption-KL-D} tells us that there is no submatrix
of $\bm{L}$ that is significantly more volatile than the remaining
parts, which often leads to enhanced stability when computing the
power iteration.

With these assumptions in place, we are positioned to state our main
result. It is not hard to see that Theorem \ref{thm:ExactRecovery-outlier}
is an immediate consequence of the following theorem. 

\begin{theorem}
\label{thm:ExactRecovery-General}
Fix $m>0$, and
assume $p_{\mathrm{obs}}>c_{1}\log n/n$ and $\mu_{t}>c_{2}/\sigma_{2} (\bm{L})$
for some sufficiently large constants $c_{1},c_{2}>0$. Under Assumptions
\ref{assumption-Pmin}-\ref{assumption-KL-D}, there exist some absolute
constants $0<\rho,\nu<1$ such that with probability tending to one as $n$ scales,
the iterates of Algorithm \ref{alg:PP} with the input matrix (\ref{eq:Input-Matrix}) or (\ref{eq:debiased-MLE}) 
obey 
\begin{equation}
  \mcr(\bm{z}^{(t)},\bm{x}) \leq \nu\rho^{t},  \qquad \forall t \geq 0,
  \label{eq:error-contraction-general}
\end{equation}
provided that\footnote{Theorem \ref{thm:ExactRecovery-General} remains valid if we replace
4.01 by any other constant in (4,$\infty$). }
\begin{equation}
  \mathsf{KL}_{\min} \geq \frac{4.01 \log n}{ np_{\mathrm{obs}} }.
  \label{eq:IT-sufficient-KL}
\end{equation}
\end{theorem}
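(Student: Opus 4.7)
The plan follows the now-standard two-stage blueprint for nonconvex iterative methods: show (i) the spectral initialization $\bm{z}^{(0)}$ lies in a basin of attraction around the truth, and (ii) each projected power iteration strictly contracts the misclassification rate as long as $\mcr(\bm{z}^{(t)},\bm{x}) \leq 0.49$. The bound (\ref{eq:error-contraction-general}) then follows by induction, with $\nu = 0.49$ and $\rho$ an absolute constant in $(0,1)$.

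\textbf{Stage 1 (spectral initialization).} I would work with the population matrix $\bar{\bm{L}} := \mathbb{E}\bm{L}$. A direct computation from (\ref{eq:Input-Matrix}) and (\ref{eq:defn-Pl}), assuming $\ell$ is the log-likelihood, gives
\[
  \mathbb{E}(\bm{L}_{i,j})_{\alpha,\beta} \;=\; -\,p_{\mathrm{obs}}\bigl(H(P_0) + \mathsf{KL}(P_{x_i - x_j}\,\|\,P_{\alpha-\beta})\bigr),
\]
so $\bar{\bm{L}}$ has rank at most $m$ (rank $m-1$ after debiasing) with column space spanned by cyclically shifted encodings of $\bm{x}$. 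In particular, the $i$-th column of $\bar{\bm{L}}$, viewed block-wise, is maximized in each block $j$ at coordinate $x_j$, with a margin of order $p_{\mathrm{obs}}\mathsf{KL}_{\min}$ per block (Gibbs' inequality). A Davis\textendash Kahan/Wedin perturbation bound then shows that a random column $\hat{\bm{z}}$ of the rank-$m$ truncation of $\bm{L}$ is close to a column of $\bar{\bm{L}}$ whenever $\|\bm{L}-\bar{\bm{L}}\| \ll \sigma_m(\bar{\bm{L}})$. I would estimate the noise $\|\bm{L}-\bar{\bm{L}}\|$ by matrix Bernstein (Assumption \ref{assumption-Pmin} bounds the entries of $\bm{L}$, Assumption \ref{assumption-KL-D} controls their variances) and derive $\sigma_m(\bar{\bm{L}}) \asymp n p_{\mathrm{obs}} \mathsf{KL}_{\min}$ from the explicit low-rank structure. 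The hypothesis (\ref{eq:IT-sufficient-KL}) is exactly what is needed for signal to dominate noise, after which the block-wise simplex projection at initialization cleans the perturbed column to yield $\mcr(\bm{z}^{(0)},\bm{x}) \leq 0.49$ with high probability.

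\textbf{Stage 2 (one-step contraction).} Fix any $\bm{z}^{(t)} \in \Delta^n$ with $\mcr(\bm{z}^{(t)},\bm{x}) \leq 0.49$, and let $S^c = \{j : \bm{z}_j^{(t)} \neq \bm{x}_j\}$, $S = [n]\setminus S^c$. By Fact \ref{fact:Proj-separation}, block $i$ of $\bm{z}^{(t+1)}$ is correct provided the $x_i$-th coordinate of $(\bm{L}\bm{z}^{(t)})_i$ strictly dominates every other coordinate. I would split
\[
  (\bm{L}\bm{z}^{(t)})_i \;=\; \underbrace{\sum_{j \in S,\,(i,j)\in\Omega}\bm{L}_{i,j}\bm{x}_j}_{\text{signal}} \;+\; \underbrace{\sum_{j\in S^c,\,(i,j)\in\Omega}\bm{L}_{i,j}\bm{z}_j^{(t)}}_{\text{error}}
\]
and argue: (a) the signal term has conditional mean whose $x_i$-th entry exceeds every other entry by at least $\gtrsim |S|p_{\mathrm{obs}}\mathsf{KL}_{\min}$ (Gibbs' inequality applied entry-wise) and concentrates at Bernstein scale $\sqrt{|S|p_{\mathrm{obs}}\mathsf{KL}_{\max}\log n}$ (Assumption \ref{assumption-KL-D} ensures this is balanced against $|S|p_{\mathrm{obs}}\mathsf{KL}_{\min}$); (b) the error term is entry-wise bounded by $O(|S^c|p_{\mathrm{obs}}\log(1/P_{\min}))$ in mean (Assumption \ref{assumption-Pmin}) plus a smaller Bernstein fluctuation, which under $\mcr(\bm{z}^{(t)},\bm{x}) \leq 0.49$ is only a fraction of the signal. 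Counting the indices $i$ at which the comparison can fail and using (\ref{eq:IT-sufficient-KL}) yields a $\rho$-contraction $\mcr(\bm{z}^{(t+1)},\bm{x}) \leq \rho\, \mcr(\bm{z}^{(t)},\bm{x})$.

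The main obstacle is upgrading the per-$i$ argument to a bound that holds \emph{uniformly} over all admissible $\bm{z}^{(t)}$, since each iterate depends on $\bm{L}$ through the preceding steps and a direct conditioning argument is unavailable. I would address this by a covering argument tailored to the disagreement set: it suffices to prove the one-step contraction for every discrete $\bm{z} \in \{\bm{e}_1,\ldots,\bm{e}_m\}^n$ with $\mcr(\bm{z},\bm{x}) \leq 0.49$ (Lipschitz/monotonicity properties of $\mathcal{P}_{\Delta^n}$ then extend the conclusion to continuous $\bm{z} \in \Delta^n$). The key observation is that the failure event for a given $\bm{z}$ depends only on the coordinates in $S^c$ and on the random edges incident to the bad blocks, so a large-deviation estimate parametrized by $|S^c|$ combined with the combinatorial count $\binom{n}{|S^c|}m^{|S^c|}$ produces a tail that is consumed by the Bernstein exponent $\exp(-\Omega(np_{\mathrm{obs}}\mathsf{KL}_{\min}))$ under (\ref{eq:IT-sufficient-KL}). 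Balancing these two exponents tightly enough to land the sharp constant $4.01$ in (\ref{eq:IT-sufficient-KL}) is the most delicate technical step and likely requires a local/refined Bernstein inequality that exploits the sparsity of the effective disagreement set.
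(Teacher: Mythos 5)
Your two-stage outline matches the paper's architecture, and you correctly identify uniformity over data-dependent iterates as the central difficulty. But the way you propose to resolve it---a union bound over all discrete configurations $\bm{z}\in\{\bm{e}_1,\ldots,\bm{e}_m\}^n$ with $\mcr(\bm{z},\bm{x})\leq 0.49$, balancing a per-configuration Bernstein tail against the entropy $\binom{n}{|S^c|}m^{|S^c|}$---is not what the paper does, and as sketched it has a real gap. To show that at most $\rho|S^c|$ indices fail for a \emph{given} configuration, you need to bound the probability that $\rho|S^c|$ \emph{specific} indices fail simultaneously, and these failure events are not independent: the edge variable $y_{i,j}$ enters both the score at index $i$ and the score at index $j$. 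Without a decoupling device (which you do not supply), the product-of-tails bound you are implicitly invoking does not go through. You yourself flag the exponent-balancing as an unresolved ``delicate step,'' so the key lemma of the proof is left open. You have also mislocated the origin of the sharp constant $4.01$: it does not come from trading covering entropy against a Bernstein exponent, but from a Chernoff--Hellinger bound on the configuration-\emph{independent} signal $\bm{L}\bm{x}$, namely $\mathbb{P}\{s_{i,1}\leq s_{i,l}\}\leq\exp(-(1-o(1))np_{\mathrm{obs}}\mathsf{H}^2_{\min})$ with $\mathsf{H}^2_{\min}\approx\mathsf{KL}_{\min}/4$ (Lemmas \ref{lem:sep-s} and \ref{lem:KL-Var-Hel}), union bounded only over the $n$ indices and $m$ shifts.

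The paper's device for uniformity (Theorem \ref{thm:Iterative-stage}) is different and worth internalizing: write $\bm{L}\bm{z}=\bm{L}\bm{x}+\mathbb{E}[\bm{L}]\bm{h}+\tilde{\bm{L}}\bm{h}$ with $\bm{h}=\bm{z}-\bm{x}$, so that \emph{all} dependence on the configuration is funneled through $\bm{h}$. The term $\mathbb{E}[\bm{L}]\bm{h}$ is bounded deterministically by $np_{\mathrm{obs}}\mathsf{KL}_{\max}\min\{k/n,\epsilon\}$ using only the simplex constraints. The term $\tilde{\bm{L}}\bm{h}$ is controlled by conditioning once on the event $\|\tilde{\bm{L}}\|\lesssim(\frac{1}{m}\sum_l\|\log\frac{P_0}{P_l}\|_1)\sqrt{np_{\mathrm{obs}}}$ (Lemma \ref{lem:deviation-Lhat}, proved via a symmetrization/Talagrand argument precisely because matrix Bernstein loses a $\sqrt{\log n}$ factor), after which a purely deterministic pigeonhole on the order statistics of $\|(\tilde{\bm{L}}\bm{h})_i\|$ shows that all but $\rho\min\{k,\epsilon^2 n\}$ blocks have negligible fluctuation (Lemmas \ref{lemma:r-norm-KL}--\ref{lemma:r-norm-KL-q}). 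This holds simultaneously for every $\bm{h}$ in the basin with no covering, no union over configurations, and no independence assumptions across indices. Your Stage~1 is essentially fine modulo the same spectral-norm tool (the paper bounds the rank-$m$ truncation error by twice the perturbation rather than invoking Davis--Kahan, and only obtains constant success probability from the random-column Markov step), but Stage~2 as proposed does not close.
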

\begin{remark}
Alternatively, Theorem \ref{thm:ExactRecovery-General}
can be stated in terms of other divergence metrics like the squared
Hellinger distance $\mathsf{H}^{2}(\cdot,\cdot)$. Specifically, Theorem \ref{thm:ExactRecovery-General}
holds if the minimum squared Hellinger distance obeys 
\begin{equation}
  \mathsf{H}_{\min}^{2} := \min_{1\leq l<m} \mathsf{H}^{2}\left(P_{0},P_{l}\right) > \frac{1.01\log n}{np_{\mathrm{obs}}},
  \label{eq:IT-sufficient-H2}
\end{equation}
where $\mathsf{H}^{2}\left(P,Q\right):=\frac{1}{2}\sum_{y}(\sqrt{P(y)}-\sqrt{Q(y)})^{2}$.
We will see later in Lemma \ref{lem:KL-Var-Hel} that $\mathsf{KL}_{\min}\approx4\mathsf{H}_{\min}^{2}$,
which justifies the equivalence between (\ref{eq:IT-sufficient-KL})
and (\ref{eq:IT-sufficient-H2}). 
\end{remark}

The recovery condition (\ref{eq:IT-sufficient-KL}) is non-asymptotic, and takes the
form of a minimum KL divergence criterion. This is consistent with
the understanding that the hardness of exact recovery often arises
in differentiating minimally separated output distributions. Within
at most $O(\log n)$ projected power iterations, the PPM returns
an estimate with absolutely no error, as soon as the minimum KL divergence
exceeds some threshold. This threshold can be remarkably small when $p_{\mathrm{obs}}$
is large or, equivalently, when we have many pairwise measurements available.

Theorem \ref{thm:ExactRecovery-General} accommodates a broad class of noise models. Here we highlight
a few examples to illustrate its generality. To begin with, it is
self-evident that the random corruption model belongs to this class
with $\mathsf{KL}_{\max}/\mathsf{KL}_{\min}=1$. Beyond this
simple model, we list two important families which satisfy Assumption
\ref{assumption-KL-D} and which receive broad practical interest.
This list, however, is by no means exhaustive.

\begin{enumerate}

\item[(1)] A class of distributions that obey
\begin{equation}
  \mathsf{KL}_{\min}=\mathsf{KL}\left(P_{0}\hspace{0.3em}\|\hspace{0.3em}P_{1}\right)\quad\text{or}\quad\mathsf{KL}_{\min}=\mathsf{KL}\left(P_{0}\hspace{0.3em}\|\hspace{0.3em}P_{-1}\right).
  \label{eq:defn-worst-case}
\end{equation}
This says that the output distributions are the closest when the two
corresponding inputs are minimally separated. 

\item[(2)] A class of \emph{unimodal} distributions that satisfy
\begin{equation}
  P_{0}\left(0\right)\geq P_{0}\left(1\right)\geq\cdots\geq P_{0}\left(\left\lfloor m/2 \right\rfloor \right).
  \label{eq:defn-monotone}
\end{equation}
This says that the likelihood decays as the distance to the truth
increases. 

\end{enumerate}

\begin{lem}
\label{lem:two-class}
Fix $m>0$, and suppose Assumption \ref{assumption-Pmin} holds. Then the noise distribution
satisfying either (\ref{eq:defn-worst-case}) or (\ref{eq:defn-monotone})
obeys Assumption \ref{assumption-KL-D}. 
\end{lem}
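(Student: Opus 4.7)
The plan is to bound $\mathsf{KL}_{\max}/\mathsf{KL}_{\min}$ by bounding the analogous ratio of squared Hellinger distances---a much better-behaved quantity under cyclic shifts. To do so, I first use the uniform boundedness of the entries of $P_0$ afforded by Assumption \ref{assumption-Pmin}, then invoke the Hellinger triangle inequality, and finally dispose of the two conditions separately.

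Under Assumption \ref{assumption-Pmin} with $m$ fixed, every probability $P_0(y)$---and hence every $P_l(y)$---lies in an interval $[c_0, C_0] \subset (0,1)$ depending only on $m$ and the lower bound in Assumption \ref{assumption-Pmin}. Since all log-likelihood ratios $\log(P_0(y)/P_l(y))$ are uniformly bounded, one obtains the two-sided comparison $\mathsf{KL}(P_0 \hspace{0.3em}\|\hspace{0.3em} P_l) \asymp \mathsf{H}^2(P_0, P_l)$ uniformly in $l$, so it suffices to bound $\mathsf{H}^2_{\max}/\mathsf{H}^2_{\min}$. Now since $P_{l+1}$ is the unit cyclic shift of $P_l$, one has $\mathsf{H}(P_l, P_{l+1}) = \mathsf{H}(P_0, P_1)$ for every $l$, and a change of variables gives $\mathsf{H}(P_0, P_1) = \mathsf{H}(P_0, P_{-1})$. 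Applying the triangle inequality along the shorter arc of $\mathbb{Z}/m$ then yields $\mathsf{H}(P_0, P_l) \leq \lfloor m/2 \rfloor \cdot \mathsf{H}(P_0, P_1)$ for every $l$.

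Under condition (\ref{eq:defn-worst-case}), $\mathsf{KL}_{\min}$ equals $\mathsf{KL}(P_0 \hspace{0.3em}\|\hspace{0.3em} P_{\pm 1}) \asymp \mathsf{H}^2(P_0, P_1)$, and combining with the previous display gives $\mathsf{KL}_{\max} \asymp \mathsf{H}^2_{\max} \leq (m/2)^2 \hspace{0.1em} \mathsf{H}^2(P_0, P_1) \asymp \mathsf{KL}_{\min}$, as desired. Under condition (\ref{eq:defn-monotone}), it remains to show that $\mathsf{H}^2(P_0, P_1)$ is the minimum of $\mathsf{H}^2(P_0, P_l)$ over $l \neq 0$, which then reduces matters to the previous case. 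Writing $a_y := \sqrt{P_0(y)}$---still symmetric and unimodal---this is equivalent to the monotonicity of the cyclic autocorrelation $R(l) := \sum_y a_y a_{y-l}$ on $\{0, 1, \ldots, \lfloor m/2 \rfloor\}$. Letting $e_z := a_{z-1} - a_z$, which is non-negative on $[1, \lfloor m/2 \rfloor]$ by unimodality and satisfies $e_{m+1-z} = -e_z$ by the symmetry $a_y = a_{m-y}$, the substitution $z = y - l$ followed by pairing indices $z$ with $m+1-z$ rewrites $R(l) - R(l+1) = \sum_y a_y (a_{y-l-1} - a_{y-l})$ as $\sum_{z=1}^{\lfloor m/2 \rfloor} e_z \, [a_{z+l} - a_{z-l-1}]$. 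Since $a$ is a non-increasing function of cyclic distance to $0$, and the cyclic distance of $z+l$ to $0$ is at least that of $z-l-1$ whenever $z \in [1, \lfloor m/2 \rfloor]$ and $l \in [0, \lfloor m/2 \rfloor - 1]$, each bracket is non-positive and hence $R(l) \geq R(l+1)$.

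The main obstacle is the pairing/monotonicity step for condition (\ref{eq:defn-monotone}); everything else (KL-Hellinger equivalence and the Hellinger triangle inequality) follows immediately from Assumption \ref{assumption-Pmin} together with shift invariance. The bookkeeping itself requires a short case analysis on whether $z - l - 1$ is negative and whether $z + l$ overshoots $\lfloor m/2 \rfloor$, but once the terms are correctly grouped, each individual inequality reduces to a trivial comparison of cyclic distances on $\mathbb{Z}/m$.
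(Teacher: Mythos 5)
Your proof is correct, and it takes a genuinely different route from the paper's. The paper works with total variation and $\chi^2$: Pinsker's inequality lower-bounds $\mathsf{KL}(P_0\hspace{0.3em}\|\hspace{0.3em}P_1)$ by $2\mathsf{TV}^2(P_0,P_1)=\frac12\big(\sum_y|P_0(y)-P_0(y-1)|\big)^2$, a telescoping argument shows this sum dominates $\max_{j,y}|P_0(y)-P_j(y)|$ (telescoping over consecutive shifts for condition (\ref{eq:defn-worst-case}), and over the arithmetic progression $0,l,2l,\dots$ toward $\lfloor m/2\rfloor$, where unimodality makes the increments add coherently, for condition (\ref{eq:defn-monotone})), and the $\chi^2$ upper bound on $\mathsf{KL}(P_0\hspace{0.3em}\|\hspace{0.3em}P_j)$ closes the loop since $m$ is fixed and the densities are bounded below. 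You instead pass to the squared Hellinger distance (legitimate: $\mathsf{KL}\geq 2\mathsf{H}^2$ unconditionally, and $\mathsf{KL}\lesssim\mathsf{H}^2$ once the likelihood ratios are bounded, which Assumption \ref{assumption-Pmin} with fixed $m$ guarantees; the paper itself invokes this equivalence in Appendix \ref{sec:Proof-of-Lemma-Chernoff-UB}), exploit that $\mathsf{H}$ is a shift-invariant metric to get $\mathsf{H}(P_0,P_l)\leq\lfloor m/2\rfloor\,\mathsf{H}(P_0,P_1)$, and for the unimodal case prove that $l\mapsto\mathsf{H}^2(P_0,P_l)$ is minimized at $l=\pm1$ via monotonicity of the cyclic autocorrelation of $\sqrt{P_0}$. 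Your approach buys a one-line treatment of condition (\ref{eq:defn-worst-case}) -- the metric property does all the work -- at the price of a more delicate rearrangement argument for condition (\ref{eq:defn-monotone}); the paper's single telescoping template handles both cases with less combinatorial bookkeeping. One cosmetic point: you write $R(l)-R(l+1)=\sum_y a_y(a_{y-l-1}-a_{y-l})$, which has the sign flipped (it should be $\sum_y a_y(a_{y-l}-a_{y-l-1})$); this propagates so that your final expression $\sum_z e_z\,[a_{z+l}-a_{z-l-1}]$ actually equals $R(l+1)-R(l)$, whose non-positivity is exactly what you need. The two slips cancel, the cyclic-distance comparison $d(z+l)\geq d(z-l-1)$ does hold on the stated ranges, and the conclusion $R(l)\geq R(l+1)$ stands.
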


\begin{proof} See Appendix \ref{sec:Proof-of-Lemma-two-class}.\end{proof}

\subsubsection{Why Algorithm \ref{alg:PP} works?}

We pause here to gain some insights about Algorithm \ref{alg:PP} and,
in particular, why the minimum KL divergence has emerged as a key
metric. Without loss of generality, we assume $x_{1}=\cdots=x_{n}=1$
to simplify the presentation.

Recall that Algorithm \ref{alg:PP} attempts to find the constrained
principal component of $\bm{L}$. To enable successful recovery, one
would naturally hope the structure of the data matrix $\bm{L}$ to
reveal much information about the truth. 
In the limit of large samples, it is helpful to start by looking at the mean of $\bm{L}_{i,j}$, which is given by
\begin{eqnarray}
  \mathbb{E}[\bm{L}_{i,j}]_{\alpha,\beta} 
  & = & p_{\mathrm{obs}}\mathbb{E}_{y\sim P_{0}}\big[\log P_{\alpha-\beta} (y) \big] \nonumber\\
   & = & p_{\mathrm{obs}}\mathbb{E}_{y\sim P_{0}}
  \left[ \log\frac{P_{\alpha-\beta} (y) }{P_{0} (y)} \right]  
   + p_{\mathrm{obs}} \mathbb{E}_{y\sim P_{0}} \big[ \log P_{0} (y) \big]  \nonumber \\
   & = & p_{\mathrm{obs}} \left[ -\mathsf{KL}_{\alpha-\beta}-\mathcal{H} (P_{0}) \right] 
\end{eqnarray}
%
 for any $i\neq j$ and $1\leq\alpha,\beta\leq m$; here and throughout,
$\mathcal{H}\left(P_{0}\right):=-\sum_{y}P_{0}\left(y\right)\log P_{0}\left(y\right)$
is the entropy functional, and
\begin{eqnarray}
  \mathsf{KL}_{l} & := & \mathsf{KL}\left(P_{0}\hspace{0.3em}\|\hspace{0.3em}P_{l}\right),\quad0\leq l<m.
  \label{eq:defn-KL-intuition}
\end{eqnarray}
We can thus write
\begin{equation}
  \mathbb{E}\left[\bm{L}\right] = p_{\mathrm{obs}}\left[
  \begin{array}{cccc}
  \bm{0} & \bm{K} & \cdots & \bm{K}\\
  \bm{K} & \bm{0} & \ddots & \vdots\\
  \vdots & \ddots & \bm{0} & \bm{K}\\
  \bm{K} & \cdots & \bm{K} & \bm{0}
  \end{array}\right]
  \label{eq:EL-general-block-intuition}
\end{equation}
with $\bm{K}\in\mathbb{R}^{m\times m}$ denoting a circulant matrix
\begin{equation}
  \bm{K} := \underset{:=\bm{K}^{0}}{\underbrace{\left[\begin{array}{cccc}
  -\mathsf{KL}_{0} & -\mathsf{KL}_{m-1} & \cdots & -\mathsf{KL}_{1}\\
  -\mathsf{KL}_{1} & -\mathsf{KL}_{0} & \ddots & -\mathsf{KL}_{2}\\
  \vdots & \ddots & \ddots & \vdots\\
  -\mathsf{KL}_{m-1} & \cdots & -\mathsf{KL}_{1} & -\mathsf{KL}_{0}
  \end{array}\right]}}-\mathcal{H}\left(P_{0}\right)\bm{1}\cdot\bm{1}^{\top}.
  \label{eq:defn-K-intuition}
\end{equation}
It is easy to see that the largest entries of $\bm{K}$ lie on the
main diagonal, due to the fact that
\[
  -\mathsf{KL}_{0}=0 \qquad \text{and} \qquad -\mathsf{KL}_{l} = -\mathsf{KL} (P_{0}\hspace{0.3em}\|\hspace{0.3em}P_{l}) < 0 
  \quad(1\leq l<m).
\]
Consequently, for any column of $\mathbb{E}[\bm{L}]$, knowledge of
the largest entries in each block reveals the relative positions across
all $\{x_{i}\}$. Take the 2nd column of $\mathbb{E}[\bm{L}]$ for
example: 
all but the first blocks of this column attain the maximum
values in their 2nd entries, telling us that $x_{2}=\cdots=x_{n}$.

Given the noisy nature of the acquired data, one would further need
to ensure that the true structure stands out from the noise. This
hinges upon understanding when $\bm{L}$ can serve as a reasonably
good proxy for $\mathbb{E}[\bm{L}]$ in the (projected) power iterations.
Since we are interested in identifying the largest entries, the signal
contained in each block---which is essentially the mean separation
between the largest and second largest entries---is of size 
\[
  p_{\mathrm{obs}}\min_{1\leq l<m}\mathbb{E}_{y\sim P_{0}}\left[\log P_{0}\left(y\right)-\log P_{l}\left(y\right)\right]
  ~=~  p_{\mathrm{obs}}\min_{1\leq l<m}\mathsf{KL}_{l} 
  ~=~ p_{\mathrm{obs}}\mathsf{KL}_{\min}.
\]
The total signal strength is thus given by $np_{\mathrm{obs}}\mathsf{KL}_{\min}$.
In addition, the variance in each measurement is bounded by
\begin{eqnarray*}
  \max_{1\leq l<m}{\mathsf{Var}}_{y\sim P_{0}}\left[\log P_{0}\left(y\right)-\log P_{l}\left(y\right)\right] 
  & \overset{(\text{a})}{\lesssim} & 
  \max_{1\leq l<m}\mathsf{KL}_{l} ~=~ \mathsf{KL}_{\max},
\end{eqnarray*}
where the inequality (a) will be demonstrated later in Lemma \ref{lem:KL-Var-Hel}.
From the semicircle law, the perturbation can be controlled by 
\begin{equation}
  \left\Vert \bm{L}-\mathbb{E}\left[\bm{L}\right] \right\Vert 
   = O\left( \sqrt{np_{\mathrm{obs}} \mathsf{KL}_{\max}} \right),
  \label{eq:semi-circle}
\end{equation}
where $\|\cdot\|$ is the spectral norm. This cannot exceed the size of the signal, namely,
\[
  np_{\mathrm{obs}}\mathsf{KL}_{\min}\gtrsim\sqrt{np_{\mathrm{obs}}\mathsf{KL}_{\max}}.
\]
This condition reduces to
\[
  \mathsf{KL}_{\min}\gtrsim\frac{1}{np_{\mathrm{obs}}}
\]
under Assumption \ref{assumption-KL-D}, which is consistent with
Theorem \ref{thm:ExactRecovery-General} up to some logarithmic factor.

\subsubsection{Optimality}

The preceding performance guarantee turns out to be information theoretically
optimal in the asymptotic regime. In fact, the KL divergence threshold given
in Theorem \ref{thm:ExactRecovery-General} is arbitrarily close to
the information limit, a level below which every procedure is bound
to fail in a minimax sense. We formalize this finding as a converse
result:

\begin{theorem}
\label{thm:lower-bound}
Fix $m>0$. Let $\left\{ P_{l,n}:\text{ }0\leq l<m\right\} _{n\geq1}$
be a sequence of probability measures supported on a finite set $\mathcal{Y}$,
where $\inf_{y\in\mathcal{Y},n,p}P_{l,n}\left(y\right)$ is bounded
away from 0. Suppose that there exists $1\leq l<m$ such that
\[
  \mathsf{KL}\left( P_{0,n} \hspace{0.3em}\|\hspace{0.3em} P_{l,n} \right)
  = \min_{1\leq j<m}\mathsf{KL}\left(P_{0,n}\hspace{0.3em}\|\hspace{0.3em}P_{j,n}\right) 
  := \mathsf{KL}_{\min,n}
\]
for all sufficiently large $n$,  and that $p_{\mathrm{obs}}>\frac{c_{0}\log n}{n}$ for some sufficiently
large constant $c_{0}>0$. If \footnote{Theorem \ref{thm:lower-bound} continues to hold if we replace 3.99
with any other constant between 0 and 4.} 
\begin{equation}
  \mathsf{KL}_{\min,n}<\frac{3.99\log n}{np_{\mathrm{obs}}},
  \label{eq:MinimaxBound}
\end{equation}
then the minimax probability of error
\begin{equation}
  \inf_{\hat{\bm{x}}}\max_{x_{i}\in[m],1\leq i\leq n}\mathbb{P}\left( \mcr \left(\hat{\bm{x}},\bm{x} \right) > 0  \mid \bm{x}\right) 
  \rightarrow 1,
  \qquad\text{as }n \rightarrow \infty,
\end{equation}
where the infimum is over all possible estimators and $\bm{x}$ is
the vector representation of $\left\{ x_{i}\right\}$ 
  as usual.\end{theorem}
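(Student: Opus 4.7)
The plan is to argue via a genie-aided pairwise hypothesis-testing reduction, followed by a second-moment aggregation over vertices. Fix the ground truth $\bm{x}^{\star}=(1,\ldots,1)$; by a change-of-variables computation (each summand below has the same marginal law $\log[P_{l^{\star}}(z)/P_{0}(z)]$ with $z\sim P_{0}$ regardless of $\bm{x}^{\star}$), the probabilities analyzed below are independent of $\bm{x}^{\star}$, so this loses nothing after passing to the Bayes risk under the uniform prior (and noting that minimax dominates any Bayes risk). The Bayes-optimal decoder (the MLE) recovers $\bm{x}^{\star}$ up to the global shift only if, for every vertex $i\in[n]$ and every $l\in\{1,\ldots,m-1\}$, the single-coordinate alternative $\bm{x}^{(i,l)}:=\bm{x}^{\star}+l\bm{e}_{i}$ has strictly smaller likelihood than $\bm{x}^{\star}$, since $\bm{x}^{(i,l)}$ lies in a different shift orbit. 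Equivalently, the block-$i$ log-likelihood ratio
\[
  \mathrm{LLR}_{i,l}\;:=\;\sum_{j:(i,j)\in\Omega}\log\frac{P_{l}(y_{i,j})}{P_{0}(y_{i,j})},
\]
which only involves observations incident to vertex $i$, must be negative. Picking $l^{\star}$ to attain $\mathsf{KL}_{\min,n}=\mathsf{KL}(P_{0}\|P_{l^{\star}})$ and setting $A_{i}:=\{\mathrm{LLR}_{i,l^{\star}}>0\}$, it therefore suffices to show $\mathbb{P}\bigl(\bigcup_{i}A_{i}\bigr)\to 1$ under $\bm{x}^{\star}$.

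For the individual probabilities, under $\bm{x}^{\star}$ the summands in $\mathrm{LLR}_{i,l^{\star}}$ are i.i.d., uniformly bounded (by the assumption $\inf_{y,n}P_{l,n}(y)>0$), and each has mean $-\mathsf{KL}_{\min,n}$. The vertex degree $N_{i}=|\{j:(i,j)\in\Omega\}|$ concentrates around $(n-1)p_{\mathrm{obs}}\gtrsim\log n$ by Chernoff. Conditional on $N_{i}$, the Bahadur--Rao precise large deviation yields
\[
  \mathbb{P}(A_{i}\mid N_{i})\;\ge\;\exp\!\bigl(-N_{i}\,C(P_{0},P_{l^{\star}})\,(1+o(1))\bigr),
\]
where $C(\cdot,\cdot)$ is the Chernoff information. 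Since the hypothesis $\mathsf{KL}_{\min,n}<3.99\log n/(np_{\mathrm{obs}})$ forces $\mathsf{KL}_{\min,n}\to 0$, the distributions $P_{0}$ and $P_{l^{\star}}$ become close, and a Taylor expansion of the Chernoff functional about $t=1/2$ (consistent with Lemma \ref{lem:KL-Var-Hel}, which relates $\mathsf{KL}$, $\mathsf{H}^{2}$, and the variance of the log-likelihood ratio) yields $C(P_{0},P_{l^{\star}})=(1+o(1))\,\mathsf{KL}_{\min,n}/4$. Combining,
\[
  \mathbb{P}(A_{i})\;\ge\;n^{-(np_{\mathrm{obs}}\mathsf{KL}_{\min,n})/(4\log n)\,(1+o(1))}\;\ge\;n^{-0.9975+o(1)},
\]
so $\mathbb{E}\bigl[\sum_{i}\mathbf{1}_{A_{i}}\bigr]\ge n^{0.0025+o(1)}\to\infty$.

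To promote this to $\mathbb{P}\bigl(\bigcup_{i}A_{i}\bigr)\to 1$, I apply the second-moment / Paley--Zygmund method. Because $A_{i}$ is a function only of $\{y_{i,k}\}_{k}$, the events $A_{i}$ and $A_{j}$ share exactly one random variable $y_{i,j}$, a vanishing fraction of the $\asymp np_{\mathrm{obs}}$ summands in each LLR. Conditioning on $y_{i,j}$ renders $A_{i}$ and $A_{j}$ independent, and boundedness of the summands ensures that this conditioning perturbs each marginal probability by at most a $(1+o(1))$ factor; hence $\mathbb{P}(A_{i}\cap A_{j})\le(1+o(1))\mathbb{P}(A_{i})\mathbb{P}(A_{j})$, so $\mathrm{Var}\bigl(\sum_{i}\mathbf{1}_{A_{i}}\bigr)=o\bigl(\mathbb{E}[\sum_{i}\mathbf{1}_{A_{i}}]^{2}\bigr)$ and Paley--Zygmund gives $\mathbb{P}\bigl(\bigcup_{i}A_{i}\bigr)\to 1$. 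The main obstacle is the sharpness of the large-deviation exponent in step two: one must recover $N_{i}\mathsf{KL}_{\min}/4$ with the correct $(1+o(1))$ constant (rather than a Hoeffding-type $N_{i}\mathsf{KL}_{\min}/2$, say), uniformly as $\mathsf{KL}_{\min,n}\to 0$ and over the random degrees $N_{i}$; it is precisely this exact constant $4$ that allows the slack between $3.99$ and $4$ to absorb all $o(1)$ losses. The correlation bound required for the second moment, while needing some care because of the shared observation $y_{i,j}$, is comparatively routine.
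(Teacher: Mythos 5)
Your reduction to the ML rule, the choice of the single-coordinate alternatives $\bm{x}^{\star}+l^{\star}\bm{e}_{i}$, and the first-moment computation (sharp moderate-deviation lower bound with exponent $N_{i}\mathsf{KL}_{\min}/4$, giving $\mathbb{E}[\sum_{i}\mathbf{1}_{A_{i}}]\to\infty$ under (\ref{eq:MinimaxBound})) all match the paper, which obtains the constant $1/4$ from a moderate deviation principle (Lemma \ref{lem:moderate-deviation}) rather than Bahadur--Rao; since the distributions vary with $n$ and $\mathsf{KL}_{\min,n}\to 0$, the MDP formulation is the right one, and the paper also inserts a reduction to the boundary regime $\mathsf{KL}_{\min,n}\asymp\log n/(np_{\mathrm{obs}})$ so that the MDP hypotheses apply --- a point you should make explicit.

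Where you genuinely diverge is the decoupling step, and this is where your argument is thinner than you suggest. The paper avoids any second-moment computation: it restricts to a sublinear vertex set $\mathcal{V}_{1}$ of size $\delta n$, splits each score into its within-$\mathcal{V}_{1}$ part (shown to be not too negative for most $i$, defining $\mathcal{V}_{2}$) and its crossing part, and observes that the crossing parts are \emph{exactly} independent across $i\in\mathcal{V}_{1}$; the union of independent events then tends to one, at the cost of a factor $(1+2\delta)^{2}$ in the exponent absorbed by the gap between $3.99$ and $4$. Your Paley--Zygmund route instead requires $\mathbb{P}(A_{i}\cap A_{j})\leq(1+o(1))\mathbb{P}(A_{i})\mathbb{P}(A_{j})$ --- and note that a weaker bound such as $n^{o(1)}\mathbb{P}(A_{i})\mathbb{P}(A_{j})$ does \emph{not} suffice, since the second-moment method then only yields $\mathbb{P}(\cup_{i}A_{i})\geq n^{-o(1)}$, which need not tend to $1$. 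Your justification ("boundedness of the summands ensures that this conditioning perturbs each marginal probability by at most a $(1+o(1))$ factor") does not follow from the log-asymptotic deviation estimates used in the first moment: those pin down $\log\mathbb{P}(A_{i})$ only up to an additive $o(\log n)$, whereas you need the tail probability at two thresholds differing by $O(\sqrt{\mathsf{KL}_{\min}})$ to agree up to an additive $o(1)$ in the logarithm. The claim is in fact true --- the threshold shift moves the standardized argument by $O(1/\sqrt{np_{\mathrm{obs}}})$ while the tail is evaluated at height $\asymp\sqrt{\log n}$, so a ratio-form (Cram\'er/Petrov-type) moderate deviation expansion with relative error control gives a $1+o(1)$ multiplicative change --- but establishing this uniformly over the triangular array and the random degrees is a substantive additional estimate, not a routine corollary of boundedness. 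So your route is viable but strictly harder at exactly the point you label "comparatively routine"; the paper's $\mathcal{V}_{1}/\mathcal{V}_{2}$ surgery exists precisely to trade an arbitrarily small loss in the exponent for exact independence and thereby dispense with any refined correlation estimate.
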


\subsection{Extension: removing Assumption \ref{assumption-Pmin}}

We return to Assumption \ref{assumption-Pmin}. 
As mentioned before, an exceedingly small $P_{0}(y)$ might result
in unstable log-likelihoods, which suggests we regularize the data
before running the algorithm. To this end, one alternative is to introduce
a little more entropy to the samples so as to regularize the noise
density, namely, we add a small level of random noise to yield
\begin{equation}
  \tilde{y}_{i,j}\overset{\text{ind}.}{=}
  \begin{cases}
    y_{i,j}, & \text{with probability }1-\varsigma,\\
    \mathsf{Unif}\left(m\right),\quad & \text{else},
  \end{cases}
  \label{eq:tilde-y}
\end{equation}
for some appropriate small constant $\varsigma>0$. The distribution of
the new data $\tilde{y}_{i,j}$ given $x_{i}-x_{j}=l$ is thus given
by
\begin{equation}
  \tilde{P}_{l}\quad\leftarrow\quad(1-\varsigma)P_{l} + \varsigma \mathsf{Unif}\left(m\right),
  \label{eq:tilde-P}
\end{equation}
which effectively bumps $\min P_{0}\left(y\right)$ up to $\left(1- \varsigma \right)\min P_{0}\left(y\right)+ \varsigma /m$.
We then propose to run Algorithm \ref{alg:PP} using the new data
$\left\{ \tilde{y}_{i,j}\right\} $ and $\{ \tilde{P}_{l}\} $,
leading to the following performance guarantee. 

\begin{theorem}
\label{thm:ExactRecovery-General-2}
Take $\varsigma > 0$
to be some sufficiently small constant, and suppose that Algorithm
\ref{alg:PP} operates upon $\left\{ \tilde{y}_{i,j}\right\} $ and
$\{\tilde{P}_{l}\}$. Then Theorem \ref{thm:ExactRecovery-General}
holds without Assumption \ref{assumption-Pmin}. 
\end{theorem}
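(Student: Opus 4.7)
The strategy is to reduce Theorem~\ref{thm:ExactRecovery-General-2} to Theorem~\ref{thm:ExactRecovery-General} applied to the \emph{regularized} problem. The pre-processing (\ref{eq:tilde-y}) is equivalent to drawing fresh samples $\tilde{y}_{i,j}$ from the model (\ref{eq:statistical-model}) with a shifted noise distribution $\tilde{P}_l = (1-\varsigma)P_l + \varsigma\,\mathsf{Unif}(m)$, so Algorithm~\ref{alg:PP} operating on $\{\tilde{y}_{i,j}\}$ and $\{\tilde{P}_l\}$ is just the PPM applied to a new instance of the same problem class. It therefore suffices to verify, for the tilde-instance, each hypothesis of Theorem~\ref{thm:ExactRecovery-General}---namely Assumptions~\ref{assumption-Pmin}--\ref{assumption-KL-D} and the threshold (\ref{eq:IT-sufficient-KL}) (or equivalently its Hellinger counterpart (\ref{eq:IT-sufficient-H2})).

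Verifying Assumption~\ref{assumption-Pmin} for $\tilde{P}_0$ is immediate: $\tilde{P}_0(y)\geq \varsigma/m$ pointwise, hence $m\min_y \tilde{P}_0(y) \geq \varsigma > 0$. This is exactly the step in which the regularization enforces the previously missing assumption.

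For Assumption~\ref{assumption-KL-D} and the threshold condition, I would route everything through the squared Hellinger distance, for two reasons: (i) $\mathsf{H}^2$ behaves well under mixtures, whereas $\mathsf{KL}$ can blow up if some $P_0(y)$ is near zero; and (ii) once the tilde-instance satisfies Assumption~\ref{assumption-Pmin}, Lemma~\ref{lem:KL-Var-Hel} yields $\widetilde{\mathsf{KL}}_l \asymp \widetilde{\mathsf{H}}^2(\tilde{P}_0,\tilde{P}_l)$ up to constants depending on $\varsigma$ and $m$. The crux is then a two-sided mixing estimate of the form
\[
    c_1(\varsigma)\,\mathsf{H}^2(P_0,P_l) \;\leq\; \widetilde{\mathsf{H}}^2(\tilde{P}_0,\tilde{P}_l) \;\leq\; c_2(\varsigma)\,\mathsf{H}^2(P_0,P_l),
\]
with $c_1(\varsigma), c_2(\varsigma)>0$ tending to $1$ as $\varsigma \to 0$. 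This can be obtained by expanding $(\sqrt{\tilde{P}_0(y)}-\sqrt{\tilde{P}_l(y)})^2$, bounding denominators via $\sqrt{a+b}\leq \sqrt{a}+\sqrt{b}$ together with $\sqrt{\tilde{P}_l(y)} \geq \sqrt{(1-\varsigma)P_l(y)}$, and exploiting the symmetry~(\ref{eq:assumption-symmetric}). With this estimate, the ratio $\widetilde{\mathsf{KL}}_{\max}/\widetilde{\mathsf{KL}}_{\min}$ inherits boundedness from $\mathsf{H}^2_{\max}/\mathsf{H}^2_{\min}$, so Assumption~\ref{assumption-KL-D} transfers; and the Hellinger threshold for the tilde-instance follows from that of the original by absorbing the constant-factor loss $c_1(\varsigma)$ into the slack between the hypothesized constant $4.01$ and the critical value $4$---that is, by taking $\varsigma$ sufficiently small, as stated.

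The main obstacle I anticipate is the lower-bound half of the Hellinger mixing inequality: one must rule out the possibility that introducing a small uniform floor contracts $\mathsf{H}^2$ by more than a factor approaching $1$. This likely requires a case split according to the relative magnitudes of $P_0(y)$, $P_l(y)$, and the floor $\varsigma/m$, handling separately the indices where the original densities are comparable to the floor and those where they dominate it. A secondary subtlety is bridging the KL-form hypothesis (\ref{eq:IT-sufficient-KL}) to the Hellinger estimates used above in the absence of Assumption~\ref{assumption-Pmin}; this is resolved by appealing to the Hellinger-form restatement of Theorem~\ref{thm:ExactRecovery-General} highlighted in the remark following it, which is the natural formulation once Assumption~\ref{assumption-Pmin} is dropped.
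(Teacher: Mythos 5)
Your reduction to Theorem \ref{thm:ExactRecovery-General} for the tilde-instance, and the verification of Assumption \ref{assumption-Pmin} via $\tilde{P}_0(y)\geq \varsigma/m$, both match the paper. The gap is in your central technical claim: the lower-bound half of the two-sided Hellinger mixing estimate, $\mathsf{H}^{2}(\tilde{P}_{0},\tilde{P}_{l})\geq c_{1}(\varsigma)\,\mathsf{H}^{2}(P_{0},P_{l})$ with $c_{1}(\varsigma)$ independent of the distributions, is false, and no case split can rescue the inequality as stated. Concretely, at a point $y$ where both $a=P_{0}(y)$ and $b=P_{l}(y)$ lie far below the floor $\varsigma/m$, the contribution $\frac{1}{2}(\sqrt{a}-\sqrt{b})^{2}$ is contracted to roughly $m(a-b)^{2}/(8\varsigma)$; e.g.\ with $a=0$ and $b=\delta\ll\varsigma/m$ the contribution drops from $\delta/2$ to $\Theta(\delta^{2}/\varsigma)$, an unbounded multiplicative loss as $\delta\to0$. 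Since the threshold $\mathsf{H}^{2}_{\min}\gtrsim \log n/(np_{\mathrm{obs}})$ can be far smaller than the constant $\varsigma$, this is exactly the regime in which you cannot afford to lose even a bounded constant factor, let alone an unbounded one. A related unpatched hole is the bridge from the KL-form hypothesis (\ref{eq:IT-sufficient-KL}) on the \emph{original} $P_l$'s to a Hellinger condition: without Assumption \ref{assumption-Pmin} this conversion costs a factor of order $\log R$ with $R$ the density dynamic range, which is unbounded precisely in the case the theorem is meant to cover.

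The paper's proof runs on a mechanism you never invoke: the interplay between Assumption \ref{assumption-KL-D} and the \emph{failure} of Assumption \ref{assumption-Pmin}. If $\min_y P_0(y)\to0$ while $m$ is fixed, then working with $d(P\hspace{0.3em}\|\hspace{0.3em}Q)=\sum_y P(y)\left|\log\frac{P(y)}{Q(y)}\right|$ and the second Pinsker inequality shows $\mathsf{KL}_{\max}\asymp\log\nu\to\infty$, where $\nu$ is the dynamic range of $P_0$; Assumption \ref{assumption-KL-D} then forces $\mathsf{KL}_{\min}\asymp\log\nu$ as well, and this in turn forces, for every $l$, a point $y_l$ with $P_0(y_l)\asymp1$ and $P_0(y_l)/P_l(y_l)\to\infty$, hence an order-one gap $|P_0(y_l)-P_l(y_l)|$. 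That $\Theta(1)$ separation survives the mixing, giving $\mathsf{KL}(\tilde{P}_0\hspace{0.3em}\|\hspace{0.3em}\tilde{P}_l)\gtrsim \mathsf{TV}^2(\tilde{P}_0,\tilde{P}_l)\asymp1$, which clears the threshold $4.01\log n/(np_{\mathrm{obs}})$ with room to spare and preserves Assumption \ref{assumption-KL-D} for the tilde-instance. In short, the regime where your mixing inequality breaks down is excluded not by a pointwise analytic estimate but by showing that, under Assumption \ref{assumption-KL-D}, the distributions violating Assumption \ref{assumption-Pmin} are automatically so well separated that no near-threshold bookkeeping is required.
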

\begin{proof} See Appendix \ref{sec:proof-thm:ExactRecovery-General-2}. \end{proof}


\subsection{Extension: large-$m$ case}\label{sub:large-m-extension}

So far our study has focused on the case where the alphabet size $m$
does not scale with $n$. There are, however, no shortage of situations
where $m$ is so large that it cannot be treated as a fixed constant.
The encouraging news is that Algorithm \ref{alg:PP} appears surprisingly
competitive for the large-$m$ case as well. Once again, we begin with the
random corruption model, and our analysis developed for fixed $m$
immediately applies here. 

\begin{theorem}
\label{thm:ExactRecovery-outlier-large-m}
Suppose
that $m\gtrsim\log n$, $m=O(\mathrm{poly}(n))$, and $p_{\mathrm{obs}}\geq c_{1}\log^{2}n/n$
for some sufficiently large constant $c_{1}>0$. Then Theorem \ref{thm:ExactRecovery-outlier}
continues to hold with probability at least 0.99, as long as (\ref{eq:IT-sufficient-dense})
is replaced by
\begin{equation}
  \pi_{0}>\frac{c_{3}}{\sqrt{np_{\mathrm{obs}}}}
  \label{eq:IT-sufficient-dense-large-m}
\end{equation}
for some universal constant $c_{3}>0$. Here, 0.99 is arbitrary and
can be replaced by any constant between 0 and 1.
\end{theorem}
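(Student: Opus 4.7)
The plan is to recast the hypotheses of Theorem~\ref{thm:ExactRecovery-outlier-large-m} into the language of Theorem~\ref{thm:ExactRecovery-General} and then retrace its proof while tracking the dependence on $m$. Concretely, I would first verify the two structural assumptions together with the minimum-KL condition $\mathsf{KL}_{\min}\gtrsim \log n/(np_{\mathrm{obs}})$; once this is done, what remains is to revisit the initialization and the projected-power contraction and confirm that they survive when $m$ is allowed to scale polynomially with $n$, with the success probability relaxed from $1-o(1)$ to $0.99$ to absorb the union-bound losses incurred by a growing alphabet.

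For the structural assumptions, Assumption~\ref{assumption-KL-D} is immediate because under the random-corruption model the $P_l$ for $l\neq 0$ are all cyclic shifts of one another, so $\mathsf{KL}_{\max}/\mathsf{KL}_{\min}=1$; Assumption~\ref{assumption-Pmin} holds because $m\min_y P_0(y)=1-\pi_0=1-o(1)$ since (\ref{eq:IT-sufficient-dense-large-m}) combined with $p_{\mathrm{obs}}\geq c_1\log^2 n/n$ forces $\pi_0=o(1)$. A direct computation gives
\[
  \mathsf{KL}_{\min}=\pi_0\log\!\Bigl(1+\tfrac{m\pi_0}{1-\pi_0}\Bigr).
\]
When $m\pi_0\lesssim 1$ this is of order $m\pi_0^2$, and the combination $m\gtrsim \log n$ with $\pi_0\gtrsim 1/\sqrt{np_{\mathrm{obs}}}$ yields $\mathsf{KL}_{\min}\gtrsim \log n/(np_{\mathrm{obs}})$. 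When $m\pi_0\gtrsim 1$ one has $\mathsf{KL}_{\min}\gtrsim \pi_0 \gtrsim 1/\sqrt{np_{\mathrm{obs}}}$, and the condition $p_{\mathrm{obs}}\geq c_1\log^2 n/n$ is precisely what is needed to upgrade this to $\mathsf{KL}_{\min}\gtrsim \log n/(np_{\mathrm{obs}})$. Hence (\ref{eq:IT-sufficient-KL}) is satisfied in both regimes, and the extra $\log n$ factor in the sampling-rate assumption compared with Theorem~\ref{thm:ExactRecovery-outlier} is dictated by this second regime.

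With the KL bound in place, the iteration-stage analysis of Theorem~\ref{thm:ExactRecovery-General} carries over essentially verbatim. For each vertex $i$, the key event is that the $x_i$-th coordinate of $\bm{L}\bm{z}^{(t)}$ dominates the remaining $m-1$ entries, which reduces to a Bernstein-type concentration on a sum of $\sim np_{\mathrm{obs}}$ bounded random variables with mean gap $\asymp p_{\mathrm{obs}}\mathsf{KL}_{\min}$ and variance $O(p_{\mathrm{obs}}\mathsf{KL}_{\max})$. Union-bounding over at most $nm=\mathrm{poly}(n)$ such events costs only a factor $\log(nm)\asymp \log n$ in the exponent, which is absorbed by the constant-factor headroom between $\mathsf{KL}_{\min}$ and the information-theoretic floor. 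The contraction inequality (\ref{eq:error-contraction-remark}) then propagates exactly as in the fixed-$m$ case.

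The main obstacle is the spectral initialization. Matrix Bernstein yields $\|\bm{L}-\mathbb{E}[\bm{L}]\|=O(\sqrt{np_{\mathrm{obs}}\log(nm)})$, while $\sigma_m(\mathbb{E}[\bm{L}])\asymp np_{\mathrm{obs}}\pi_0$, so a Davis--Kahan sine-distance bound between the rank-$m$ subspaces of $\bm{L}$ and $\mathbb{E}[\bm{L}]$ is of order $1/(\pi_0\sqrt{np_{\mathrm{obs}}})$; this is small precisely when $\pi_0\gtrsim 1/\sqrt{np_{\mathrm{obs}}}$, matching (\ref{eq:IT-sufficient-dense-large-m}). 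One then has to show that a random column of the rank-$m$ approximation $\hat{\bm{L}}$, after block-wise projection onto the simplex, lies in $\{\mcr<0.49\}$ with probability at least a constant. With $m$ growing, the top-$m$ subspace is higher-dimensional and the number of candidate labels per block also grows, so this random-column argument must be tightened and is the only step whose success probability does not improve to $1-o(1)$; this is exactly what forces the theorem to be stated with the constant probability $0.99$, and it is the technically most delicate part of the extension.
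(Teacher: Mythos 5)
Your numerology is largely right (the computation of $\mathsf{KL}_{\min}=\pi_0\log\frac{1+(m-1)\pi_0}{1-\pi_0}$, the case split on $m\pi_0\lessgtr 1$, and the observation that $p_{\mathrm{obs}}\gtrsim\log^2 n/n$ is exactly what rescues the regime $m\pi_0\gtrsim 1$), but the reduction target is wrong and this creates a genuine gap. You propose to verify the hypotheses of Theorem~\ref{thm:ExactRecovery-General} and then ``retrace'' its fixed-$m$ proof; the paper instead deduces Theorem~\ref{thm:ExactRecovery-outlier-large-m} from the large-$m$ result (Theorem~\ref{thm:ExactRecovery-General-large-m} and its iterative-stage version, Theorem~\ref{thm:Iterative-stage}), whose hypotheses are (\ref{eq:KLmax-general-m}) and (\ref{eq:sep-si-1-2}) rather than the bare KL condition (\ref{eq:IT-sufficient-KL}). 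The distinction is not cosmetic. Your key concentration claim --- that each summand has ``variance $O(p_{\mathrm{obs}}\mathsf{KL}_{\max})$'' --- is false precisely in the regime $m\pi_0\gg 1$: writing $L:=\log\frac{1+(m-1)\pi_0}{1-\pi_0}$, one has $\mathsf{Var}_{y\sim P_0}[\log\frac{P_0(y)}{P_l(y)}]\asymp(\pi_0+\tfrac1m)L^2\asymp \mathsf{KL}_{\max}\cdot L$, and $L$ diverges with $m\pi_0$. The bound $\mathsf{Var}\lesssim\mathsf{KL}$ (Lemma~\ref{lemma:V-KL-UB}(2)) requires bounded likelihood ratios, which is exactly what fails for large $m$; the same failure breaks the Hellinger--KL equivalence used in Lemma~\ref{lem:sep-s}(1) and the bound (\ref{eq:log1-KL}) used in Lemma~\ref{lemma:r-norm-KL}. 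This is why the paper introduces the separate condition (\ref{eq:sep-si-1-2}), whose second inequality specializes for the random corruption model to $\pi_0\gtrsim\log(mn)/(np_{\mathrm{obs}})$ --- the true source of the $\log^2 n$ sampling requirement. A correct proof must either verify (\ref{eq:KLmax-general-m}) and (\ref{eq:sep-si-1-2}) directly for the RCM (as the paper does in Section~\ref{sub:Consequences-RCM}) or redo the Bernstein step with the correct variance and range parameters; your sketch does neither. A related minor error: the conditions of the theorem are lower bounds on $\pi_0$ and do not force $\pi_0=o(1)$, though Assumption~\ref{assumption-Pmin} ($m\min_yP_0(y)=1-\pi_0$) still holds whenever $\pi_0$ is bounded away from $1$.

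On initialization, you explicitly leave the step open (``must be tightened''), but in the paper it requires no tightening: Theorem~\ref{thm:SpectralMethod}(i) bounds $\|\hat{\bm{L}}-p_{\mathrm{obs}}\bm{1}\bm{1}^\top\otimes\bm{K}\|_{\mathrm F}$ by $\sqrt{2m}\,\gamma_{\bm L}$ using only that both matrices have rank at most $m$, and then applies Markov's inequality to the randomly chosen column; the factor $m$ cancels against the $nm$ columns, so the argument is dimension-free and yields $\mcr(\bm z^{(0)},\bm x)\le\delta^2/2$ with probability $1-\xi$ under exactly $\pi_0\gtrsim\frac{1}{\delta\sqrt{\xi}}\cdot\frac{1}{\sqrt{np_{\mathrm{obs}}}}$. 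The constant success probability $0.99$ in the theorem comes from this Markov step (you correctly identified this), not from any union bound over labels. Finally, a complete proof also needs the two bookkeeping steps the paper carries out: the affine equivalence between the parameter-free input (\ref{eq:L-random-corruption}) and the log-likelihood input, and the verification that $\sigma_2(\bm L)\lesssim np_{\mathrm{obs}}\pi_0$ so that $\mu_t\gtrsim 1/\sigma_2(\bm L)$ implies the working condition $\mu>c_5/(np_{\mathrm{obs}}\mathsf{KL}_{\min})$; your proposal omits both.
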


The main message of Theorem \ref{thm:ExactRecovery-outlier-large-m}
is that the error correction capability of the proposed method improves
as the number $n$ of unknowns grows. The quantitative bound (\ref{eq:IT-sufficient-dense-large-m})
implies successful recovery even
when an overwhelming fraction of the measurements are corrupted. 
Notably, when $m$ is exceedingly large, Theorem \ref{thm:ExactRecovery-outlier-large-m} might shed light on the continuous joint alignment problem. 
In particular, there are two cases worth emphasizing:

\begin{itemize}
\item When $m \gg n$ (e.g.~$m \gtrsim n^{10}$), the random corruption model converges to the following continuous spike model as $n$ scales:
\begin{align}
  x_i \in [0,1), ~ 1\leq i\leq n  \qquad \text{and} \qquad   
  y_{i,j} = \begin{cases}  
	x_i - x_j~\mathrm{mod}~1, \quad  &\text{with prob. } \pi_0, \\ 
	\mathsf{Unif}(0,1), 	  \quad  &\text{else},
  \end{cases}  \quad (i,j) \in \Omega. 
  \label{eq:continuous-RCM}
\end{align}  
This coincides with the setting studied in \cite{singer2011angular,wang2013exact} over the orthogonal group $\mathsf{SO}(2)$, under the name of {\em  synchronization} \cite{liu2016estimation,boumal2016nonconvex,bandeira2016low}. It has been shown that the leading eigenvector of a certain  data matrix becomes positively correlated with the  truth as long as $\pi_0 > 1/\sqrt{n}$ when $p_{\mathrm{obs}}=1$ \cite{singer2011angular}. In addition, a generalized power method---which is equivalent to projected gradient descent---provably converges to the solution of the nonconvex least-squares estimation, as long as the size of the noise is below some threshold \cite{liu2016estimation,boumal2016nonconvex}. 
 When it comes to exact recovery, Wang et al.~prove that semidefinite relaxation succeeds as long as $\pi_0 > 0.457$ \cite[Theorem 4.1]{wang2013exact}, a constant threshold irrespective of $n$. In contrast, the exact recovery performance of our approach---which operates over a lifted discrete space rather than $\mathsf{SO}(2)$---improves with $n$, allowing $\pi_0$ to be arbitrarily small when $n$ is sufficiently large. 
On the other hand, the model (\ref{eq:continuous-RCM})  is reminiscent of the more general robust PCA problem \cite{CanLiMaWri09,chandrasekaran2011rank}, which consists in recovering a
low-rank matrix when a fraction of observed entries are corrupted. We have learned from the literature \cite{ganesh2010dense,chen2013low} that 
perfect reconstruction is feasible and tractable even though a dominant
portion of the observed entries may suffer from random corruption, 
which is consistent with our finding in Theorem \ref{thm:ExactRecovery-outlier-large-m}.

\item In the preceding spike model, the probability density of each measurement experiences an impulse around the truth. In a variety of realistic scenarios, however, the noise density might be more smooth rather than being spiky. Such smoothness conditions can be modeled by enforcing $P_0(z) \asymp P_0(z+1)$ for all $z$, so as to rule out any sharp jump.  To satisfy this condition, we can at most take $m \asymp \sqrt{n p_{\mathrm{obs}}}$ in view of Theorem \ref{thm:ExactRecovery-outlier-large-m}. 
In some sense, this uncovers the ``resolution'' of our estimator under the ``smooth'' noise model: if we constrain the input domain to be the unit interval 
by letting 
$x_i = 1, \cdots, m$ represent the grid points $0, \frac{1}{m}, \cdots, \frac{m-1}{m}$, respectively, then the PPM can recover each variable up to a resolution of 
\begin{align}
  \frac{1}{m} \asymp  \frac{1}{ \sqrt{n p_{\mathrm{obs}}}} .
\end{align}
\end{itemize}
Notably, the discrete random corruption model has been investigated in prior literature \cite{huang2013consistent,chen2014near},
with the best theoretical support derived for convex programming. Specifically, it has been shown by \cite{chen2014near} that convex relaxation is guaranteed to work as soon as $\pi_{0}\gtrsim\frac{\log^{2}n}{\sqrt{np_{\mathrm{obs}}}}$. In comparison, this is more stringent than the recovery condition (\ref{eq:IT-sufficient-dense-large-m}) we develop for the PPM  by  some logarithmic factor. 
Furthermore, Theorem \ref{thm:ExactRecovery-outlier-large-m} is an immediate consequence
of a more general result: 

\begin{theorem}
\label{thm:ExactRecovery-General-large-m}
Assume
that $m\gtrsim\log n$, $m=O(\mathrm{poly}(n))$, and $p_{\mathrm{obs}}>c_{1}\log^{5}n/n$
for some sufficiently large constant $c_{1}>0$. Suppose that $\bm{L}$ is replaced by $\bm{L}^{\mathrm{debias}}$ when computing the initial guess $\bm{z}^{(0)}$ and $\mu \geq c_{3} \sqrt{m} / \sigma_{m}\left(\bm{L}^{\mathrm{debias}} \right)$ for some sufficiently large constant $c_{3}>0$. Then Theorem \ref{thm:ExactRecovery-General}
continues to hold with probability exceeding 0.99, provided that (\ref{eq:IT-sufficient-KL})
is replaced by
\begin{equation}
  \frac{\mathsf{KL}_{\min}^{2}}{\max_{1\leq l<m}\left\Vert \log\frac{P_{0}}{P_{l}}\right\Vert _{1}^{2}}\geq\frac{c_{3}}{np_{\mathrm{obs}}}
  \label{eq:SNR-condition-general-m}
\end{equation}
for some universal constant $c_{3}>0$. Here, $\left\Vert \log\frac{P_{0}}{P_{l}}\right\Vert _{1}:=\sum_{y}\left|\log\frac{P_{0}(y)}{P_{l}(y)}\right|$,
and 0.99 can be replaced by any constant in between 0 and 1. 
\end{theorem}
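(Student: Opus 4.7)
My strategy mirrors the architecture used for Theorem~\ref{thm:ExactRecovery-General}: (i) show that the spectral initialization yields $\mcr(\bm{z}^{(0)},\bm{x})\le 0.49$ with constant probability, and (ii) establish that each projected power iteration contracts the error geometrically on the set $\{\bm{z}:\mcr(\bm{z},\bm{x})\le 0.49\}$. What must change in the large-$m$ setting is that every concentration inequality has to track the $m$-dependence and that the pointwise log-likelihood bounds used in the fixed-$m$ proof must be replaced by $\|\log(P_0/P_l)\|_1$, which is the norm that emerges naturally once $m$ diverges. Using the debiased matrix $\bm{L}^{\mathrm{debias}}$ is essential for initialization: the ``DC'' component of $\bm{L}$ has spectral size $\asymp nmp_{\mathrm{obs}}$ and would otherwise swamp the informative $m-1$ remaining modes, whereas debiasing removes this mode and exposes the rank-$m$ signal carrying the truth.

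\textbf{Initialization.} I would first analyze $\mathbb{E}[\bm{L}^{\mathrm{debias}}]$: its block structure factorizes through the $m\times m$ circulant matrix built from $\{-\mathsf{KL}_l\}_{0\le l<m}$ after mean removal, yielding $\sigma_m(\mathbb{E}[\bm{L}^{\mathrm{debias}}])\gtrsim np_{\mathrm{obs}}\mathsf{KL}_{\min}$. A matrix Bernstein bound applied block-wise to $\bm{L}^{\mathrm{debias}}-\mathbb{E}[\bm{L}^{\mathrm{debias}}]$ gives a perturbation of order $\sqrt{np_{\mathrm{obs}}\log n}\cdot\max_l\|\log(P_0/P_l)\|_1$, up to polylogarithmic factors; under (\ref{eq:SNR-condition-general-m}) this is much smaller than $\sigma_m$. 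Wedin's sin-$\Theta$ theorem then shows that the top-$m$ subspace of $\bm{L}^{\mathrm{debias}}$ is close to that of its mean, so $\|\hat{\bm{L}}-\mathbb{E}[\bm{L}^{\mathrm{debias}}]\|_{\mathrm{F}}$ is small. A Markov-type averaging argument over the $nm$ columns of $\hat{\bm{L}}$ shows that with constant probability a random column has $\ell_2$ error of order $\sigma_m/\sqrt{m}$ relative to the corresponding column of $\mathbb{E}[\bm{L}^{\mathrm{debias}}]$; scaling by $\mu_0\asymp\sqrt{m}/\sigma_m(\bm{L}^{\mathrm{debias}})$ and projecting block-wise onto the simplex, most blocks identify the correct index, yielding $\mcr(\bm{z}^{(0)},\bm{x})\le 0.49$.

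\textbf{Iterative refinement.} For the contraction step I would prove a uniform bound analogous to (\ref{eq:error-contraction-remark}). Fix any feasible $\bm{z}\in\Delta^n$ with $\mcr(\bm{z},\bm{x})\le 0.49$ and decompose $\bm{L}\bm{z}=\mathbb{E}[\bm{L}]\bm{z}+(\bm{L}-\mathbb{E}[\bm{L}])\bm{z}$. The mean part produces in each block $i$ a signal gap of order $np_{\mathrm{obs}}\mathsf{KL}_{\min}$ between coordinate $x_i$ and any wrong coordinate, modulo a lower-order error coming from the misclassified entries of $\bm{z}$. The noise part is a sum of $\asymp np_{\mathrm{obs}}$ independent rows of centered block matrices; a vector Bernstein inequality combined with a union bound over $O(nm)=O(\mathrm{poly}(n))$ coordinates yields a noise term of order $\sqrt{np_{\mathrm{obs}}\log n}\cdot\max_l\|\log(P_0/P_l)\|_1$. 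Condition (\ref{eq:SNR-condition-general-m}) is exactly what forces the signal to dominate the noise in every block, so that after rescaling by $\mu_t$ and projecting onto the simplex, every correctly labelled block remains correct while a constant fraction of incorrectly labelled blocks gets corrected, delivering the geometric contraction.

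\textbf{Main obstacle.} The central difficulty is handling the statistical dependence between $\bm{z}^{(t)}$ and the data $\{y_{i,j}\}$ uniformly over all $\bm{z}$ with $\mcr\le 0.49$. I expect to resolve this either by a leave-one-out decoupling argument (as is standard in the recent nonconvex literature) or, when $\mu_t=\infty$, by exploiting the fact that $\bm{z}^{(t)}\in\{\bm{e}_1,\ldots,\bm{e}_m\}^n$ and union-bounding over this combinatorial set. In the large-$m$ regime this union bound contributes an extra $n\log m$ factor, and the strengthened hypothesis $p_{\mathrm{obs}}>c_1\log^5 n/n$ (rather than $\log n/n$), together with (\ref{eq:SNR-condition-general-m}), is precisely calibrated to absorb the slack. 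A secondary technical point is the random-column step of the initialization: individual columns of $\hat{\bm{L}}$ need not be uniformly close to those of $\mathbb{E}[\bm{L}^{\mathrm{debias}}]$, so one must argue that a \emph{typical} column is good enough, which is where the $\sqrt{m}$ factor in $\mu_0$ enters to rescale the $m$-block structure correctly.
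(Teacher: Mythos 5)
Your high-level architecture (spectral initialization plus uniform per-iteration contraction, with the debiasing rationale and the reading of (\ref{eq:SNR-condition-general-m}) as a per-block SNR condition) matches the paper, but there are two concrete gaps. First, your noise control loses a logarithm that the stated theorem cannot afford. You invoke matrix Bernstein (for the spectral perturbation) and vector Bernstein with a union bound over $O(nm)$ coordinates (for the iteration), both of which yield deviations of order $\sqrt{np_{\mathrm{obs}}\log n}\cdot\max_l\|\log(P_0/P_l)\|_1$. Balancing this against the signal $np_{\mathrm{obs}}\mathsf{KL}_{\min}$ forces $\mathsf{KL}_{\min}^2/\max_l\|\log(P_0/P_l)\|_1^2\gtrsim \log n/(np_{\mathrm{obs}})$, which is strictly stronger than (\ref{eq:SNR-condition-general-m}); the extra $\log n$ is a multiplicative factor in a ratio condition and is \emph{not} absorbed by taking $p_{\mathrm{obs}}\gtrsim\log^5 n/n$. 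The paper avoids this by proving the log-free bound $\|\bm{L}-\mathbb{E}[\bm{L}]\|\lesssim K\sqrt{np_{\mathrm{obs}}}$ with $K=\frac{1}{m}\sum_l\|\log\frac{P_0}{P_l}\|_1$ via the moment method plus Talagrand's inequality (Lemmas \ref{lem:random-matrix-norm}--\ref{lem:deviation-Lhat}), and explicitly remarks that matrix Bernstein is insufficient here.

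Second, your proposed resolutions of the ``main obstacle'' (dependence of $\bm{z}^{(t)}$ on the data) do not work as stated. A union bound over $\{\bm{e}_1,\dots,\bm{e}_m\}^n$ requires per-point failure probability $m^{-n}$, whereas the available Bernstein-type tails for a single $\bm{z}$ are only polynomially small in $nm$; and a leave-one-out decoupling is a genuinely different, unproven route. The paper's actual mechanism is deterministic: it splits $\bm{L}\bm{z}$ so that the only data-dependent noise term is $\tilde{\bm{L}}\bm{h}$ with $\bm{h}=\bm{z}-\bm{x}$, and then controls the \emph{order statistics} of its blocks via $\|\tilde{\bm{L}}\bm{h}\|\le\|\tilde{\bm{L}}\|\,\|\bm{h}\|$, so that for every $\bm{h}$ simultaneously all but $\rho k^*$ blocks have small norm (Lemmas \ref{lemma:r-norm-KL} and \ref{lemma:r-norm-KL-q}); uniformity is inherited from the single high-probability event $\{\|\tilde{\bm{L}}\|\lesssim K\sqrt{np_{\mathrm{obs}}}\}$, with no union bound over $\bm{z}$. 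This also explains the two-regime split you omit: in the small-error regime the signal must be taken as the realized $\bm{L}\bm{x}$ (whose separation is controlled by a Chernoff/Hellinger bound, Lemma \ref{lem:sep-s}) rather than $\mathbb{E}[\bm{L}]\bm{x}$, since $\tilde{\bm{L}}\bm{x}$ is not block-wise negligible. Finally, the initialization in the paper is more elementary than your Wedin-based route: Eckart--Young plus the triangle inequality give $\|\hat{\bm{L}}-p_{\mathrm{obs}}\bm{1}\bm{1}^\top\otimes\bm{K}\|\le 2\gamma_{\bm{L}}$, the rank-$2m$ structure upgrades this to Frobenius norm, and Markov over the random column finishes; no subspace perturbation theorem is needed.
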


Some brief interpretations of (\ref{eq:SNR-condition-general-m}) are in order.
As discussed before, the quantity $\mathsf{KL}_{\min}$ represents
the strength of the signal. In contrast, the term $\max_{l}\left\Vert \log\frac{P_{0}}{P_{l}}\right\Vert _{1}^{2}$
controls the variability of each block of the data matrix $\bm{L}_{i,j}$;
to be more precise, 
\[
  \mathbb{E}\left[\left\Vert \bm{L}_{i,j}-\mathbb{E}\left[\bm{L}_{i,j}\right]\right\Vert ^{2}\right]
  \lesssim \max_{1\leq l<m}\left\Vert \log\frac{P_{0}}{P_{l}}\right\Vert _{1}^{2}
\]
as we will demonstrate in the proof. Thus, the left-hand side of
(\ref{eq:SNR-condition-general-m}) can be regarded as the signal-to-noise-ratio (SNR)
experienced in each block $\bm{L}_{i,j}$. The recovery criterion is
thus in terms of a lower threshold on the SNR, which can be vanishingly
small in the regime considered in Theorem
\ref{thm:ExactRecovery-General-large-m}.
We note that the general alignment problem has been studied in \cite{bandeira2014multireference} as well, although the focus therein is to show the stability of semidefinite relaxation in the presence of random vertex noise.

We caution, however, that the performance guarantees presented in
this subsection are in general not information-theoretically optimal.
For instance, it has been shown in \cite{chen2015information} that the
MLE succeeds as long as 
\[
  \pi_{0}\gtrsim\sqrt{\frac{\log n}{mnp_{\mathrm{obs}}}}
\]
in the regime where $m\ll np_{\mathrm{obs}}/\log n$; that is, the
performance of the MLE improves as $m$ increases. It is noteworthy that none
of the polynomial algorithms proposed in prior works achieves optimal
scaling in $m$. It remains to be seen whether this arises
due to some drawback of the algorithms, or due to the existence of some inherent information-computation
gap.

\section{Numerical experiments\label{sec:Numerics}}

This section examines the empirical performance of the projected power method on both synthetic instances and real image data.  While the statistical assumptions (i.e.~the i.i.d.~noise model) underlying our theory typically do not hold in practical applications (e.g.~shape alignment and graph matching),  our numerical experiments show that the PPM developed based on our statistical models enjoy favorable performances when applied to real datasets.

\subsection{Synthetic experiments\label{sec:synthetic}}

To begin with, we conduct a series of Monte Carlo trials for various
problem sizes under the random corruption model (\ref{eq:random-outlier}).
Specifically, we vary the number $n$ of unknowns, the input corruption rate $1-\pi_{0}$, and the alphabet size
$m$, with the observation
rate set to be $p_{\mathrm{obs}}=1$ throughout. For each $(n,\pi_{0},m)$
tuple, 20 Monte Carlo trials are conducted. In each trial, we
draw each $x_{i}$ uniformly at random over $[m]$, generate a set of measurements
$\left\{ y_{i,j}\right\} _{1\leq i,j\leq n}$ according to (\ref{eq:random-outlier}),
and record the misclassification rate 
$\mcr(\hat{\bm x}, \bm{x})$ 
of Algorithm \ref{alg:PP}. The mean empirical misclassification rate is then
calculated by averaging over 20 Monte Carlo trials. 

Fig.~\ref{fig:numerics-RCM} depicts the mean empirical misclassification rate when
$m=2,10,20$, and accounts for two choices of the scaling factors:
(1) $\mu_{t}\equiv 10/\sigma_{2}(\bm{L})$, and (2) $\mu_{t}\equiv\infty$.
In particular, the solid lines locate the asymptotic phase transitions 
for exact recovery predicted by our theory. In all cases, the empirical
phase transition curves come closer to the analytical prediction
as the problem size $n$ increases.  

Another noise model that we have studied numerically is a modified
Gaussian model. Specifically, we set $m=5,9,15$, and the random noise
$\eta_{i,j}$ is generated in such a way that 
\begin{equation}
  \mathbb{P}\left\{ \eta_{i,j}=z\right\} ~\propto~ \exp\left(-\frac{z^{2}}{2\sigma^{2}}\right),
  \qquad  -\frac{m-1}{2} \leq z \leq \frac{m-1}{2},
  \label{eq:modified-Gaussian}
\end{equation}
where $\sigma$ controls the flatness of the noise density. We vary the parameters $\left(\sigma,n,m\right)$, take $p_{\mathrm{obs}}=1$, and experiment on two choices of scaling factors $\mu_t \equiv 20 / \sigma_m(\bm{L})$ and $\mu_t \equiv \infty$. The mean misclassification rate
of the PPM is reported in Fig.~\ref{fig:numerics-Gaussian}, where the empirical phase transition matches the theory very well.

\begin{figure}
\begin{tabular}{ccc}
\includegraphics[width=0.33\textwidth]{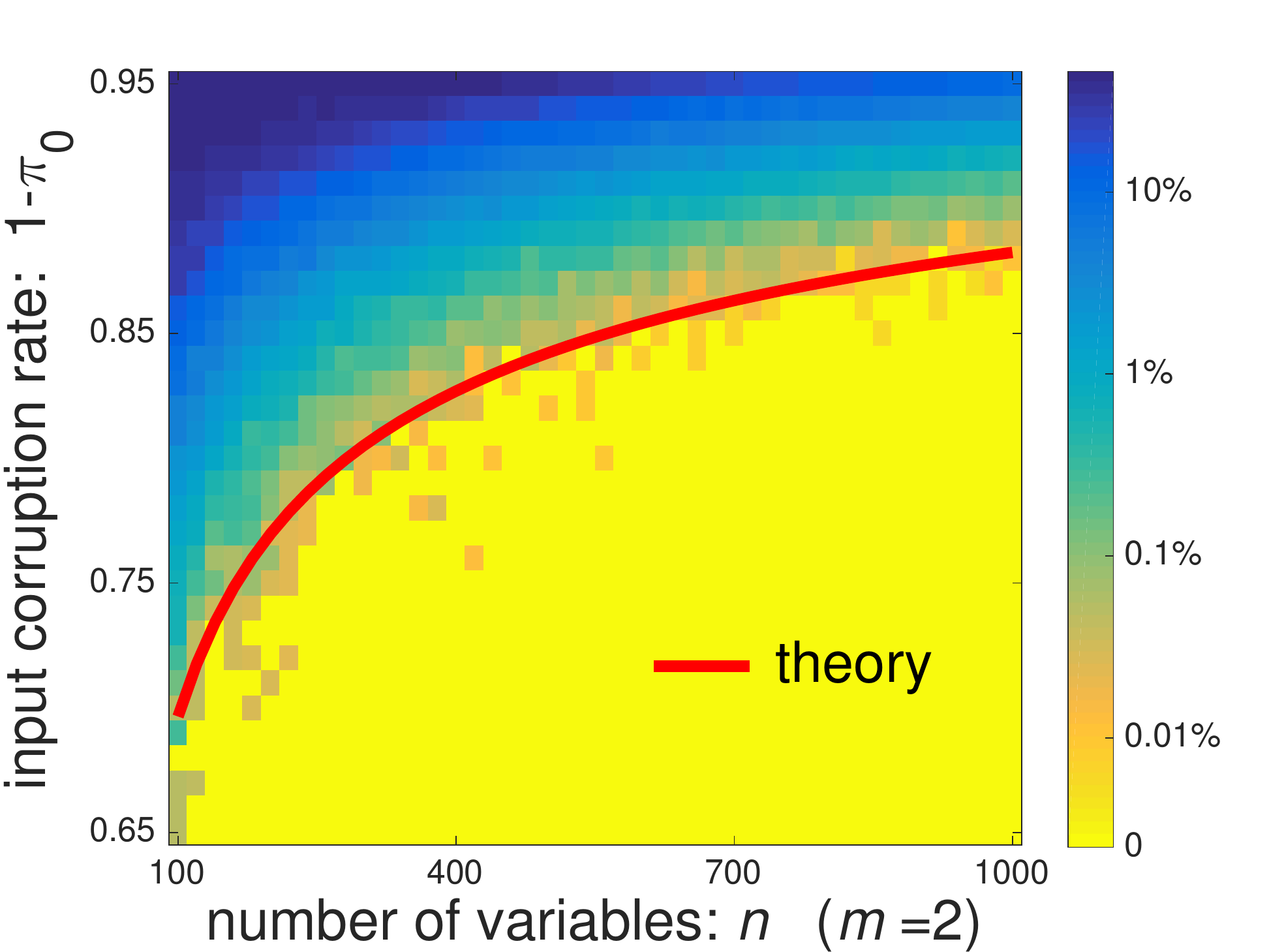} & \includegraphics[width=0.33\textwidth]{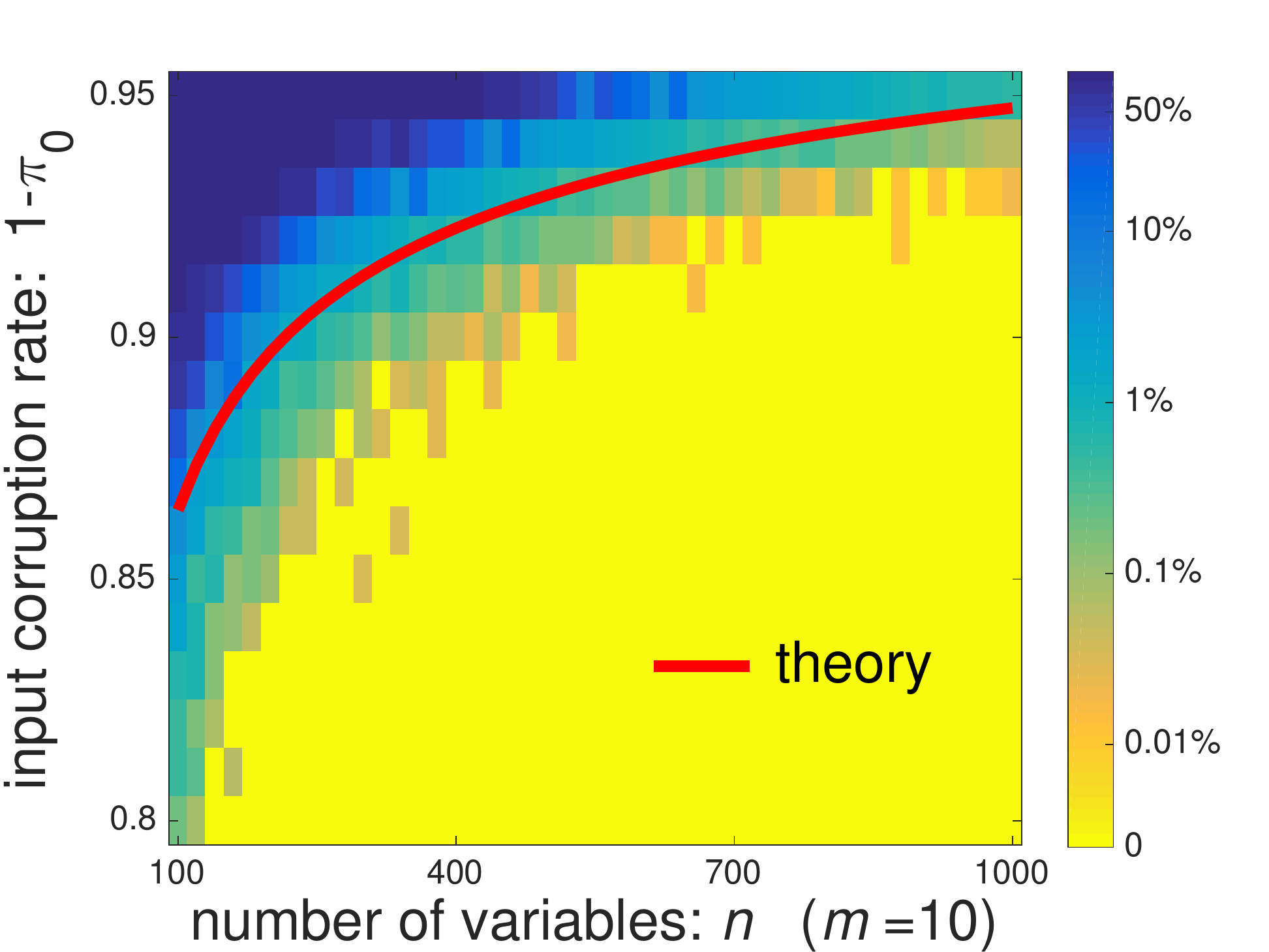} & \includegraphics[width=0.33\textwidth]{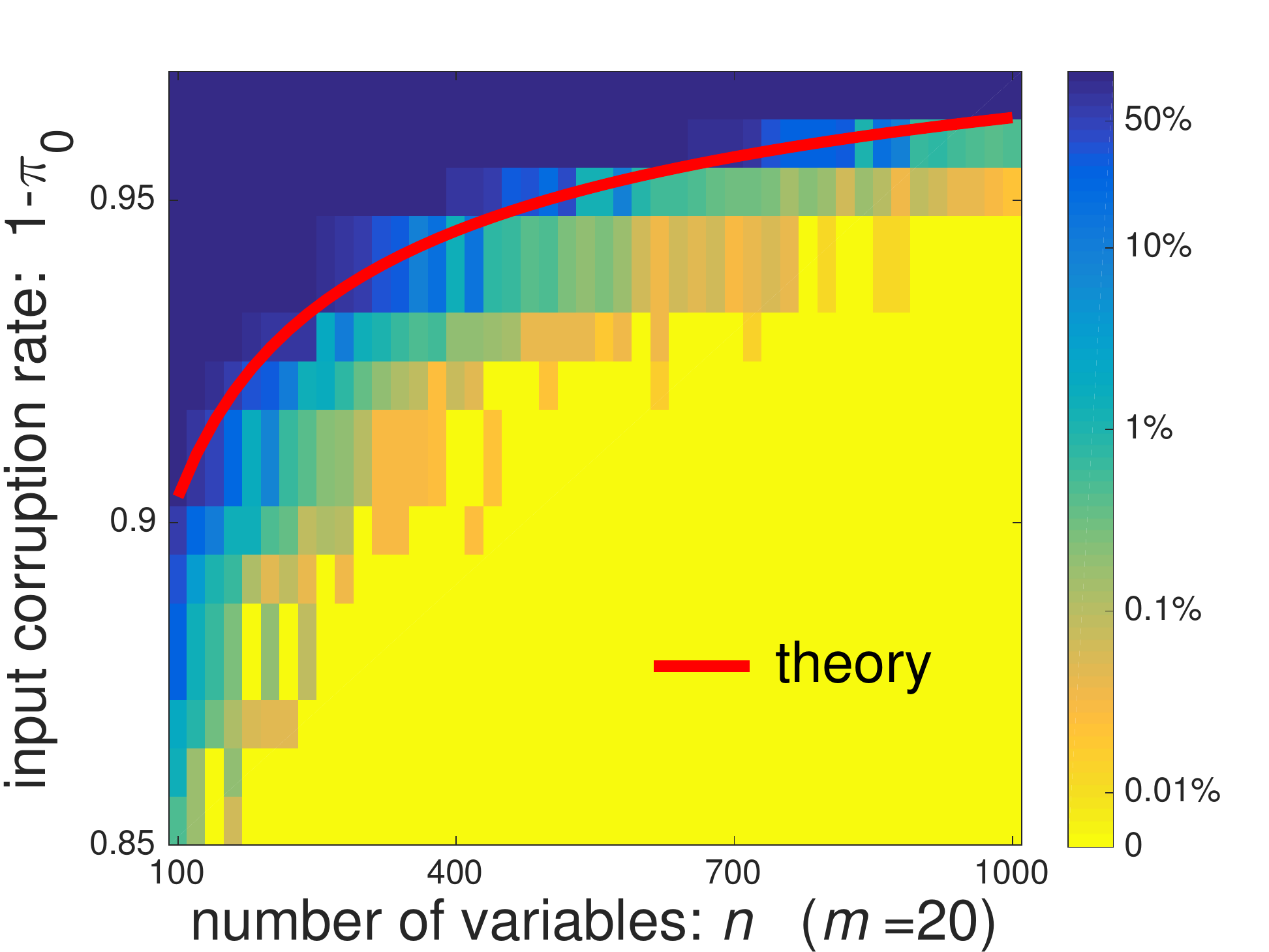}
\tabularnewline
$m=2,~\mu_{t}=10/\sigma_{2}(\bm{L})$ & $m=10,~\mu_{t}=10/\sigma_{2}(\bm{L})$ & $m=20,~\mu_{t}=10/\sigma_{2}(\bm{L})$
\tabularnewline
\includegraphics[width=0.33\textwidth]{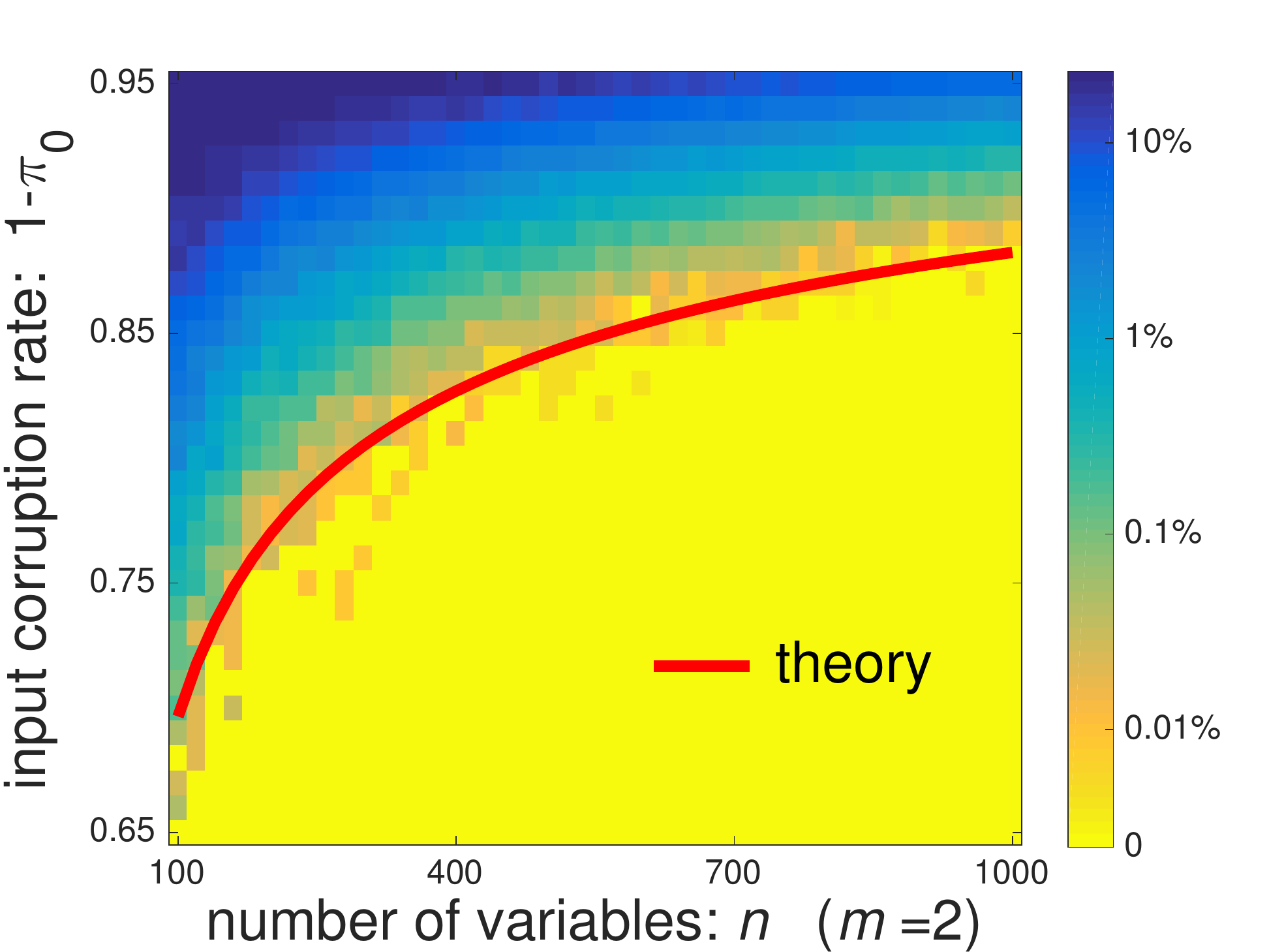} & \includegraphics[width=0.33\textwidth]{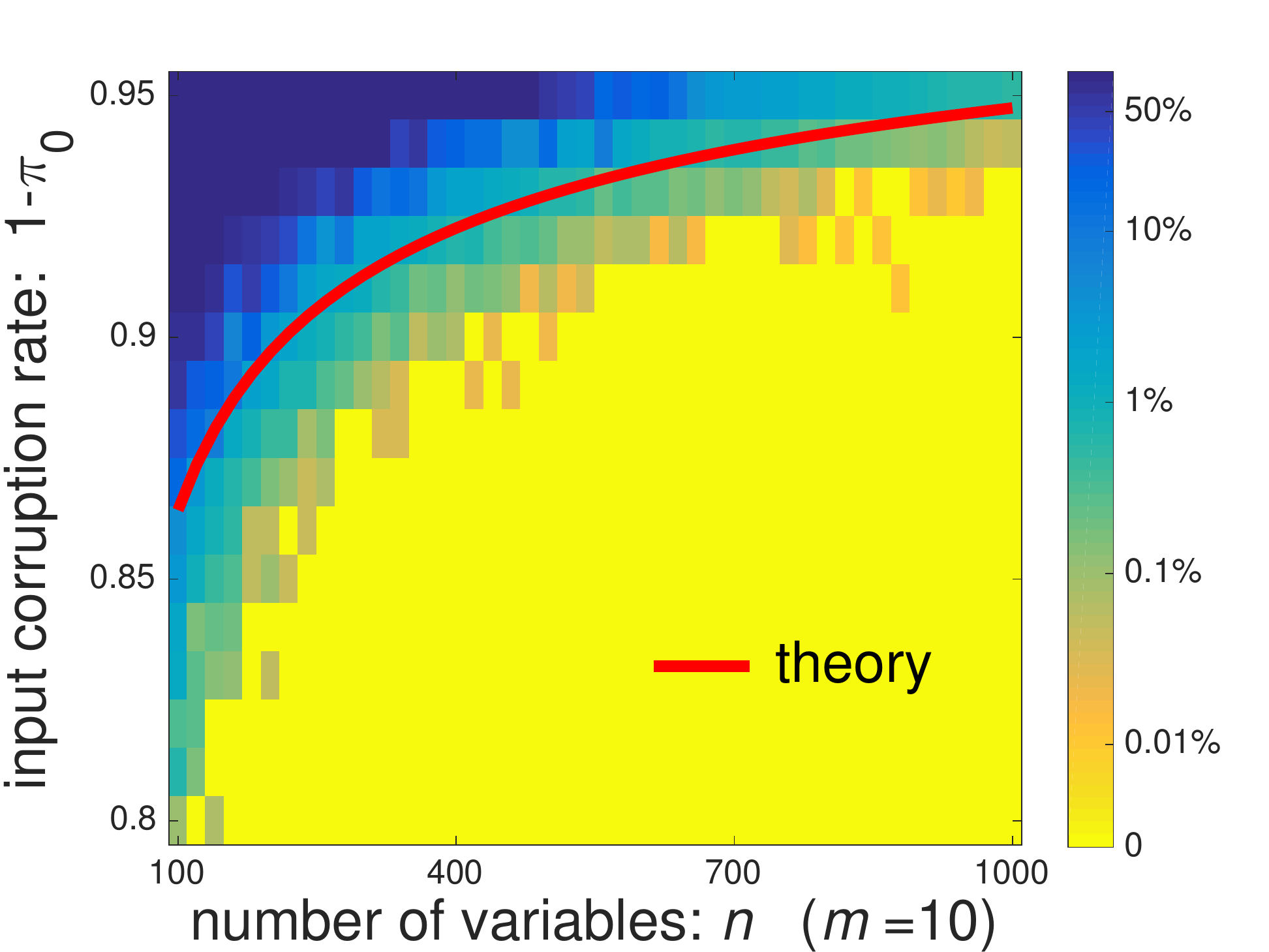} & \includegraphics[width=0.33\textwidth]{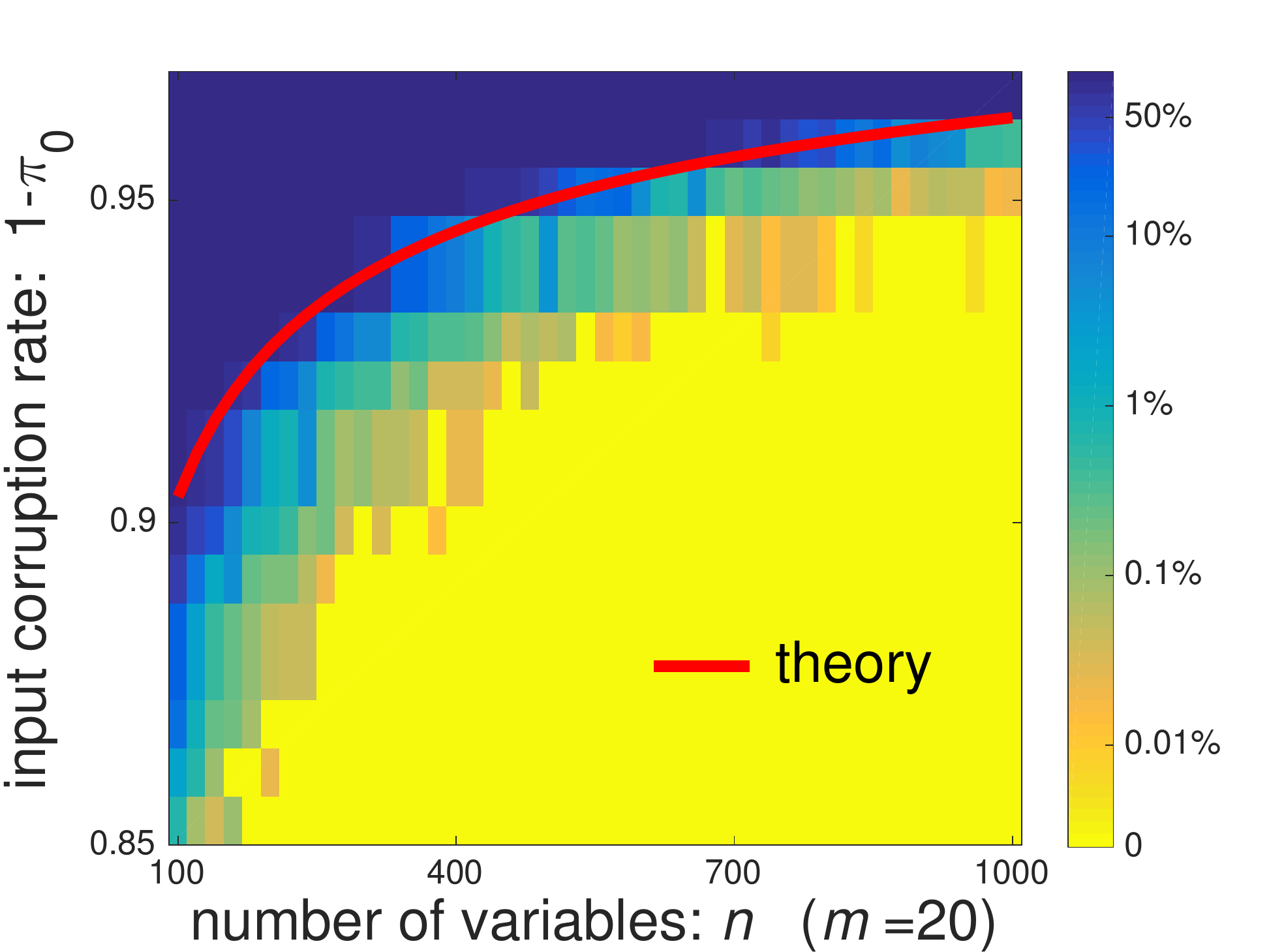}
\tabularnewline
$m=2,~\mu_{t}=\infty$ & $m=10,~\mu_{t}=\infty$ & $m=20,~\mu_{t}=\infty$
\tabularnewline
\end{tabular}
\caption{The empirical mean misclassification rate of Algorithm \ref{alg:PP} under the random corruption model.\label{fig:numerics-RCM}}
\end{figure}

\begin{figure}
\begin{tabular}{ccc}
\includegraphics[width=0.33\textwidth]{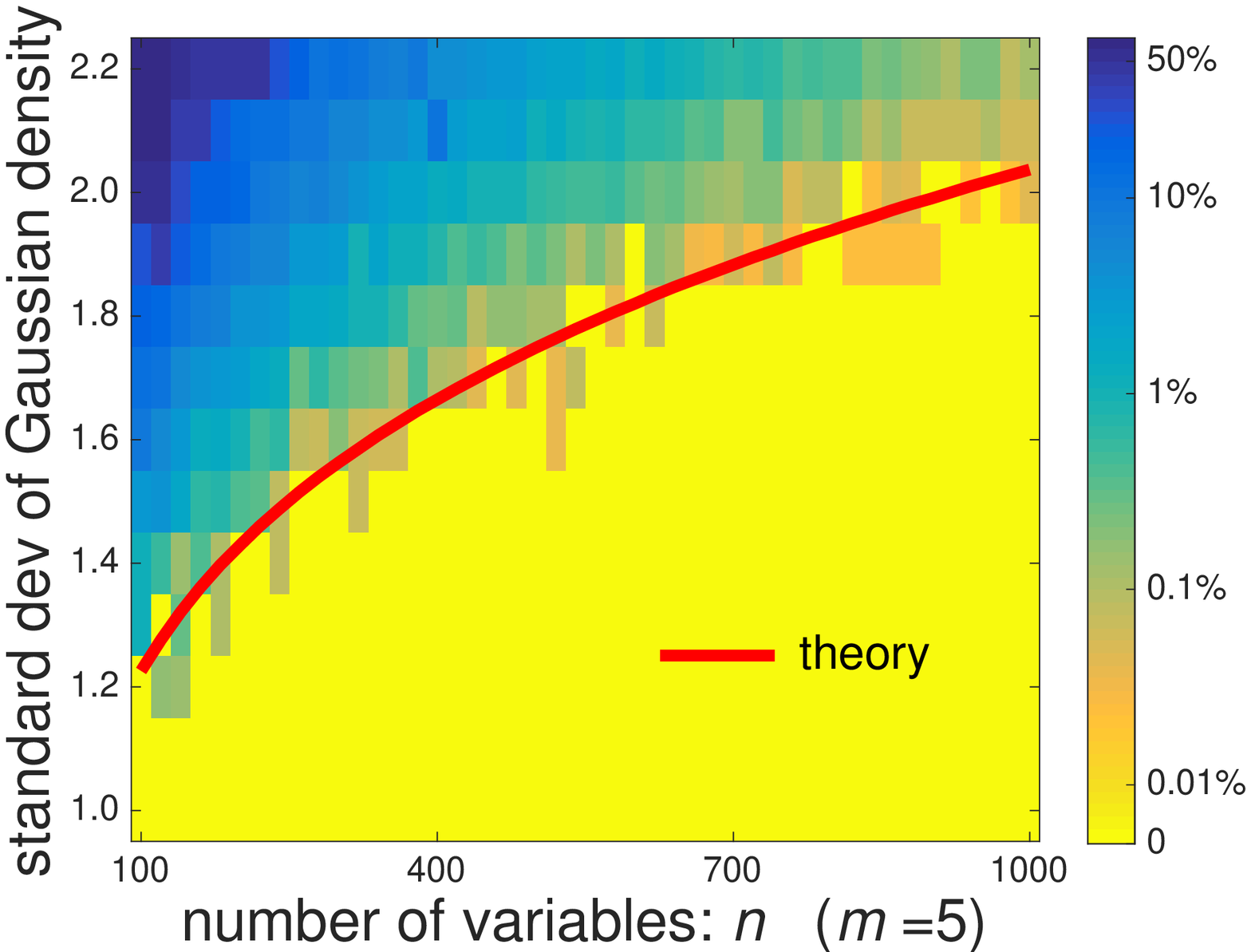} & \includegraphics[width=0.33\textwidth]{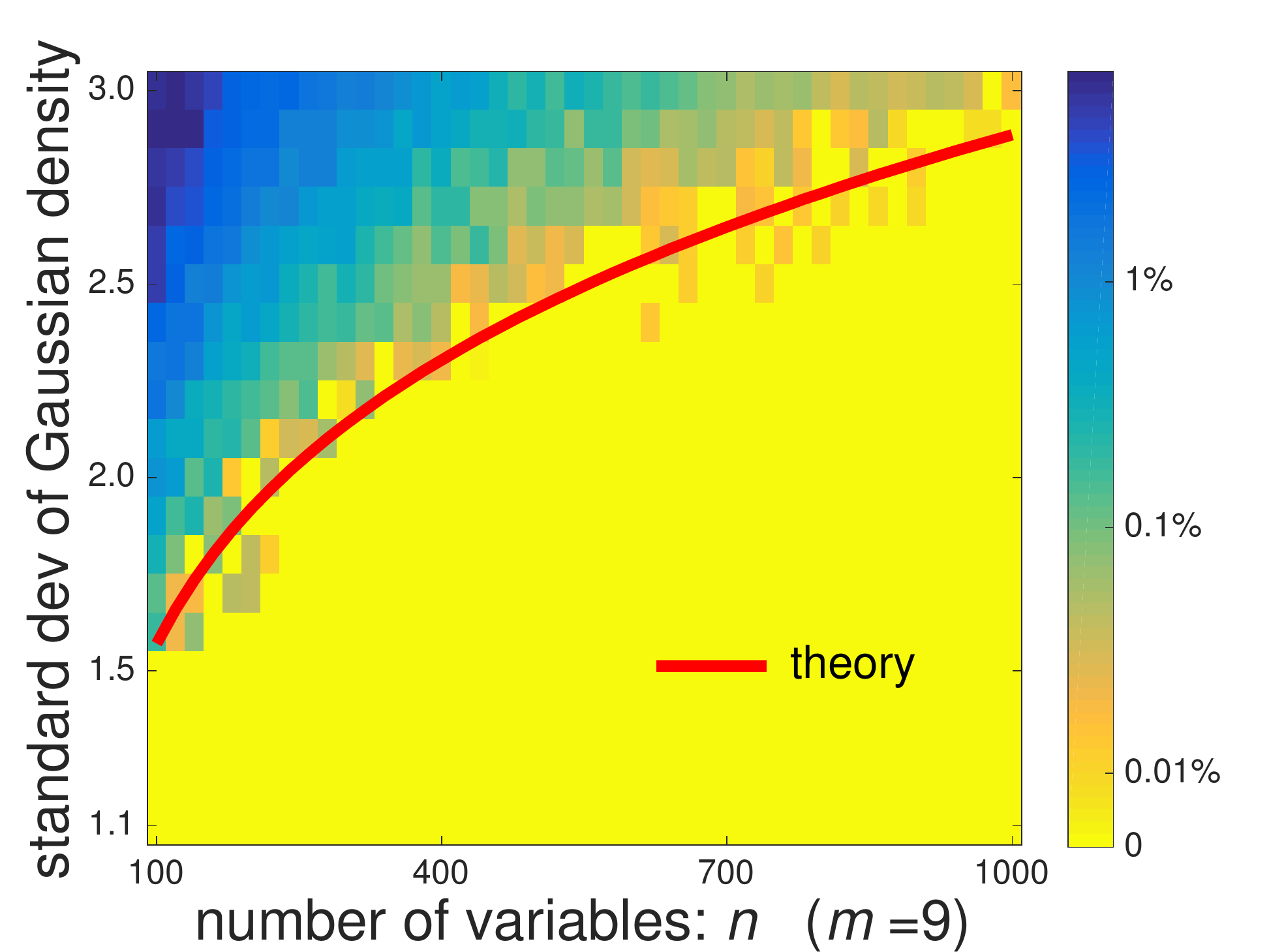} & \includegraphics[width=0.33\textwidth]{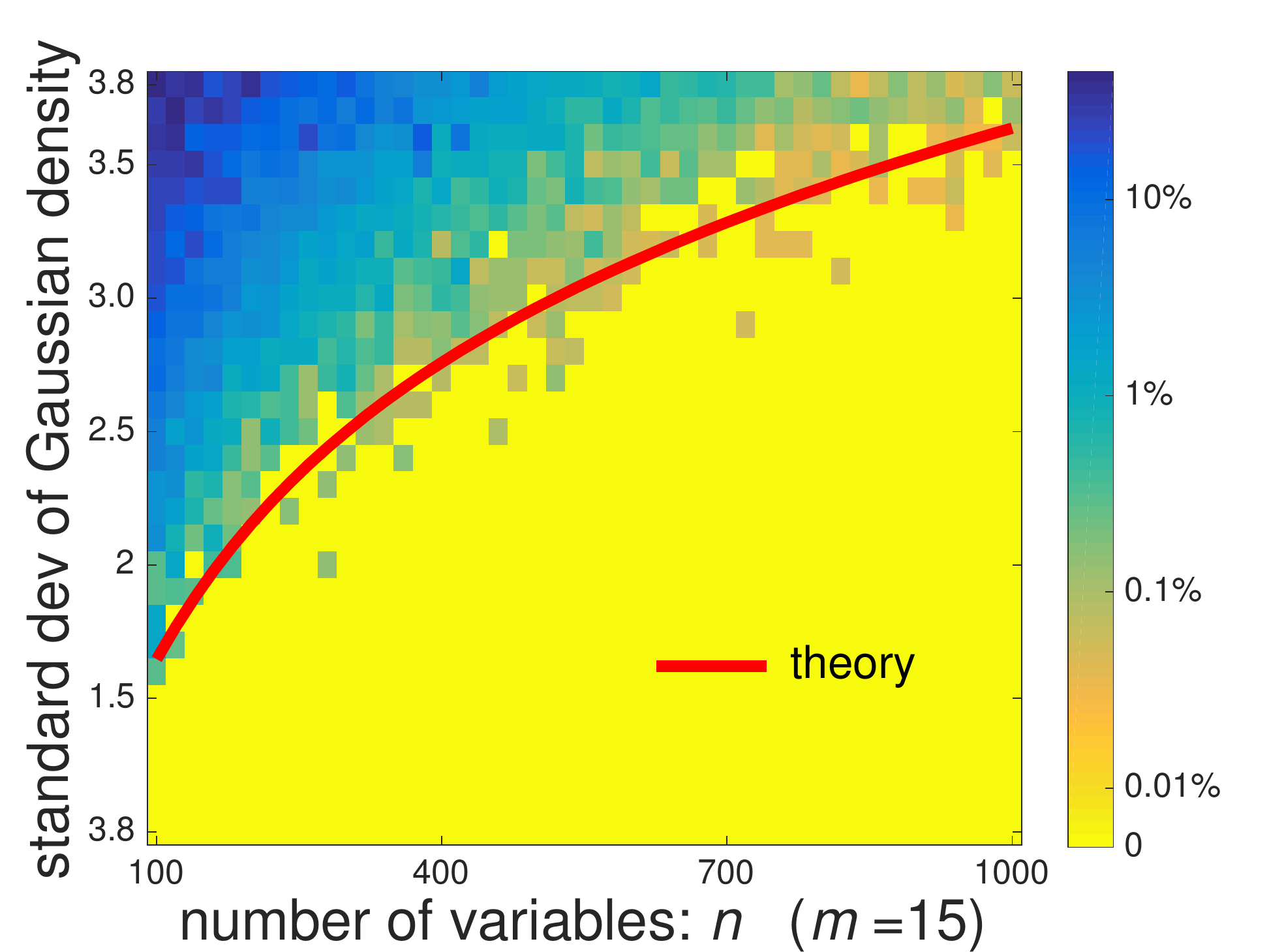}\tabularnewline
$m=5,~\mu_{t}=20/\sigma_{m}(\bm{L})$ & $m=9,~\mu_{t}=20/\sigma_{m}(\bm{L})$ & $m=15,~\mu_{t}=20/\sigma_{m}(\bm{L})$\tabularnewline
\includegraphics[width=0.33\textwidth]{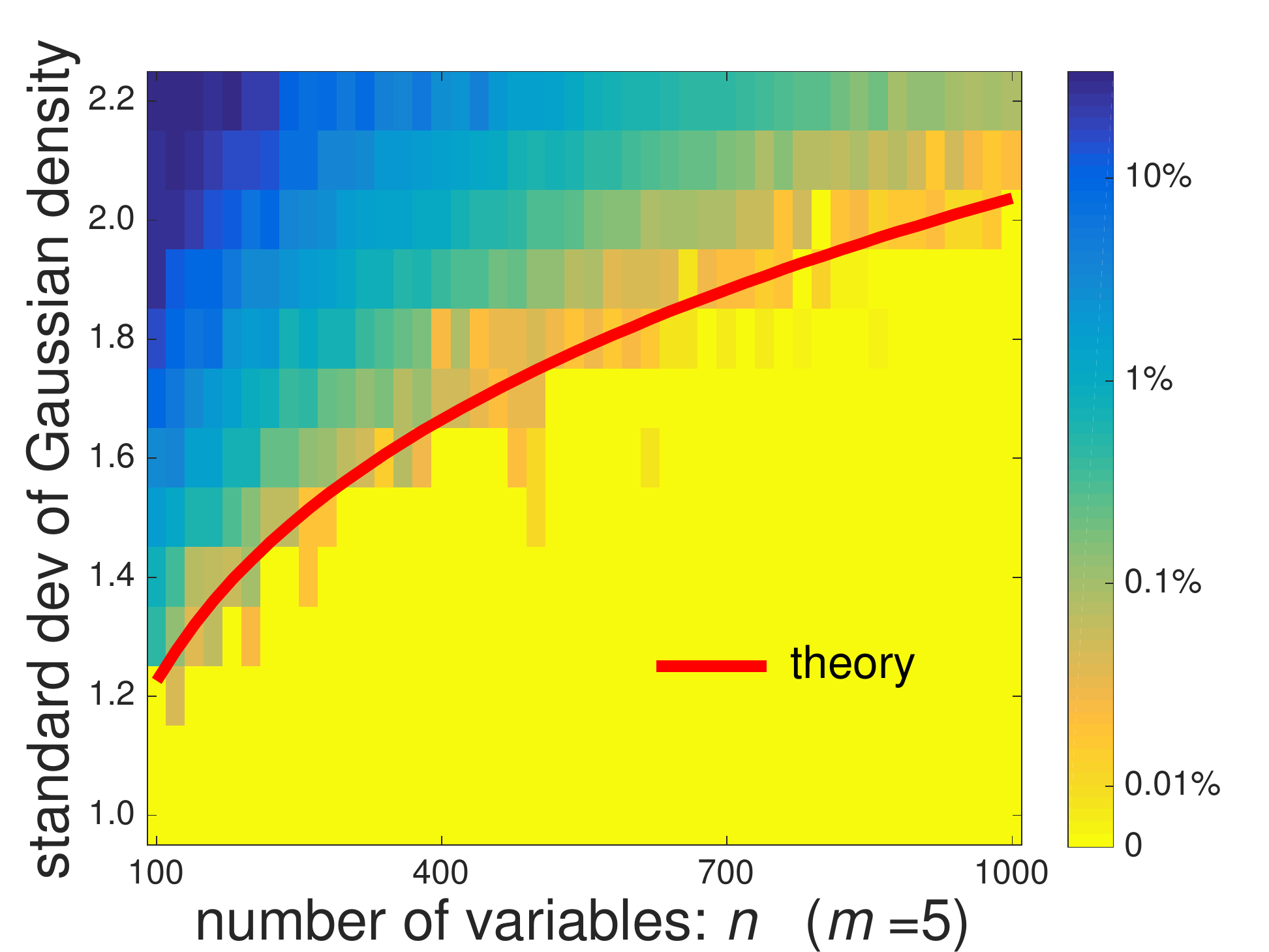} & \includegraphics[width=0.33\textwidth]{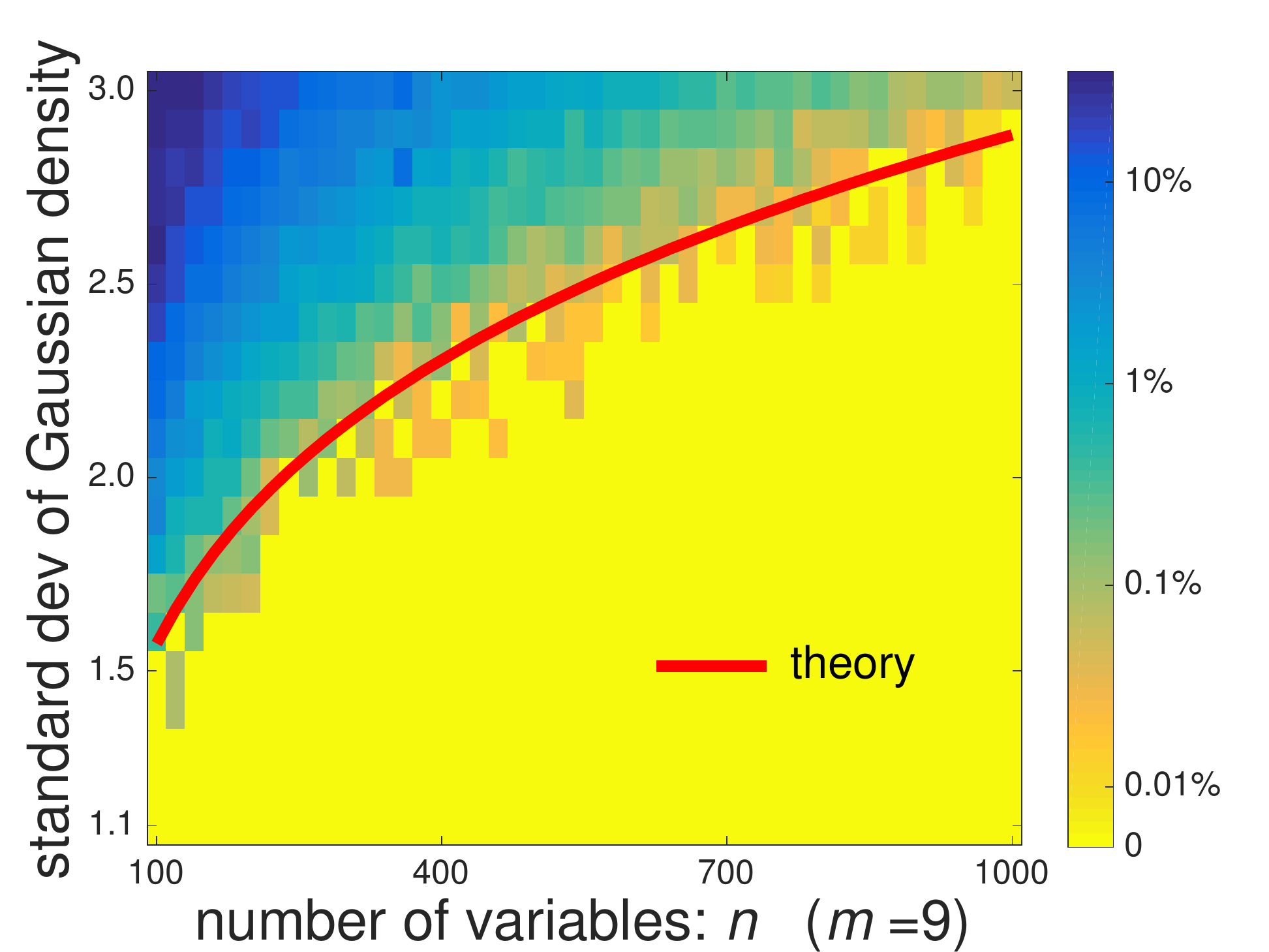} & \includegraphics[width=0.33\textwidth]{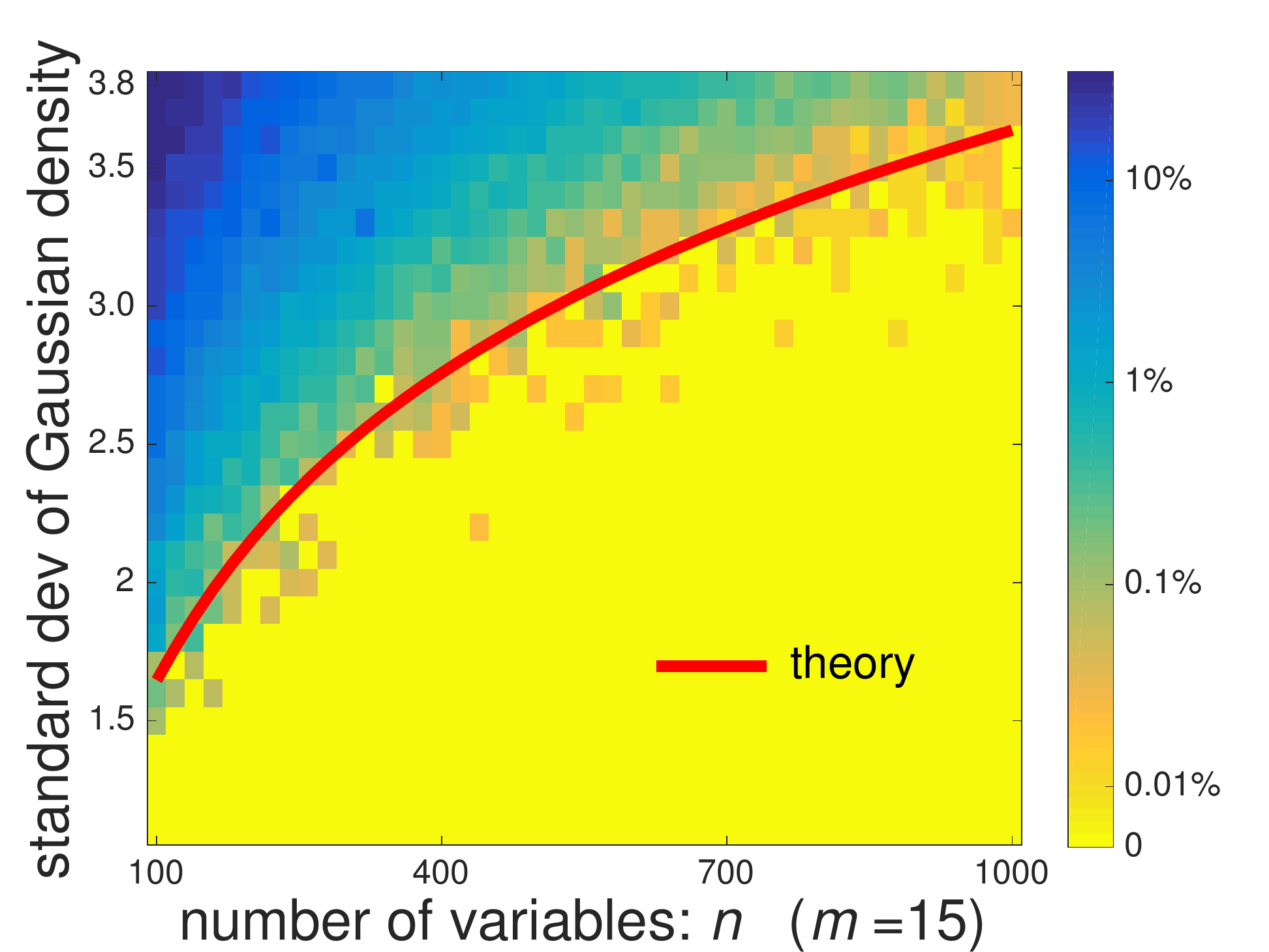}\tabularnewline
$m=5,~\mu_{t}=\infty$ & $m=9,~\mu_{t}=\infty$ & $m=15,~\mu_{t}=\infty$\tabularnewline
\end{tabular}
\caption{The empirical mean misclassification rate of Algorithm \ref{alg:PP} under the modified Gaussian model.\label{fig:numerics-Gaussian}
}
\end{figure}

\begin{figure}
\centering%
\begin{tabular}{cc}
\includegraphics[width=0.3\textwidth]{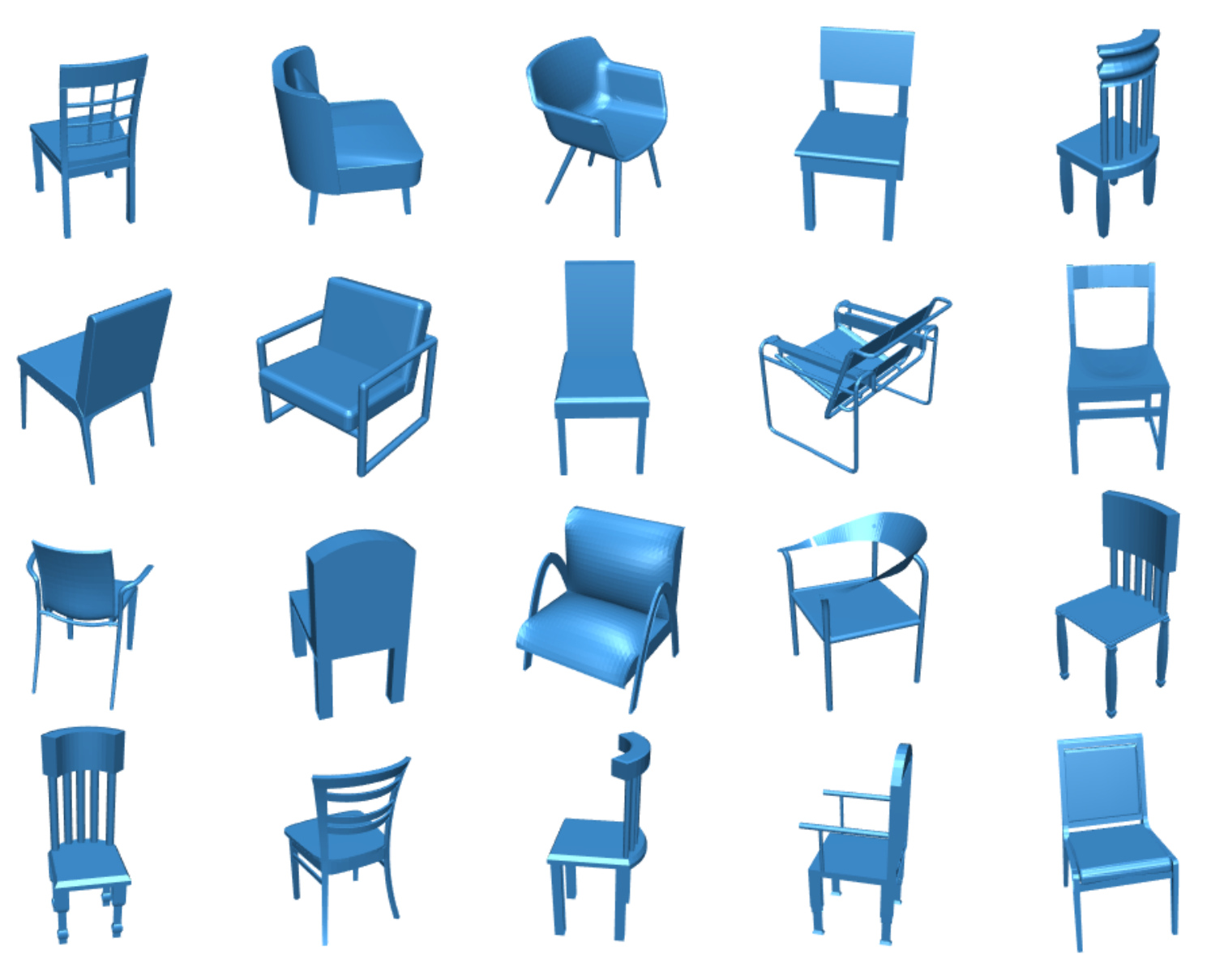} \qquad & \qquad  \includegraphics[width=0.3\textwidth]{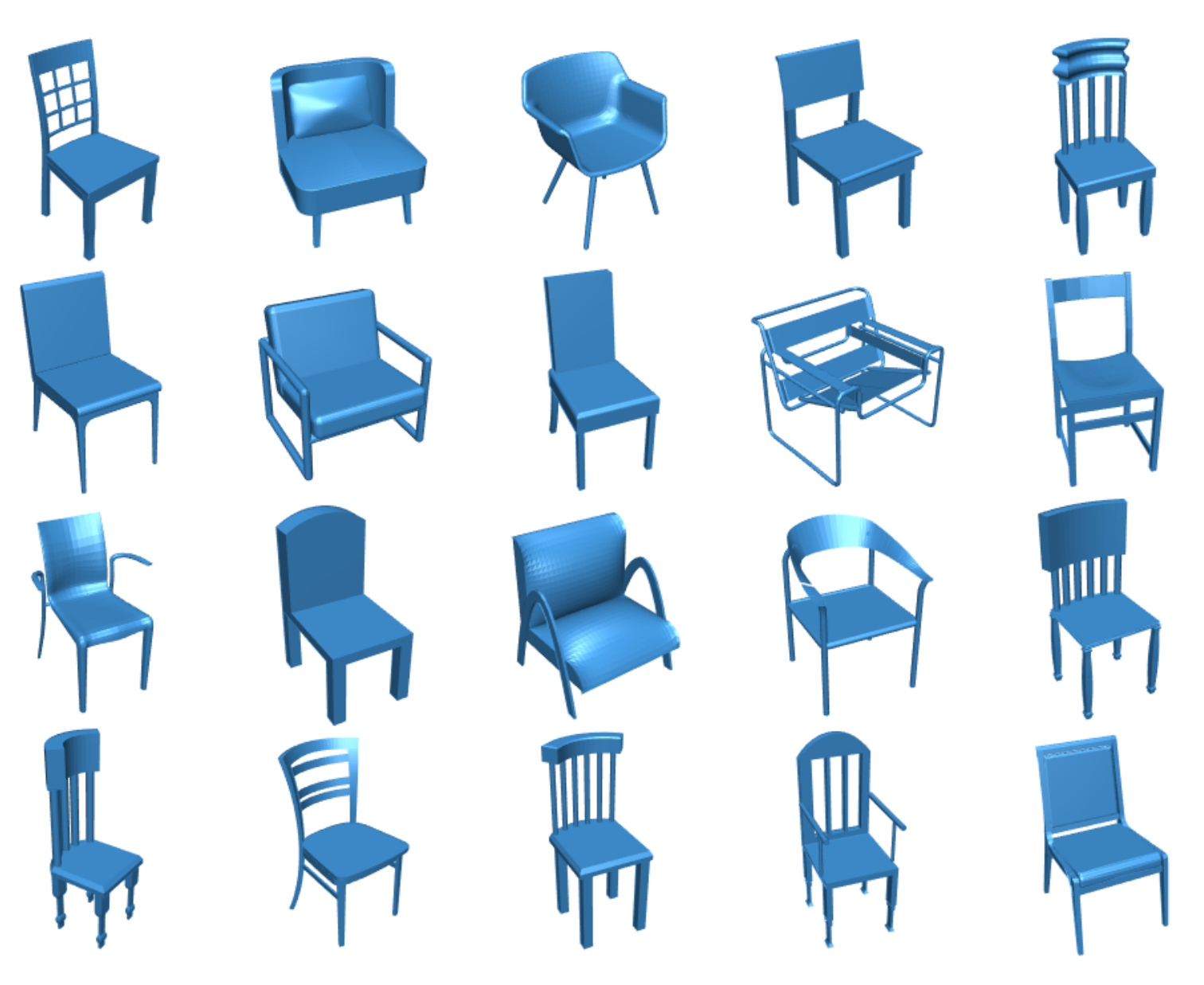} ~
\tabularnewline
\end{tabular}
\caption{The performance of the PPM on a Chair dataset of $n=50$ shapes: (left) the first 20 input shapes; (right) the first 20 shapes after alignment. \label{fig:shapes-chairs}}
\end{figure}

\begin{figure}
\centering%
\begin{tabular}{cccc}
\includegraphics[width=0.3\textwidth]{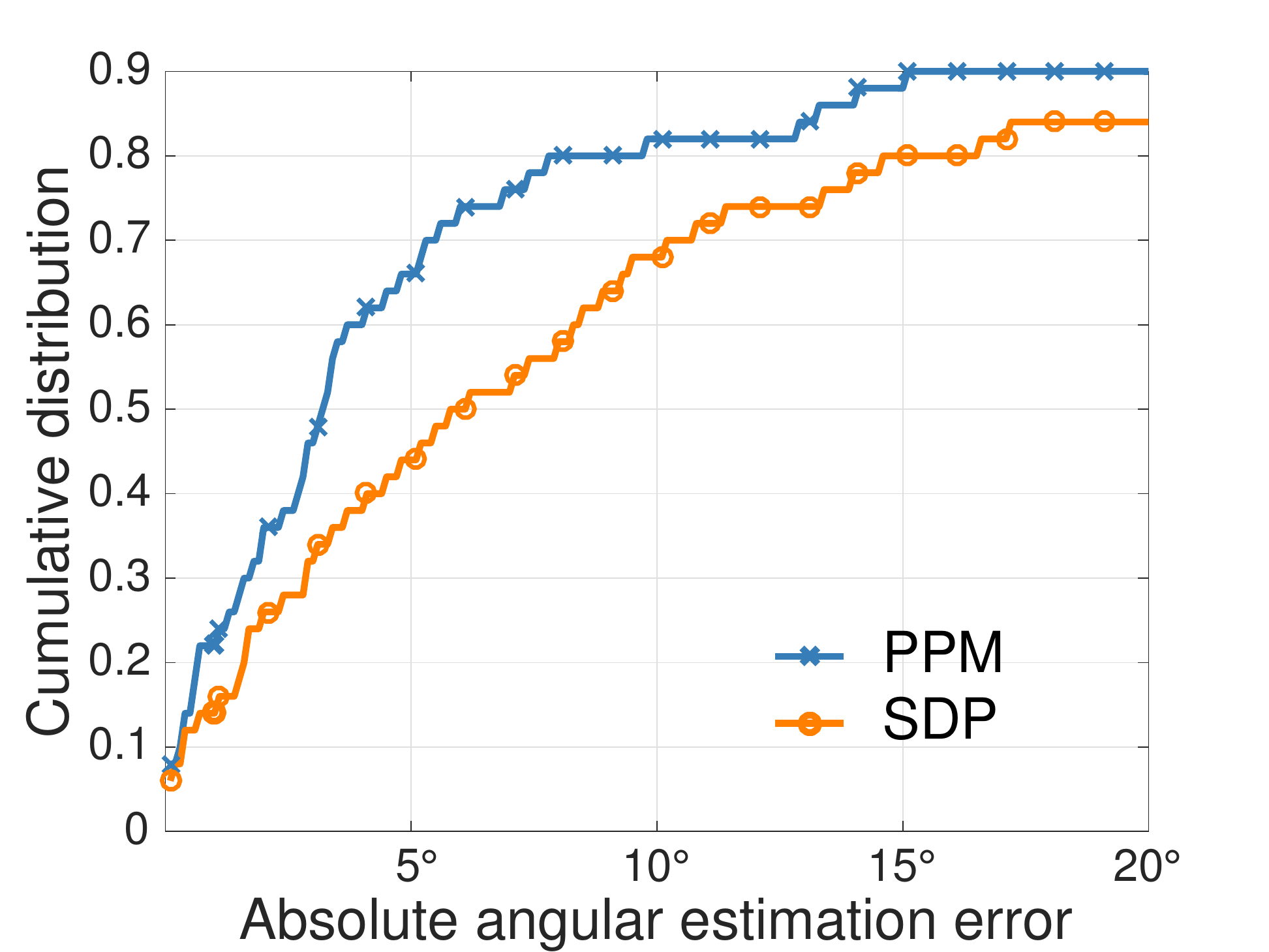} ~~ & \includegraphics[width=0.3\textwidth]{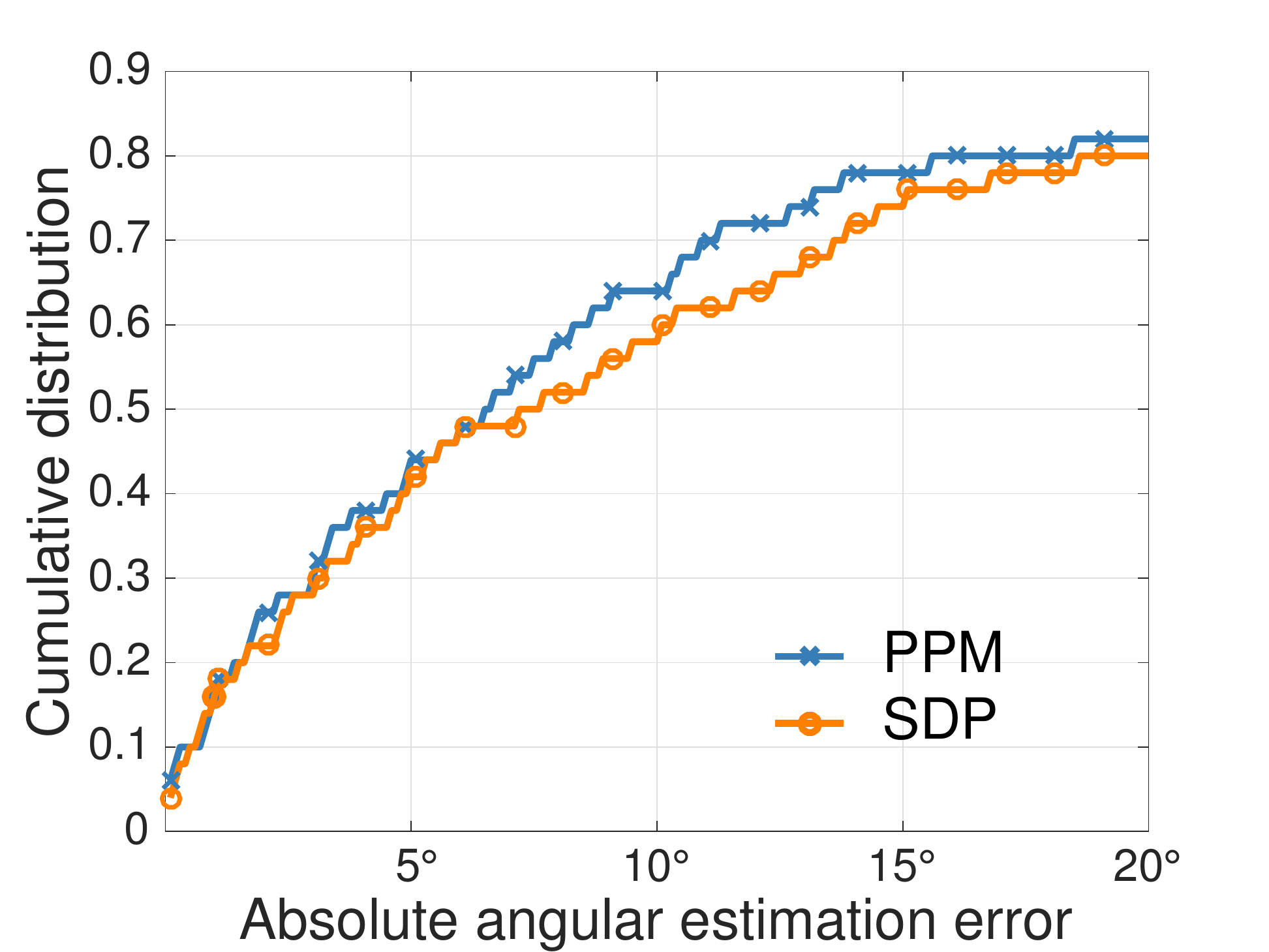}
\tabularnewline
\end{tabular}
\caption{The cumulative distributions of the absolute angular estimation errors on: (left) the Plane dataset, and (right) the Chair dataset. 
 \label{fig:PPM-SDP-chairs-planes}}
\end{figure}

\subsection{Joint shape alignment\label{sec:alignment-real}}

Next, we return to the motivating application---i.e.~joint image/shape alignment---of this work, and validate the applicability of the PPM on two datasets drawn from the ShapeNet repository \cite{chang2015shapenet}: (a) the Chair dataset (03001627), and (b) the Plane dataset (02691156).    Specifically,  $n=50$ shapes are taken from each dataset, 
and we randomly sample 8192 points from each shape as input features.  Each shape is rotated in the $x$-$z$ plane by a random continuous angle $\theta_i\in [0,360^{\circ})$.  Since the shapes in these datasets
have high quality and low noise, we perturb the shape data by adding
independent Gaussian noise $\mathcal{N}(0, 0.2^2)$ to each coordinate
of each point and use the perturbed data as inputs. This makes the task more challenging; for instance, the
resulting SNR on the Chair dataset is around 0.945 (since the mean square
values of each coordinate of the samples is 0.0378).

To apply the projected power method, we discretize the angular domain
  $[0,360^{\circ})$ by $m=32$ points, so that
  $x_i = j~(1 \leq j\leq 32)$ represents an angle
  $\theta_i = j360^{\circ}/32$. Following the procedure adopted
  in\footnote{\url{https://github.com/huangqx/map_synchronization/}}
  \cite{huang2013fine}, we compute the pairwise cost
  (i.e.~$-\ell(z_i,z_j)$) using some nearest-neighbor distance metric;
  to be precise, we set $-\ell(z_i,z_j)$ as the average
  nearest-neighbor squared distance between the samples of the $i$th
  and $j$th shapes, after they are rotated by
  $\frac{z_i}{32} 360^{\circ}$ and $\frac{z_j}{32} 360^{\circ}$,
  respectively. Such pairwise cost functions have been widely used in
  computer graphics and vision, and one can regard it as assuming that
  the average nearest-neighbor distance follows some Gaussian
  distribution.  Careful readers might remark that we have not
  specified $\{y_{i,j}\}$ in this experiment. Practically, oftentimes we
  only have access to some pairwise potential/cost functions rather
  than $\{y_{i,j}\}$. Fortunately, all we need to run the algorithm is
  $\ell(z_i,z_j)$, or some proxy of $\ell(z_i,z_j)$. 

Fig.~\ref{fig:shapes-chairs} shows the first 20 representative shapes before and
after joint alignment in the Chair dataset. As one can see, the shapes are
aligned in a reasonably good manner. More quantitatively, 
Fig.~\ref{fig:PPM-SDP-chairs-planes} displays the cumulative distributions
of the absolute angular estimation errors for both datasets. We have also reported in Fig.~\ref{fig:PPM-SDP-chairs-planes} the performance of semidefinite programming (SDP)---that is, the MatchLift algorithm presented in \cite{chen2014near}. 
 Note that the angular errors are measured as the distance to the un-discretized angles $\{\theta_i\}$, and are hence somewhat continuous. We see that for the PPM, 70\% (resp.~44\%) of the estimates on the Plane (resp.~Chair) dataset have an error of $5.5^{\circ}$ or lower, while the proportion is 48\% (resp.~44\%) for the SDP formulation. Recall that the resolution of the discretization is $360^{\circ}/32 \approx 11^{\circ}$, which would mean that all estimates with an error less than $5.5^{\circ}$ are, in some sense, perfect recoveries.

Computationally, it takes around 2.4 seconds to run the PPM, while SDP (implemented using the
alternating direction method of multipliers (ADMM)) runs in 895.6 seconds. All experiments are
carried out on a MacBook Pro equipped with a 2.9 GHz Intel Core i5 and
8GB of memory.

\subsection{Joint graph matching\label{sec:matching}}

The PPM is applicable to other combinatorial problems beyond joint alignment. We present here an example called joint graph matching \cite{kim2012exploring,huang2013consistent,chen2014near,PachauriKS13,gao2016geometry,Shen2016NIPS}. Consider a collection of $n$ images each containing $m$ feature points, and suppose that there exists one-to-one correspondence between the feature points in any pair of images. Many off-the-shelf algorithms are able to compute feature correspondence over the points in two images, and the joint matching problem concerns the recovery of a collection of globally consistent feature matches given these noisy pairwise matches. 
To put it mathematically, one can think of the ground truth as $n$ permutation matrices $\{\bm{X}_i\in \mathbb{R}^{m\times m} \}_{1\leq i\leq n}$ each representing the feature mapping between an image and a reference, and the true feature correspondence over the $i$th and $j$th images can be represented by $\bm{X}_i \bm{X}_j^{\top}$. The provided pairwise matches between the features of two images are encoded by $\bm{L}_{i,j}\in \mathbb{R}^{m\times m}$, which is a noisy version of $\bm{X}_i \bm{X}_j^{\top}$. The goal is then to recover $\{\bm{X}_i \}$---up to some global permutation---given a set of pairwise observations $\{ \bm{L}_{i,j} \}$. See \cite{huang2013consistent,chen2014near} for more detailed problem formulations as well as theoretical guarantees for convex relaxation. 

This problem differs from joint alignment in that the ground truth
$\bm{X}_i$ is an $m\times m$ permutation matrix. In light of this, we
make two modifications to the algorithm: (i) we maintain the iterates
$\bm{Z}^{t}=[\bm{Z}^{t}_i]_{1\leq i\leq n}$ as $nm \times m$ matrices
and replace $\mathcal{P}_{\Delta}$ by $\mathcal{P}_{\Pi}(\cdot)$ that
projects each $\bm{Z}^{t}_i\in \mathbb{R}^{m\times m}$ to the set of
permutation matrices (via the Jonker-Volgenant algorithm
\cite{jonker1987shortest}), which corresponds to hard rounding
(i.e.~$\mu_t=\infty$) in power iterations (ii) the initial guess
${\bm{Z}}^{(0)}\in \mathbb{R}^{nm \times m}$ is taken to be the
projection of a random column block of $\hat{\bm{L}}$ (which is the
rank-$m$ approximation of $\bm{L}$).

We first apply the PPM on two benchmark image datasets:  
(1) the CMU House dataset\footnote{\url{http://vasc.ri.cmu.edu/idb/html/motion/house/}} consisting of $n=111$ images of a house, and (2) the CMU Hotel dataset\footnote{\url{http://vasc.ri.cmu.edu//idb/html/motion/hotel/index.html}}
consisting of $n=101$ images of a hotel. Each image contains $m=30$  feature points that have been labeled consistently across all images. The initial pairwise matches, which are obtained through the Jonker-Volgenant algorithm, have mismatching rates of 13.36\% (resp.~12.94\%)  for the House (resp.~Hotel) dataset.  Our algorithm allows to lower the mismatching rate to 3.25\% (resp.~4.81\%) for House (resp.~Hotel). Some representative results from each dataset are depicted in Fig.~\ref{fig:matching-CMU-house-hotel}.

\begin{figure}
\centering%
\begin{tabular}{ccc}
\includegraphics[width=0.4\textwidth]{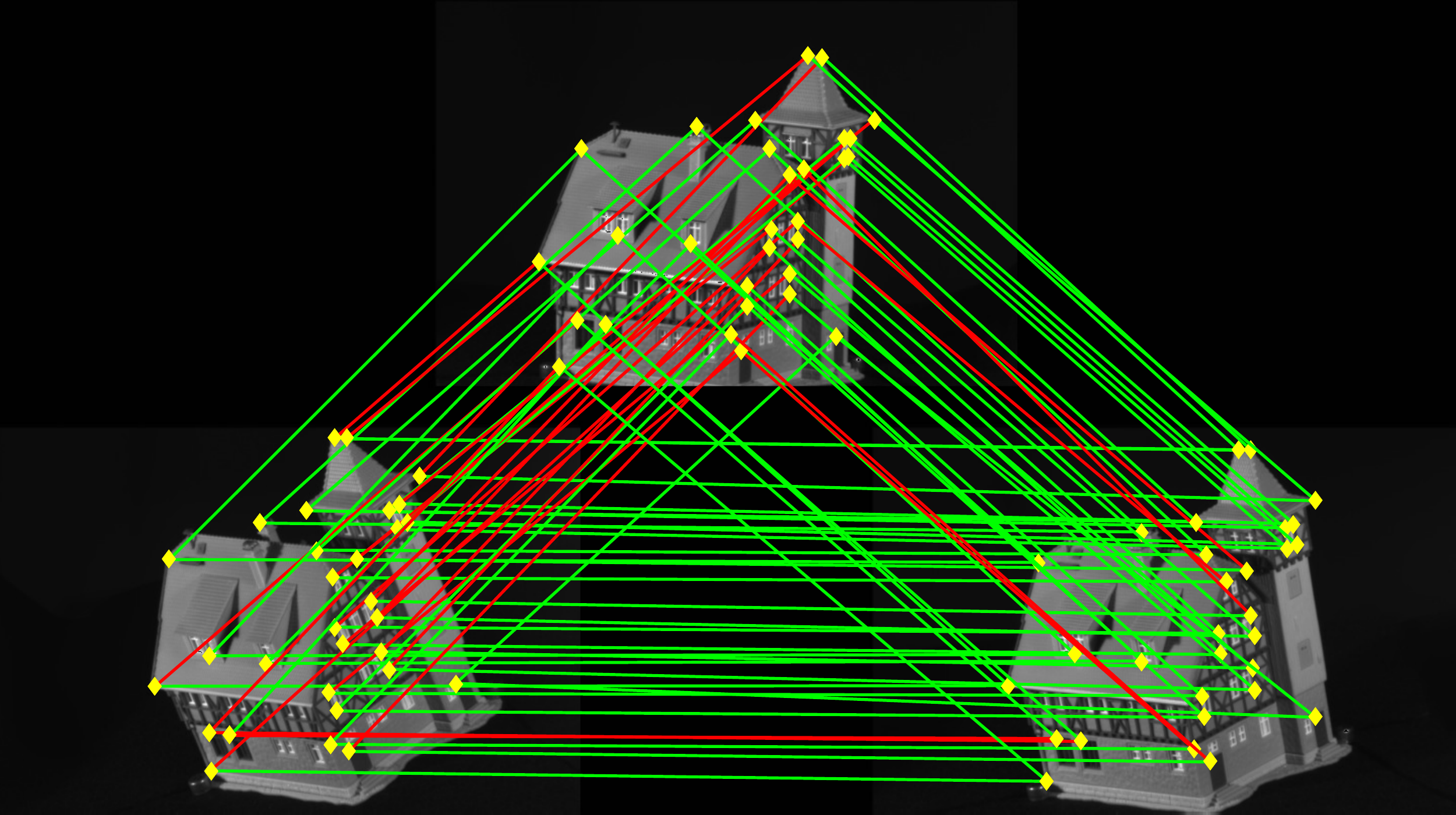} & $\qquad$ & \includegraphics[width=0.4\textwidth]{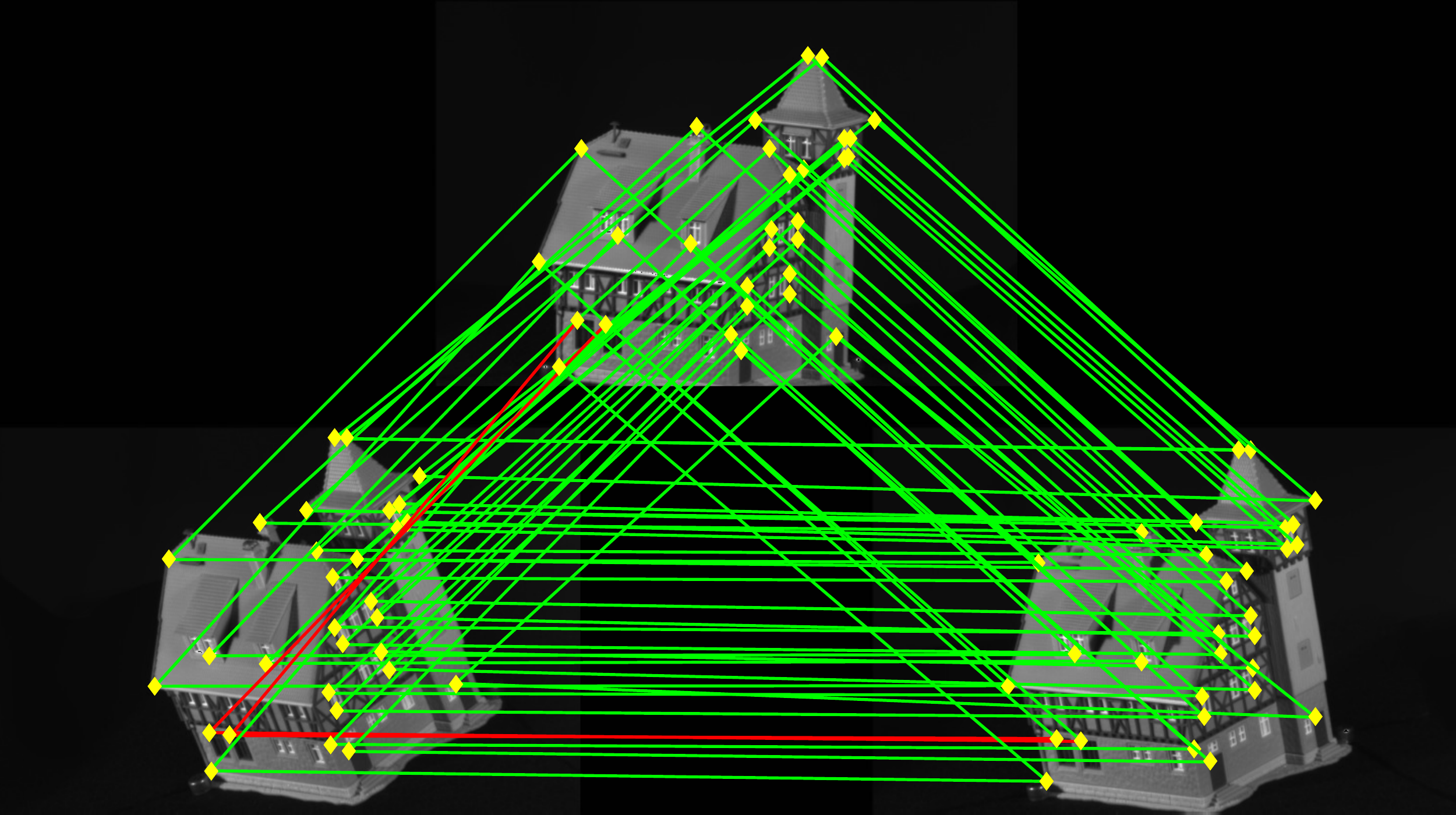}\tabularnewline
(a) initial pairwise matches (CMU House) &  & (b) optimized matches (CMU House)\tabularnewline
\includegraphics[width=0.4\textwidth]{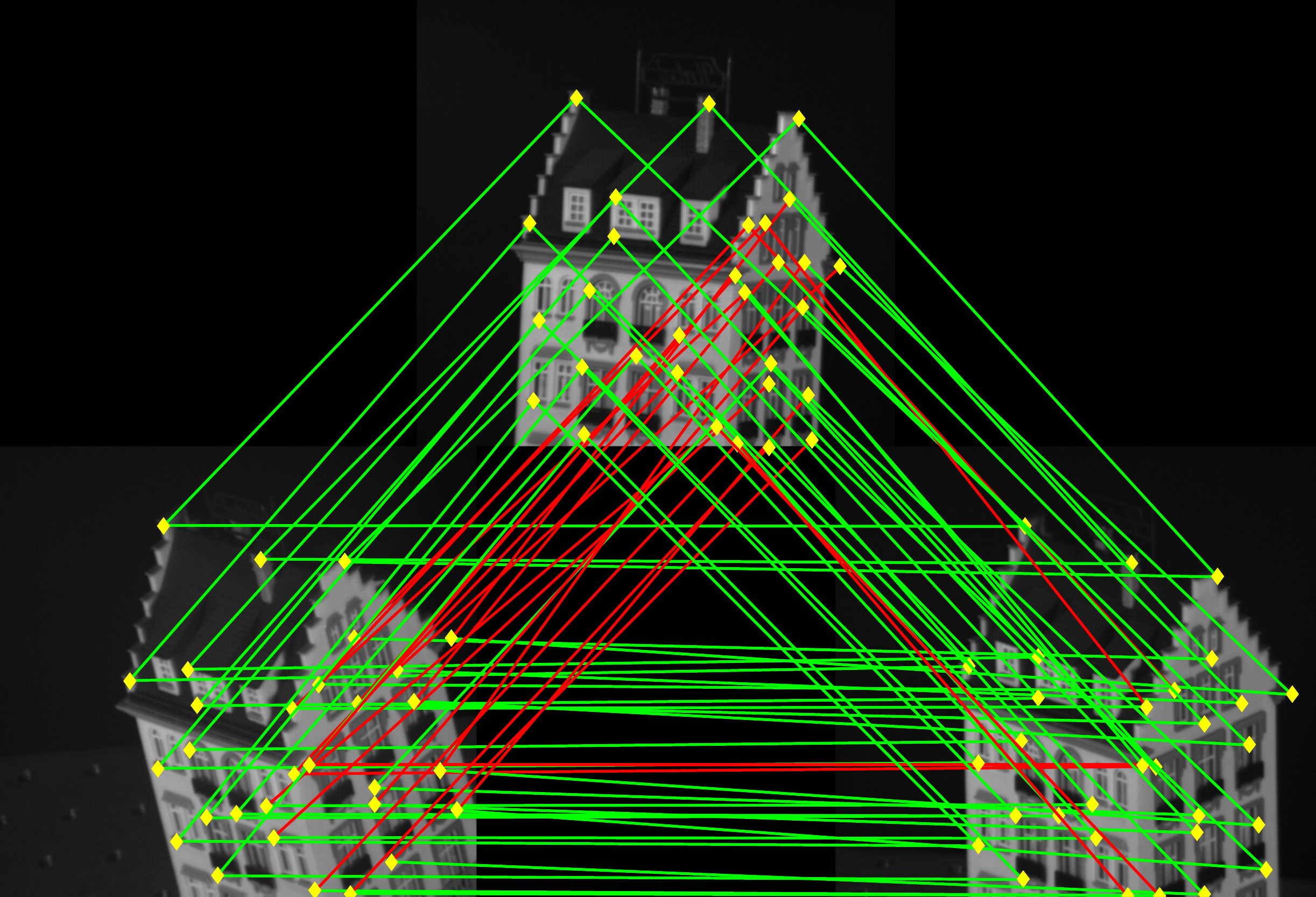} &  & \includegraphics[width=0.4\textwidth]{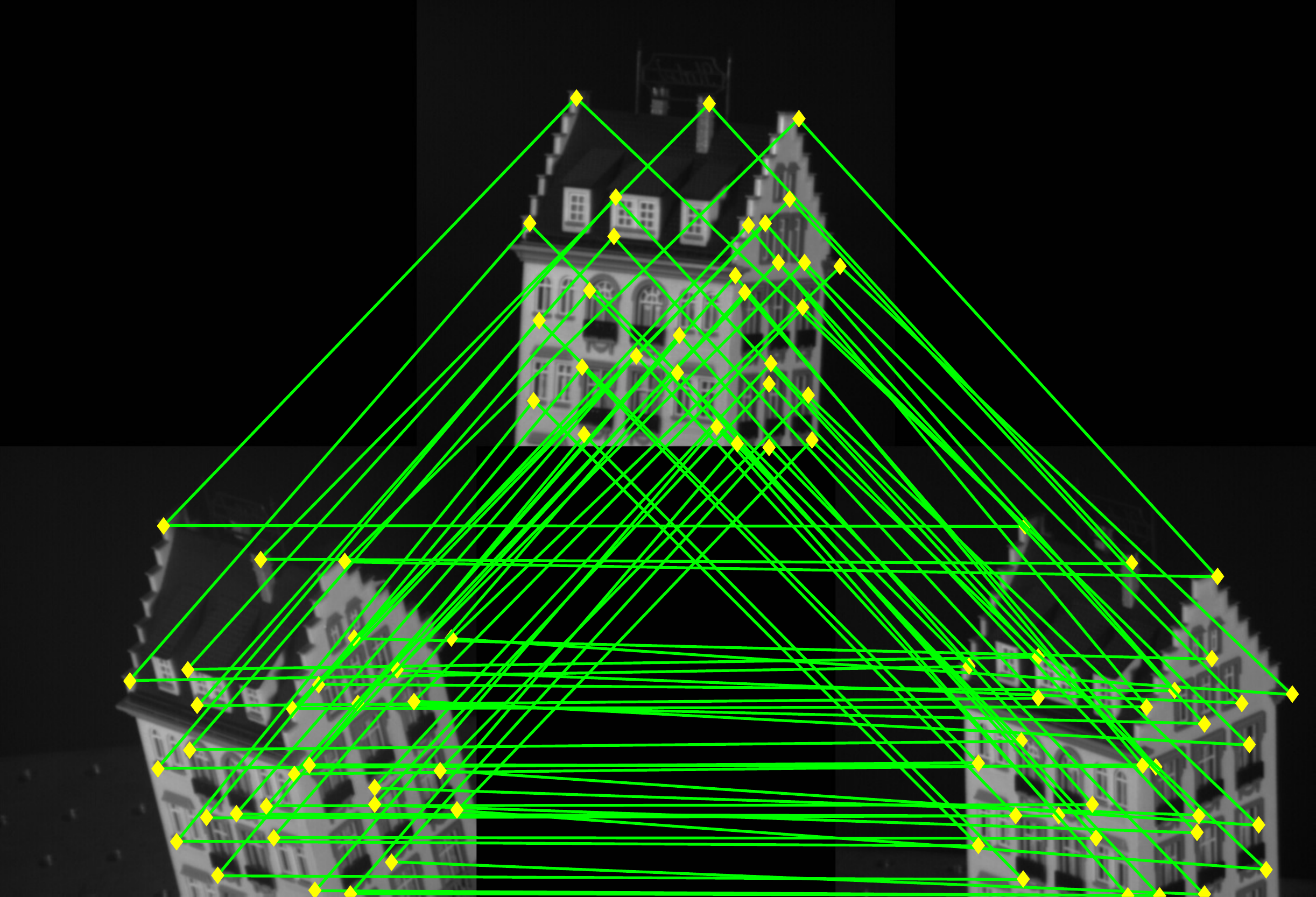}\tabularnewline
(c) initial pairwise matches (CMU Hotel) &  & (d) optimized matches (CMU Hotel)\tabularnewline
\end{tabular}
\caption{Comparisons between the input matches and the outputs of the PPM on the CMU House and Hotel
datasets, with 3 representative images shown for each dataset. The yellow dots refer to the manually labeled feature points, while the green (resp.~red) lines represent the set of matches consistent (resp.~inconsistent) with the ground truth. 
\label{fig:matching-CMU-house-hotel}
}
\end{figure}

Next, we turn to three shape datasets: (1) the Hand dataset containing
$n=20$ shapes, (2) the Fourleg dataset containing $n=20$ shapes, and
(3) the Human dataset containing $n=18$ shapes, all of which are drawn
from the collection SHREC07 \cite{giorgi2007shape}.  We set $m = 64$,
$m = 96$, and $m = 64$ feature points for Hand, Fourleg, and Human datasets,
respectively, and follow the shape sampling and pairwise matching
procedures described in
\cite{huang2013consistent}\footnote{\url{https://github.com/huangqx/CSP_Codes}}. To
evaluate the matching performance, we report the fraction of output
matches whose normalized geodesic errors (see
\cite{kim2012exploring,huang2013consistent}) are below some threshold
$\epsilon$, with $\epsilon$ ranging from 0 to 0.25. For the sake of
comparisons, we plot in Fig.~\ref{fig:matching-SHREC} the quality of
the initial matches, the matches returned by the projected power
method, as well as the matches returned by semidefinite relaxation
\cite{huang2013consistent,chen2014near}. The computation runtime is
reported in Table \ref{tab:Runtime-matching}.  The numerical results demonstrate that the projected
power method is significantly faster than SDP, while achieving a joint
matching performance as competitive as SDP.

\begin{figure}
\centering%
\begin{tabular}{ccc}
\includegraphics[width=0.3\textwidth]{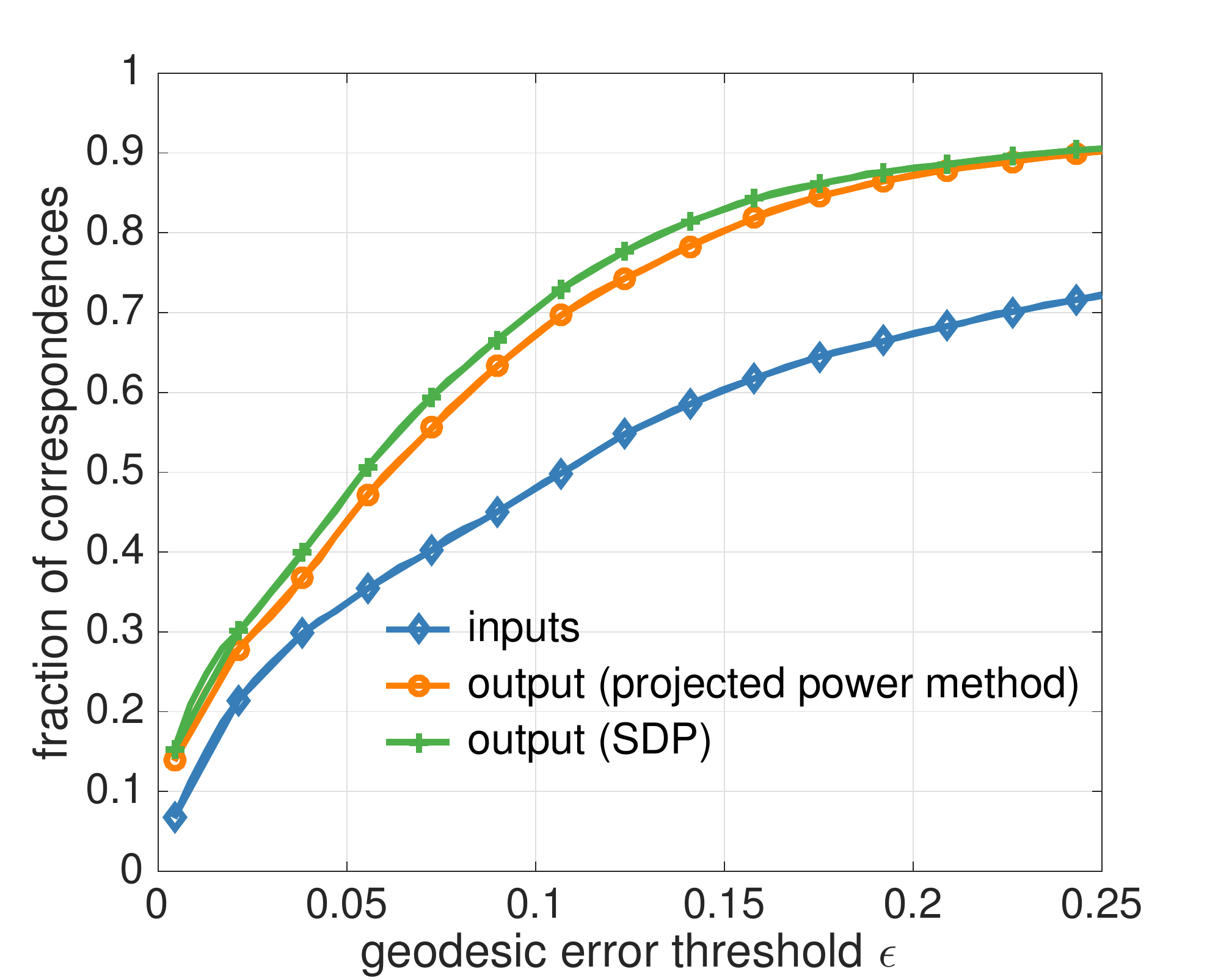} &  \includegraphics[width=0.3\textwidth]{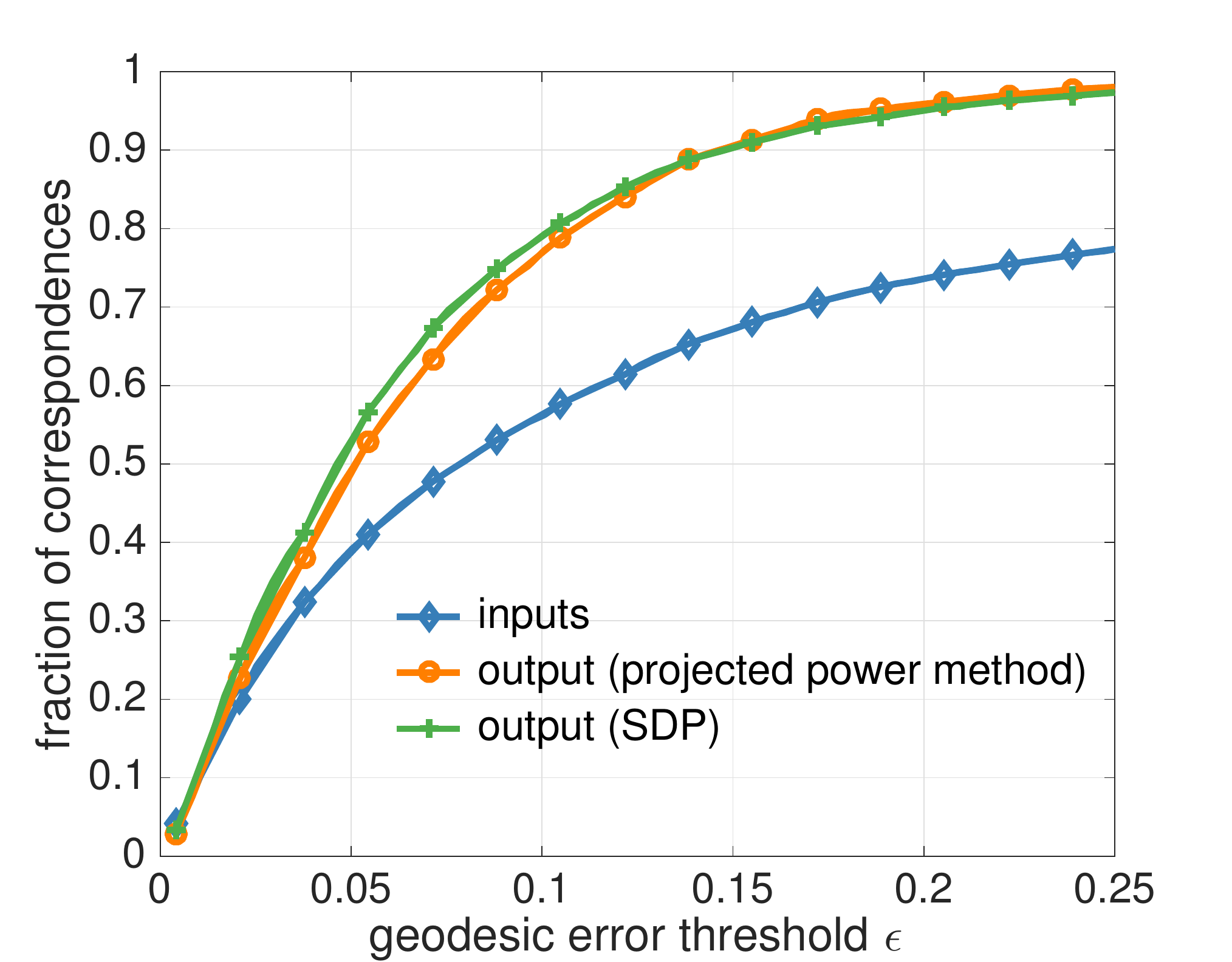} & \includegraphics[width=0.3\textwidth]{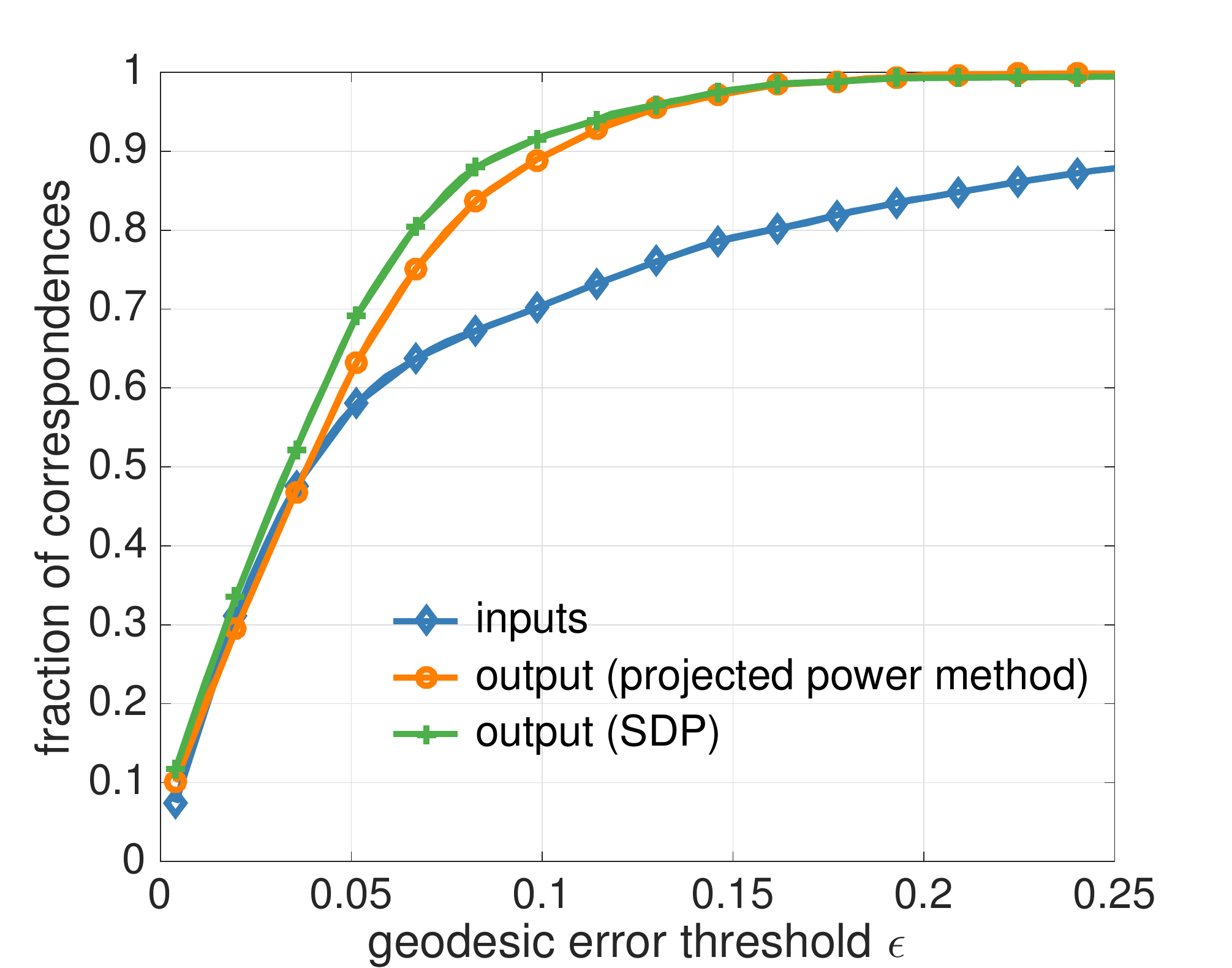} \tabularnewline
 (a) Hand & (b) Fourleg  & (c) Human \tabularnewline
\end{tabular}
\caption{The fraction of correspondences whose normalized geodesic errors are smaller than some threshold $\epsilon$. \label{fig:matching-SHREC}
 }
\end{figure}

\begin{table}
\centering

\begin{tabular}{c|c|c|c}
\hline 
 & Hand & Fourleg & Human\tabularnewline
\hline 
SDP & 455.5 sec & 1389.6 sec & 368.9 sec\tabularnewline
\hline 
PPM & 35.1 sec & 76.8 sec & 40.8 sec\tabularnewline
\hline 
\end{tabular}\caption{ 
Runtime of SDP (implemented using ADMM) and the PPM on 3 benchmark datasets, when carried out on a MacBook Pro equipped with a 2.9 GHz i5 and 8GB of memory.  
\label{tab:Runtime-matching}}

\end{table}

\section{Preliminaries and notation\label{sec:Preliminaries}}

Starting from this section, we turn attention to the analyses of the main results.  
Before proceeding, we gather a few preliminary facts and notations
that will be useful throughout.

\subsection{Projection onto the standard simplex}

Firstly, our algorithm involves projection onto the standard simplex $\Delta$. In light
of this, we single out several elementary facts concerning $\Delta$ and
$\mathcal{P}_{\Delta}$ as follows. Here and throughout, $\|\bm{a}\|$ is the $\ell_2$ norm of a vector $\bm{a}$.  

\begin{fact}
\label{fact:H}
Suppose that $\bm{a}=\left[a_{1},\cdots,a_{m}\right]^{\top}$
obeys $\bm{a}+\bm{e}_{l}\in\Delta$ for some $1\leq l\leq m$. Then
$\left\Vert \bm{a}\right\Vert \leq\sqrt{2}$. 
\end{fact}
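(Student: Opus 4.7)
The plan is to unpack the simplex membership condition on $\bm{a}+\bm{e}_l$ into a few scalar inequalities and then use a standard bound on sums of squares of nonnegatives. First I would translate $\bm{a}+\bm{e}_l \in \Delta$ into the two facts (i) $\bm{1}^\top(\bm{a}+\bm{e}_l)=1$, which yields $\sum_{i=1}^m a_i = 0$, and (ii) componentwise nonnegativity, which gives $a_i\geq 0$ for every $i\neq l$ and $a_l\geq -1$.

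Next I would combine these to pin down the coordinate $a_l$. Since $\sum_{i\neq l} a_i = -a_l$ with all summands on the left nonnegative, necessarily $a_l\leq 0$, and together with $a_l\geq -1$ this gives $-1 \leq a_l \leq 0$, i.e.\ $a_l^2\leq 1$.

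The final step is to bound $\|\bm{a}\|^2 = a_l^2 + \sum_{i\neq l} a_i^2$. For nonnegative numbers, the sum of squares is at most the square of the sum, so
\[
\sum_{i\neq l} a_i^2 \;\leq\; \Big(\sum_{i\neq l} a_i\Big)^2 \;=\; a_l^2.
\]
Therefore $\|\bm{a}\|^2 \leq 2 a_l^2 \leq 2$, which yields $\|\bm{a}\|\leq \sqrt{2}$ as claimed. There is really no obstacle here; the only thing to be slightly careful about is remembering to use the inequality $\sum b_i^2 \leq (\sum b_i)^2$ for nonnegative $b_i$, which is what turns the equality constraint $\sum_i a_i = 0$ into the desired $\ell_2$ control.
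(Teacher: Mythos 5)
Your proof is correct and follows essentially the same route as the paper: extract $\sum_{i\neq l} a_i = -a_l \in [0,1]$ and $a_i \geq 0$ for $i\neq l$ from the simplex constraint, then bound the two pieces of $\|\bm{a}\|^2 = a_l^2 + \sum_{i\neq l} a_i^2$ by $1$ each. The only cosmetic difference is that the paper bounds $\sum_{i\neq l} a_i^2 \leq \sum_{i\neq l} a_i \leq 1$ via $a_i^2 \leq a_i$ for $a_i\in[0,1]$, whereas you use $\sum_{i\neq l} a_i^2 \leq \bigl(\sum_{i\neq l} a_i\bigr)^2 = a_l^2$; both are valid one-line inequalities yielding the same conclusion.
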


\begin{proof}
The feasibility condition requires $0\leq\sum_{i:i\neq l}a_{i}=-a_{l}\leq1$
and $a_{i}\geq0$ for all $i\neq l$. Therefore, it is easy to check
that $\sum\nolimits _{i:i\neq l}a_{i}^{2}\leq\sum\nolimits _{i:i\neq l}a_{i}\leq1$,
and hence $\|\bm{a}\|^{2}=a_{l}^{2}+\sum\nolimits _{i:i\neq l}a_{i}^{2}\leq2$.
\end{proof}

\begin{fact}
\label{fact:Proj-const-shift}
For any vector $\bm{v}\in\mathbb{R}^{m}$
and any value $\delta$, one has 
\begin{equation}
  \mathcal{P}_{\Delta}\left(\bm{\bm{v}}\right) = \mathcal{P}_{\Delta}\left(\bm{v}+\delta\bm{1}\right).
  \label{eq:proj-constant-shift}
\end{equation}
\end{fact}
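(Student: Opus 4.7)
The plan is to use the variational characterization of the simplex projection, $\mathcal{P}_\Delta(\bm{v}) = \arg\min_{\bm{s} \in \Delta} \|\bm{s} - \bm{v}\|^2$, and show that adding $\delta \bm{1}$ to $\bm{v}$ only shifts the objective by a constant on the feasible set. Expanding the squared distance gives
\[
  \|\bm{s} - (\bm{v} + \delta \bm{1})\|^2 = \|\bm{s} - \bm{v}\|^2 - 2\delta \, \bm{1}^\top (\bm{s} - \bm{v}) + \delta^2 \|\bm{1}\|^2.
\]
Since any $\bm{s} \in \Delta$ satisfies $\bm{1}^\top \bm{s} = 1$ by definition of the simplex, the cross term becomes $-2\delta(1 - \bm{1}^\top \bm{v})$, which does not depend on $\bm{s}$. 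Likewise $\delta^2 \|\bm{1}\|^2 = \delta^2 m$ is $\bm{s}$-independent.

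Consequently, on the feasible set $\Delta$, the objectives $\bm{s} \mapsto \|\bm{s} - \bm{v}\|^2$ and $\bm{s} \mapsto \|\bm{s} - (\bm{v} + \delta \bm{1})\|^2$ differ only by an additive constant, and hence share the same unique minimizer (uniqueness follows from strict convexity of the squared Euclidean distance and convexity of $\Delta$). This directly yields the claimed identity. There is no real obstacle here: the only subtlety is recognizing that the equality constraint $\bm{1}^\top \bm{s} = 1$ is precisely what makes the perturbation along $\bm{1}$ inessential, so the statement would fail on polytopes without this affine hull.
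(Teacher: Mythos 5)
Your proof is correct and follows essentially the same route as the paper's: expand $\|\bm{s}-(\bm{v}+\delta\bm{1})\|^2$, use $\bm{1}^\top\bm{s}=1$ to see the extra terms are constant on $\Delta$, and conclude the minimizers coincide. (Minor aside: you correctly write $\|\bm{1}\|^2=m$, whereas the paper's displayed computation has $\delta^2 n$; this typo does not affect the argument.)
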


\begin{proof}
  For any $\bm{x} \in \Delta$,
\[
  \|\bm{v}+\delta\bm{1} -
  \bm{x}\|^2 =  \|\bm{v} -
  \bm{x}\|^2 + \delta^2 \|\bm{1}\|^2 + 2\delta (\bm{v}  - \bm{x})^{\top}
    \bm{1} =   \|\bm{v} -
  \bm{x}\|^2 + \delta^2 n + 2\delta (\bm{v}^{\top} \bm{1} - 1). 
\]
Hence,
$\mathcal{P}_{\Delta}\left(\bm{v}+\delta\bm{1}\right) = \text{arg min}_{\bm{x} \in \Delta} \|\bm{v} + \delta \bm{1} - 
\bm{x}\|^2 = \text{arg min}_{\bm{x} \in \Delta} \|\bm{v} -
\bm{x}\|^2 = \mathcal{P}_{\Delta}\left(\bm{\bm{v}}\right)$.

\end{proof}

\begin{fact}
\label{fact:Proj-separation}
For any non-zero vector
$\bm{v}=\left[v_{i}\right]_{1\leq i\leq m}$, let $v_{(1)}$ and $v_{(2)}$
be its largest and second largest entries, respectively. Suppose $v_{j}=v_{(1)}$.
If $\mu>1/(v_{(1)}-v_{(2)})$, then
\begin{equation}
\mathcal{P}_{\Delta}\left(\mu\bm{\bm{v}}\right)=\bm{e}_{j}.\label{eq:proj-sep}
\end{equation}
\end{fact}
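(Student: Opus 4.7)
The plan is to prove Fact~\ref{fact:Proj-separation} via the standard variational characterization of projection onto a closed convex set: for any $\bm{w}\in\mathbb{R}^m$ and any closed convex $C$, one has $\bm{x}^{\star}=\mathcal{P}_C(\bm{w})$ if and only if $\bm{x}^{\star}\in C$ and $\langle \bm{w}-\bm{x}^{\star},\,\bm{s}-\bm{x}^{\star}\rangle\le 0$ for every $\bm{s}\in C$. Specializing to $C=\Delta$, $\bm{w}=\mu\bm{v}$, and $\bm{x}^{\star}=\bm{e}_j$, the membership $\bm{e}_j\in\Delta$ is immediate, so the task reduces to verifying the variational inequality $\langle \mu\bm{v}-\bm{e}_j,\,\bm{s}-\bm{e}_j\rangle\le 0$ for all $\bm{s}\in\Delta$.

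Next I would reduce this inequality to a simple coordinatewise comparison. The map $\bm{s}\mapsto\langle \mu\bm{v}-\bm{e}_j,\,\bm{s}-\bm{e}_j\rangle$ is linear, and any $\bm{s}\in\Delta$ is a convex combination of the vertices $\bm{e}_1,\ldots,\bm{e}_m$. Hence it suffices to check the inequality at each vertex $\bm{e}_i$, $i\ne j$, which unpacks to $\mu v_i-(\mu v_j-1)\le 0$, i.e., $\mu(v_j-v_i)\ge 1$.

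Finally, since $v_j=v_{(1)}$, we have $v_j-v_i\ge v_{(1)}-v_{(2)}$ for every $i\ne j$, and therefore
\[
  \mu(v_j-v_i)\ \ge\ \mu\bigl(v_{(1)}-v_{(2)}\bigr)\ >\ 1
\]
by the hypothesis $\mu>1/(v_{(1)}-v_{(2)})$. This yields $\mathcal{P}_{\Delta}(\mu\bm{v})=\bm{e}_j$ and completes the proof. Note that finiteness of the threshold $1/(v_{(1)}-v_{(2)})$ tacitly requires $v_{(1)}>v_{(2)}$, so that $j$ is the unique argmax and the statement is unambiguous.

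The argument is a textbook application of the normal-cone/KKT characterization of simplex projection and I do not foresee any real obstacle; the only bookkeeping subtlety is the handling of potential ties for the maximum entry, which is automatically precluded by the standing assumption that $\mu$ is strictly larger than the finite quantity $1/(v_{(1)}-v_{(2)})$.
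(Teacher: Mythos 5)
Your proof is correct and follows essentially the same route as the paper's: both verify the variational inequality $\langle \mu\bm{v}-\bm{e}_j,\,\bm{s}-\bm{e}_j\rangle\le 0$ for all $\bm{s}\in\Delta$, the only cosmetic difference being that you reduce to the vertices of the simplex by linearity while the paper bounds $\bm{x}^\top\bm{v}$ directly for general $\bm{x}\in\Delta$.
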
\begin{proof}
By convexity of $\Delta$, we have $\mathcal{P}_{\Delta}\left(\mu\bm{\bm{v}}\right)=\bm{e}_{j}$
if and only if, for any $\bm{x} =[x_i]_{1\leq i\leq m} \in \Delta$, 
\[
  (\bm{x} - \bm{e}_{j})^\top(\mu\bm{v} - \bm{e}_{j}) \le 0. 
\]
Since $\bm{x}^\top \bm{v} = x_j v_{(1)} + \sum_{i: i \neq j} x_i v_i
\le x_j v_{(1)} + v_{(2)} \sum_{i: i \neq j} x_i = x_j v_{(1)} + v_{(2)} (1-x_j)$, we see that 
\[
 (\bm{x} - \bm{e}_{j})^\top(\mu\bm{v} - \bm{e}_{j}) \le (1-x_j)(1 -
 \mu(v_{(1)} - v_{(2)})) \le 0.
\]
\end{proof}

In words, Fact \ref{fact:Proj-const-shift} claims that a global offset
does not alter the projection $\mathcal{P}_{\Delta}\left(\cdot\right)$,
while Fact \ref{fact:Proj-separation} reveals that a large scaling
factor $\mu$ results in sufficient separation between the largest
entry and the remaining ones. See Fig.~\ref{fig:Illustration-of-Facts}
for a graphical illustration. 

\begin{figure}
\centering
\begin{tabular}{cc}
\includegraphics[width=0.32\textwidth]{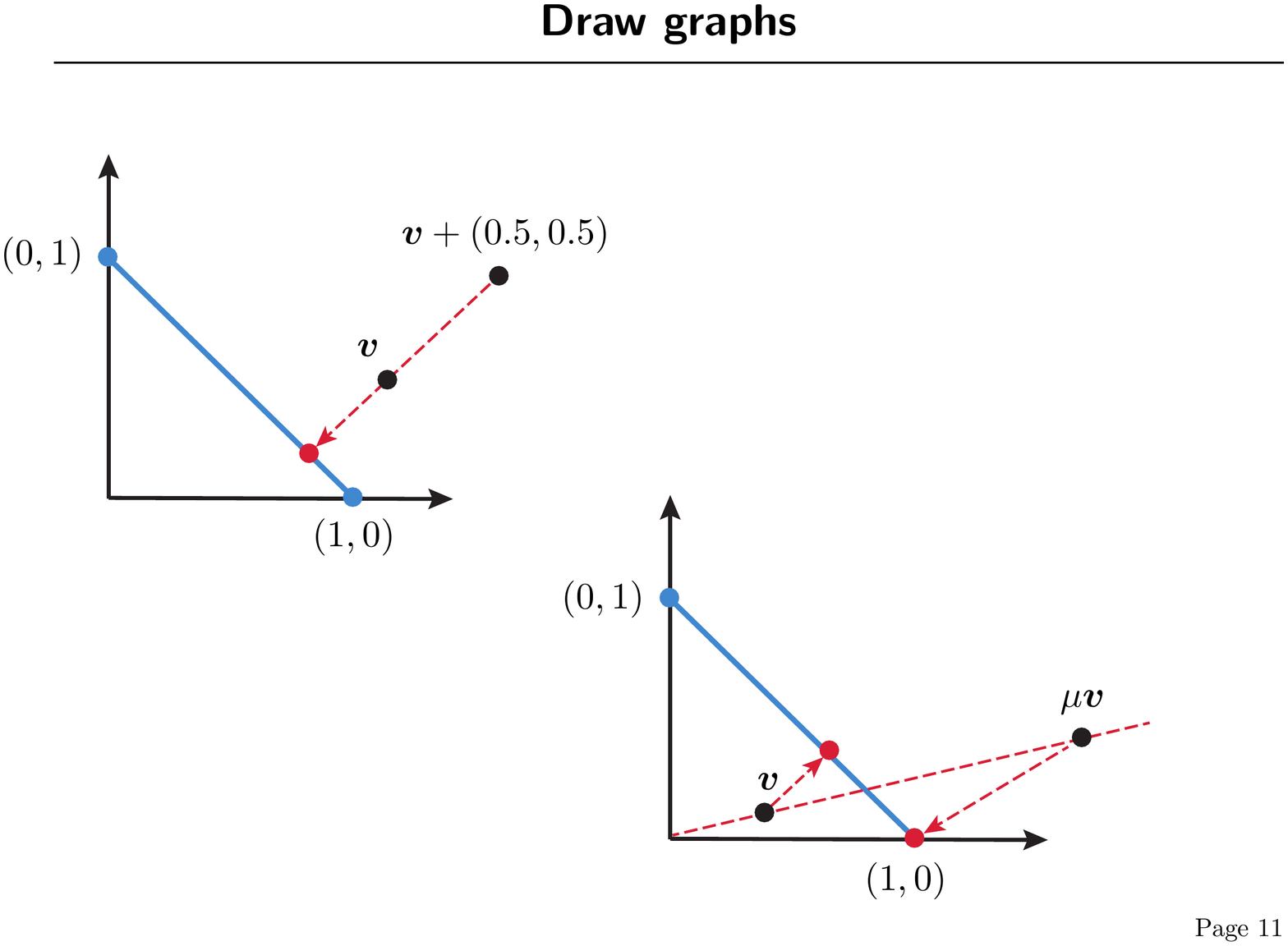} & \includegraphics[width=0.31\textwidth]{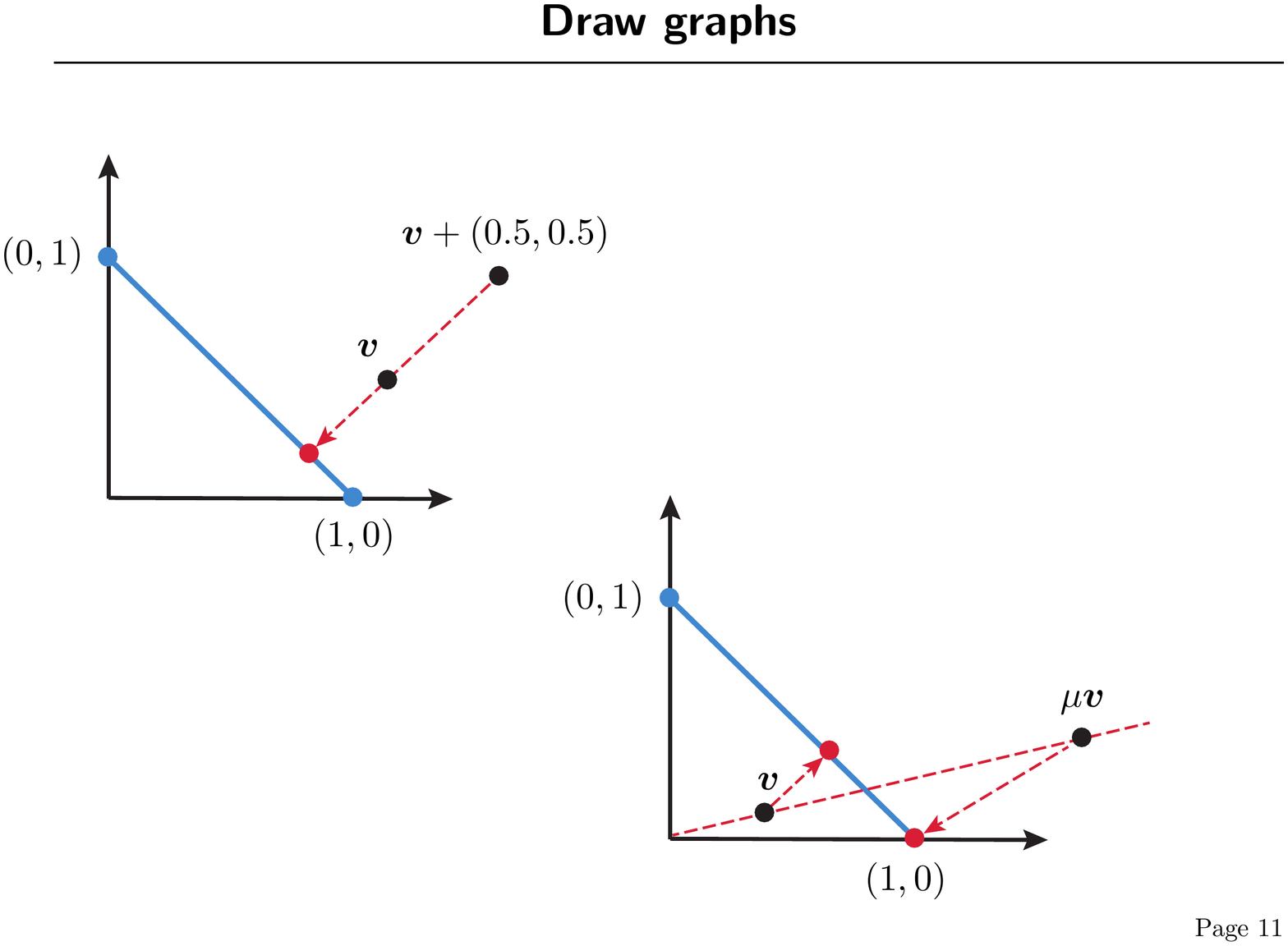} \tabularnewline
(a) & (b)\tabularnewline
\end{tabular}
\caption{Illustration of Facts \ref{fact:Proj-const-shift}-\ref{fact:Proj-separation}
on a two-dimensional standard simplex. 
(a) For any $\bm{v}$, $\mathcal{P}_{\Delta}\left(\bm{v}\right)=\mathcal{P}_{\Delta}\left(\bm{v}+0.5\cdot\bm{1}\right)$;
(b) for an vector $\bm{v}\in\mathbb{R}^{2}$ obeying $v_{1}>v_{2}$,
one has $\mathcal{P}_{\Delta}\left(\mu\bm{v}\right)=\bm{e}_{1}$ when
$\mu$ is sufficiently large. \label{fig:Illustration-of-Facts}}
\end{figure}

\subsection{Properties of the likelihood ratios}

Next, we study the log-likelihood
ratio statistics. The first result makes a connection between the
KL divergence and other properties of the log-likelihood ratio. Here
and below, for any two distributions $P$ and $Q$ supported on $\mathcal{Y}$,
the total variation distance between them is defined by 
$\mathsf{TV}\left(P,Q\right):=\frac{1}{2}\sum\nolimits _{y\in\mathcal{Y}}\left|P(y)-Q(y)\right|$.

\begin{lem}
\label{lemma:V-KL-UB}
(1) Consider two probability distributions
$P$ and $Q$ over a finite set $\mathcal{Y}$. Then 
\begin{eqnarray}
  \left| \log\frac{Q(y)}{P(y)}\right| & \leq & \frac{ 2\mathsf{TV}\left(P,Q\right)}{\min\left\{ P\left( y \right),Q\left( y \right)\right\} }
  \leq \frac{ \sqrt{2\mathsf{KL}\left(P\hspace{0.3em}\|\hspace{0.3em}Q\right)} }{ \min\left\{ P\left(y\right),Q\left(y\right)\right\} }.
\end{eqnarray}

(2) In addition, if both $\max_{y\in\mathcal{Y}}\frac{P(y)}{Q(y)} \leq \kappa_{0}$
and $\max_{y\in\mathcal{Y}}\frac{Q(y)}{P(y)} \leq \kappa_{0}$ hold,
then 
\begin{equation}
  \mathbb{E}_{y\sim P} \left[ \left(\log\frac{P(y)}{Q(y)}\right)^2 \right] 
  \leq 2 \kappa_{0}^{2} \mathsf{KL}\left(Q \hspace{0.3em}\|\hspace{0.3em} P \right).
  \label{eq:V-KL-UB}
\end{equation}

\end{lem}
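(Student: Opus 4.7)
The plan is to handle the two parts by different routes: Part (1) via pointwise inequalities for the log-likelihood ratio combined with Pinsker's inequality, and Part (2) via a change of measure that reduces matters to a one-variable pointwise inequality. For the first bound in Part (1), I would start from the elementary fact $\log x \leq x-1$ for all $x > 0$. Applying it to $x = Q(y)/P(y)$ when $Q(y) \geq P(y)$, and to $x = P(y)/Q(y)$ otherwise, yields $|\log(Q(y)/P(y))| \leq |P(y)-Q(y)|/\min\{P(y),Q(y)\}$. Since $|P(y)-Q(y)| \leq \sum_{y'\in\mathcal{Y}} |P(y')-Q(y')| = 2\mathsf{TV}(P,Q)$, this delivers the first inequality. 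The second inequality is then immediate from Pinsker's inequality, $2\mathsf{TV}(P,Q) \leq \sqrt{2\mathsf{KL}(P\hspace{0.3em}\|\hspace{0.3em}Q)}$, which chains onto the first.

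For Part (2), I would set $u(y) := P(y)/Q(y) \in [1/\kappa_0,\kappa_0]$ and rewrite both sides under a common reference measure. Changing measure gives
\begin{equation*}
  \mathbb{E}_{y\sim P}\bigl[\log^2(P(y)/Q(y))\bigr] \;=\; \mathbb{E}_{y\sim Q}\bigl[u(y)\log^2 u(y)\bigr],
\end{equation*}
and since $\mathbb{E}_{y\sim Q}[u(y)-1]=0$, one also has $\mathsf{KL}(Q\hspace{0.3em}\|\hspace{0.3em}P) = -\mathbb{E}_{y\sim Q}[\log u(y)] = \mathbb{E}_{y\sim Q}[(u(y)-1)-\log u(y)]$. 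Consequently, it suffices to establish the pointwise inequality
\begin{equation*}
  u\log^2 u \;\leq\; 2\kappa_0^2\bigl[(u-1)-\log u\bigr], \qquad u\in[1/\kappa_0,\kappa_0],
\end{equation*}
and then take expectation under $Q$.

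I would verify this pointwise inequality by splitting into two cases. For $u \geq 1$, $\log u \leq u-1$ gives $\log^2 u \leq (u-1)^2$, so $u\log^2 u \leq \kappa_0 (u-1)^2$; meanwhile the integral representation $(u-1)-\log u = \int_1^u(1-1/t)\,dt$ together with $1-1/t \geq (t-1)/u$ on $[1,u]$ yields $(u-1)-\log u \geq (u-1)^2/(2u) \geq (u-1)^2/(2\kappa_0)$, and combining the two gives the claim. For $u \in [1/\kappa_0,1)$, a symmetric integral bound yields $(u-1)-\log u \geq (u-1)^2/2$, while the substitution $v = 1/u \in [1,\kappa_0]$ together with $v\log v \leq \kappa_0(v-1)$ (obtained from $\log v \leq v-1$ after multiplying by $v \leq \kappa_0$) gives $|\log u| \leq \kappa_0(1-u)$, hence $u\log^2 u \leq \kappa_0^2 (u-1)^2$. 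Combining these finishes the second case, and taking $\mathbb{E}_{y\sim Q}$ completes the proof.

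The main subtlety lies in keeping the constant sharp at $2\kappa_0^2$. A cruder chain through the $\chi^2$-divergence---namely $\log^2 u \leq \kappa_0^2(u-1)^2$, then $\mathbb{E}_P[(u-1)^2] \leq \kappa_0\chi^2(P\hspace{0.3em}\|\hspace{0.3em}Q)$, then $\chi^2(P\hspace{0.3em}\|\hspace{0.3em}Q) \leq 2\kappa_0\mathsf{KL}(Q\hspace{0.3em}\|\hspace{0.3em}P)$---would pick up two extra factors of $\kappa_0$ and only deliver the looser bound $2\kappa_0^4\mathsf{KL}(Q\hspace{0.3em}\|\hspace{0.3em}P)$. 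The direct pointwise route above is precisely what is needed to avoid this loss.
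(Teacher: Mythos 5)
Your Part (1) is correct and essentially identical to the paper's argument: the elementary bound $\log x\leq x-1$ (equivalently $\log(1+x)\leq x$) applied on each side of $P(y)\lessgtr Q(y)$ gives $|\log(Q(y)/P(y))|\leq |P(y)-Q(y)|/\min\{P(y),Q(y)\}$, after which the sum bound and Pinsker's inequality finish exactly as the paper does.

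Your Part (2), however, takes a genuinely different and fully self-contained route. The paper stays with the pointwise bound $\log^2(P(y)/Q(y))\leq |P(y)-Q(y)|^2/\min\{P^2(y),Q^2(y)\}$, sums against $P$ to obtain $\kappa_0\,\chi^2(P\,\|\,Q)$ with $\chi^2(P\,\|\,Q):=\sum_y(P(y)-Q(y))^2/Q(y)$, and then invokes an external reverse Pinsker-type result (Dragomir's Proposition 2) giving $\chi^2(P\,\|\,Q)\leq 2\kappa_0\,\mathsf{KL}(P\,\|\,Q)$, which yields the constant $2\kappa_0^2$. You instead change measure to $\mathbb{E}_{y\sim Q}[u\log^2 u]$ with $u=P/Q$, identify $\mathsf{KL}(Q\,\|\,P)=\mathbb{E}_{y\sim Q}[(u-1)-\log u]$, and verify the single pointwise inequality $u\log^2 u\leq 2\kappa_0^2[(u-1)-\log u]$ on $[1/\kappa_0,\kappa_0]$ by elementary integral estimates; I checked both cases and they are correct. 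Your approach buys two things: it avoids citing an external $\chi^2$-vs-KL comparison, and it produces the bound with $\mathsf{KL}(Q\,\|\,P)$ on the right-hand side exactly as the lemma states it (the paper's own chain, as written, terminates in $\mathsf{KL}(P\,\|\,Q)$, a harmless but real mismatch with the statement given the bounded likelihood ratios). The paper's route is shorter modulo the citation. One small remark: your closing comparison slightly misattributes the loss in the ``cruder chain''---the paper's actual $\chi^2$ route also achieves $2\kappa_0^2$, because it does not pass through $\log^2 u\leq\kappa_0^2(u-1)^2$; the extra $\kappa_0^2$ you describe only appears in the specific looser variant you sketch.
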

\begin{proof} See Appendix \ref{sec:Proof-of-Lemma-V-KL-UB}.\end{proof}

In particular, when $\mathsf{KL}\left(P\hspace{0.3em}\|\hspace{0.3em}Q\right)$
is small, one almost attains equality in (\ref{eq:V-KL-UB}), as stated
below.

\begin{lem}
\label{lem:KL-Var-Hel}
Consider two probability distributions
$P$ and $Q$ on a finite set $\mathcal{Y}$ such that $P(y)$ and
$Q(y)$ are both bounded away from zero. If $\mathsf{KL}\left(P\hspace{0.3em}\|\hspace{0.3em}Q\right) \leq \varepsilon$ for some  $0<\varepsilon<1$,
then one has
\begin{equation}
  \mathsf{KL}\left(P\hspace{0.3em}\|\hspace{0.3em}Q\right) = \frac{1+\zeta_{1}\left(\varepsilon\right)}{2}{\bf Var}_{y\sim P}\left[\log\frac{P\left(y\right)}{Q\left(y\right)}\right]
  \label{eq:lemma-V-KL}
\end{equation}
\begin{equation}
  \text{and} \qquad \mathsf{H}^{2}(P,Q) = \frac{1+\zeta_{2} (\varepsilon)}{4} \mathsf{KL}( P\hspace{0.3em}\|\hspace{0.3em}Q ),
  \label{eq:lemma-H2-KL}
\end{equation}
where $\zeta_{1}$ and $\zeta_{2}$ are functions satisfying $|\zeta_{1}(\varepsilon)|,|\zeta_{2}(\varepsilon)|\leq c_{0}\sqrt{\varepsilon}$
for some universal constant $c_{0}>0$. 

\end{lem}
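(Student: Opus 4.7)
The plan is to do a careful Taylor expansion, pivoting on the likelihood ratio $r(y) := Q(y)/P(y) - 1$. Then $\log(P(y)/Q(y)) = -\log(1+r(y))$, and
\[
  \mathsf{KL}(P\hspace{0.3em}\|\hspace{0.3em}Q) \;=\; \mathbb{E}_{y\sim P}\bigl[-\log(1+r(y))\bigr], \qquad \mathsf{H}^2(P,Q) \;=\; 1 - \mathbb{E}_{y\sim P}\bigl[\sqrt{1+r(y)}\bigr].
\]
Using $-\log(1+r) = -r + \tfrac{r^2}{2} + O(r^3)$ and $\sqrt{1+r} = 1 + \tfrac{r}{2} - \tfrac{r^2}{8} + O(r^3)$, together with the normalization identity $\mathbb{E}_{y\sim P}[r(y)] = \sum_y Q(y) - 1 = 0$, I would obtain
\[
  \mathsf{KL}(P\hspace{0.3em}\|\hspace{0.3em}Q) = \tfrac{1}{2}\mathbb{E}[r^2] + O(\mathbb{E}|r|^3), \quad \mathsf{H}^2(P,Q) = \tfrac{1}{8}\mathbb{E}[r^2] + O(\mathbb{E}|r|^3).
\]
Similarly, expanding $L(y) := -\log(1+r(y))$ and its square, and using $(\mathbb{E}[L])^2 = \mathsf{KL}^2 \leq \varepsilon^2$, gives $\mathbf{Var}_{y\sim P}[L] = \mathbb{E}[r^2] + O(\mathbb{E}|r|^3) + O(\varepsilon^2)$.

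The key step that makes this rigorous is a \emph{pointwise} bound $|r(y)| \leq c\sqrt{\varepsilon}$. I would derive it as follows: by Pinsker's inequality, $\mathsf{TV}(P,Q) \leq \sqrt{\mathsf{KL}/2} \leq \sqrt{\varepsilon/2}$; since $|P(y)-Q(y)| \leq 2\mathsf{TV}(P,Q)$ and $P(y)$ is bounded away from zero by some constant $p_{\min} > 0$, one has
\[
  |r(y)| \;=\; \frac{|Q(y)-P(y)|}{P(y)} \;\leq\; \frac{2\mathsf{TV}(P,Q)}{p_{\min}} \;\leq\; \frac{\sqrt{2\varepsilon}}{p_{\min}}.
\]
With this uniform bound, $|r|^3 \leq c\sqrt{\varepsilon}\cdot r^2$, so $\mathbb{E}|r|^3 \leq c\sqrt{\varepsilon}\,\mathbb{E}[r^2]$. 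Moreover, Lemma \ref{lemma:V-KL-UB} (or the leading-order identity itself) gives $\mathbb{E}[r^2] = O(\varepsilon)$, so all remainder terms are $O(\varepsilon^{3/2})$, smaller than the $\Theta(\varepsilon)$ leading term by a factor of $\sqrt{\varepsilon}$.

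Putting the expansions together yields
\[
  \frac{\mathsf{KL}(P\hspace{0.3em}\|\hspace{0.3em}Q)}{\mathbf{Var}_{y\sim P}[L]} = \frac{\tfrac{1}{2}\mathbb{E}[r^2] + O(\varepsilon^{3/2})}{\mathbb{E}[r^2] + O(\varepsilon^{3/2})} = \frac{1}{2}\bigl(1 + O(\sqrt{\varepsilon})\bigr), \qquad \frac{\mathsf{H}^2(P,Q)}{\mathsf{KL}(P\hspace{0.3em}\|\hspace{0.3em}Q)} = \frac{1}{4}\bigl(1 + O(\sqrt{\varepsilon})\bigr),
\]
which gives the claimed forms with $|\zeta_i(\varepsilon)| \leq c_0\sqrt{\varepsilon}$. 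The main obstacle is really the bookkeeping of the three Taylor expansions simultaneously and verifying that the $\mathsf{KL}^2$ contribution to $\mathbf{Var}[L]$ and the mixed cross terms (arising from squaring $L = -r + r^2/2 + \cdots$) are absorbable into the $O(\varepsilon^{3/2})$ remainder; everything else is routine once the pointwise bound on $r$ is in hand.
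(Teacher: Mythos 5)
Your proposal is correct and follows essentially the same route as the paper's proof in Appendix \ref{sec:Proof-of-Lemma-Chernoff-LB}: both hinge on the pointwise bound $|Q(y)/P(y)-1|=O(\sqrt{\varepsilon})$ obtained from Pinsker's inequality together with $P,Q$ being bounded away from zero, followed by a second-order expansion of the log-likelihood ratio. The only cosmetic difference is that the paper delegates the resulting comparisons to cited facts relating $\mathsf{KL}$ to $\chi^{2}$ and to $\mathsf{H}^{2}$, whereas you carry out the Taylor expansions of $-\log(1+r)$ and $\sqrt{1+r}$ explicitly, which is equally valid.
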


\begin{proof}See Appendix \ref{sec:Proof-of-Lemma-Chernoff-LB}.\end{proof}

\subsection{Block random matrices}

Additionally, the data matrix $\bm{L}$ is assumed to have independent
blocks. It is thus crucial to control the fluctuation of such random
block matrices, for which the following lemma proves useful. 

\begin{lem}
\label{lem:random-matrix-norm}
Let $\bm{M}:=[\bm{M}_{i,j}]_{1\leq i,j\leq n}$
be any random symmetric block matrix, where $\left\{ \bm{M}_{i,j}\in\mathbb{R}^{m\times m}\mid i\geq j\right\} $
are independently generated. Suppose that $m=O\left(\mathrm{poly}(n)\right)$,
$\mathbb{E}\left[\bm{M}_{i,j}\right]=\bm{0}$, $\max_{i,j}\|\bm{M}_{i,j}\|\leq K$,
and  $\mathbb{P}\left\{ \bm{M}_{i,j}=\bm{0}\right\} =p_{\mathrm{obs}}$
for some $p_{\mathrm{obs}}\gtrsim\log n/n$. Then with probability
exceeding $1-O(n^{-10})$, 
\begin{eqnarray}
  \|\bm{M}\| & \lesssim & K\sqrt{np_{\mathrm{obs}}}.
\end{eqnarray}
\end{lem}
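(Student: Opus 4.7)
The natural strategy is to apply the matrix Bernstein inequality to a decomposition of $\bm{M}$ into a sum of independent, zero-mean, symmetric summands, one for each pair $(i,j)$ with $i\leq j$. Concretely, I would write $\bm{M}=\sum_{i\leq j}\bm{A}^{(i,j)}$, where $\bm{A}^{(i,j)}\in\mathbb{R}^{nm\times nm}$ is the symmetric block matrix whose $(i,j)$ and $(j,i)$ blocks equal $\bm{M}_{i,j}$ and $\bm{M}_{i,j}^{\top}$ (or $\bm{M}_{j,i}$) respectively and which is zero elsewhere, with the convention $\bm{A}^{(i,i)}$ supported only on the diagonal block. The hypothesis of independent blocks immediately makes $\{\bm{A}^{(i,j)}\}_{i\leq j}$ a family of independent, symmetric, zero-mean random matrices.

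Next, I would verify the two inputs to matrix Bernstein. Uniform boundedness is immediate: for $i<j$ the two nonzero blocks of $\bm{A}^{(i,j)}$ occupy disjoint row and column index sets, so $\|\bm{A}^{(i,j)}\|=\|\bm{M}_{i,j}\|\leq K$. For the matrix variance, I would compute $V:=\sum_{i\leq j}\mathbb{E}\bigl[(\bm{A}^{(i,j)})^{2}\bigr]$, which is block-diagonal; its $k$-th diagonal block equals $\sum_{j}\mathbb{E}[\bm{M}_{k,j}\bm{M}_{k,j}^{\top}]$ up to a factor of two from the off-diagonal symmetrization. Since $\|\bm{M}_{k,j}\|\leq K$ and (interpreting the hypothesis in the sense of the paper) each block is nonzero with probability $p_{\mathrm{obs}}$, every summand satisfies $\|\mathbb{E}[\bm{M}_{k,j}\bm{M}_{k,j}^{\top}]\|\leq K^{2}p_{\mathrm{obs}}$, yielding $\|V\|\lesssim nK^{2}p_{\mathrm{obs}}$.

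Finally, applying matrix Bernstein to symmetric matrices of dimension $N=nm$ and setting the deviation threshold at $t\asymp K\sqrt{np_{\mathrm{obs}}\log(nm)}$, the tail probability is bounded by $2N\exp(-\Omega(t^{2}/\|V\|))=O(n^{-10})$ once the implicit constant is taken large enough. The hypothesis $p_{\mathrm{obs}}\gtrsim\log n/n$ ensures that the variance term dominates the linear-in-$t$ term in the Bernstein denominator, while $m=O(\mathrm{poly}(n))$ gives $\log(nm)\asymp\log n$, so the resulting bound matches the claimed $K\sqrt{np_{\mathrm{obs}}}$ up to the logarithmic factor absorbed in the $\lesssim$ notation (and can be sharpened to the stated form by an additional Bai--Yin / trace-method refinement if desired).

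The main technical nuisance is the variance accounting rather than any deep estimate: one must correctly track how each $(\bm{A}^{(i,j)})^{2}$ contributes simultaneously to two diagonal blocks of $V$ in order to avoid double counting, and one must check that the off-diagonal blocks of $V$ vanish after summing, so that the operator norm $\|V\|$ reduces to the worst diagonal block. Everything else is a direct application of a well-known concentration inequality, and the hypothesis $p_{\mathrm{obs}}\gtrsim\log n/n$ together with the $\mathrm{poly}(n)$ bound on $m$ is exactly what is needed for the dimension-dependent log factor to be harmless.
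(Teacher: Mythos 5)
Your plan has a genuine gap at exactly the point you wave off at the end. The matrix Bernstein route, executed as you describe, yields a tail bound of the form $2nm\exp(-ct^{2}/(K^{2}np_{\mathrm{obs}}))$, and to make this $O(n^{-10})$ you must take $t\asymp K\sqrt{np_{\mathrm{obs}}\log(nm)}$. That is weaker than the claimed bound by a factor of $\sqrt{\log n}$, and this factor cannot be "absorbed in the $\lesssim$ notation": the paper's convention is that $\lesssim$ hides only absolute constants. The paper is explicit about this very issue at the start of its proof — it notes that matrix Bernstein loses a logarithmic factor relative to the advertised bound, which is why it takes a different route. Your parenthetical "(can be sharpened by an additional Bai--Yin / trace-method refinement if desired)" is not an optional polish; it is the entire content of the lemma.

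The paper's actual argument is: (i) Gaussian symmetrization to replace $\bm{M}$ by $\bm{B}=[g_{i,j}\bm{M}_{i,j}]$; (ii) a trace/moment computation with $p=\log n$ that reduces the block matrix to the scalar Gaussian matrix $\bm{Z}=[g_{i,j}\|\bm{M}_{i,j}\|]$ at the cost of a factor $m^{1/(2p)}\lesssim 1$ (this is where $m=O(\mathrm{poly}(n))$ is used), and then the Bandeira--van Handel bound $\mathbb{E}\|\bm{Z}\|\lesssim\sigma+K\sqrt{\log n}$ with $\sigma^{2}=\max_{i}\sum_{j}\|\bm{M}_{i,j}\|^{2}$; (iii) a Chernoff bound showing $\sigma\lesssim K\sqrt{np_{\mathrm{obs}}}$ since each row has $O(np_{\mathrm{obs}})$ nonzero blocks; and (iv) Talagrand's concentration inequality to upgrade the bound on the median to a $1-O(n^{-10})$ probability bound with an additive fluctuation of only $K\sqrt{\log n}$. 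The hypothesis $p_{\mathrm{obs}}\gtrsim\log n/n$ then makes $K\sqrt{\log n}\lesssim K\sqrt{np_{\mathrm{obs}}}$, so no logarithm survives. The structural reason your approach falls short is that matrix Bernstein charges the full dimension $nm$ in the union-bound prefactor, whereas the sharper argument exploits that the variance profile is spread over $\Theta(np_{\mathrm{obs}})$ nonzero blocks per row, which the noncommutative Khintchine/Bernstein machinery does not see. If you want to keep a Bernstein-flavored writeup, you must either accept the extra $\sqrt{\log n}$ (and then check whether the downstream results tolerate it — the paper asserts they should not be weakened this way) or carry out the symmetrization-plus-moment-method argument in full.
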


\begin{proof}  
See Appendix \ref{sec:Proof-of-Lemma-Random-Matrix-Norm}.
\end{proof}

Lemma \ref{lem:random-matrix-norm}
immediately leads to an upper estimate on the fluctuations of $\bm{L}$ and $\bm{L}^{\mathrm{debias}}$. 

\begin{lem}
\label{lem:deviation-Lhat}
Suppose $m=O(\mathrm{poly}\left(n\right))$, and define $\left\Vert \log\frac{P_{0}}{P_{l}}\right\Vert _{1}:=\sum_{y}\left|\log\frac{P_{0}(y)}{P_{l}(y)}\right|$.
If $p_{\mathrm{obs}}\gtrsim \log n / n$, then with probability
exceeding $1-O\left(n^{-10}\right)$, the matrices $\bm{L}$ and $\bm{L}^{\mathrm{debias}}$ given respectively in (\ref{eq:Input-Matrix}) and (\ref{eq:debiased-MLE}) satisfy
\begin{eqnarray}
  \left\Vert \bm{L}-\mathbb{E}[\bm{L}]\right\Vert   
   ~=~ \left\Vert \bm{L}^{\mathrm{debias}} - \mathbb{E}[ \bm{L}^{\mathrm{debias}} ]\right\Vert
  ~\lesssim~ \left(\frac{1}{m} \sum\nolimits _{l=1}^{m-1}\left\Vert \log\frac{P_{0}}{P_{l}}\right\Vert _{1}\right) \sqrt{np_{\mathrm{obs}}}.
\end{eqnarray}
\end{lem}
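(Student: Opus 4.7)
The plan is to invoke Lemma \ref{lem:random-matrix-norm} on the centered block matrix; the main preparation is a single-block spectral-norm bound of the form $K:=\frac{1}{m}\sum_{l=1}^{m-1}\|\log(P_0/P_l)\|_1$.

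For any observed block $\bm{L}_{i,j}$, the circulant structure lets me equate $\|\bm{L}_{i,j}\|$ with the maximum magnitude of the DFT of its first row $(\log P_\ell(y_{i,j}))_{\ell=0}^{m-1}$. Using $P_\ell(y)=P_0(y-\ell)$, a substitution $l'=y_{i,j}-\ell$ reveals that each coefficient factors as
\[
  \hat{c}_k \;=\; \omega^{y_{i,j}k}\sum_{l'}\log P_0(l')\,\omega^{-l'k},
\]
so $|\hat{c}_k|$ is independent of $y_{i,j}$. For $k\neq 0$ the Fourier weights sum to zero, and I may freely subtract any constant $\log P_0(y_0)$ from each summand, yielding $|\hat{c}_k|\leq\sum_{l'}|\log P_0(l')-\log P_0(y_0)|$ for every reference $y_0$. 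Since the minimum over $y_0$ is bounded by the uniform average, reindexing via $l=l'-y_0$ produces exactly $|\hat{c}_k|\leq K$. Debiasing cancels the $k=0$ mode, so $\|\bm{L}^{\mathrm{debias}}_{i,j}\|\leq K$, and the same bound applies via Jensen to $\bar{\bm{L}}^{\mathrm{debias}}_{i,j}:=\mathbb{E}[\bm{L}^{\mathrm{debias}}_{i,j}\mid(i,j)\in\Omega]$.

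Since $\bm{L}^{\mathrm{debias}}_{i,j}=\bm{0}$ whenever $(i,j)\notin\Omega$, I split the centered matrix into a block-sparse zero-mean fluctuation $\bm{M}^{(1)}_{i,j}=\mathbb{I}_{(i,j)\in\Omega}(\bm{L}^{\mathrm{debias}}_{i,j}-\bar{\bm{L}}^{\mathrm{debias}}_{i,j})$, which is directly amenable to Lemma \ref{lem:random-matrix-norm} and gives $\|\bm{M}^{(1)}\|\lesssim K\sqrt{np_{\mathrm{obs}}}$, plus a Bernoulli-weighted deterministic term $\bm{M}^{(2)}_{i,j}=(\mathbb{I}_{(i,j)\in\Omega}-p_{\mathrm{obs}})\bar{\bm{L}}^{\mathrm{debias}}_{i,j}$, which is controlled by a standard matrix-Bernstein argument using the same $K$ and the Bernoulli variance $p_{\mathrm{obs}}(1-p_{\mathrm{obs}})$. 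To transfer the bound to the undebiased matrix, I would exploit that $\bm{L}_{i,j}-\bm{L}^{\mathrm{debias}}_{i,j}=\tfrac{S}{m}\mathbb{I}_{(i,j)\in\Omega}\bm{1}\bm{1}^{\top}$ where $S=\sum_{l}\log P_0(l)$ is a $y$-independent constant (a consequence of the same cyclic symmetry $P_\ell(y)=P_0(y-\ell)$). Hence $\bm{L}-\bm{L}^{\mathrm{debias}}=(S/m)\,\bm{A}_\Omega\otimes\bm{1}\bm{1}^{\top}$, whose centered version concentrates via classical Erd\H{o}s--R\'enyi bounds on $\|\bm{A}_\Omega-\mathbb{E}\bm{A}_\Omega\|$, yielding a contribution of the same order.

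The main obstacle is the single-block step: producing the \emph{average} form $\frac{1}{m}\sum_{l}\|\log(P_0/P_l)\|_1$ rather than the crude pointwise maximum $\max_{y}\sum_{l}|\log(P_0/P_l)(y)|$. This hinges on fully exploiting the circulant/Fourier symmetry together with a ``min $\leq$ average'' identity over reference shifts $y_0$---a symmetry-based trick that is easy to miss but essential for tightness.
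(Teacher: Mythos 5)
Your single-block bound and its application to the debiased matrix are correct and land on the same constant $K=\frac{1}{m}\sum_{l=1}^{m-1}\big\|\log\frac{P_0}{P_l}\big\|_{1}$ as the paper, but by a different route. The paper simply writes out $(\bm{L}_{i,j}^{\mathrm{debias}})_{\alpha,\beta}=\frac{1}{m}\sum_{l=1}^{m-1}\log\frac{P_{0}(y_{i,j}-\alpha+\beta)}{P_{l}(y_{i,j}-\alpha+\beta)}$ and then uses the fact that the spectral norm of a circulant matrix is at most the $\ell_{1}$ norm of any one of its columns, which yields $K$ in two lines. Your Fourier argument (eigenvalues of the circulant block, vanishing of the $k=0$ mode after debiasing, and the ``min over reference shifts $\leq$ average'' step, whose average reindexes exactly to $K$) is equivalent but more roundabout. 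Your split into a block-sparse zero-mean part $\bm{M}^{(1)}$ and a Bernoulli-weighted mean part $\bm{M}^{(2)}$ is in fact more careful than the paper, which feeds the centered matrix to Lemma \ref{lem:random-matrix-norm} directly even though its blocks are not exactly zero off $\Omega$. Two remarks on $\bm{M}^{(2)}$: matrix Bernstein costs an extra $\sqrt{\log n}$ factor (the paper opens the proof of Lemma \ref{lem:random-matrix-norm} by warning against it for precisely this reason); since $\bar{\bm{L}}^{\mathrm{debias}}_{i,j}$ is the same matrix for every $i\neq j$, you should instead write $\bm{M}^{(2)}$ as $(\bm{A}_{\Omega}-\mathbb{E}[\bm{A}_{\Omega}])\otimes\bar{\bm{K}}$ and invoke adjacency-matrix concentration, which is log-free.

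The genuine gap is the transfer to the undebiased $\bm{L}$. You correctly identify that $\bm{L}-\bm{L}^{\mathrm{debias}}=\frac{S}{m}\,\bm{A}_{\Omega}\otimes\bm{1}\bm{1}^{\top}$ with $S=\sum_{z}\log P_{0}(z)$, so the two centered matrices differ by $\frac{S}{m}(\bm{A}_{\Omega}-\mathbb{E}[\bm{A}_{\Omega}])\otimes\bm{1}\bm{1}^{\top}$, whose spectral norm is $\frac{|S|}{m}\cdot m\cdot\|\bm{A}_{\Omega}-\mathbb{E}[\bm{A}_{\Omega}]\|\asymp|S|\sqrt{np_{\mathrm{obs}}}$. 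But $|S|$ is not controlled by $K$: for noise near uniform, $|S|\asymp m\log m$ while $K$ can be arbitrarily small (exactly $0$ for uniform noise). Hence the claim that this correction contributes ``the same order'' is unjustified, and neither the asserted equality $\|\bm{L}-\mathbb{E}[\bm{L}]\|=\|\bm{L}^{\mathrm{debias}}-\mathbb{E}[\bm{L}^{\mathrm{debias}}]\|$ nor the bound for the undebiased matrix follows from your decomposition. To be fair, the paper disposes of this step by declaring the two centered matrices literally equal on the grounds that $\bm{1}^{\top}\bm{L}_{i,j}\bm{1}$ is constant on observed blocks---a statement your own decomposition shows holds only conditionally on $\Omega$, not after centering by the unconditional mean. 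So the weak point you would need to repair is exactly the one the paper glosses over; as written, your argument does not close it.
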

\begin{proof}
  See Appendix \ref{sec:Proof-of-Lemma-deviation-Lhat}.
\end{proof}

\subsection{Other notation}

For any vector $\bm{h}=\left[\bm{h}_{i}\right]_{1\leq i\leq n}\in\mathbb{R}^{mn}$
with $\bm{h}_{i}\in\mathbb{R}^{m}$, we denote by $h_{i,j}$
the $j$th component of $\bm{h}_{i}$. For any $m\times n$ matrix
$\bm{A}=\left[a_{i,j}\right]_{1\leq i\leq m, 1\leq j\leq n}$ and any matrix $\bm{B}$, the Kronecker
product $\bm{A}\otimes\bm{B}$ is defined as
\[
  \bm{A}\otimes\bm{B} := \left[  
  \begin{array}{ccc}
    a_{1,1}\bm{B} & \cdots & a_{1,n}\bm{B}\\
    \vdots & \ddots & \vdots\\
    a_{m,1}\bm{B} & \cdots & a_{m,n}\bm{B}
  \end{array} \right].
\]
%


\section{Iterative stage\label{sec:Iterative-stage-general}}

We establish the performance guarantees of our two-stage algorithm
in a reverse order. Specifically, we demonstrate in this section that
the iterative refinement stage achieves exact recovery,
provided that the initial guess is reasonably close to the truth. 
The analysis for the initialization
is deferred to Section \ref{sec:Spectral-Initialization}.

\subsection{Error contraction\label{sec:Error-Contraction}}

This section mainly consists of establishing
the following claim, which concerns error contraction of iterative refinement in the presence
of an appropriate initial guess.

\begin{theorem}
\label{thm:Iterative-stage}
Under the conditions of
Theorem \ref{thm:ExactRecovery-General} or Theorem \ref{thm:ExactRecovery-General-large-m},
there exist some absolute constants $0<\rho,c_{1}<1$ such that with
probability exceeding $1-O\left(n^{-5}\right)$, 
\begin{equation}
  \left\Vert \mathcal{P}_{\Delta^{n}}\left(\mu\bm{L}\bm{z}\right)-\bm{x}\right\Vert _{*,0} 
  \leq \rho\min\left\{ \left\Vert \bm{z}-\bm{x}\right\Vert _{*,0},\text{ }\left\Vert \bm{z}-\bm{x}\right\Vert ^{2}\right\} 
  \label{eq:error-contraction-general-proof}
\end{equation}
holds simultaneously for all $\bm{z}\in\Delta^{n}$ obeying\footnote{The numerical constant 0.49 is arbitrary and can be replaced by any
other constant in between 0 and 0.5.} 
\begin{equation}
  \min\left\{ \frac{\left\Vert \bm{z}-\bm{x}\right\Vert }{\|\bm{x}\|},~\frac{ \Vert \bm{z}-\bm{x}\Vert _{*,0}}{\|\bm{x}\|_{*,0}}\right\} 
  \leq 0.49 \frac{\mathsf{KL}_{\min}}{\mathsf{KL}_{\max}},
  \label{eq:basin-attraction-general-proof}
\end{equation}
provided that 
\begin{equation}
	\mu > \frac{c_{5}}{ np_{\mathrm{obs}}\mathsf{KL}_{\min} } \label{eq:mu-KL}
\end{equation}
 for some sufficiently large constant $c_5>0$.

\end{theorem}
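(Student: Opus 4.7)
My plan is to reduce the event $\{\bm{z}^+_i \neq \bm{x}_i\}$ block by block to a local noise-versus-signal comparison, and then control the number of offending blocks via a Markov-type counting argument combined with the spectral and Bernstein-type bounds already established earlier. Without loss of generality (by relabeling coordinates) I take $\bm{x}_i = \bm{e}_1$ for every $i$, and write $\bm{z}^+ := \mathcal{P}_{\Delta^n}(\mu \bm{L} \bm{z})$ and $\bm{h} := \bm{z} - \bm{x}$. Combining Fact \ref{fact:Proj-const-shift} (projection is shift-invariant along $\bm{1}$) with Fact \ref{fact:Proj-separation}, block $i$ is already correct ($\bm{z}^+_i = \bm{e}_1$) whenever the first entry of $(\bm{L}\bm{z})_i$ exceeds every other entry by strictly more than $1/\mu$. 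It thus suffices to upper bound the size of the bad set $T^+ := \{i : \bm{z}^+_i \neq \bm{e}_1\}$, which I do by tracking when the signal in each block is overwhelmed by noise.

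The mean computation that precedes (\ref{eq:EL-general-block-intuition}) yields the per-block signal $\mathbb{E}[(\bm{L}\bm{x})_i]_1 - \mathbb{E}[(\bm{L}\bm{x})_i]_\alpha \geq (n-1) p_{\mathrm{obs}} \mathsf{KL}_{\min}$ for every $\alpha \neq 1$; the hypothesis $\mu \geq c_5 / (n p_{\mathrm{obs}} \mathsf{KL}_{\min})$ with $c_5$ large enough ensures the slack $1/\mu$ consumes only a small fraction of this gap. I then write
\begin{equation*}
(\bm{L}\bm{z})_i - \mathbb{E}[(\bm{L}\bm{x})_i] \;=\; \underbrace{\big((\bm{L}-\mathbb{E}[\bm{L}])\bm{x}\big)_i}_{\bm{E}^{(1)}_i} \;+\; \underbrace{\big(\mathbb{E}[\bm{L}]\bm{h}\big)_i}_{\bm{E}^{(2)}_i} \;+\; \underbrace{\big((\bm{L}-\mathbb{E}[\bm{L}])\bm{h}\big)_i}_{\bm{E}^{(3)}_i},
\end{equation*}
so that $i \in T^+$ forces $\|\bm{E}^{(1)}_i + \bm{E}^{(2)}_i + \bm{E}^{(3)}_i\|_\infty \gtrsim n p_{\mathrm{obs}} \mathsf{KL}_{\min}$. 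Crucially, only $\bm{E}^{(2)}$ and $\bm{E}^{(3)}$ depend on $\bm{z}$, and they do so only through $\bm{h}$; the randomness-controlling estimates that I use are $\bm{z}$-free, which is what delivers a uniform-in-$\bm{z}$ conclusion.

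For the three error terms I proceed as follows. (i) For $\bm{E}^{(1)}$, a coordinate-wise Bernstein bound combined with the moment control in Lemma \ref{lemma:V-KL-UB}, followed by a union bound over $(i, \alpha)$, yields $\|\bm{E}^{(1)}\|_\infty \lesssim \sqrt{n p_{\mathrm{obs}} \mathsf{KL}_{\max} \log n}$; under Assumption \ref{assumption-KL-D} and (\ref{eq:IT-sufficient-KL}) this is $o(n p_{\mathrm{obs}} \mathsf{KL}_{\min})$, so the baseline noise alone never flips a block. (ii) For $\bm{E}^{(2)}$, I use the Kronecker structure $\mathbb{E}[\bm{L}] = p_{\mathrm{obs}} (\bm{1} \bm{1}^\top - \bm{I}_n) \otimes \bm{K}$ to reduce the analysis to $\bm{K} \bm{v}$ with $\bm{v} := \sum_{j \neq i} \bm{h}_j$; the constraint $\bm{1}^\top \bm{h}_j = 0$ lets me replace $\bm{K}$ by its $\bm{1}$-debiased circulant version $\bm{K}^0$ with entries $-\mathsf{KL}_l$, and Fact \ref{fact:H} bounds $\|\bm{h}_j\|_1 \leq 2$, so $\|\bm{K}^0 \bm{v}\|_\infty \leq 2 \mathsf{KL}_{\max} |T|$; the initial-condition constant 0.49 in (\ref{eq:basin-attraction-general-proof}) is precisely what keeps this strictly below the signal gap. (iii) For $\bm{E}^{(3)}$, Lemma \ref{lem:deviation-Lhat} gives $\|\bm{E}^{(3)}\|^2 \leq \|\bm{L} - \mathbb{E}[\bm{L}]\|^2 \|\bm{h}\|^2 \lesssim n p_{\mathrm{obs}} \mathsf{KL}_{\max} \|\bm{h}\|^2$.

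Assembling the pieces, once steps (i) and (ii) are dispensed with, each $i \in T^+$ must satisfy $\|\bm{E}^{(3)}_i\| \gtrsim n p_{\mathrm{obs}} \mathsf{KL}_{\min}$; a block-wise Markov inequality then yields
\begin{equation*}
|T^+| \;\lesssim\; \frac{\|\bm{E}^{(3)}\|^2}{(n p_{\mathrm{obs}} \mathsf{KL}_{\min})^2} \;\lesssim\; \frac{\mathsf{KL}_{\max}}{n p_{\mathrm{obs}} \mathsf{KL}_{\min}^2} \, \|\bm{h}\|^2,
\end{equation*}
and the prefactor is $O(1/\log n)$ by Assumption \ref{assumption-KL-D} combined with (\ref{eq:IT-sufficient-KL}), giving the desired quadratic contraction with $\rho < 1$. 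The bound $|T^+| \leq \rho \|\bm{z} - \bm{x}\|_{*,0}$ then follows immediately from $\|\bm{h}\|^2 \leq 2 |T|$ (a consequence of Fact \ref{fact:H} applied block-wise). I anticipate that the main obstacle will be step (ii): a naive spectral bound using $\|\mathbb{E}[\bm{L}]\| \asymp n p_{\mathrm{obs}} \mathsf{KL}_{\max}$ loses a factor of $n$ and would preclude any contraction, so recovering the correct scaling (and in particular matching the sharp constant $1/2$ highlighted in the remark after the theorem) really does require exploiting both the centered circulant structure of $\bm{K}^0$ and the per-block zero-sum condition on $\bm{h}_j$.
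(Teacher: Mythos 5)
Your overall architecture mirrors the paper's: reduce each block to a separation condition via Facts \ref{fact:Proj-const-shift}--\ref{fact:Proj-separation}, split $\bm{L}\bm{z}$ into the mean signal plus the three fluctuation terms $\tilde{\bm{L}}\bm{x}$, $\mathbb{E}[\bm{L}]\bm{h}$, $\tilde{\bm{L}}\bm{h}$, exploit the zero-sum structure of $\bm{h}_j$ against the circulant $\bm{K}^0$, and count bad blocks by a Markov/order-statistics argument on $\|\tilde{\bm{L}}\bm{h}\|^2$. Steps (ii) and (iii) are essentially Lemma \ref{lemma:r-norm-KL}. The gap is in step (i). At the threshold (\ref{eq:IT-sufficient-KL}), i.e.\ $\mathsf{KL}_{\min}\asymp \log n/(np_{\mathrm{obs}})$, your bound $\|\bm{E}^{(1)}\|_\infty\lesssim\sqrt{np_{\mathrm{obs}}\mathsf{KL}_{\max}\log n}$ is $\Theta(\log n)$, which is the \emph{same order} as the signal $np_{\mathrm{obs}}\mathsf{KL}_{\min}=\Theta(\log n)$ --- not $o(\cdot)$ of it. Whether the pure-noise term flips a block is decided by constants, and a Bernstein bound with unspecified constants only yields the theorem under $\mathsf{KL}_{\min}\geq C\log n/(np_{\mathrm{obs}})$ for some large $C$, not under the sharp constant $4.01$ (which cannot be loose, by the converse Theorem \ref{thm:lower-bound}). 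The paper handles this by splitting into two regimes: in the large-error regime $k^*\gtrsim n$, so $\tilde{\bm{L}}\bm{x}$ can be folded into the residual and controlled on all but $\rho k^*\asymp n$ blocks via the $\ell_2$-to-order-statistics trick; in the small-error regime it keeps $\bm{s}=\bm{L}\bm{x}$ intact and invokes the exact Chernoff/Hellinger exponent $\exp\{-np_{\mathrm{obs}}\mathsf{H}^2_{\min}+\gamma/2\}$ of Lemmas \ref{lem:sep-s} and \ref{lem:Hellinger-UB}, whose union bound over $mn$ events is precisely what produces the factor $4$ (via $\mathsf{KL}_{\min}\approx 4\mathsf{H}^2_{\min}$). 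Your single uniform $\ell_\infty$ bound cannot recover this.

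A secondary, fixable issue: in step (ii) your bound $\|\bm{K}^0\bm{v}\|_\infty\leq 2\mathsf{KL}_{\max}|T|$ carries a factor of $2$, and since a flipped block only requires the error to exceed \emph{half} the gap, the condition $|T|\leq 0.49\, n\,\mathsf{KL}_{\min}/\mathsf{KL}_{\max}$ gives $2\cdot 2\cdot 0.49>1$ and does not close. The paper's proof of Lemma \ref{lemma:r-norm-KL} avoids this by observing that the contribution of $h_{j,1}\leq 0$ and that of $\bm{h}_{j,\backslash 1}\geq\bm{0}$ to each entry of $\bm{K}^0\bm{h}_j$ have opposite signs, so the entrywise bound is $\mathsf{KL}_{\max}\sum_j|h_{j,1}|\leq\mathsf{KL}_{\max}|T|$ with no factor of $2$; you need this (or a smaller basin constant) for the stated $0.49$.
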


At each iteration, the PPM produces a more accurate estimate as
long as the iterates $\{\bm{z}^{(t)}\}$ stay within a reasonable
neighborhood surrounding $\bm{x}$. Here and below, we term this
neighborhood a \emph{basin of attraction}. In fact, if the initial
guess $\bm{z}^{(0)}$ successfully lands within this basin, then the
subsequent iterates will never jump out of it. To see this, observe
that for any $\bm{z}^{(t)}$ obeying
\[
  \frac{\|\bm{z}^{(t)}-\bm{x}\|}{\|\bm{x}\|}
  \leq 0.49 \frac{\mathsf{KL}_{\min}}{\mathsf{KL}_{\max}}
  \quad\text{or} \quad  \frac{\|\bm{z}^{(t)}-\bm{x}\|_{*,0}}{\|\bm{x}\|_{*,0}}
  \leq 0.49\frac{\mathsf{KL}_{\min}}{\mathsf{KL}_{\max}},
\]
the inequality (\ref{eq:error-contraction-general-proof}) implies
error contraction
\[
  \| \bm{z}^{(t+1)}-\bm{x}\|_{*,0}\leq\rho\|\bm{z}^{(t)}-\bm{x}\|_{*,0}.
\]
Moreover, since $\left\Vert \bm{x}\right\Vert _{*,0}=\left\Vert \bm{x}\right\Vert ^{2}=n$,
one has
\begin{align*}
  \frac{\left\Vert \bm{z}^{(t+1)}-\bm{x}\right\Vert _{*,0}}{\left\Vert \bm{x}\right\Vert _{*,0}} 
	&< \min\left\{ \frac{\left\Vert \bm{z}^{(t)}-\bm{x}\right\Vert _{*,0}}{\left\Vert \bm{x}\right\Vert _{*,0}},\frac{\left\Vert \bm{z}^{(t)}-\bm{x}\right\Vert ^{2}}{\left\Vert \bm{x}\right\Vert ^{2}}\right\} \nonumber\\
	&\leq  \min\left\{ \frac{\left\Vert \bm{z}^{(t)}-\bm{x}\right\Vert _{*,0}}{\left\Vert \bm{x}\right\Vert _{*,0}},\frac{\left\Vert \bm{z}^{(t)}-\bm{x}\right\Vert }{\left\Vert \bm{x}\right\Vert }\right\} 
  \leq  0.49 \frac{\mathsf{KL}_{\min}}{\mathsf{KL}_{\max}},
\end{align*}
precluding the possibility that $\bm{z}^{(t+1)}$ leaves the basin.
As a result, invoking the preceding theorem iteratively we arrive
at
\[
  \|\bm{z}^{(t)}-\bm{x}\|_{*,0}\leq\rho^{t}\|\bm{z}^{(0)}-\bm{x}\|_{*,0},
\]
indicating that the estimation error reduces to zero within at most
logarithmic iterations. 

\begin{remark}
In fact, the contraction rate $\rho$ can be as small
as $O\left(\frac{1}{np_{\mathrm{obs}}\mathsf{KL}_{\min}}\right)=O\left(\frac{1}{\log n}\right)$
in the scenario considered in Theorem \ref{thm:ExactRecovery-General}
or $O\Big(\max_{l}\left\Vert \log\frac{P_{0}}{P_{l}}\right\Vert _{1}^{2}/(np_{\mathrm{obs}}\mathsf{KL}_{\min}^{2})\Big)$
in the case studied in Theorem \ref{thm:ExactRecovery-General-large-m}.
\end{remark}

Furthermore, we emphasize that Theorem \ref{thm:Iterative-stage}
is a uniform result, namely, it holds simultaneously for all $\bm{z}$
within the basin, regardless of whether $\bm{z}$ is independent of
the data $\{y_{i,j}\}$ or not. Consequently, the theory and the analyses remain valid for other initialization schemes that can produce a suitable first guess. 

The rest of the section is thus devoted to establishing Theorem \ref{thm:Iterative-stage}.
The proofs for the two scenarios---the fixed $m$ case and the large
$m$ case---follow almost identical arguments, and hence we shall
merge the analyses.

\subsection{Analysis \label{sub:Proof-general}}

We outline the key steps for the proof of Theorem \ref{thm:Iterative-stage}.
Before continuing, it is helpful to introduce additional assumptions
and notation that will be used throughout. From now on, we will assume
$\bm{x}_{i}=\bm{e}_{1}$ for all $1\leq i\leq n$ without loss of
generality. We shall denote $\bm{h}=\left\{ \bm{h}_{i}\right\} {}_{1\leq i\leq n}\in\mathbb{R}^{nm}$
and $\bm{w}=\left\{ \bm{w}_{i}\right\} {}_{1\leq i\leq n}\in\mathbb{R}^{nm}$
as 
\begin{equation}
  \bm{h}:=\bm{z}-\bm{x}\qquad\text{and}\qquad\bm{w}:=\bm{L}\bm{z},
  \label{eq:defn-h-w}
\end{equation}
and set
\begin{eqnarray}
  k & := & \left\Vert \bm{h}\right\Vert _{*,0};  \label{eq:current-sparsity}\\
	\epsilon & := & \frac{\|\bm{h}\| }{ \|\bm{x}\| } = \frac{ \|\bm{h}\| }{ \sqrt{n} };  \label{eq:current-error}\\
  k^{*} & := & \min\left\{ \|\bm{h}\|_{*,0},\text{ }\frac{\|\bm{h}\|^{2}}{\|\bm{x}\|^{2}}\|\bm{x}\|_{*,0}\right\} =\min\left\{ k,\epsilon^{2}n\right\} .
  \label{eq:defn-k-star}
\end{eqnarray}

One of the key metrics that will play an important role in our proof
is the following separation measure
\begin{eqnarray}
  \mathscr{S}\left(\bm{a}\right) & := & \min _{2\leq l\leq m}\left(a_{1}-a_{l}\right)
  \label{eq:defn-sep}
\end{eqnarray}
defined for any vector $\bm{a}=[a_{l}]_{1\leq l\leq m}\in\mathbb{R}^{m}$.
This metric is important because, by Fact \ref{fact:Proj-separation},
the projection of a block $\mu\bm{w}_{i}$ onto the standard simplex
$\Delta$ returns the correct solution---that is, $\mathcal{P}_{\Delta}(\mu\bm{w}_{i})=\bm{e}_{1}$---as
long as $\mathscr{S}(\bm{w}_{i})>0$ and $\mu$ is sufficiently large.
As such, our aim is to show that the vector $\bm{w}$ given in (\ref{eq:defn-h-w})
obeys 
\begin{equation}
  \mathscr{S}\left(\bm{w}_{i}\right)>0.01np_{\mathrm{obs}}\mathsf{KL}_{\min}\qquad\forall i\in\mathcal{I}\subseteq\{1,\cdots,n\},
  \label{eq:sep-condition-simple}
\end{equation}
for some index set $\mathcal{I}$ of size $$|\mathcal{I}|\geq n-\rho k^{*},$$
where $0<\rho<1$ is bounded away from 1 (which will be specified
later). This taken collectively with Fact \ref{fact:Proj-separation}
implies $\mathcal{P}_{\Delta}\left(\mu\bm{w}_{i}\right)=\bm{x}_{i}=\bm{e}_{1}$
for every $i\in\mathcal{I}$ and, as a result,
\begin{align*}
	\|\mathcal{P}_{\Delta^{n}}\left(\mu\bm{w}\right)-\bm{x}\|_{*,0} &\leq \sum_{i\notin\mathcal{I}} \| \mathcal{P}_{\Delta}\left(\mu\bm{w}_{i}\right)-\bm{x}_{i}\|_{0} = n-|\mathcal{I}| \nonumber\\
	& \leq \rho k^{*} = \rho \min \left\{ \|\bm{h}\|_{*,0},~\frac{\|\bm{h}\|^2}{\|\bm{x}\|^2}\|\bm{x}\|_{*,0} \right\} ,
\end{align*}
provided that the scaling factor obeys $\mu>100/(np_{\mathrm{obs}}\mathsf{KL}_{\min})$. 

We will organize the proof of the claim (\ref{eq:sep-condition-simple})
based on the size / block sparsity of $\bm{h}$, leaving us with two
separate regimes to deal with:
\begin{itemize}
\item The \emph{large-error regime} in which
\begin{equation}
  \xi < \min\left\{ \frac{k}{n},\epsilon\right\} \leq 0.49 \frac{\mathsf{KL}_{\min}}{\mathsf{KL}_{\max}};
  \label{eq:large-error-regime}
\end{equation}

\item The \emph{small-error regime} in which
\begin{equation}
  \min\left\{ \frac{k}{n},\epsilon\right\} \leq\xi.
  \label{eq:small-error-regime}
\end{equation}

\end{itemize}
Here, one can take $\xi>0$ to be any (small) positive constant independent of $n$. In what follows, the input matrix $\bm{L}$ takes either the original form (\ref{eq:Input-Matrix}) or the debiased form (\ref{eq:debiased-MLE}). The version (\ref{eq:L-random-corruption}) tailored to the random corruption model will be discussed in Section \ref{sub:Consequences-RCM}. 

\vspace{0.5em}

(1)\textbf{ Large-error regime.} Suppose that $\bm{z}$ falls within
the regime (\ref{eq:large-error-regime}). In order to control $\mathscr{S}(\bm{w}_{i})$,
we decompose $\bm{w}=\bm{L}\bm{z}$ into a few terms that are easier
to work with. Specifically, setting 
\begin{align}
	\overline{\bm{h}}:=\frac{1}{n}\sum_{i=1}^{n}\bm{h}_{i} \qquad \text{and} \qquad \tilde{\bm{L}}:=\bm{L}-\mathbb{E}\left[\bm{L}\right], 
\end{align}
we can expand
\begin{eqnarray}
  \bm{w}  =  \bm{L}\bm{z} = \left(\mathbb{E}\left[\bm{L}\right]+\tilde{\bm{L}}\right)\left(\bm{x}+\bm{h}\right)
  = \underset{:=\bm{t}}{\underbrace{\mathbb{E}\left[\bm{L}\right]\bm{x}}} 
  + \underset{:=\bm{r}}{\underbrace{\mathbb{E}\left[\bm{L}\right]\bm{h}+\tilde{\bm{L}}\bm{x}+\tilde{\bm{L}}\bm{h}}}.
  \label{eq:Lz-general}
\end{eqnarray}
This allows us to lower bound the separation for the $i$th component
by
\begin{align}
	\mathscr{S}\left(\bm{w}_{i}\right) & = \mathscr{S}\left(\bm{t}_{i}+\bm{r}_{i}\right) 
  \geq \mathscr{S}\left(\bm{t}_{i}\right)+\mathscr{S}\left(\bm{r}_{i}\right)  \nonumber\\
  &= \mathscr{S}\left(\bm{t}_{i}\right)+\min_{2\leq l\leq m}\left(r_{i,1}-r_{i,l}\right)
  \geq \mathscr{S}\left(\bm{t}_{i}\right)-2\|\bm{r}_{i}\|_{\infty}.
  \label{eq:Sep_w-KL}
\end{align}
With this in mind, attention naturally turns to controlling $\mathscr{S}\left(\bm{t}_{i}\right)$
and $\|\bm{r}_{i}\|_{\infty}$. 

The first quantity $\mathscr{S}\left(\bm{t}_{i}\right)$ admits a
closed-form expression. From (\ref{eq:defn-K-intuition}) and (\ref{eq:Lz-general})
one sees that
\[
  \bm{t}_{i}=p_{\mathrm{obs}}\left(\sum\nolimits _{j:j\neq i}\bm{K}\bm{x}_{j}\right) 
  = p_{\mathrm{obs}}\left(n-1\right)\bm{K}\bm{e}_{1}.
\]
It is self-evident that $\mathscr{S}\left(\bm{K}\bm{e}_{1}\right)=\mathsf{KL}_{\min}$,
giving the formula
\begin{equation}
  \mathscr{S}\left(\bm{t}_{i}\right) = p_{\mathrm{obs}}\left(n-1\right)\mathscr{S}\left(\bm{K}\bm{e}_{1}\right)
  = p_{\mathrm{obs}}\left(n-1\right)\mathsf{KL}_{\min}.
  \label{eq:Sep-ti}
\end{equation}

We are now faced with the problem of estimating $\|\bm{r}_{i}\|_{\infty}$. To this end, we
make the following observation, which holds uniformly over all $\bm{z}$
residing within this regime: 

\begin{lem}
\label{lemma:r-norm-KL}
Consider the regime (\ref{eq:large-error-regime}).
Suppose  $m=O(\mathrm{poly}(n))$, $p_{\mathrm{obs}}\gtrsim\log n/n$, and 
\begin{equation}
  \frac{\mathsf{KL}_{\max}^{2}}{\left\{ \frac{1}{m}\sum_{l=1}^{m-1}\left\Vert \log\frac{P_{0}}{P_{l}}\right\Vert _{1}\right\} ^{2}} > \frac{c_{2}}{np_{\mathrm{obs}}}
  \label{eq:KLmax-general-m}
\end{equation}
for some sufficiently large constant $c_{2}>0$. With probability
exceeding $1-O(n^{-10})$, the index set 
\begin{equation}
  \mathcal{I} := \left\{ 1\leq i\leq n\mbox{\text{ }}\Big|\mbox{\text{ }}\|\bm{r}_{i}\|_{\infty}\leq np_{\mathrm{obs}}\mathsf{KL}_{\max}\min\left\{ \frac{k}{n},\epsilon\right\} + \alpha np_{\mathrm{obs}}\mathsf{KL}_{\max}\right\} 
  \label{eq:ri-bound-KL}
\end{equation}
has cardinality exceeding $n-\rho k^{*}$ for some $0<\rho<1$ bounded
away from 1, where $k^{*}=\min\left\{ k,\epsilon^{2}n\right\} $ and
$\alpha>0$ is some arbitrarily small constant.

In particular, if $m$ is fixed and if Assumption \ref{assumption-Pmin}
holds, then (\ref{eq:KLmax-general-m}) can be replaced by
\begin{equation}
  \mathsf{KL}_{\max}>\frac{c_{4}}{np_{\mathrm{obs}}}
  \label{eq:KLmax-fixed-m}
\end{equation}
for some sufficiently large constant $c_{4}>0$. \end{lem}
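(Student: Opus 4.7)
My plan is to work with the decomposition $\bm{w}=\bm{L}\bm{z}=\bm{t}+\bm{r}$ of (\ref{eq:Lz-general}), where
$$\bm{r}=\mathbb{E}[\bm{L}]\bm{h}+\tilde{\bm{L}}\bm{x}+\tilde{\bm{L}}\bm{h},$$
and to bound each of these three summands' contribution to $\|\bm{r}_i\|_\infty$ by a distinct argument. The first piece is handled deterministically for every $i$, the second by a uniform Bernstein-type bound across $(i,\alpha)\in[n]\times[m]$, and the third --- which is the main obstacle --- by a Markov-type estimate combined with the operator-norm bound of Lemma \ref{lem:deviation-Lhat}.

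For $(\mathbb{E}[\bm{L}]\bm{h})_i$ I would exploit the cancellation $\bm{1}^\top\bm{h}_j=\bm{1}^\top(\bm{z}_j-\bm{x}_j)=0$ (valid since $\bm{z}_j,\bm{x}_j\in\Delta$), which annihilates the rank-one entropy piece $-\mathcal{H}(P_0)\bm{1}\bm{1}^\top$ in each block of $\mathbb{E}[\bm{L}]$ (see (\ref{eq:defn-K-intuition})). What remains is $(\mathbb{E}[\bm{L}]\bm{h})_i=p_{\mathrm{obs}}\sum_{j\neq i}\bm{K}^{0}\bm{h}_j$ with $\bm{K}^{0}$ having entries in $[-\mathsf{KL}_{\max},0]$, hence $\|(\mathbb{E}[\bm{L}]\bm{h})_i\|_\infty\leq p_{\mathrm{obs}}\mathsf{KL}_{\max}\|\bm{h}\|_1$. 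Combining the per-block bound $\|\bm{h}_i\|_1=2|h_{i,1}|\leq 2$ (a consequence of $\bm{z}_i,\bm{x}_i\in\Delta$) with Cauchy-Schwarz over the support of $\bm{h}$ yields $\|\bm{h}\|_1\lesssim\min\{k,\sqrt{kn}\,\epsilon\}\leq n\min\{k/n,\epsilon\}$, which in turn controls $\|(\mathbb{E}[\bm{L}]\bm{h})_i\|_\infty$ by $\lesssim np_{\mathrm{obs}}\mathsf{KL}_{\max}\min\{k/n,\epsilon\}$ uniformly in $i$, with no exceptional set.

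For $(\tilde{\bm{L}}\bm{x})_i$ I would first subtract an $\alpha$-independent shift in each block (equivalently, work with $\bm{L}^{\mathrm{debias}}$); this is harmless for the downstream use of $\|\bm{r}_i\|_\infty$ in the separation argument because both $\mathscr{S}$ and $\mathcal{P}_\Delta$ are shift-invariant (Fact \ref{fact:Proj-const-shift}). After this reduction, each entry becomes a sum of mean-zero, bounded terms whose variance is $\lesssim p_{\mathrm{obs}}\,\mathrm{Var}_{y\sim P_0}[\log(P_0/P_{\alpha-1})]\lesssim p_{\mathrm{obs}}\mathsf{KL}_{\max}$ by Lemma \ref{lem:KL-Var-Hel} under Assumption \ref{assumption-Pmin}. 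Bernstein's inequality and a union bound over $(i,\alpha)$ then give $|(\tilde{\bm{L}}\bm{x})_{i,\alpha}|\lesssim\sqrt{np_{\mathrm{obs}}\mathsf{KL}_{\max}\log n}+\log n$ simultaneously with probability $\geq 1-O(n^{-10})$. Under $np_{\mathrm{obs}}\mathsf{KL}_{\max}\gtrsim\log n$ --- implicit in (\ref{eq:KLmax-general-m}) and (\ref{eq:KLmax-fixed-m}) when paired with $p_{\mathrm{obs}}\gtrsim\log n/n$ --- this contribution is at most $\tfrac{\alpha}{2}np_{\mathrm{obs}}\mathsf{KL}_{\max}$ for every $i$.

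The hardest piece is $(\tilde{\bm{L}}\bm{h})_i$, because uniformity is required over the continuous family of admissible $\bm{h}$'s. My plan is to rely on the operator-norm bound $\|\tilde{\bm{L}}\|\lesssim\bigl(\tfrac{1}{m}\sum_{l=1}^{m-1}\|\log(P_0/P_l)\|_1\bigr)\sqrt{np_{\mathrm{obs}}}$ of Lemma \ref{lem:deviation-Lhat}, which holds for all vectors on a single event of probability $\geq 1-O(n^{-10})$, to obtain $\|\tilde{\bm{L}}\bm{h}\|_2^2\lesssim n^2\epsilon^2 p_{\mathrm{obs}}\bigl(\tfrac{1}{m}\sum_l\|\log(P_0/P_l)\|_1\bigr)^2$. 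A Markov bound then shows that the number of $i$ with $\|(\tilde{\bm{L}}\bm{h})_i\|_\infty>T$ is at most $\|\tilde{\bm{L}}\bm{h}\|_2^2/T^2$. Taking $T=\tfrac{\alpha}{2}np_{\mathrm{obs}}\mathsf{KL}_{\max}$ and plugging in hypothesis (\ref{eq:KLmax-general-m}) in the form $\mathsf{KL}_{\max}^2/\bigl(\tfrac{1}{m}\sum_l\|\log(P_0/P_l)\|_1\bigr)^2\geq c_2/(np_{\mathrm{obs}})$ reduces the count to $\lesssim n\epsilon^2/(c_2\alpha^2)$. Finally the inequality $n\epsilon^2\leq 2k$ (from $\|\bm{h}_i\|_2^2\leq 2$ in Fact \ref{fact:H}) gives $n\epsilon^2\leq 2\min\{k,n\epsilon^2\}=2k^{*}$, so the exceptional index set has cardinality at most $\rho k^{*}$ for any $\rho<1$ provided $c_2$ is chosen sufficiently large. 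Summing the three estimates yields the claimed bound on $|\mathcal{I}|$; the fixed-$m$ case is analogous, with (\ref{eq:KLmax-fixed-m}) replacing (\ref{eq:KLmax-general-m}) at the Markov step.
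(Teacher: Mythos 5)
Your overall architecture is close to the paper's: the same decomposition $\bm{r}=\mathbb{E}[\bm{L}]\bm{h}+\tilde{\bm{L}}\bm{x}+\tilde{\bm{L}}\bm{h}$, the same cancellation $\bm{1}^{\top}\bm{h}_j=0$ for the mean term, and the same $\ell_2$/order-statistics (Markov) argument with an exceptional set of size $\rho k^{*}$ for the fluctuation acting on $\bm{h}$. The difference is that the paper never separates $\tilde{\bm{L}}\bm{x}$ from $\tilde{\bm{L}}\bm{h}$: it treats $\bm{g}=\tilde{\bm{L}}\bm{z}$ as a single term, bounds $\|\bm{g}\|\lesssim\bigl(\tfrac{1}{m}\sum_l\|\log(P_0/P_l)\|_1\bigr)\sqrt{np_{\mathrm{obs}}}\cdot\sqrt{n}$ via Lemma \ref{lem:deviation-Lhat}, and uses $k^{*}\geq\xi^2 n$ (the defining feature of the large-error regime) to show that all but $\rho k^{*}$ blocks of $\bm{g}$ are small.

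The genuine gap is in your treatment of $\tilde{\bm{L}}\bm{x}$. You claim a uniform-in-$i$ Bernstein bound $|(\tilde{\bm{L}}\bm{x})_{i,\alpha}|\lesssim\sqrt{np_{\mathrm{obs}}\mathsf{KL}_{\max}\log n}+\log n$ and assert that this is below $\tfrac{\alpha}{2}np_{\mathrm{obs}}\mathsf{KL}_{\max}$ because ``$np_{\mathrm{obs}}\mathsf{KL}_{\max}\gtrsim\log n$ is implicit in (\ref{eq:KLmax-general-m}) and (\ref{eq:KLmax-fixed-m}).'' It is not: (\ref{eq:KLmax-fixed-m}) only guarantees $np_{\mathrm{obs}}\mathsf{KL}_{\max}>c_4$, a constant, and (\ref{eq:KLmax-general-m}) lower-bounds $np_{\mathrm{obs}}\mathsf{KL}_{\max}^2$ by a quantity ($\{\tfrac{1}{m}\sum_l\|\log(P_0/P_l)\|_1\}^2$) that need not dominate $\mathsf{KL}_{\max}\log n$. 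When $np_{\mathrm{obs}}\mathsf{KL}_{\max}=O(1)$ your Bernstein term $\sqrt{np_{\mathrm{obs}}\mathsf{KL}_{\max}\log n}$ (and even the additive $\log n$) far exceeds the target $\tfrac{\alpha}{2}np_{\mathrm{obs}}\mathsf{KL}_{\max}$, so the step fails under the lemma's stated hypotheses. (A secondary issue: the variance bound $\mathrm{Var}\lesssim\mathsf{KL}_{\max}$ via Lemma \ref{lem:KL-Var-Hel} needs Assumption \ref{assumption-Pmin}, which the general-$m$ part of the lemma does not assume.) The fix is simply to give up on uniformity for this term and fold $\tilde{\bm{L}}\bm{x}$ into the same Markov/exceptional-set argument you already use for $\tilde{\bm{L}}\bm{h}$ --- i.e., bound $\|\tilde{\bm{L}}\bm{z}\|_2$ with $\|\bm{z}\|\leq\sqrt{n}$ and use $k^{*}\geq\xi^2 n$ so that the resulting exceptional set still has size at most $\rho k^{*}$; this is exactly the paper's route, and it is the only place where the large-error-regime assumption is actually needed. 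A minor additional point: your $\ell_1$-based bound on $\|(\mathbb{E}[\bm{L}]\bm{h})_i\|_\infty$ carries a constant factor (about $2$) relative to the stated bound $np_{\mathrm{obs}}\mathsf{KL}_{\max}\min\{k/n,\epsilon\}$ with leading constant $1$; the paper's sign-structure argument (non-positivity of $\bm{K}^0$ together with $h_{j,1}\leq 0$, $\bm{h}_{j,\backslash 1}\geq\bm{0}$) recovers the sharp constant, which matters for the $0.49$ threshold downstream.
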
\begin{proof}See
Appendix \ref{sec:Proof-of-Lemma-r-norm-KL}.\end{proof}

Combining Lemma \ref{lemma:r-norm-KL} with the preceding bounds (\ref{eq:Sep_w-KL})
and (\ref{eq:Sep-ti}), we obtain 
\begin{eqnarray}
  \mathscr{S}\left(\bm{w}_{i}\right) 
  & \geq & 
  \left(n-1\right)p_{\mathrm{obs}}\mathsf{KL}_{\min}
  - 2np_{\mathrm{obs}}\mathsf{KL}_{\max}\min\left\{ \frac{k}{n},\epsilon\right\} 
  - 2\alpha np_{\mathrm{obs}}\mathsf{KL}_{\max}\\
  & > & 0.01np_{\mathrm{obs}}\mathsf{KL}_{\min}
\end{eqnarray}
for all $i\in\mathcal{I}$ as given in (\ref{eq:ri-bound-KL}), provided
that (i) $\min\left\{ \frac{k}{n},\epsilon\right\} \leq0.49\frac{\mathsf{KL}_{\min}}{\mathsf{KL}_{\max}}$,
(ii) $\mathsf{KL}_{\max}/\mathsf{KL}_{\min}$ is bounded, (iii) $\alpha$
is sufficiently small, and (iv) $n$ is sufficiently large. This concludes
the treatment for the large-error regime. 

\vspace{0.5em}

(2) \textbf{Small-error regime}. 
We now turn to the second regime
obeying (\ref{eq:small-error-regime}). Similarly, we find it convenient
to decompose $\bm{w}$ as
\begin{equation}
  \bm{w} = \bm{L}\bm{x}+\bm{L}\bm{h} = \underset{:=\bm{s}}{\underbrace{\bm{L}\bm{x}}}+\underset{:=\bm{q}}{\underbrace{\mathbb{E}\left[\bm{L}\right]\bm{h}+\tilde{\bm{L}}\bm{h}}} .
  \label{eq:decompose-w-s}
\end{equation}
We then lower bound the separation measure by controlling $\bm{s}_{i}$
and $\bm{q}_{i}$ separately, i.e. 
\begin{equation}
  \mathscr{S}\left(\bm{w}_{i}\right)\geq\mathscr{S}\left(\bm{s}_{i}\right) + \mathscr{S}\left(\bm{q}_{i}\right)\geq\mathscr{S}\left(\bm{s}_{i}\right)-2\left\Vert \bm{q}_{i}\right\Vert _{\infty}.
  \label{eq:sep-w-general}
\end{equation}

We start by obtaining uniform control over the separation of all components
of $\bm{s}$:

\begin{lem}\label{lem:sep-s}Suppose that Assumption \ref{assumption-Pmin}
holds and that $p_{\mathrm{obs}}>c_{0}\log n/n$ for some sufficiently
large constant $c_{0}>0$ . 

(1) Fix $m>0$, and let $\zeta>0$ be any sufficiently small constant.
Under Condition (\ref{eq:IT-sufficient-KL}), one has
\begin{equation}
  \mathscr{S}\left(\bm{s}_{i}\right)>\zeta np_{\mathrm{obs}}\mathsf{KL}_{\min},
  \qquad1\leq i\leq n
  \label{eq:sep-si-1}
\end{equation}
with probability exceeding $1-C_{6}\exp\left\{ -c_{6}\zeta\log(nm)\right\} -c_{7}n^{-10}$,
where $C_{6},c_{6},c_{7}>0$ are some absolute constants. 

(2) There exist some constants $c_{4},c_{5},c_{6}>0$ such that
\begin{equation}
  \mathscr{S}\left(\bm{s}_{i}\right)>c_{4}np_{\mathrm{obs}}\mathsf{KL}_{\min},
  \qquad1\leq i\leq n
  \label{eq:sep-si-1-1}
\end{equation}
with probability $1-O\left(m^{-10}n^{-10}\right)$, provided
that
\begin{equation}
  \frac{\mathsf{KL}_{\min}^{2}}{\max_{0\leq l<m}{\mathsf{Var}}_{y\sim P_{0}}
  \left[ \log\frac{P_{0}\left(y\right)}{P_{l}\left(y\right)} \right]} \geq \frac{c_{5}\log\left(mn\right)}{np_{\mathrm{obs}}}
  \quad\text{and}\quad
  \mathsf{KL}_{\min}\geq\frac{c_{6}\left\{ \max_{l,y}\left|\log\frac{P_{0}\left(y\right)}{P_{l}\left(y\right)}\right|\right\} \log\left(mn\right)}{np_{\mathrm{obs}}}.
  \label{eq:sep-si-1-2}
\end{equation}
\end{lem}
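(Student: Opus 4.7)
The plan is to establish the lemma by a direct application of Bernstein's inequality followed by a union bound, exploiting the fact that $\bm{s}=\bm{L}\bm{x}$ is a sum of independent contributions when conditioned on $\Omega$ and under the assumption $\bm{x}_j=\bm{e}_1$ for all $j$. Specifically, for each $i$,
\[
s_{i,\alpha} \;=\; \sum_{j:\,j\neq i}\mathbb{I}\{(i,j)\in\Omega\}\,\log P_{\alpha-1}(y_{i,j}),
\]
so that
\[
s_{i,1}-s_{i,l} \;=\; \sum_{j:\,j\neq i}\mathbb{I}\{(i,j)\in\Omega\}\,\log\frac{P_{0}(y_{i,j})}{P_{l-1}(y_{i,j})}
\]
is a sum of $n-1$ independent random variables (note that for the debiased variant, the per-block constant $\bm{1}^\top\bm{L}_{i,j}\bm{1}/m^{2}$ contributes the same shift to every coordinate and hence drops out of any coordinate difference). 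Since $y_{i,j}\sim P_{0}$ when $x_i=x_j$, each summand has mean $p_{\mathrm{obs}}\mathsf{KL}(P_{0}\|P_{l-1})$, and thus $\mathbb{E}[s_{i,1}-s_{i,l}]\geq (n-1)\,p_{\mathrm{obs}}\,\mathsf{KL}_{\min}$.

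Next I would invoke Lemma~\ref{lemma:V-KL-UB} to bound the variance and the almost-sure magnitude of each summand. For fixed $m$ and under Assumption~\ref{assumption-Pmin}, part (1) of that lemma gives $\bigl|\log\frac{P_0(y)}{P_{l-1}(y)}\bigr|\lesssim \sqrt{\mathsf{KL}_{\max}}$ uniformly in $y$, while part (2) yields $\mathbb{E}_{y\sim P_{0}}\bigl[\bigl(\log\frac{P_0(y)}{P_{l-1}(y)}\bigr)^{2}\bigr]\lesssim \mathsf{KL}_{\max}$. Therefore the sum has total variance $V\lesssim np_{\mathrm{obs}}\mathsf{KL}_{\max}$ and summand bound $M\lesssim \sqrt{\mathsf{KL}_{\max}}$. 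Choosing a deviation $t=(1-\zeta')\mathbb{E}[s_{i,1}-s_{i,l}]$ with a constant $\zeta'$ chosen so that the remaining mean still exceeds $\zeta np_{\mathrm{obs}}\mathsf{KL}_{\min}$, Bernstein's inequality yields
\[
\Pr\bigl\{s_{i,1}-s_{i,l}\leq \zeta np_{\mathrm{obs}}\mathsf{KL}_{\min}\bigr\}\;\leq\;2\exp\!\left(-\frac{c\,(np_{\mathrm{obs}}\mathsf{KL}_{\min})^{2}}{np_{\mathrm{obs}}\mathsf{KL}_{\max}+M\cdot np_{\mathrm{obs}}\mathsf{KL}_{\min}}\right)\!.
\]
Under Assumption~\ref{assumption-KL-D}, the denominator is at most $O(np_{\mathrm{obs}}\mathsf{KL}_{\min})$, so the exponent simplifies to $-c' np_{\mathrm{obs}}\mathsf{KL}_{\min}$. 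Condition \eqref{eq:IT-sufficient-KL} guarantees $np_{\mathrm{obs}}\mathsf{KL}_{\min}\gtrsim \log n \asymp \log(mn)$ (since $m$ is fixed), and a union bound over $i\in[n]$ and $l\in\{2,\ldots,m\}$ gives \eqref{eq:sep-si-1} with the stated probability, thereby establishing part (1).

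For part (2), where $m$ may scale with $n$, the approach is identical but the bounds from Lemma~\ref{lemma:V-KL-UB} are replaced by the \emph{native} quantities appearing in \eqref{eq:sep-si-1-2}: namely $V\lesssim np_{\mathrm{obs}}\max_{l}\mathsf{Var}_{y\sim P_0}[\log\frac{P_{0}(y)}{P_{l}(y)}]$ and $M\leq \max_{l,y}|\log\frac{P_{0}(y)}{P_{l}(y)}|$. Bernstein then yields an exponent of order
\[
-\min\!\left\{\frac{np_{\mathrm{obs}}\mathsf{KL}_{\min}^{2}}{\max_{l}\mathsf{Var}_{y\sim P_0}[\log(P_{0}/P_{l})]},\;\frac{np_{\mathrm{obs}}\mathsf{KL}_{\min}}{\max_{l,y}|\log(P_{0}/P_{l})|}\right\}\!,
\]
and the two conditions in \eqref{eq:sep-si-1-2} are exactly what is needed so that this exponent is $\gtrsim \log(mn)$; a union bound over $nm$ pairs $(i,l)$ then delivers \eqref{eq:sep-si-1-1} with probability $1-O((mn)^{-10})$.

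The main technical obstacle is bookkeeping rather than any deep step: we must verify that the Bernstein denominator is truly dominated by the variance term (so the tail scales like $\exp(-c np_{\mathrm{obs}}\mathsf{KL}_{\min})$ and not the weaker magnitude-dominated regime), which in part (1) uses Assumption~\ref{assumption-KL-D} together with the fact that $\sqrt{\mathsf{KL}_{\max}}\cdot \mathsf{KL}_{\min}\lesssim \mathsf{KL}_{\max}$ when $\mathsf{KL}_{\min}$ is small enough, and in part (2) is built into the two separate conditions in \eqref{eq:sep-si-1-2}. A secondary subtlety is that for the debiased input matrix the expression for $s_{i,\alpha}$ picks up an $\alpha$-independent constant per block, which I handle by noting that this cancels in every coordinate difference $s_{i,1}-s_{i,l}$.
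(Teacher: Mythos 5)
Your part (2) is essentially the paper's argument: it is exactly the Bernstein-type bound of Lemma \ref{lem:Hellinger-UB}(2) combined with the degree concentration and a union bound, and it closes without difficulty because the constants $c_5,c_6$ in (\ref{eq:sep-si-1-2}) are free to be taken as large as needed.

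Part (1), however, has a genuine gap, and it sits precisely where this lemma is delicate: the constant in the exponent. Your Bernstein step ends with a tail of the form $\exp(-c'\,np_{\mathrm{obs}}\mathsf{KL}_{\min})$ for an unspecified $c'$, and you then assert that Condition (\ref{eq:IT-sufficient-KL}) gives $np_{\mathrm{obs}}\mathsf{KL}_{\min}\gtrsim\log n$, so the union bound over $\asymp nm$ pairs $(i,l)$ closes. But that union bound needs $c'\,np_{\mathrm{obs}}\mathsf{KL}_{\min}>\log(nm)$, which at the threshold $\mathsf{KL}_{\min}=4.01\log n/(np_{\mathrm{obs}})$ forces $c'>1/4.01$ --- and nothing in your chain of estimates certifies this. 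You bound the summand variance by $O(\mathsf{KL}_{\max})$ (the constant in Lemma \ref{lemma:V-KL-UB}(2) is $2\kappa_0^2$, not $2$), you compare a deviation of size $\asymp np_{\mathrm{obs}}\mathsf{KL}_{\min}$ against a variance of size $\asymp np_{\mathrm{obs}}\mathsf{KL}_{\max}$, and Assumption \ref{assumption-KL-D} only says $\mathsf{KL}_{\max}/\mathsf{KL}_{\min}$ is bounded by \emph{some} constant, which could be $100$. Each step leaks an uncontrolled factor into $c'$, so the exponent can fall below $\log(nm)$ even though (\ref{eq:IT-sufficient-KL}) holds. The paper sidesteps this entirely with a different device: the Chernoff bound with tilting parameter $1/2$ (Lemma \ref{lem:Hellinger-UB}(1)), which puts the squared Hellinger distance in the exponent with constant exactly $1$, and then the identity $\mathsf{H}^2=\frac{1+o(1)}{4}\mathsf{KL}$ (Lemma \ref{lem:KL-Var-Hel}) converts the $4.01$ in (\ref{eq:IT-sufficient-KL}) into a $(1.0025+o(1))\log n$ exponent per event --- just enough to beat the union bound, which is why the theorem's constant matches the converse at $3.99$. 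Your route could in principle be repaired, but only by arguing per $l$ with the sharp variance identity $\mathsf{Var}_{y\sim P_0}[\log(P_0/P_l)]=2(1+o(1))\mathsf{KL}_l$ in the boundary regime $\mathsf{KL}_l\to0$, deviating by $(1-\zeta')np_{\mathrm{obs}}\mathsf{KL}_l$ rather than by a multiple of $\mathsf{KL}_{\min}$, and verifying the $Mt$ term is genuinely lower order; none of that appears in your write-up.
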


\begin{proof}
See Appendix \ref{sec:Proof-of-Lemma-Chernoff-UB}.
\end{proof}

The next step comes down to controlling $\|\bm{q}_{i}\|_{\infty}$.
This can be accomplished using similar argument as for Lemma \ref{lemma:r-norm-KL},
as summarized below. 

\begin{lem}
\label{lemma:r-norm-KL-q}
Consider the regime (\ref{eq:small-error-regime}).
Then Lemma \ref{lemma:r-norm-KL} continues to hold if $\bm{r}$ is
replaced by $\bm{q}$. 
\end{lem}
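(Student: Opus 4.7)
The plan is to reduce this claim directly to the analysis already performed in Lemma~\ref{lemma:r-norm-KL}. The crucial observation is that
\[
  \bm{q} \;=\; \mathbb{E}[\bm{L}]\bm{h} + \tilde{\bm{L}}\bm{h} \;=\; \bm{r} - \tilde{\bm{L}}\bm{x},
\]
so $\bm{q}$ retains exactly the two pieces of $\bm{r}$ that depend on the perturbation $\bm{h}$, dropping only the pure-noise term $\tilde{\bm{L}}\bm{x}$. Since dropping a mean-zero random piece can only improve the resulting bound, the entire estimate established in Lemma~\ref{lemma:r-norm-KL} for $\|\bm{r}_i\|_\infty$ transfers verbatim to $\|\bm{q}_i\|_\infty$, after one checks that the two remaining terms were already controlled under the stated hypotheses and that the small-error constraint (\ref{eq:small-error-regime}) is only more favorable than (\ref{eq:large-error-regime}).

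Concretely, I would reuse two ingredients from the proof of Lemma~\ref{lemma:r-norm-KL}. First, the deterministic bias term is handled using the block expansion (\ref{eq:EL-general-block-intuition})--(\ref{eq:defn-K-intuition}): since $(\mathbb{E}[\bm{L}]\bm{h})_i = p_{\mathrm{obs}} \sum_{j \neq i}\bm{K}\bm{h}_j$ and Fact~\ref{fact:H} implies $\|\bm{h}_j\|_1 \leq 2$ on the at most $k$ blocks where $\bm{h}_j \neq 0$, an application of Cauchy--Schwarz and the entrywise bound $\|\bm{K}\|_\infty \lesssim \mathsf{KL}_{\max}$ yields
\[
  \big\|(\mathbb{E}[\bm{L}]\bm{h})_i\big\|_\infty \;\lesssim\; p_{\mathrm{obs}}\mathsf{KL}_{\max} \min\{k, \sqrt{n}\,\|\bm{h}\|\} \;=\; np_{\mathrm{obs}}\mathsf{KL}_{\max}\min\{k/n,\epsilon\},
\]
uniformly in $i$ and in feasible $\bm{z}$. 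This is deterministic in $\bm{h}$ and requires no new probabilistic work.

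Second, the only genuinely random piece $\tilde{\bm{L}}\bm{h}$ is handled exactly as in Lemma~\ref{lemma:r-norm-KL}: an $\varepsilon$-net argument over the combinatorially structured set of admissible $\bm{h}$ (indexed by block support of size at most $k^*$ and by Euclidean norm of order $\epsilon\sqrt{n}$), combined with the block random matrix concentration of Lemma~\ref{lem:random-matrix-norm} for each net point, followed by a union bound. Condition (\ref{eq:KLmax-general-m})---or (\ref{eq:KLmax-fixed-m}) when $m$ is fixed---is precisely what is needed to make the resulting entrywise fluctuation at most $\alpha\, np_{\mathrm{obs}}\mathsf{KL}_{\max}$ on all but at most $\rho k^*$ of the blocks, with $0<\rho<1$ bounded away from~$1$ (and, via a counting argument, with the ``bad'' indices permitted to be absorbed into the complement of~$\mathcal{I}$).

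The main obstacle, as in the large-error analysis, is the uniform handling of $\tilde{\bm{L}}\bm{h}$ when $\bm{h}$ itself may depend on the data $\{y_{i,j}\}$. However, in the small-error regime (\ref{eq:small-error-regime}) the feasible set of $\bm{h}$ is strictly tighter (smaller block sparsity or smaller Euclidean norm) than in the large-error regime, so the covering complexity is no worse and the concentration bound improves; moreover $\bm{q}$ has strictly fewer random terms than $\bm{r}$. Combining the two estimates through the triangle inequality, one concludes that the set
\[
  \mathcal{I} \;=\; \Big\{ 1 \leq i \leq n \;\Big|\; \|\bm{q}_i\|_\infty \leq np_{\mathrm{obs}}\mathsf{KL}_{\max}\min\{k/n,\epsilon\} + \alpha\, np_{\mathrm{obs}}\mathsf{KL}_{\max}\Big\}
\]
has cardinality exceeding $n - \rho k^*$ with the claimed probability, which is exactly the statement of Lemma~\ref{lemma:r-norm-KL} with $\bm{r}$ replaced by $\bm{q}$.
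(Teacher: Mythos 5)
Your decomposition $\bm{q}=\mathbb{E}[\bm{L}]\bm{h}+\tilde{\bm{L}}\bm{h}=\bm{r}-\tilde{\bm{L}}\bm{x}$ and the reuse of the deterministic bias bound are the right starting point, though for the bias term you should simply invoke (\ref{eq:Khbar}) from the proof of Lemma \ref{lemma:r-norm-KL}: your re-derivation via $\ell_1$ bounds and Cauchy--Schwarz loses a factor of $\sqrt{m}$, whereas the paper's bound exploits the sign structure $h_{j,1}\le 0$, $\bm{h}_{j,\backslash 1}\ge \bm{0}$ to reduce everything to $\sum_j |h_{j,1}| = n|\overline{h}_1|$, which matters in the large-$m$ setting of Theorem \ref{thm:ExactRecovery-General-large-m}.

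The genuine gap is in your treatment of the fluctuation term. First, Lemma \ref{lemma:r-norm-KL} is not proved by an $\varepsilon$-net over admissible $\bm{h}$; uniformity there comes for free from a single operator-norm event $\|\tilde{\bm{L}}\|\lesssim \bigl(\frac{1}{m}\sum_l\|\log(P_0/P_l)\|_1\bigr)\sqrt{np_{\mathrm{obs}}}$ (Lemma \ref{lem:deviation-Lhat}), followed by a deterministic pigeonhole on order statistics, $\|\hat{\bm{g}}\|_{(\rho k^*)}\le \|\hat{\bm{g}}\|/\sqrt{\rho k^*}$ with $\hat{\bm{g}}:=\tilde{\bm{L}}\bm{h}$. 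A net over block supports of size $k^*$ with a per-point union bound would not survive on the stated $1-O(n^{-10})$ probability budget, and the feasible set in the small-error regime is not confined to small block support in any case (the regime only requires $\min\{k/n,\epsilon\}\le\xi$). Second---and this is the substantive missing step---you never justify why the number of bad blocks is at most $\rho k^*$ when $k^*$ may be as small as $1$. The large-error argument cannot transfer "verbatim": there the random piece is $\tilde{\bm{L}}\bm{z}$ with $\|\bm{z}\|\asymp\sqrt{n}$, and the order-statistics bound is useful only because $k^*\gtrsim n$ in that regime. For $\bm{q}$ the random piece is $\tilde{\bm{L}}\bm{h}$ with $\|\tilde{\bm{L}}\bm{h}\|\le\|\tilde{\bm{L}}\|\cdot\epsilon\sqrt{n}$, and the entire lemma hinges on the elementary inequality $\epsilon\sqrt{n}/\sqrt{k^*}\le\sqrt{2}$, which follows from Fact \ref{fact:H} (each nonzero block of $\bm{h}$ has norm at most $\sqrt{2}$, so $\|\bm{h}\|^2\le 2\|\bm{h}\|_{*,0}$) together with the definition $k^*=\min\{k,\epsilon^2 n\}$. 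That inequality is what yields $\|\hat{\bm{g}}\|_{(\rho k^*)}\lesssim \bigl(\frac{1}{m}\sum_l\|\log(P_0/P_l)\|_1\bigr)\sqrt{np_{\mathrm{obs}}/\rho}$ for every $k^*$, however small, and it is why the paper remarks that the result does not actually rely on the small-error regime. Your appeal to "dropping a mean-zero term only helps" and "the small-error regime is more favorable" does not supply this step.
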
 
\begin{remark}
  Notably, Lemma \ref{lemma:r-norm-KL-q} does not rely on the definition of the small-error regime. 
\end{remark}

Putting the inequality (\ref{eq:sep-w-general})
and Lemma \ref{lemma:r-norm-KL-q} together yields
\begin{align}
  \mathscr{S}\left(\bm{w}_{i}\right)
  & \geq \mathscr{S}\left(\bm{s}_{i}\right)- 2np_{\mathrm{obs}}\mathsf{KL}_{\max}\min\left\{ \frac{k}{n},\epsilon\right\} -2\alpha np_{\mathrm{obs}}\mathsf{KL}_{\max}  \\
	& \geq \mathscr{S}\left(\bm{s}_{i}\right) - 2(\xi+\alpha) np_{\mathrm{obs}}\mathsf{KL}_{\max}	\label{eq:sep-wi-LB-0}  \\ 
	& \geq \mathscr{S}\left(\bm{s}_{i}\right) - \left\{ 2(\xi+\alpha) \frac{\mathsf{KL}_{\max}}{\mathsf{KL}_{\min}} \right\} np_{\mathrm{obs}}\mathsf{KL}_{\min} 
  \label{eq:sep-wi-LB-1}
\end{align}
for all $i\in\mathcal{I}$ with high probability, where (\ref{eq:sep-wi-LB-0}) follows from the definition of the small-error regime. 
Recall that ${\mathsf{KL}_{\max}}/{\mathsf{KL}_{\min}}$ is bounded according to
Assumption \ref{assumption-KL-D}. 
Picking ${\xi}$ and $\alpha$ to be sufficiently
small constants and applying Lemma \ref{lem:sep-s}, we arrive at (\ref{eq:sep-condition-simple}). 

To summarize, we have established the claim (\ref{eq:sep-condition-simple})---and
hence the error contraction---as long as (a) $m$ is fixed and Condition
(\ref{eq:IT-sufficient-KL}) is satisfied, or (b) the conditions (\ref{eq:KLmax-general-m})
and (\ref{eq:sep-si-1-2}) hold. Interestingly, one can 
simplify Case (b) when $p_{\mathrm{obs}}\gtrsim\log^{5}n/n$, leading
to a matching condition to Theorem \ref{thm:ExactRecovery-General-large-m}.

\begin{lem}
\label{lem:simple-condition-general} 
Suppose $m \gtrsim \log n$, $m = \mathrm{poly}(n)$, and $p_{\mathrm{obs}}\geq c_{6}\log^{5}n/n$
for some sufficiently large constant $c_{6}>0$. The inequalities
(\ref{eq:KLmax-general-m}) and (\ref{eq:sep-si-1-2}) hold under
Condition (\ref{eq:SNR-condition-general-m}) in addition to Assumptions
\ref{assumption-Pmin}-\ref{assumption-KL-D}.
\end{lem}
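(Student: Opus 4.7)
\textbf{Proof plan for Lemma \ref{lem:simple-condition-general}.} The strategy is to transfer each of the two target inequalities into a form controlled by the SNR quantity on the LHS of (\ref{eq:SNR-condition-general-m}), using Assumptions~\ref{assumption-Pmin}--\ref{assumption-KL-D} together with the slack afforded by $np_{\mathrm{obs}}\gtrsim\log^{5}n$.

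\emph{Step 1 (the inequality (\ref{eq:KLmax-general-m})).} This is immediate from (\ref{eq:SNR-condition-general-m}): since $\mathsf{KL}_{\max}\geq\mathsf{KL}_{\min}$ and the arithmetic mean $\tfrac{1}{m}\sum_{l=1}^{m-1}\lVert\log(P_{0}/P_{l})\rVert_{1}$ is at most $\max_{1\leq l<m}\lVert\log(P_{0}/P_{l})\rVert_{1}$, the LHS of (\ref{eq:KLmax-general-m}) dominates the LHS of (\ref{eq:SNR-condition-general-m}). So (\ref{eq:KLmax-general-m}) holds with $c_{2}=c_{3}$.

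\emph{Step 2 (key bounds on $A$, $B$, $V$).} Write $A:=\max_{l}\lVert\log(P_{0}/P_{l})\rVert_{1}$, $B:=\max_{l,y}\lvert\log(P_{0}(y)/P_{l}(y))\rvert$, and $V:=\max_{l}\mathsf{Var}_{y\sim P_{0}}[\log(P_{0}/P_{l})]$. Assumption~\ref{assumption-Pmin} gives $\min_{y}P_{0}(y),\min_{y}P_{l}(y)\geq c_{0}/m$; combining this with Lemma~\ref{lemma:V-KL-UB}(1) yields $B\lesssim m\sqrt{\mathsf{KL}_{\max}}$, while Lemma~\ref{lemma:V-KL-UB}(2) applied with $\kappa_{0}=m/c_{0}$ (and the symmetry (\ref{eq:assumption-symmetric}) to pass from $\mathsf{KL}(P_{l}\Vert P_{0})$ to $\mathsf{KL}(P_{0}\Vert P_{l})$) yields $V\lesssim m^{2}\mathsf{KL}_{\max}$. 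Assumption~\ref{assumption-KL-D} then upgrades these to $B\lesssim m\sqrt{\mathsf{KL}_{\min}}$ and $V\lesssim m^{2}\mathsf{KL}_{\min}$. Separately, $B\leq\log(m/c_{0})\lesssim\log m$ is a universal bound under Assumption~\ref{assumption-Pmin}, and $A\geq B$ trivially.

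\emph{Step 3 (the second half of (\ref{eq:sep-si-1-2})).} Combining (\ref{eq:SNR-condition-general-m}) with $A\geq B$ gives $\sqrt{np_{\mathrm{obs}}}\,\mathsf{KL}_{\min}\geq\sqrt{c_{3}}\,B$, which is strictly stronger than the required $\mathsf{KL}_{\min}\geq c_{6}B\log(mn)/(np_{\mathrm{obs}})$ once $\sqrt{np_{\mathrm{obs}}}\gtrsim\log(mn)$. Since $np_{\mathrm{obs}}\gtrsim\log^{5}n$ and $\log(mn)\lesssim\log n$ (because $m=\mathrm{poly}(n)$), this is comfortably satisfied.

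\emph{Step 4 (the first half of (\ref{eq:sep-si-1-2})).} This is the delicate part and splits into two sub-cases by the magnitude of $\mathsf{KL}_{\max}$. If $\mathsf{KL}_{\max}\geq\epsilon_{0}$ for an absolute constant $\epsilon_{0}$, then Assumption~\ref{assumption-KL-D} forces $\mathsf{KL}_{\min}\gtrsim 1$, and Step~2 gives $V\leq B^{2}\lesssim\log^{2}m$; so $\mathsf{KL}_{\min}^{2}/V\gtrsim 1/\log^{2}m$, and condition~(i) of (\ref{eq:sep-si-1-2}) reduces to $np_{\mathrm{obs}}\gtrsim\log^{2}m\cdot\log(mn)\lesssim\log^{3}n$, which holds by hypothesis. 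If instead $\mathsf{KL}_{\max}<\epsilon_{0}$, the plan is to sharpen the variance bound: applying Lemma~\ref{lem:KL-Var-Hel} (or a Taylor expansion of $\log(P_{0}/P_{l})$ about $1$ combined with a chi-squared/Pinsker estimate under Assumption~\ref{assumption-Pmin}) gives $V\lesssim\mathsf{KL}_{\max}\lesssim\mathsf{KL}_{\min}$, whence $\mathsf{KL}_{\min}^{2}/V\gtrsim\mathsf{KL}_{\min}$. Then (\ref{eq:SNR-condition-general-m}) together with the upper bound $A\lesssim m\sqrt{\mathsf{KL}_{\max}}$ from Step~2 (itself tightened in the small-KL regime by exploiting that the cyclic-shift structure and Assumption~\ref{assumption-Pmin} force $P_{0}$ to be close to uniform) yields $\mathsf{KL}_{\min}\gtrsim m/(np_{\mathrm{obs}})$; since $m\gtrsim\log n\gtrsim\log(mn)$, condition (i) follows.

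\emph{Main obstacle.} The chief technical difficulty lies in Step~4 in the small-KL regime: one must justify the refined bound $V\lesssim\mathsf{KL}_{\max}$ (rather than the worst-case $V\lesssim m^{2}\mathsf{KL}_{\max}$) together with a matching lower bound on $A$ of order $\sqrt{m\mathsf{KL}_{\max}}$ (up to logarithmic factors). Both rely on the observation that when $\mathsf{KL}_{\max}$ is small, the data-processing inequality applied to any cyclic shift sending an argmax of $P_{0}$ to an argmin (which exists by Assumption~\ref{assumption-Pmin}) forces $\max_{y}P_{0}(y)$ to be close to $1/m$ on the relevant scale. The generous slack $np_{\mathrm{obs}}\gtrsim\log^{5}n$ is precisely what lets the ensuing polylogarithmic losses be absorbed without tightening the bookkeeping.
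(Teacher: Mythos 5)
Your Steps 1--3 are correct and essentially coincide with the paper's argument: (\ref{eq:KLmax-general-m}) follows trivially from (\ref{eq:SNR-condition-general-m}) and Assumption \ref{assumption-KL-D}, and the second half of (\ref{eq:sep-si-1-2}) follows from $\mathsf{KL}_{\min}/\max_{l,y}|\log(P_0(y)/P_l(y))| \gtrsim \mathsf{KL}_{\min}/\max_l\|\log(P_0/P_l)\|_1 \gtrsim 1/\sqrt{np_{\mathrm{obs}}} \gtrsim \log(mn)/(np_{\mathrm{obs}})$ once $np_{\mathrm{obs}}\gtrsim\log^2(mn)$. Your Step 4, however, has a genuine gap in the small-$\mathsf{KL}$ sub-case. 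The claimed refinement $V\lesssim\mathsf{KL}_{\max}$ cannot be obtained from Lemma \ref{lem:KL-Var-Hel}: that lemma explicitly requires $P(y)$ and $Q(y)$ to be bounded away from zero, which fails here since Assumption \ref{assumption-Pmin} only guarantees $P_0(y)\gtrsim 1/m$ with $m\to\infty$; tracking the constants in its proof, the error term scales like $\sqrt{\mathsf{KL}}/\min_y P(y)\asymp m\sqrt{\mathsf{KL}}$, which is not $o(1)$. Nor does your fallback work: Pinsker applied to the shifts only gives $\max_y P_0(y)\leq 1/m+O(\sqrt{\mathsf{KL}_{\max}})$, and with your case threshold $\mathsf{KL}_{\max}<\epsilon_0$ (a constant) this does not force $P_0$ to be anywhere near uniform, so the would-be bound $V\lesssim \max_yP_0(y)\cdot A^2$ does not collapse to $\mathsf{KL}_{\max}$. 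The cheap general bound is only $V\leq B\cdot d(P_0\|P_l)\lesssim B\sqrt{\mathsf{KL}_{\max}}$, which is too weak to close your argument.

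The paper resolves this by choosing a different case split, calibrated to the $\log^5 n$ budget: it splits on $\max_y P_0(y)\lessgtr 1/\log(mn)$ rather than on $\mathsf{KL}_{\max}$. When $\max_y P_0(y)\lesssim 1/\log(mn)$, one bounds $V\leq \max_yP_0(y)\cdot\sum_y(\log(P_0/P_l))^2\lesssim \|\log(P_0/P_l)\|_1^2/\log(mn)$, and (\ref{eq:SNR-condition-general-m}) directly yields $\mathsf{KL}_{\min}^2/V\gtrsim\log(mn)/(np_{\mathrm{obs}})$. When $\max_y P_0(y)\gg 1/\log(mn)\gg 1/m$, a Hellinger lower bound at the argmax gives $\mathsf{KL}_{\min}\gtrsim\max_yP_0(y)\gg 1/\log(mn)$, while $V\leq B^2\lesssim\log^2 n$, so $\mathsf{KL}_{\min}^2/V\gtrsim 1/\log^4(mn)\gtrsim\log(mn)/(np_{\mathrm{obs}})$ under $p_{\mathrm{obs}}\gtrsim\log^5(mn)/n$. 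If you want to salvage your structure, you would have to move your case boundary to $\mathsf{KL}_{\max}\lessgtr 1/\log^2(mn)$ and use $V\lesssim(1/m+\sqrt{\mathsf{KL}_{\max}})A^2$ in the small case, but then the complementary case gives only $\mathsf{KL}_{\min}^2/V\gtrsim 1/\log^6(mn)$, exceeding the $\log^5 n$ sampling budget; the paper's split on $\max_yP_0(y)$ is the one that makes the exponents work.
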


\begin{proof} See Appendix \ref{sec:Proof-of-Lemma-simple-condition}. \end{proof}

\subsection{Choice of the scaling factor $\mu$}

So far we have proved the result under the scaling factor condition (\ref{eq:mu-KL}) given in Theorem \ref{thm:Iterative-stage}. To conclude the analysis for Theorem \ref{thm:ExactRecovery-General} and Theorem \ref{thm:ExactRecovery-General-large-m}, it remains to convert it to conditions in terms of the singular value $\sigma_i(\cdot)$. 

To begin with, it follows from (\ref{eq:EL-general-block-intuition}) that 
\[
  \bm{L} = \mathbb{E}\left[\bm{L}\right] + (\bm{L} - \mathbb{E}[\bm{L}]) 
  = p_{\mathrm{obs}} \bm{1} \bm{1}^{\top} \otimes \bm{K} - p_{\mathrm{obs}} \bm{I}_n \otimes \bm{K} + (\bm{L} - \mathbb{E}[\bm{L}]),
\]
leading to an upper estimate
  \begin{align}
     \sigma_i( \bm{L} ) 
     & ~\leq~  \sigma_i( p_{\mathrm{obs}} \bm{1} \bm{1}^{\top} \otimes \bm{K} ) 
  	+ \| p_{\mathrm{obs}} \bm{I}_n \otimes \bm{K} \| + \| \bm{L} - \mathbb{E}[\bm{L}] \|  \nonumber \\
     & ~=~ np_{\mathrm{obs}} \sigma_i( \bm{K} ) + p_{\mathrm{obs}} \| \bm{K} \| + \| \bm{L} - \mathbb{E}[\bm{L}] \|.
     \label{eq:LB-sigma_m1}
  \end{align}
Since $\bm{K}$ is circulant, its eigenvalues are given by
\[
  \lambda_{l} = \sum\nolimits _{i=0}^{m-1}\left(-\mathsf{KL}_{i}-\mathcal{H}(P_{0})\right) \exp\left(j \frac{2\pi il}{m}\right),
  \quad 0\leq l <m
\]
with $j = \sqrt{-1}$. In fact, except for $\lambda_{0}$, one can simplify
\[
\lambda_{l}= -\sum\nolimits _{i=0}^{m-1} \mathsf{KL}_{i}  \exp\left(j \frac{2\pi il}{m}\right), \quad 1\leq l<m,
\]
which are eigenvalues of $\bm{K}^{0}$ (see \eqref{eq:defn-K-intuition}) as well. 
This leads to the upper bounds
\begin{align}
  \sigma_{2}\left(\bm{K}\right) &\leq \sqrt{\sum\nolimits _{j=1}^{m-1}\lambda_{j}^{2}}\leq\|\bm{K}^{0}\|_{\mathrm{F}}
   \leq m\mathsf{KL}_{\max}  \lesssim  m\mathsf{KL}_{\min}, 
   \label{eq:LB_sigma2_K}\\
  \sigma_{m}\left(\bm{K}\right) &\leq  \sqrt{\frac{1}{m-1}\sum\nolimits_{j=1}^{m-1}\lambda_{j}^{2}}\leq\sqrt{\frac{1}{m-1}\|\bm{K}^{0}\|_{\mathrm{F}}^{2}}
  \leq \sqrt{\frac{m^{2} \mathsf{KL}_{\max}^{2} }{ m-1 }}  \nonumber\\
   & \leq  \sqrt{2m}\mathsf{KL}_{\max}
  \lesssim  \sqrt{m}\mathsf{KL}_{\min},
  \label{eq:LB_sigma_K}
\end{align}
where both (\ref{eq:LB_sigma2_K}) and (\ref{eq:LB_sigma_K}) follow from Assumption \ref{assumption-KL-D}. 
In addition, it is immediate to see that (\ref{eq:LB-sigma_m1}) and (\ref{eq:LB_sigma_K}) remain valid if we replace $\bm{L}$ with $\bm{L}^{\mathrm{debias}}$ and take $\bm{K} = \frac{1}{p_{\mathrm{obs}}} \mathbb{E}[ \bm{L}_{i,j}^{\mathrm{debias}} ]$ ($i\neq j$) instead.

To bound the remaining terms on the right-hand side of (\ref{eq:LB-sigma_m1}), we divide into two separate cases:
\begin{itemize}
  \item[(i)] When $m$ is fixed,  
it follows from (\ref{eq:defn-K-intuition}) that 
\[
	\left\Vert \bm{K}\right\Vert ~
	\leq~ \|\bm{K}_{0}\| + m \mathcal{H}\left(P_{0}\right)
 	~\leq~ m\mathsf{KL}_{\max} + m \log m
  ~=O(1).
\]
When combined with Lemma \ref{lem:deviation-Lhat}, this yields
\begin{align}
  p_{\mathrm{obs}}\|\bm{K}\| + \|\tilde{\bm{L}}\|
	&\lesssim p_{\mathrm{obs}} + \left(\frac{1}{m}\sum\nolimits _{l=1}^{m-1}\left\Vert \log\frac{P_{0}}{P_{l}}\right\Vert _{1}\right)\sqrt{np_{\mathrm{obs}}} \nonumber\\
	&\lesssim p_{\mathrm{obs}} + \sqrt{np_{\mathrm{obs}}\mathsf{KL}_{\max}},
  \label{eq:gamma-fixed-m}
\end{align}
where the last inequality follows from (\ref{eq:log1-KL}). Putting this together with (\ref{eq:LB-sigma_m1}) and (\ref{eq:LB_sigma2_K}) and using Assumption \ref{assumption-KL-D}, we get
\begin{align}
  \sigma_{2} (\bm{L})
  \lesssim np_{\mathrm{obs}} m \mathsf{KL}_{\min}+p_{\mathrm{obs}}+\sqrt{np_{\mathrm{obs}}\mathsf{KL}_{\max}}
  \asymp np_{\mathrm{obs}}\mathsf{KL}_{\min}.
\end{align}
Thus, one would satisfy (\ref{eq:mu-KL}) by taking $\mu \geq c_{12}/ \sigma_{2}\left(\bm{L}\right)$ for some sufficiently large $c_{12}>0$.
 
 \item[(ii)] When $m = O(\mathrm{poly}(n))$, 
we consider $\bm{L}^{\mathrm{debias}}$ and set $\bm{K} = \frac{1}{p_{\mathrm{obs}}}\mathbb{E} [ \bm{L}_{i,j}^{\mathrm{debias}}]$. 
According to (\ref{eq:debias-block-UB}), one has 
$\left\Vert \bm{L}_{i,j}^{\mathrm{debias}}\right\Vert \leq\frac{1}{m}\sum_{l=1}^{m-1}\left\Vert \log\frac{P_{0}}{P_{l}}\right\Vert _{1}$, 
thus indicating that
\begin{align}
  p_{\mathrm{obs}} \|\bm{K}\| =  \left\Vert  \mathbb{E} \left[ \bm{L}_{i,j}^{\mathrm{debias}} \right] \right\Vert 
 ~\leq~  \frac{1}{m} \sum\nolimits_{l=1}^{m-1}\left\Vert \log\frac{P_{0}}{P_{l}}\right\Vert _{1}.
\end{align}
This together with Lemma \ref{lem:deviation-Lhat} gives
\begin{align}
  & p_{\mathrm{obs}}\|\bm{K}\| + \| \bm{L}^{\mathrm{debias}} - \mathbb{E}[\bm{L}^{\mathrm{debias}}] \|
  ~\lesssim \left(\frac{1}{m}\sum\nolimits _{l=1}^{m-1}\left\Vert \log\frac{P_{0}}{P_{l}}\right\Vert _{1}\right) \left(1+\sqrt{np_{\mathrm{obs}}}\right) \\
  & \quad \asymp \left(\frac{1}{m}\sum\nolimits _{l=1}^{m-1}\left\Vert \log\frac{P_{0}}{P_{l}}\right\Vert _{1}\right)\sqrt{np_{\mathrm{obs}}}
   ~\lesssim np_{\mathrm{obs}}\mathsf{KL}_{\min}
  \label{eq:gamma-general-m}
\end{align}
under the condition (\ref{eq:SNR-condition-general-m}).
Combine all of this to derive
\[
   \sigma_{m}\left( \bm{L}^{\mathrm{debias}} \right)
   ~\lesssim  \sqrt{m}np_{\mathrm{obs}}\mathsf{KL}_{\min}
   + np_{\mathrm{obs}} \mathsf{KL}_{\min}
   ~\asymp  \sqrt{m}np_{\mathrm{obs}}\mathsf{KL}_{\min},
\]
thus justifying (\ref{eq:mu-KL}) as long as $\mu \geq c_{12} \sqrt{m} / \sigma_{m}\left(\bm{L}^{\mathrm{debias}} \right)$ for some sufficiently large constant $c_{12}>0$.
\end{itemize}

\subsection{Consequences for random corruption models\label{sub:Consequences-RCM}}

Having obtained the qualitative behavior of the iterative stage for
general models, we can now specialize it
to the random corruption model (\ref{eq:random-outlier}). Before continuing, it is straightforward to compute two metrics:
\begin{align*}
	\mathsf{KL}_{\min} = \mathsf{KL}_{\max} &= \left(\pi_{0}+\frac{1-\pi_{0}}{m}\right)\log\frac{\pi_{0}+\frac{1-\pi_{0}}{m}}{\frac{1-\pi_{0}}{m}}+\frac{1-\pi_{0}}{m}\log\frac{\frac{1-\pi_{0}}{m}}{\pi_{0}+\frac{1-\pi_{0}}{m}} \nonumber\\
  &=\pi_{0}\log\frac{1+\left(m-1\right)\pi_{0}}{1-\pi_{0}};
\end{align*}
\[
  \text{and} \qquad\quad
  \left\Vert \log\frac{P_{0}}{P_{l}}\right\Vert _{1} = 2 \left|\log\frac{\pi_{0}+\frac{1-\pi_{0}}{m}}{\frac{1-\pi_{0}}{m}}\right|
  = 2 \log\frac{1+\left(m-1\right)\pi_{0}}{1-\pi_{0}}.
\]

\begin{enumerate}
\item When $m$ is fixed and $\pi_{0}$ is small, it is not hard to see
that 
\[
  \mathsf{KL}_{\min} = \mathsf{KL}_{\max}\approx m\pi_{0}^{2},
\]
which taken collectively with (\ref{eq:IT-sufficient-KL}) leads
to (\ref{eq:IT-sufficient-dense}). 
\item When $m\gtrsim\log n$ and $m=O\left(\mathrm{poly}\left(n\right)\right)$,
the condition (\ref{eq:SNR-condition-general-m}) reduces to $\pi_{0}^{2}\gtrsim1/(np_{\mathrm{obs}})$,
which coincides with (\ref{eq:IT-sufficient-dense-large-m}). In fact,
one can also easily verify (\ref{eq:sep-si-1-2}) under Condition
(\ref{eq:IT-sufficient-dense-large-m}), assuming that $p_{\mathrm{obs}}\gtrsim\log^{2}n/n$.
This improves slightly upon the condition $p_{\mathrm{obs}}\gtrsim\log^{5}n/n$
required in the general theorem. 
\end{enumerate}
Next, we demonstrate that the algorithm with the input
matrix (\ref{eq:L-random-corruption}) undergoes the same trajectory
as the version using (\ref{eq:Input-Matrix}). To avoid confusion,
we shall let $\bm{L}^{\mathrm{rcm}}$  denote the matrix (\ref{eq:L-random-corruption}),
and set $$\bm{w}^{\mathrm{rcm}}:=\bm{L}^{\mathrm{rcm}}\bm{z}.$$ As
discussed before, there are some constants $a>0$ and $b$ such that
$\bm{L}_{i,j}^{\mathrm{rcm}}=a\bm{L}_{i,j}+b\bm{1}\bm{1}^{\top}$
for all $(i,j)\in\Omega,$ indicating that
\[
  \bm{w}_{i}^{\mathrm{rcm}}=a\bm{w}_{i}+\tilde{b}_{i}\bm{1}, \quad 1 \leq i\leq n
\]
for some numerical values $\{\tilde{b}_{i}\}$. In view of Fact \ref{fact:Proj-const-shift},
the projection $\mathcal{P}_{\Delta}$ remains unchanged up to global
shift. This justifies the equivalence between the two input matrices
when running Algorithm \ref{alg:PP}. 

Finally, one would have to adjust the scaling factor accordingly. It is straightforward to show that the scaling factor condition (\ref{eq:mu-KL}) can be translated into $\mu > c_5 / (np_{\mathrm{obs}}\pi_0)$ when the input
matrix (\ref{eq:L-random-corruption}) is employed. Observe that (\ref{eq:LB-sigma_m1}) continues to hold as long as we set $\bm{K}=\pi_{0}\bm{I}_{m}+\left(1-\pi_{0}\right)\bm{1}\bm{1}^{\top}$. We can also verify that
\begin{equation}
	\sigma_2( \bm{K} ) = \pi_0,  \quad \| \bm{K} \| \leq 1, \quad \text{and} \quad \| \bm{L} - \mathbb{E}[ \bm{L} ]  \| \lesssim \sqrt{np_{\mathrm{obs}}},
  \label{eq:K-L-RCM}
\end{equation}
where the last inequality follows from Lemma \ref{lem:random-matrix-norm}. These taken collectively with (\ref{eq:LB-sigma_m1}) lead to
\begin{equation}
  \sigma_{2}(\bm{L})\lesssim np_{\mathrm{obs}}\pi_{0}+p_{\mathrm{obs}}\pi_{0}+\sqrt{np_{\mathrm{obs}}}\lesssim np_{\mathrm{obs}}\pi_{0}
\end{equation}
 under the condition (\ref{eq:IT-sufficient-dense-large-m}). This
justifies the choice  $\mu\gtrsim1/\sigma_{2}(\bm{L})$ as advertised.

\section{Spectral initialization\label{sec:Spectral-Initialization}}

We come back to assess the performance of spectral initialization by
establishing the theorem below. Similar to the definition (\ref{eq:mcr}),
we introduce the counterpart of $\ell_{2}$ distance modulo the global offset as
\[
  \mathrm{dist}(\hat{\bm{x}},\bm{x}) := \min_{0\leq l<m} \left\Vert \hat{\bm{x}} - \mathsf{shift}_{l}\left(\bm{x}\right) \right\Vert .
\]

\begin{theorem} 
\label{thm:SpectralMethod} 
Fix $\delta>0$,
and suppose that $p_{\mathrm{obs}}\gtrsim \log n/n$. Under Assumptions \ref{assumption-Pmin}-\ref{assumption-KL-D}, there are some universal constants $c_{1},c_{2},c_{3}>0$
such that with probability at least $1-\xi$, the initial estimate
$\bm{z}^{(0)}$ of Algorithms \ref{alg:PP} obeys 
\begin{equation}
  \mathrm{dist}(\bm{z}^{(0)},\bm{x}) \leq \delta\|\bm{x}\| 
  \qquad \text{and} \qquad 
  \mcr(\bm{z}^{(0)}, \bm{x} ) \leq \delta^2 / 2
  \label{eq:initialization-bound}
\end{equation}
in the following scenarios:
\begin{enumerate}
\item[(i)] the random corruption model with $m=O(\mathrm{poly}(n))$, provided
that $\bm{L}$ is given by (\ref{eq:L-random-corruption}) and that
\begin{equation}
  \pi_{0} \geq \frac{c_{1}}{\delta\sqrt{\xi}}\frac{1}{\sqrt{p_{\mathrm{obs}}n}};
  \label{eq:pi0-spectral}
\end{equation}

\item[(iii)] the general model with a fixed $m$, provided that $\bm{L}$ is given
by (\ref{eq:Input-Matrix}) and that
\begin{equation}
  \mathsf{KL}_{\min} \geq \frac{c_{2}}{\delta^{2}\xi}\frac{1}{p_{\mathrm{obs}}n};
  \label{eq:condition-KL-spectral-1}
\end{equation}
\item[(iii)] the general model with $m=O(\mathrm{poly}(n))$, provided that $\bm{L}$
is replaced by $\bm{L}^{\mathrm{debias}}$ (given in (\ref{eq:debiased-MLE}))
and that
\begin{equation}
  \frac{\mathsf{KL}_{\min}^{2}}{\max_{1\leq l<m}\left\Vert \log\frac{P_{0}}{P_{l}}\right\Vert _{1}^{2}}
  \geq  \frac{c_{3}}{\delta^{2}\xi} \frac{1}{p_{\mathrm{obs}}n}.
  \label{eq:condition-KL-spectral}
\end{equation}
\end{enumerate}
\end{theorem}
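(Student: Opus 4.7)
The plan is a low-rank perturbation argument combined with a random-column averaging trick. Adopt the standard reduction $\bm{x}_i = \bm{e}_1$ for every $i$; then by the computation in Section~\ref{sub:Proof-general} we have $\mathbb{E}[\bm{L}] = p_{\mathrm{obs}}(\bm{1}_n\bm{1}_n^\top - \bm{I}_n)\otimes\bm{K}$. Introduce the rank-$m$ (resp.~rank-$(m{-}1)$ in case (iii)) signal matrix $\bm{M}^{*} := p_{\mathrm{obs}}\,\bm{1}_n\bm{1}_n^\top\otimes\bm{K}$, which approximates $\mathbb{E}[\bm{L}]$ up to the lower-order spectral term $p_{\mathrm{obs}}\|\bm{K}\|$. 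Every column of $\bm{M}^*$ takes the form $p_{\mathrm{obs}}\,\bm{1}_n\otimes\bm{K}\bm{e}_\beta$ for some $\beta$, and $\bm{K}\bm{e}_\beta$ attains its strict maximum at coordinate $\beta$ with separation at least $\mathsf{KL}_{\min}$ (respectively $\pi_0$ in the RCM). Hence for sufficiently large $\mu_0$, Fact~\ref{fact:Proj-separation} gives $\mathcal{P}_{\Delta^n}(\mu_0\,\bm{M}^*\bm{e}_c) = \mathsf{shift}_{\beta-1}(\bm{x})$, i.e.~the target of the projection is $\bm{x}$ modulo the unrecoverable global shift. It therefore suffices to show that a uniformly random column $\hat{\bm{z}} = \hat{\bm{L}}\bm{e}_c$ of $\hat{\bm{L}}$ is $\ell_2$-close to $\bm{M}^*\bm{e}_c$.

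Because $\bm{M}^*$ has rank $m$ and $\hat{\bm{L}}$ is the best rank-$m$ approximation of $\bm{L}$, Weyl's inequality gives $\|\hat{\bm{L}} - \bm{L}\| = \sigma_{m+1}(\bm{L}) \leq \|\bm{L} - \bm{M}^*\|$, so $\|\hat{\bm{L}} - \bm{M}^*\| \leq 2\|\bm{L} - \bm{M}^*\|$ in spectral norm; combining with $\mathrm{rank}(\hat{\bm{L}} - \bm{M}^*) \leq 2m$ we obtain $\|\hat{\bm{L}} - \bm{M}^*\|_{\mathrm{F}}^2 \lesssim m\,\|\bm{L} - \bm{M}^*\|^2$. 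Splitting $\|\bm{L} - \bm{M}^*\| \leq \|\bm{L} - \mathbb{E}[\bm{L}]\| + p_{\mathrm{obs}}\|\bm{K}\|$ and invoking Lemma~\ref{lem:deviation-Lhat} (together with (\ref{eq:K-L-RCM}) in case (i)) controls this quantity. A uniformly random column then satisfies
\begin{equation*}
\mathbb{E}_c\|\hat{\bm{z}} - \bm{M}^*\bm{e}_c\|^2 \;=\; \tfrac{1}{nm}\|\hat{\bm{L}} - \bm{M}^*\|_{\mathrm{F}}^2 \;\lesssim\; \tfrac{1}{n}\|\bm{L} - \bm{M}^*\|^2,
\end{equation*}
so Markov's inequality yields, with probability at least $1-\xi$ over $c$, $\|\hat{\bm{z}} - \bm{M}^*\bm{e}_c\|^2 \lesssim \|\bm{L} - \bm{M}^*\|^2/(\xi n)$.

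The concluding step converts this $\ell_2$ estimate into a block-level misclassification count. A block $i$ is counted in $\mcr(\bm{z}^{(0)},\bm{x})$ only if $\|\hat{\bm{z}}_i - (\bm{M}^*\bm{e}_c)_i\|_\infty \geq \tfrac{1}{2}p_{\mathrm{obs}}\mathsf{KL}_{\min}$, i.e.~half the gap of $\bm{K}\bm{e}_\beta$; summing across blocks gives
\begin{equation*}
\mcr(\bm{z}^{(0)},\bm{x}) \;\lesssim\; \frac{\|\bm{L} - \bm{M}^*\|^2}{\xi\, n^2\, p_{\mathrm{obs}}^2\, \mathsf{KL}_{\min}^2}.
\end{equation*}
Requiring the right-hand side to be at most $\delta^2/2$ and substituting the perturbation bound reproduces (\ref{eq:condition-KL-spectral}) in the general-$m$ case; for fixed $m$, Lemma~\ref{lemma:V-KL-UB} combined with Assumptions~\ref{assumption-Pmin}--\ref{assumption-KL-D} yields $\|\log(P_0/P_l)\|_1^2 \lesssim \mathsf{KL}_{\max} \asymp \mathsf{KL}_{\min}$, which simplifies to (\ref{eq:condition-KL-spectral-1}); the RCM calculation replaces the gap by $p_{\mathrm{obs}}\pi_0$ and uses the $\|\bm{L} - \bm{M}^*\|$ estimates from (\ref{eq:K-L-RCM}), giving (\ref{eq:pi0-spectral}). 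The $\ell_2$ conclusion $\mathrm{dist}(\bm{z}^{(0)}, \bm{x}) \leq \delta\|\bm{x}\|$ then follows from $\mathrm{dist}^2 \leq 2n\cdot\mcr$ (each wrong block contributes at most $\|\bm{e}_j - \bm{e}_1\|^2 = 2$).

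The main technical obstacle is the finite-$\mu_0$ regime: when $\mu_0 < \infty$ the simplex projection need not round to a vertex, so the block-wise winner reading of $\mcr$ and the half-gap argument require care. The cleanest route is to take $\mu_0$ large enough (of order $1/(p_{\mathrm{obs}}\mathsf{KL}_{\min})$, with $\mu_0=\infty$ covered directly by Fact~\ref{fact:Proj-separation}) so that the $\mathsf{KL}_{\min}$-scale separation in each ``good'' block survives the perturbation; alternatively one exploits the $1$-Lipschitz continuity of $\mathcal{P}_{\Delta^n}$ to transfer the $\ell_2$ estimate through the projection. A secondary subtlety is case (iii), where $\bm{M}^{*,\mathrm{debias}}$ has rank $m{-}1$ rather than $m$; but debiasing only subtracts a constant from each coordinate of $\bm{K}\bm{e}_\beta$ and hence preserves coordinate ordering, so the argument goes through with only minor bookkeeping changes.
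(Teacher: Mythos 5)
Your proposal is correct and follows essentially the same route as the paper's proof: the same signal matrix $p_{\mathrm{obs}}\bm{1}\bm{1}^{\top}\otimes\bm{K}$, the same rank-$m$ perturbation bound $\|\hat{\bm{L}}-\bm{M}^{*}\|_{\mathrm{F}}^{2}\lesssim m\|\bm{L}-\bm{M}^{*}\|^{2}$, the same random-column averaging plus Markov, and the same per-block separation argument via Fact \ref{fact:Proj-separation}. The only (immaterial) difference is bookkeeping: you count misclassified blocks directly against the half-gap threshold $\tfrac{1}{2}p_{\mathrm{obs}}\mathsf{KL}_{\min}$, whereas the paper first isolates a set $\mathcal{J}$ of blocks whose error is at most $\tfrac{\sqrt{2}}{\delta\sqrt{n}}$ times the total and then imposes the separation condition on those; both yield the identical requirement $\gamma_{\bm{L}}\lesssim\delta\sqrt{\xi}\,np_{\mathrm{obs}}\mathscr{S}(\bm{K}_{:,1})$ and hence the same three conditions.
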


The main reason for the success of spectral initialization  is that the
low-rank approximation of $\bm{L}$ (resp.~$\bm{L}^{\mathrm{debias}}$)
produce a decent estimate of $\mathbb{E}\left[\bm{L}\right]$ (resp.~$\mathbb{E}[\bm{L}^{\mathrm{debias}}]$)
and, as discussed before, $\mathbb{E}\left[\bm{L}\right]$ (resp.~$\mathbb{E}[\bm{L}^{\mathrm{debias}}]$)
reveals the structure of the truth. In what follows, we will first prove the result for  general $\bm{L}$, and then specialize
it to the three choices considered in the theorem. As usual, we suppose
without loss of generality that $\bm{x}_{i}=\bm{e}_{1}$, $1\leq i\leq n$. 

To begin with, we
set 
$  \bm{K} = \frac{1}{p_{\mathrm{obs}}} \mathbb{E}[\bm{L}_{i,j}]$ ($i \neq j$) as before
and write  
\begin{eqnarray}
  \bm{L}  =   p_{\mathrm{obs}}\bm{1}\bm{1}^{\top}\otimes\bm{K} - p_{\mathrm{obs}}\bm{I}_{n}\otimes\bm{K} + (\bm{L} - \mathbb{E}[\bm{L}]),
  \label{eq:L-expansion}
\end{eqnarray}
where the first term on the right-hand side of (\ref{eq:L-expansion})
has rank at most $m$. If we let $\hat{\bm{L}}$ be the best rank-$m$
approximation of $\bm{L}$, then matrix perturbation theory gives
\[
  \| \hat{\bm{L}}- \bm{L}\| \leq \| p_{\mathrm{obs}}\bm{1}\bm{1}^{\top}\otimes\bm{K} - \bm{L}\|.
\]
Hence, the triangle inequality yields
\begin{align}
	\| \hat{\bm{L}}-p_{\mathrm{obs}}\bm{1}\bm{1}^{\top}\otimes\bm{K}\|   &\leq~ \| \hat{\bm{L}}- \bm{L}\| + \| \bm{L}-p_{\mathrm{obs}}\bm{1}\bm{1}^{\top}\otimes\bm{K}\| \nonumber\\
	&\leq~ 2  \| \bm{L}-p_{\mathrm{obs}}\bm{1}\bm{1}^{\top}\otimes\bm{K}\|  \nonumber\\
	&\overset{\text{(i)}}{\leq}~ 2\|p_{\mathrm{obs}}\bm{I}_{n}\otimes\bm{K}\| + 2\| \bm{L} - \mathbb{E}[\bm{L}] \|  \nonumber\\
	&=~ 2p_{\mathrm{obs}}\|\bm{K}\| + 2\| \bm{L} - \mathbb{E}[\bm{L}] \| 
  ~:=~\gamma_{\bm{L}},
  \label{defn:gamma_L}
\end{align}
where (i) follows from (\ref{eq:L-expansion}). 
This together with the facts $\mathrm{rank}(\hat{\bm{L}})\leq m$
and $\mathrm{rank}(\bm{K})\leq m$ gives 
\[
  \|\hat{\bm{L}}-p_{\mathrm{obs}}\bm{1}\bm{1}^{\top}\otimes\bm{K}\|_{\mathrm{F}}^{2}\leq2m\|\hat{\bm{L}}-p_{\mathrm{obs}}\bm{1}\bm{1}^{\top}\otimes\bm{K}\|^{2}
  \leq 2m \gamma_{\bm{L}}^{2}.
\]

Further, let $\bm{K}_{:,1}$ (resp.~$\bm{1}\otimes\bm{K}_{:,1}$) be the first
column of $\bm{K}$ (resp.~$\bm{1}\otimes\bm{K}$). When $\hat{\bm{z}}$
is taken to be a random column of $\hat{\bm{L}}$, it is straightforward
to verify that 
\begin{align*}
  \mathbb{E}\left[\mathrm{dist}^{2}\left(\hat{\bm{z}},~p_{\mathrm{obs}}\bm{1}\otimes\bm{K}_{:,1}\right)\right] 
  & \leq \frac{1}{nm} \|\hat{\bm{L}}-p_{\mathrm{obs}}\bm{1}\bm{1}^{\top}\otimes\bm{K}\|_{\mathrm{F}}^{2} 
  \leq \frac{2\gamma_{\bm{L}}^{2}}{n},
\end{align*}
where the expectation is w.r.t.~the randomness in picking the column (see Section \ref{sec:Algorithm}). 
Apply Markov's inequality to deduce that, with probability at least
$1-\xi$, 
\begin{equation}
  \mathrm{dist}\left(\hat{\bm{z}},~p_{\mathrm{obs}}\bm{1}\otimes\bm{K}_{:,1}\right)
  \leq \frac{\sqrt{2}\gamma_{\bm{L}}}{\sqrt{\xi n}}.
  \label{eq:dist-gamma}
\end{equation}

For simplicity of presentation, we shall assume 
\[
  \left\Vert \hat{\bm{z}}-p_{\mathrm{obs}}\bm{1}\otimes\bm{K}_{:,1}\right\Vert 
  = \mathrm{dist}\left(\hat{\bm{z}},~p_{\mathrm{obs}}\bm{1}\otimes\bm{K}_{:,1}\right)
\]
from now on. We shall pay particular attention to the index set 
\[
  \mathcal{J} := \left\{ i\in[n]~\Big|  
  \left\Vert \hat{\bm{z}}_{i}-p_{\mathrm{obs}}\bm{K}_{:,1} \right\Vert 
  \leq \frac{\sqrt{2}}{ \delta\sqrt{n} }  \left\Vert \hat{\bm{z}}-p_{\mathrm{obs}}\bm{1}\otimes\bm{K}_{:,1}\right\Vert 
  = \frac{\sqrt{2}}{\delta\sqrt{n}} \mathrm{dist} \left( \hat{\bm{z}}, ~p_{\mathrm{obs}}\bm{1} \otimes \bm{K}_{:,1}\right) \right\} ,
\]
which consists of all blocks whose estimation errors are not much larger than the average estimation error. 
It is easily seen that the set $\mathcal{J}$ satisfies 
\begin{equation}
	|\mathcal{J}|\geq (1-\delta^{2}/2)\,n \label{eq:J-size}
\end{equation}
and hence contains most blocks. This comes from the fact that
\begin{eqnarray*}
  \left\Vert \hat{\bm{z}}-p_{\mathrm{obs}}\bm{1}\otimes\bm{K}_{:,1}\right\Vert ^{2} 
   & \geq & \sum\nolimits _{i\notin\mathcal{\mathcal{J}}}\left\Vert \hat{\bm{z}}_{i}-p_{\mathrm{obs}}\bm{K}_{:,1}\right\Vert ^{2} \\
   & > & (n-|\mathcal{\mathcal{J}}|)\cdot\frac{2}{\delta^{2}n} \left\Vert \hat{\bm{z}}-p_{\mathrm{obs}}\bm{1}\otimes\bm{K}_{:,1} \right\Vert ^2.
\end{eqnarray*}
which can only happen if (\ref{eq:J-size}) holds. 
The $\ell_{\infty}$ error in each block can also be bounded by 
\begin{equation}
  \|\hat{\bm{z}}_{i}-p_{\mathrm{obs}}\bm{K}_{:,1}\|_{\infty} \leq \|\hat{\bm{z}}_{i}-p_{\mathrm{obs}}\bm{K}_{:,1}\|
  \leq \frac{\sqrt{2}}{\delta\sqrt{n}} \mathrm{dist}\left(\hat{\bm{z}},~p_{\mathrm{obs}}\bm{1} \otimes \bm{K}_{:,1}\right),
  \quad i\in\mathcal{\mathcal{J}}.
  \label{eq:zi-UB}
\end{equation}

If the above $\ell_{\infty}$ error is sufficiently small for each  $i\in \mathcal{J}$, then the projection operation recovers the truth for all blocks falling in $\mathcal{J}$. Specifically, 
 adopting the separation measure $\mathscr{S}(\cdot)$ as defined in
(\ref{eq:defn-sep}), we obtain
\begin{align*}
  \mathscr{S}(\hat{\bm{z}}_{i})
	&\geq  \mathscr{S}(p_{\mathrm{obs}}\bm{K}_{:,1})+\mathscr{S}(\hat{\bm{z}}_{i}-p_{\mathrm{obs}}\bm{K}_{:,1}) \\
	&\geq  p_{\mathrm{obs}}\mathscr{S}(\bm{K}_{:,1}) - 2\|\hat{\bm{z}}_{i}-p_{\mathrm{obs}}\bm{K}_{:,1}\|_{\infty}.
\end{align*}
If 
\begin{equation}
	\label{eq:z-infty-norm}
	\|\hat{\bm{z}}_{i}-p_{\mathrm{obs}}\bm{K}_{:,1}\|_{\infty}\leq c_{5}p_{\mathrm{obs}}\mathscr{S}\left(\bm{K}_{:,1}\right), \quad i\in\mathcal{J}
\end{equation}
for some constant $0<c_{5} < 1/2$, then it would follow from Fact \ref{fact:Proj-separation}
that
\[
  \bm{z}_{i}^{(0)}  = \mathcal{P}_{\Delta}\left(\mu_{0}\hat{\bm{z}}_{i}\right) = \bm{e}_{1}, \qquad i\in\mathcal{J}
\]
as long as 
$  \mu_{0} > \frac{1}{\left(1-2c_{5}\right) p_{\mathrm{obs}}\mathscr{S}\left(\bm{K}_{:,1}\right)}.$
This taken collectively with Fact \ref{fact:H} further reveals that
\begin{equation}
   \begin{cases}
	\mathrm{dist}(\bm{z}^{(0)}, \bm{x} ) ~\leq~ \sqrt{ 2} \cdot \sqrt{ n-|\mathcal{J}|  } ~\leq~ \delta \sqrt{n} ~=~ \delta \| \bm{x} \|  \\
	\mcr(\bm{z}^{(0)}, \bm{x} ) ~\leq~ ( n-|\mathcal{J}| ) / n  ~\leq~   \delta^2 / 2
   \end{cases}
\end{equation}
as claimed. 
As a result, everything boils down to proving \eqref{eq:z-infty-norm}. 
In view of (\ref{eq:dist-gamma}) and (\ref{eq:zi-UB}), this condition
(\ref{eq:z-infty-norm}) would hold if 
\begin{equation}
  \gamma_{\bm{L}} \leq c_{6} \delta\sqrt{\xi}np_{\mathrm{obs}}\mathscr{S}\left(\bm{K}_{:,1}\right)
  \label{eq:gamma_L-ub}
\end{equation}
for some sufficiently small constant $c_{6}>0$. 

To finish up, we establish (\ref{eq:gamma_L-ub}) for the three scenarios
considered Theorem \ref{thm:SpectralMethod}.
\begin{enumerate}
\item[(i)] \emph{The random corruption model with $\bm{L}$ given by (\ref{eq:L-random-corruption}).}
Simple calculation gives $\mathscr{S}\left(\bm{K}_{:,1}\right)=\pi_{0}$, and
it follows from (\ref{eq:K-L-RCM}) that
$ \gamma_{\bm{L}} \lesssim \sqrt{np_{\mathrm{obs}}} $.
Thus, the condition (\ref{eq:gamma_L-ub}) would hold under
the assumption (\ref{eq:pi0-spectral}).
\item[(ii)] \emph{The general model with fixed $m$ and with $\bm{L}$ given
by }(\ref{eq:Input-Matrix})\emph{.} Observe that $\mathscr{S}\left(\bm{K}_{:,1}\right) = \mathsf{KL}_{\min}$, 
and recall from (\ref{eq:gamma-fixed-m}) that
$
 \gamma_{\bm{L}} \lesssim~ p_{\mathrm{obs}}+\sqrt{np_{\mathrm{obs}}\mathsf{KL}_{\max}}.
$
Thus, we necessarily have (\ref{eq:gamma_L-ub}) under the condition (\ref{eq:condition-KL-spectral-1})  and Assumption \ref{assumption-KL-D}.

\item[(iii)] \emph{The general model with $m=O(\mathrm{poly}(n))$ and with $\bm{L}$
replaced by $\bm{L}^{\mathrm{debias}}$}, in which $\mathscr{S}\left(\bm{K}_{:,1}\right)=\mathsf{KL}_{\min}$.
It has been shown in (\ref{eq:gamma-general-m}) that
\[
  \gamma_{\bm{L}^{\mathrm{debias}}}
  \lesssim \left(\frac{1}{m}\sum\nolimits _{l=1}^{m-1}\left\Vert \log\frac{P_{0}}{P_{l}}\right\Vert _{1}\right)\sqrt{np_{\mathrm{obs}}}.
\]
As a consequence, we establish (\ref{eq:gamma_L-ub}) under the assumption
(\ref{eq:condition-KL-spectral}).
\end{enumerate}

Finally, repeating the analyses for the scaling factor in Section \ref{sec:Iterative-stage-general} justifies the choice of $\mu_0$ as suggested in the main theorems. This finishes the proof.


\section{Minimax lower bound\label{sec:Minimax-lower-bound}}

This section proves the minimax lower bound as claimed in Theorem \ref{thm:lower-bound}.
Once this is done, we can apply it to the random corruption model,
which immediately establishes Theorem \ref{thm:lower-bound-RCM} using
exactly the same calculation as in Section \ref{sub:Consequences-RCM}. 

To prove Theorem \ref{thm:lower-bound}, it suffices to analyze the maximum likelihood
(ML) rule, which minimizes the Bayesian error probability when we impose
a uniform prior over all possible inputs. Before continuing, we provide
an asymptotic estimate on the tail exponent of the likelihood ratio
test, which proves crucial in bounding the probability of error of
ML decoding. 

\begin{lem}
\label{lem:moderate-deviation}
Let $\left\{ P_{n}\right\} _{n\geq1}$
and $\left\{ Q_{n}\right\} _{n\geq1}$ be two sequences of probability
measures on a fixed finite set $\mathcal{Y}$, where  $\min_{n,y}P_{n}(y)$
and $\min_{n,y}Q_{n}(y)$ are both bounded away from 0. Let $\left\{ y_{j,n}:1\leq j\leq n\right\} _{n\geq1}$
be a triangular array of independent random variables such that $y_{j,n}\sim P_{n}$.
If we define 
\[
  \mu_{n} := \mathsf{KL}\left(P_{n}\hspace{0.3em}\|\hspace{0.3em}Q_{n}\right)
  \quad\text{and}\quad
  \sigma_{n}^{2} := \mathsf{Var}_{y\sim P_{n}}\left[\log\frac{Q_{n}\left(y\right)}{P_{n}\left(y\right)}\right],
\]
then for any given constant $\tau>0$,
\begin{align}
  \mathbb{P}\left\{ \frac{1}{n}\sum\nolimits _{j=1}^{n}\log\frac{Q_{n}(y_{j,n})}{P_{n}(y_{j,n})}+\mu_{n}>\tau\mu_{n}\right\} 
	&= \exp\left\{ -\left(1+o_{n}\left(1\right)\right)n\frac{\tau^{2}\mu_{n}^{2}}{2\sigma_{n}^{2}}\right\} \nonumber\\
	&= \exp\left\{ -\frac{1+o_{n}\left(1\right)}{4}\tau^{2}n\mu_{n}\right\} 
  \label{eq:MDP-k}
\end{align}
and
\begin{align}
  \mathbb{P}\left\{ \frac{1}{n}\sum\nolimits _{j=1}^{n} \log \frac{Q_{n}\left(y_{j,n}\right)}{P_{n}\left(y_{j,n}\right)}+\mu_{n} < -\tau\mu_{n} \right\} 
	&= \exp\left\{ -\left(1+o_{n}\left(1\right)\right)n\frac{\tau^{2}\mu_{n}^{2}}{2\sigma_{n}^{2}}\right\}  \nonumber\\
	&= \exp\left\{ -\frac{1+o_{n}\left(1\right)}{4}\tau^{2}n\mu_{n}\right\} 
  \label{eq:MDP-k-2}
\end{align}
hold as long as $\frac{\mu_{n}^{2}}{\sigma_{n}^{2}}\asymp\frac{\log n}{n}$
and $\mu_{n}\asymp\frac{\log n}{n}$.\end{lem}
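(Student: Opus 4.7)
\medskip

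The lemma is a moderate deviations statement for the i.i.d.\ triangular array $X_{j,n} := \log(Q_n(y_{j,n})/P_n(y_{j,n}))$, which under $P_n$ has mean $-\mu_n$ and variance $\sigma_n^2$. I would combine a Chernoff upper bound with an exponential-tilting (change of measure) lower bound. A crucial preliminary is the uniform control $\|X_{1,n}\|_\infty \leq c\sqrt{\mu_n}$ provided by Lemma~\ref{lemma:V-KL-UB}(1), which exploits the assumption that $P_n$ and $Q_n$ are bounded away from $0$ together with $\mathsf{KL}(P_n\|Q_n)=\mu_n\to 0$. Since $\mu_n\asymp\log n/n$ and $\mu_n^2/\sigma_n^2\asymp\log n/n$, one also has $\sigma_n^2\asymp\log n/n$, so $\mu_n\asymp\sigma_n^2$.

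Writing $M_n(\theta) = \mathbb{E}_{P_n}[e^{\theta X_{1,n}}]$, I would Taylor-expand the cumulant generating function as
\[
  \log M_n(\theta) \;=\; -\mu_n\theta \;+\; \tfrac{1}{2}\sigma_n^2\theta^2 \;+\; R_n(\theta),
\]
where boundedness of $X_{1,n}$ gives $|R_n(\theta)| \lesssim |\theta|^3\,\mathbb{E}[|X_{1,n}|^3] \lesssim |\theta|^3\,\sqrt{\mu_n}\,\sigma_n^2$ uniformly on $\{|\theta|\leq c/\sqrt{\mu_n}\}$. The Chernoff exponent is optimized at $\theta_* = \tau\mu_n/\sigma_n^2 = \Theta(\tau)$, which lies inside the valid neighborhood. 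Plugging $\theta_*$ into the Chernoff inequality gives
\[
  \mathbb{P}\Big\{\tfrac{1}{n}\textstyle\sum_j X_{j,n} + \mu_n > \tau\mu_n\Big\} \;\leq\; \exp\!\Big(\!-n\cdot\tfrac{\tau^2\mu_n^2}{2\sigma_n^2} + n|R_n(\theta_*)|\Big),
\]
and the cubic remainder satisfies $n|R_n(\theta_*)| \lesssim n\sqrt{\mu_n}\sigma_n^2 \asymp (\log n)^{3/2}/\sqrt{n} = o(1)$, which is negligible compared with the main term of order $\log n$; this yields the claimed $(1+o_n(1))$ factor.

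For the matching lower bound I would introduce the tilted product law $\tilde P_n^{\otimes n}$ with single-site density $\tilde P_n(y) = P_n(y)e^{\theta_* X(y)}/M_n(\theta_*)$ and use the change-of-measure identity $\mathbb{P}_{P_n^{\otimes n}}(A) = M_n(\theta_*)^n\,\mathbb{E}_{\tilde P_n^{\otimes n}}[e^{-\theta_*\sum_j X_j}\mathbf{1}_A]$. Under $\tilde P_n$, the mean of $X_{j,n}$ equals $(\log M_n)'(\theta_*) = -\mu_n + \sigma_n^2\theta_* + o(\mu_n) = (\tau-1)\mu_n + o(\mu_n)$, and the variance is $\sigma_n^2(1+o(1))$. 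Restricting to the sliver $A' = \{(\tau-1)\mu_n \leq \tfrac{1}{n}\sum_j X_{j,n} \leq (\tau-1)\mu_n + \eta_n\}$ with any $\eta_n$ satisfying $\sigma_n/\sqrt{n}\ll \eta_n \ll \mu_n$, Chebyshev's inequality gives $\tilde P_n^{\otimes n}(A') \geq 1/2$, while on $A'$ the Radon--Nikodym weight is at least $M_n(\theta_*)^n\exp(-n\theta_*((\tau-1)\mu_n + \eta_n))$. Combining the two, and noting $n\theta_*\eta_n = o(\log n)$, gives the matching lower bound $\exp\!\big(\!-(1+o(1))n\tau^2\mu_n^2/(2\sigma_n^2)\big)$.

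The second equality in (\ref{eq:MDP-k}) then follows from Lemma~\ref{lem:KL-Var-Hel} applied with $\varepsilon = \mu_n = o(1)$, which yields $\sigma_n^2 = 2\mu_n(1+o(1))$ and hence $\mu_n^2/(2\sigma_n^2) = \mu_n(1+o(1))/4$. The lower-tail identity (\ref{eq:MDP-k-2}) is obtained by the identical argument with $\theta_* = -\tau\mu_n/\sigma_n^2 < 0$. I expect the main technical obstacle to be the bookkeeping in the tilting step: one must simultaneously verify that the cubic remainder $R_n(\theta_*)$ is subleading in the exponent and choose a window $\eta_n$ that is both large enough to dominate typical $\tilde P_n$-fluctuations of $\tfrac{1}{n}\sum X_j$ and small enough to cost only $o(\log n)$ in the exponent. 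The key enabling tool for both is the sharp bound $\|X_{1,n}\|_\infty \lesssim \sqrt{\mu_n}$ from Lemma~\ref{lemma:V-KL-UB}, which controls the third moment through $\mathbb{E}[|X|^3]\leq \|X\|_\infty\sigma_n^2$ rather than $\mathbb{E}[|X|^3]\leq O(1)\cdot\sigma_n^2$.
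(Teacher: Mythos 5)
Your proof is essentially correct, but it takes a genuinely different route from the paper. The paper does not carry out the tilting argument itself: it verifies the hypotheses of a black-box moderate-deviation theorem, \cite[Theorem 6]{merlevede2009bernstein}, namely that $v^2=\sup_{i,n}\mathsf{Var}[\zeta_{i,n}]<\infty$, that $\mathsf{Var}[S_n]/n$ is bounded below, that $a_n=\sigma_n^2/(n\mu_n^2)\asymp 1/\log n=o(1)$, and that $na_n/(M_n^2\log^4 n)\to\infty$, where $M_n=\sup_i|\zeta_{i,n}|\lesssim \sqrt{\mu_n}/\sigma_n$ is controlled exactly via Lemma \ref{lemma:V-KL-UB}(1); it then reads off the two-sided asymptotic from that theorem and converts $\sigma_n^2$ to $2(1+o(1))\mu_n$ via Lemma \ref{lem:KL-Var-Hel}, just as you do. Your argument replaces the citation with a self-contained Cram\'er-type proof: a Chernoff bound with third-cumulant control $n|R_n(\theta_*)|\lesssim n\mu_n^{3/2}=o(1)$ for the upper bound, and an exponential change of measure restricted to a window of width $\eta_n$ with $\sigma_n/\sqrt{n}\ll\eta_n\ll\mu_n$ (such $\eta_n$ exists since $\mu_n\sqrt{n}/\sigma_n\asymp\sqrt{\log n}\to\infty$) for the matching lower bound. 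The key enabling estimate is the same in both proofs --- the $L^\infty$ bound $\lesssim\sqrt{\mu_n}$ on the log-likelihood ratio --- but your version makes transparent \emph{why} the Gaussian exponent emerges, at the cost of the window bookkeeping you flag: as written, your sliver $A'=[(\tau-1)\mu_n,(\tau-1)\mu_n+\eta_n]$ need not be centered at the tilted mean $(\tau-1)\mu_n+\delta_n$ (where $\delta_n$ may have either sign), so to apply Chebyshev cleanly one should either center the window at the tilted mean and check it still lies inside the target event $\{\bar X>(\tau-1)\mu_n\}$ by tilting with a slightly inflated parameter $(\tau+\epsilon_n)\mu_n/\sigma_n^2$, or note that $|\delta_n|\lesssim\mu_n^{3/2}\ll\sigma_n/\sqrt{n}\ll\eta_n$ so the centered window of half-width $\eta_n/2$ costs only an extra $n\theta_*\eta_n=o(\log n)$ in the exponent. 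These are routine repairs and do not affect the validity of the approach; the paper's route is shorter given the external theorem, while yours is more elementary and self-contained.
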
\begin{proof}This
lemma is a consequence of the moderate deviation theory. See Appendix
\ref{sec:Proof-of-Lemma-MDP}.\end{proof}

\begin{remark}The asymptotic limits presented in Lemma \ref{lem:moderate-deviation}
correspond to the Gaussian tail, meaning that some sort of the central limit theorem
holds in this regime. \end{remark}

We shall now freeze the input to be $x_{1}=\cdots=x_{n}=1$ and consider
the conditional error probability of the ML rule. Without loss of
generality, we assume $P_{0,n}$ and $P_{1,n}$ are minimally separated,
namely,
\begin{equation}
  \mathsf{KL}\left(P_{0,n}\hspace{0.3em}\|\hspace{0.3em}P_{1,n}\right) = \mathsf{KL}_{\min,n}.
  \label{eq:Hmin-01}
\end{equation}
In what follows, we will suppress the dependence on $n$ whenever
clear from the context, and let $\bm{x}^{\mathrm{ML}}$ represent
the ML estimate. We claim that it suffices to prove Theorem \ref{thm:lower-bound}
for the boundary regime where 
\begin{equation}
  \mathsf{KL}_{\min}
  \asymp  \frac{\log n}{np_{\mathrm{obs}}}
  \quad\text{and}\quad
  \mathsf{KL}_{\min}
  \leq  \frac{3.99\log n}{np_{\mathrm{obs}}}.
  \label{eq:boundary-regime}
\end{equation}
In fact, suppose instead that the error probability 
\[
  P_{\mathrm{e}} := \mathbb{P}\left\{ \mcr(\bm{x}^{\mathrm{ML}},\bm{x}) > 0 \right\} 
\]
tends to one in the regime (\ref{eq:boundary-regime}) but is bounded
away from one when $\mathsf{KL}_{\min}=o\left(\frac{\log n}{np_{\mathrm{obs}}}\right)$.
Then this indicates that, in the regime (\ref{eq:boundary-regime}),
one can always add extra noise\footnote{For instance, we can let $\tilde{y}_{i,j}=\begin{cases}
y_{i,j},\quad & \text{with probability }\rho_{0}\\
\mathsf{Unif}\left(m\right),\quad & \text{else}
\end{cases}$ with $\rho_{0}$ controlling the noise level.} to $y_{i,j}$ to decrease $\mathsf{KL}_{\min}$ while significantly
improving the success probability, which results in contradiction. 
Moreover, when $P_n(y)$ and $Q_n(y)$ are both bounded away from 0, it follows from Lemma \ref{lem:KL-Var-Hel} that 
\begin{equation}
	\mathsf{Var}_{y\sim P_{0,n}} \left[ \log\frac{P_{1,n}(y)}{ P_{0,n}(y) } \right] \asymp \mathsf{KL}_{\min} \asymp \frac{\log n}{np_{\mathrm{obs}}}
	\quad \text{and} \quad 
	\frac{\mathsf{KL}_{\min}^2}{	\mathsf{Var}_{y\sim P_{0,n}} \left[ \log\frac{P_{1,n}(y)}{ P_{0,n}(y) } \right] }
	\asymp \frac{\log n}{np_{\mathrm{obs}}}.
	\label{eq:Var-KL}
\end{equation}

Consider a set $\mathcal{V}_{1}=\left\{ 1,\cdots,\delta n\right\} $
for some small constant $\delta>0$. We first single out a subset
$\mathcal{V}_{2}\subseteq\mathcal{V}_{1}$ such that the local likelihood
ratio score---when restricted to samples over the subgraph induced
by $\mathcal{V}_{1}$---is sufficiently large. More precisely, we take
\[
  \mathcal{V}_{2} := \left\{ i\in\mathcal{V}_{1}~\left|~  
   \sum\nolimits _{j:j\in\mathcal{V}_{1}}\log\frac{P_{1}(y_{i,j})}{P_{0}(y_{i,j})}>-2\delta np_{\mathrm{obs}}\mathsf{KL}\left(P_{0}\hspace{0.3em}\|\hspace{0.3em}P_{1}\right)\right.\right\} ;
\]
here and throughout, we set $\frac{P_{1}\left(y_{i,j}\right)}{P_{0}\left(y_{i,j}\right)}=1$
for any $(i,j)\notin\Omega$ for notational simplicity. Recall that
for each $i$, the ML rule favors $P_{1}$ (resp.~$x_{i}=2$) against
$P_{0}$ (resp.~$x_{i}=1$) if and only if
\[
  \sum\nolimits_{j} \log \frac{P_{1}(y_{i,j})}{P_{0}(y_{i,j})} > 0,
\]
which would happen if 
\[
  \sum_{j:\text{ }j\in\mathcal{V}_{1}}\log\frac{P_{1}(y_{i,j})}{P_{0}(y_{i,j})}
  > -2 \delta np_{\mathrm{obs}} \mathsf{KL}\left( P_{0} \hspace{0.3em}\|\hspace{0.3em} P_{1}\right)
  \text{ and }
  \sum_{j:\text{ }j\notin\mathcal{V}_{1}}\log\frac{P_{1}(y_{i,j})}{P_{0}(y_{i,j})}
  > 2 \delta np_{\mathrm{obs}} \mathsf{KL}\left( P_{0} \hspace{0.3em}\|\hspace{0.3em} P_{1}\right).
\]
Thus, conditional on $\mathcal{V}_{2}$ we can lower bound the probability
of error by
\begin{eqnarray}
  P_{\mathrm{e}} & \geq & \mathbb{P}\left\{ \exists i\in\mathcal{V}_{2}:\sum_{j}\log\frac{P_{1}(y_{i,j})}{P_{0}(y_{i,j})}>0\right\} \nonumber \\
   & \geq & \mathbb{P}\left\{ \exists i\in\mathcal{V}_{2}:\sum_{j:\text{ }j\in\mathcal{V}_{1}}\log\frac{P_{1}(y_{i,j})}{P_{0}(y_{i,j})}>-2\delta np_{\mathrm{obs}}\mathsf{KL}\left(P_{0}\hspace{0.3em}\|\hspace{0.3em}P_{1}\right)  \right. \nonumber\\
	& & \qquad\qquad \left. \text{ and }\sum_{j:\text{ }j\notin\mathcal{V}_{1}}\log\frac{P_{1}(y_{i,j})}{P_{0}(y_{i,j})}>2\delta np_{\mathrm{obs}}\mathsf{KL}\left(P_{0}\hspace{0.3em}\|\hspace{0.3em}P_{1}\right)\right\} \nonumber \\
   & = & \mathbb{P}\left\{ \exists i\in\mathcal{V}_{2}:\sum_{j:\text{ }j\notin\mathcal{V}_{1}}\log\frac{P_{1}(y_{i,j})}{P_{0}(y_{i,j})}>2\delta np_{\mathrm{obs}}\mathsf{KL}\left(P_{0}\hspace{0.3em}\|\hspace{0.3em}P_{1}\right)\right\} ,
  \label{eq:Pe-LB1}
\end{eqnarray}
where the last identity comes from the definition of $\mathcal{V}_{2}$. 

We pause to remark on why (\ref{eq:Pe-LB1}) facilitates analysis.
To begin with, $\mathcal{V}_{2}$ depends only on those samples lying
within the subgraph induced by $\mathcal{V}_{1}$, and is thus independent
of $\sum_{j\notin\mathcal{V}_{1}}\log\frac{P_{1}\left(y_{i,j}\right)}{P_{0}\left(y_{i,j}\right)}$.
More importantly, the scores $\left\{ s_i := \sum_{j:\text{ }j\notin\mathcal{V}_{1}}\log\frac{P_{1}(y_{i,j})}{P_{0}(y_{i,j})}\right\} $
are statistically independent across all $i\in\mathcal{V}_{1}$ as
they rely on distinct samples. These allow us to derive that, conditional on $\mathcal{V}_{2}$,
\begin{eqnarray}
  (\ref{eq:Pe-LB1}) & = & 1-\prod_{i\in\mathcal{V}_{2}}\left(1-\mathbb{P}\left\{ \sum\nolimits _{j:j\notin\mathcal{V}_{1}}\log\frac{P_{1}(y_{i,j})}{P_{0}(y_{i,j})}>2\delta np_{\mathrm{obs}}\mathsf{KL}\left(P_{0}\hspace{0.3em}\|\hspace{0.3em}P_{1}\right)\right\} \right)\nonumber \\
   & \geq & 1-\left(1-\exp\left\{ -\left(1+o\left(1\right)\right)\left(1+2\delta\right)^{2}np_{\mathrm{obs}}\frac{\mathsf{KL}_{\min}}{4}\right\} \right)^{\left|\mathcal{V}_{2}\right|}
   \label{eq:Pe-LB2}\\
   & \geq & 1-\exp\left\{ -\left|\mathcal{V}_{2}\right|\exp\left\{ -\left(1+o\left(1\right)\right)\left(1+2\delta\right)^{2}np_{\mathrm{obs}}\frac{\mathsf{KL}_{\min}}{4}\right\} \right\} ,
  \label{eq:Pe-LB3}
\end{eqnarray}
where the last line results from the elementary inequality $1-x\leq e^{-x}$.
To see why (\ref{eq:Pe-LB2}) holds, we note that according to the Chernoff bound, the number of samples
linking each $i$ and $\overline{\mathcal{V}}_{1}$ (i.e.~$\left|\left\{ j\notin\mathcal{V}_{1}:(i,j)\in\Omega\right\} \right|$)
is at most $np_{\mathrm{obs}}$ with high probability, provided that (i) $\frac{np_{\mathrm{obs}}}{\log n}$
is sufficiently large, and (ii) $n$ is sufficiently large.
These taken collectively with Lemma
\ref{lem:moderate-deviation} and (\ref{eq:Var-KL}) yield 
\begin{align*}
  &\mathbb{P}\left\{ \sum\nolimits _{j:\text{ }j\notin\mathcal{V}_{1}}\log\frac{P_{1}(y_{i,j})}{P_{0}(y_{i,j})}>2\delta np_{\mathrm{obs}}\mathsf{KL}\left(P_{0}\hspace{0.3em}\|\hspace{0.3em}P_{1}\right)\right\} \\
  &\qquad\qquad \geq \exp\left\{ -\left(1+o\left(1\right)\right) \left(1+2\delta\right)^{2} np_{\mathrm{obs}}\frac{\mathsf{KL}_{\min}}{4}\right\} ,
\end{align*}
thus justifying (\ref{eq:Pe-LB2}). 

To establish Theorem \ref{thm:lower-bound}, we would need to show
that (\ref{eq:Pe-LB3}) (and hence $P_{\mathrm{e}}$) is lower bounded
by $1-o\left(1\right)$ or, equivalently, 
\[
  \left|\mathcal{V}_{2}\right|\exp\left\{ -\left(1+o\left(1\right)\right)\left(1+2\delta\right)^{2}np_{\mathrm{obs}}\mathsf{KL}_{\min}/4\right\} \rightarrow\infty.
\]
This condition would hold if 
\begin{equation}
  \left(1+2\delta\right)^{2}np_{\mathrm{obs}}\mathsf{KL}_{\min}/4<\left(1-\delta\right)\log n
  \label{eq:H2-UB}
\end{equation}
 and
\begin{equation}
  \left|\mathcal{V}_{2}\right|=\left(1-o\left(1\right)\right)|\mathcal{V}_{1}|=\left(1-o\left(1\right)\right)\delta n,
  \label{eq:V2-size}
\end{equation}
since under the above two hypotheses one has
\begin{eqnarray*}
  \left|\mathcal{V}_{2}\right|\exp\left\{ -\left(1+2\delta\right)^{2}np_{\mathrm{obs}}\mathsf{KL}_{\min}/4\right\}  
  & \geq & \left|\mathcal{V}_{2}\right|\exp\left\{ -\left(1-\delta\right)\log n\right\} \\
  & \geq & \delta\exp\left\{ \log n-\left(1-\delta\right)\log n\right\} \\
  & \geq & \delta n^{\delta}\rightarrow\infty.
\end{eqnarray*}
The first condition (\ref{eq:H2-UB}) is a consequence from (\ref{eq:MinimaxBound})
as long as $\delta$ is sufficiently small. It remains to verify the
second condition (\ref{eq:V2-size}). 

When $np_{\mathrm{obs}}>c_{0}\log n$ for some sufficiently large
constant $c_{0}>0$,
each $i\in\mathcal{V}_{1}$
is connected to at least $\left(1-\delta\right)p_{\mathrm{obs}}|\mathcal{V}_{1}|$
vertices in $\mathcal{V}_{1}$ with high probability, meaning that the number of random
variables involved in the sum $\text{ }\sum_{j:\text{ }j\in\mathcal{V}_{1}}\log\frac{P_{1}\left(y_{i,j}\right)}{P_{0}\left(y_{i,j}\right)}$
concentrates around $|\mathcal{V}_1| p_{\mathrm{obs}}$. Lemma \ref{lem:moderate-deviation}
thus implies that
\begin{align*}
   & \mathbb{P}\left\{ \sum\nolimits _{j:\text{ }j\in\mathcal{V}_{1}} \log\frac{P_{1}(y_{i,j})}{P_{0}(y_{i,j})} 
     \leq - 2\delta np_{\mathrm{obs}}\mathsf{KL}\left(\mathbb{P}_{0}\|\mathbb{P}_{1}\right)\right\}  \nonumber \\
   & \qquad\qquad = \mathbb{P}\left\{ \sum\nolimits _{j:\text{ }j\in\mathcal{V}_{1}} \log\frac{P_{1}(y_{i,j})}{P_{0}(y_{i,j})}
    \leq - 2|\mathcal{V}_1| p_{\mathrm{obs}} \mathsf{KL}\left(\mathbb{P}_{0}\|\mathbb{P}_{1}\right)\right\}   \nonumber \\
   & \qquad\qquad \leq  \exp\left\{ -c_4 \left( |\mathcal{V}_1| p_{\mathrm{obs}}\right)\mathsf{KL}_{\min}  \right\} 
     ~\leq~ \exp\left\{ -c_5 \left(\delta np_{\mathrm{obs}}\right)\mathsf{KL}_{\min}\right\}
\end{align*}
for some constants $c_4,c_5>0$, provided that $n$ is sufficiently large.
This gives rise to an upper bound
\begin{eqnarray*}
  \mathbb{E}\left[\left|\mathcal{V}_{1}\backslash\mathcal{V}_{2}\right|\right] 
   & = & \left|\mathcal{V}_{1}\right|\cdot\mathbb{P}\left\{ \sum\nolimits _{j:\text{ }j\in\mathcal{V}_{1}}\log\frac{P_{1}(y_{i,j})}{P_{0}(y_{i,j})}\leq-2\delta n\mathsf{KL}\left(P_{0}\|P_{1}\right)\right\} \\
   & \leq & \delta n\cdot\exp\left\{ -c_{5}\delta np_{\mathrm{obs}}\mathsf{KL}_{\min}\right\} \\
   & \overset{(\text{i})}{=} & \delta n\cdot n^{-\Theta(\delta)}
   = o\left(n\right),
\end{eqnarray*}
where (i) arises from the condition (\ref{eq:boundary-regime}). As
a result, Markov's inequality implies that with probability approaching
one, 
\[
  \left|\mathcal{V}_{1}\backslash\mathcal{V}_{2}\right| = o\left(n\right)
\]
or, equivalently, 
$  \left|\mathcal{V}_{2}\right| = \left(1-o\left(1\right)\right)\delta n.
$
This finishes the proof of Theorem \ref{thm:lower-bound}.

\section{Discussion}
\label{sec:discussion}

We have developed an efficient nonconvex paradigm 
for a class of discrete assignment problems. There are numerous
questions we leave open that might be interesting for future
investigation. For instance, it can be seen from
Fig.~\ref{fig:numerics-RCM} and Fig.~\ref{fig:numerics-Gaussian}  that
the algorithm returns reasonably good estimates even when we are below
the information limits.  A natural question is this: how can we
characterize the accuracy of the algorithm if one is satisfied with
approximate solutions? In addition, this work assumes the index set
$\Omega$ of the pairwise samples are drawn uniformly at
random. Depending on the application scenarios, we might encounter
other measurement patterns that cannot be modeled in this random
manner; for example, the samples might only come from nearby objects and hence the sampling pattern might be highly local (see, e.g.~\cite{chen2016community,globerson2015hard}). Can we determine the performance of the algorithm for more
general sampling set $\Omega$?  Moreover, the log-likelihood functions
we incorporate in the data matrix $\bm{L}$ might be imperfect. Further
study could help understand the stability of the algorithm in the
presence of model mismatch.

Returning to Assumption \ref{assumption-KL-D}, we remark that this assumption is imposed primarily out of computational concern. In fact, $\mathsf{KL}_{\max}/\mathsf{KL}_{\min}$ being exceedingly large might actually be a favorable case from an information theoretic viewpoint,  as it indicates that the hypothesis corresponding to  $\mathsf{KL}_{\max}$ is much easier to preclude compared to other hypotheses. 
It would be interesting to establish rigorously the performance of the PPM without this assumption and, in case it becomes suboptimal, how shall we modify the algorithm so as to be more adaptive to the most general class of noise models.

Moving beyond joint alignment, we are interested in seeing the potential benefits of the PPM on other discrete problems. For instance, the joint alignment problem falls
under the category of maximum \emph{a posteriori} (MAP) inference
in a discrete Markov random field, which spans numerous applications
including segmentation, object detection, error correcting codes,
and so on \cite{blake2011markov,ravikumar2006quadratic, guibas2014scalable}. 
Specifically, consider $n$ discrete variables $x_{i}\in[m]$, $1\leq i\leq n$.
We are given a set of unitary potential functions (or prior distributions) $\left\{ \psi_{i}(x_{i})\right\} _{1\leq i\leq n}$ as well
as a collection of pairwise potential functions (or likelihood functions) $\left\{ \psi_{i,j}(x_{i},x_{j})\right\} _{(i,j)\in\mathcal{G}}$
over some graph $\mathcal{G}$. The goal is to compute
the MAP assignment 
\begin{equation}
  \bm{x}_{\mathrm{MAP}} ~:=~ \arg\max_{\bm{z}}
   \prod_{i=1}^{n} \psi_{i} (z_{i}) \prod_{(i,j)\in\mathcal{G}} \psi_{i,j} (z_{i},z_{j})  
   ~=~\arg\max_{\bm{z}} 
	\left\{ \sum_{i=1}^{n} \log \psi_{i} (z_{i}) + \sum_{(i,j)\in\mathcal{G}} \log \psi_{i,j} (z_{i},z_{j}) \right\}.
\end{equation}
Similar to (\ref{eq:state-vector}) and (\ref{eq:Input-Matrix}), one can introduce the vector
$\bm{z}_{i}=\bm{e}_{j}$ to represent $z_{i}=j$, and use a matrix
$\bm{L}_{i,j}\in\mathbb{R}^{m\times m}$ to encode each pairwise log-potential
function $\log\psi_{i,j}\left(\cdot,\cdot\right)$, $(i,j) \in \mathcal{G}$. 
The unitary potential function $\psi_{i}\left(\cdot\right)$ can also
be encoded by a diagonal matrix $\bm{L}_{i,i} \in \mathbb{R}^{m\times m}$ 
\begin{equation}
	\left(\bm{L}_{i,i}\right)_{\alpha,\alpha}=\log\psi_{i}(\alpha),\qquad 1 \leq \alpha \leq m
\end{equation}
so that $\log\psi(z_i) = \bm{z}_i^{\top} \bm{L}_{i,i} \bm{z}_i$. This enables a quadratic form representation of MAP estimation:
\begin{eqnarray}
  \text{maximize}_{\bm{z}} &  & \bm{z}^{\top}\bm{L}\bm{z} \\
  \text{subject to } &  & \bm{z}_{i}\in\left\{ \bm{e}_{1},\cdots,\bm{e}_{m}\right\} ,\quad 1\leq i\leq n. \nonumber
\end{eqnarray}
As such, we expect the PPM to be effective in
solving many instances of such MAP inference problems. One of the key
questions amounts to finding an appropriate initialization that allows
efficient exploitation of the unitary prior belief
$\psi_{i}(\cdot)$. We leave this for future work.

\section*{Acknowledgements}

E.~C.~is partially supported by NSF via grant DMS-1546206, and by the
Math + X Award from the Simons Foundation.  Y.~C.~is supported by the
same award.  We thank Qixing Huang for motivating discussions about
the join image alignment problem.  Y.~Chen is grateful to Qixing
Huang, Leonidas Guibas, and Nan Hu for helpful discussions about joint
graph matching.

\appendix

\section{Proof of Theorem \ref{thm:ExactRecovery-General-2}}
\label{sec:proof-thm:ExactRecovery-General-2}

We will concentrate on proving the case where Assumption \ref{assumption-Pmin} is violated. 
Set 
$y_{\max}:=\arg\max_{y}P_{0}(y)$ and $y_{\min}:=\arg\min_{y}P_{0}(y)$.
Since $m$ is fixed, it is seen that 
$
  P_{0}(y_{\max}) \asymp 1.
$
We also have $P_{0}(y_{\min}) \rightarrow 0$. 
Denoting by $\nu:=\frac{P_{0}(y_{\max})}{P_{0}(y_{\min})}$ the
dynamic range of $P_{0}$ and introducing the metric
\[
  d\left(P \hspace{0.3em}\|\hspace{0.3em} Q\right) := \sum\nolimits_{y}P(y) \left| \log\frac{P(y)}{Q(y)} \right|,
\]
we obtain
\[
   \max_{l}d\left(P_{0}\|P_{l}\right)
   = \max_{l} \sum_{y}P_{0}(y) \left|\log \frac{P_{0}(y)}{P_{l}(y)} \right|
   \asymp P_{0}(y_{\max}) \left| \log \frac{P_{0}(y_{\max})}{P_{0}(y_{\min})} \right|
   \asymp \log\nu \gg 1.
\]
The elementary inequality $\mathsf{KL}(P \hspace{0.3em}\|\hspace{0.3em} Q) \leq d( P\hspace{0.3em}\|\hspace{0.3em} Q)$, 
together with the second Pinsker's inequality $\mathsf{KL}(P \hspace{0.3em}\| \hspace{0.3em} Q)+\sqrt{2\mathsf{KL}(P \hspace{0.3em}\| \hspace{0.3em}Q)}\geq d(P \hspace{0.3em}\|\hspace{0.3em} Q)$
\cite[Lemma 2.5]{tsybakov2008introduction}, reveals that
\[
  \mathsf{KL}_{\max}  \asymp  \max_{l} d(P_{0} \hspace{0.3em}\|\hspace{0.3em} P_{l})  \asymp \log\nu.
\]
Making use of Assumption \ref{assumption-KL-D} we get 
\begin{equation}
  d(P_{0} \hspace{0.3em}\| \hspace{0.3em} P_{l}) \asymp \mathsf{KL}_{\min} \asymp \mathsf{KL}_{\max}  \asymp \log\nu,  \qquad 1\leq l <m.
  \label{eq:dmin}
\end{equation}

Next, for each $1\leq l< m$ we single out an element  
$
  y_{l} := \arg\max_{y}  P_{0}(y) \left| \log\frac{P_{0}(y)}{P_{l}(y)} \right|.
$
This element is important because, when $m$ is fixed,  
\[
  d(P_{0} \hspace{0.3em}\|\hspace{0.3em} P_{l})\asymp P_{0}(y_{l})\left|\log\frac{P_{0}(y_{l})}{P_{l}(y_{l})}\right|
  \lesssim P_{0}(y_{l})\log\nu.
\]
As a result, (\ref{eq:dmin}) would only happen if $P_{0}(y_{l}) \asymp 1$
and $\frac{P_{0}(y_{l})}{P_{l}(y_{l})} \rightarrow \infty$. 

We are now ready to prove the theorem. With $P_{l}$ replaced by $\tilde{P}_{l}$
as defined in (\ref{eq:tilde-P}), one has
\begin{eqnarray*}
  \mathsf{KL}\big( \tilde{P}_{0} \hspace{0.3em}\|\hspace{0.3em} \tilde{P}_{l} \big) 
	& \geq &  2 \mathsf{TV}^2 \big(\tilde{P}_{0},\tilde{P}_{l} \big)
   	\geq  \frac{1}{2} \big( \tilde{P}_{0}(y_{l}) -  \tilde{P}_{l}(y_{l}) \big)^{2} \nonumber\\
	& \asymp &   \big( P_{0}(y_{l}) -  P_{l}(y_{l}) \big)^{2}
   \asymp  P_{0}^2(y_{l})
   \asymp 1,
\end{eqnarray*}
which exceeds the threshold $4.01\log n/(np_{\mathrm{obs}})$ as long
as $p_{\mathrm{obs}}\geq c_{1}\log n/n$ for some sufficiently large
constant $c_{1}$. 
In addition, since $\tilde{P}_0(y)$ is bounded away from both 0 and 1, it is easy to see that $\mathsf{KL}\big( \tilde{P}_{0} \hspace{0.3em}\|\hspace{0.3em} \tilde{P}_l) \lesssim 1$ for any $l$ and, hence, Assumption  \ref{assumption-KL-D} remains valid. 
Invoking Theorem \ref{thm:ExactRecovery-General} concludes the proof.

\section{Proof of Lemma \ref{lem:two-class}\label{sec:Proof-of-Lemma-two-class}}

(1) It suffices to prove the case where $\mathsf{KL}_{\min}=\mathsf{KL}\left(P_{0}\hspace{0.3em}\|\hspace{0.3em}P_{1}\right)$.
Suppose that 
\[
  \left|P_{0}(y_{0})-P_{l}(y_{0})\right| = \max\nolimits_{j,y}\left|P_{0}(y)-P_j(y)\right|
\]
for some $0\leq l,y_{0}<m$. In view of Pinsker's inequality \cite[Lemma 2.5]{tsybakov2008introduction},
\begin{eqnarray}
  \mathsf{KL}( P_{0} \hspace{0.3em}\|\hspace{0.3em} P_{1} ) 
  & \geq & 2\mathsf{TV}^{2}( P_{0},P_{1} )
  = 2\left(\frac{1}{2}\sum_{y=0}^{m-1}\left|P_{0}\left(y\right)-P_{1}\left(y\right)\right|\right)^{2} \nonumber \\
  & = & \frac{1}{2} \left(\sum_{y=0}^{m-1}\left|P_{0}\left(y\right)-P_{0}\left(y-1\right)\right|\right)^{2}\nonumber \\
  & \geq & 
   \frac{1}{2} \left|P_{0}\left(y_{0}\right)-P_{0}\left(y_{0}-l\right)\right|^{2}
   =  \frac{1}{2} \max\nolimits _{j,y} | P_{0}(y)-P_j(y) |^2.
  \label{eq:KL-LB}
\end{eqnarray}
In addition, for any $1\leq j <m$, 
\begin{align}
  \mathsf{KL}\left(P_{0}\hspace{0.3em}\|\hspace{0.3em}P_{j}\right)
  ~& \overset{(\text{a})}{\leq}~  \chi^{2} \left( P_{0}\hspace{0.3em}\|\hspace{0.3em}P_{j} \right)
  ~:=~ \sum_{y}\frac{ \left(P_{0}\left(y\right)-P_{j}\left(y\right)\right)^{2} }{ P_{j}\left(y\right) } \nonumber\\
  ~&\overset{(\text{b})}{\asymp}~  \sum_{y}\left(P_{0}\left(y\right)-P_{j}\left(y\right)\right)^2  \nonumber\\
  ~& \overset{(\text{c})}{\lesssim}~  \max\nolimits_{j,y} |P_{0}(y)-P_j(y)|^2,
  \label{eq:KL-UB}
\end{align}
where (a) comes from \cite[Eqn.~(5)]{sason2015f},
(b) is a consequence from Assumption \ref{assumption-Pmin}, and (c)
follows since $m$ is fixed. Combining (\ref{eq:KL-LB}) and (\ref{eq:KL-UB})
establishes $\mathsf{KL}_{\min}\asymp\mathsf{KL}_{\max}$. 

(2) Suppose that $P_{0}(y^{*})=\min_{y}P_{0}(y)$
for $y^{*}= \lfloor m/2 \rfloor$, and that $\mathsf{KL}\left(P_{0}\hspace{0.3em}\|\hspace{0.3em}P_{l}\right)=\mathsf{KL}_{\min}$
for some $0<l \leq y^{*}$. Applying Pinsker's inequality again gives
\begin{eqnarray}
  \mathsf{KL}\left( P_{0}\hspace{0.3em}\|\hspace{0.3em}P_{l} \right) 
  & \geq & 2\mathsf{TV}^{2}\left(P_{0},P_{l}\right)
  = \frac{1}{2} \left(\sum _{y=0}^{m-1}\left|P_{l}\left(y\right)-P_{0}\left(y\right)\right|\right)^{2} \nonumber\\
  & = & \frac{1}{2} \left(\sum _{y=0}^{m-1}\left|P_{0}\left(y-l\right)-P_{0}\left(y\right)\right|\right)^{2}\nonumber \\
  & \geq & \frac{1}{2} \left(\left|P_{0}(y^{*})-P_{0}\left(y^{*}-l\right)\right|+\sum _{k=1}^{\left\lceil y^{*}/l\right\rceil -1}\big|P_{0}\left(kl\right)-P_{0}\left((k-1)l\right)\big|\right)^{2}\nonumber \\
  & \geq & \frac{1}{2} \left(\left|P_{0}(y^{*})-P_{0}\left(\left(\left\lceil y^{*}/l\right\rceil -1\right)l\right)\right|+\sum _{k=1}^{\left\lceil y^{*}/l\right\rceil -1}\big| P_{0}\left(kl\right)-P_{0}\left((k-1)l\right)\big|\right)^{2}
  \label{eq:monotone}\\
 & \geq &  \frac{1}{2} \left|P_{0}(y^{*})-P_{0}(0)\right|^{2}
  = \frac{1}{2} \max_{y} |P_{0}(0)-P_0(y) |^2 \nonumber\\
 & = & \frac{1}{2} \max_{j,y} |P_{0}(y)-P_j(y) |^2,
  \label{eq:KL-LB-1}
\end{eqnarray}
where (\ref{eq:monotone}) follows from the unimodality assumption, and the last line results from the facts  $P_{0}(0) = \max_{j,y}P_j(y)$ and $P_{0}(y^*) = \min_{j,y}P_j(y)$. 
These taken collectively with (\ref{eq:KL-UB}) finish the proof.

\section{Proof of Lemma \ref{lemma:V-KL-UB} \label{sec:Proof-of-Lemma-V-KL-UB}}

(1) Since $\log(1+x)\leq x$ for any $x\geq0$, we get
\begin{eqnarray}
  \left|\log\frac{Q\left(y\right)}{P\left(y\right)}\right| & = & 
  \begin{cases}
    \log\left(1+\frac{Q\left(y\right)-P\left(y\right)}{P\left(y\right)}\right),\quad & \text{if }Q\left(y\right)\geq P\left(y\right)\\
    \log \left(1+\frac{P\left(y\right)-Q\left(y\right)}{Q\left(y\right)}\right), & \text{else}
  \end{cases}
   \\
   & \leq & \frac{\left|Q\left(y\right)-P\left(y\right)\right|}{\min\left\{ P\left(y\right),Q\left(y\right)\right\} }
  \label{eq:UB-LR-2}\\
   & \leq & ~\frac{2\mathsf{TV}\left(P,Q\right)}{\min\left\{ P\left(y\right),Q\left(y\right)\right\} }
  ~\leq~ \frac{\sqrt{2\mathsf{KL}\left(P\hspace{0.3em}\|\hspace{0.3em}Q\right)}}{\min\left\{ P\left(y\right),Q\left(y\right)\right\} },
  \label{eq:UB-LR-3}
\end{eqnarray}
where the last inequality comes from Pinsker's inequality. 

(2) Using the inequality (\ref{eq:UB-LR-2}) once again as well as
the definition of $\kappa_{0}$, we obtain
\begin{align}
  & \mathbb{E}_{y\sim P}\left[\left(\log\frac{P\left(y\right)}{Q\left(y\right)}\right)^{2}\right]
  = \sum\nolimits _{y}P(y)\left(\log\frac{P\left(y\right)}{Q\left(y\right)}\right)^{2}\nonumber \\
  & \quad\leq\sum\nolimits _{y} P(y)\frac{|P\left(y\right)-Q(y)|^{2}}{\min\left\{ P^{2}(y),Q^{2}\left(y\right)\right\} }
  = \sum\nolimits _{y} \frac{|P\left(y\right)-Q(y)|^{2}}{\min\left\{ \frac{P(y)}{Q(y)},\frac{Q(y)}{P(y)}\right\} Q(y)} \\
  & \quad\leq~ \kappa_{0} \sum\nolimits _{y}\frac{\left(Q\left(y\right)-P(y)\right)^{2}}{Q(y)}.
  \label{eq:V-chi2-1}
\end{align}
Note that $\sum_y\frac{\left(Q\left(y\right)-P\left(y\right)\right)^{2}}{Q\left(y\right)}$
is exactly the $\chi_{2}$ divergence between $P$ and $Q$,
which  satisfies  \cite[Proposition 2]{dragomir2000upper}
\begin{equation}
  \chi_{2}\left(P\hspace{0.3em}\|\hspace{0.3em}Q\right)\leq2\kappa_{0}\mathsf{KL}\left(P\hspace{0.3em}\|\hspace{0.3em}Q\right).
  \label{eq:chi2-KL-UB}
\end{equation}
Substitution into (\ref{eq:V-chi2-1}) concludes the proof.

\section{Proof of Lemma \ref{lem:KL-Var-Hel}\label{sec:Proof-of-Lemma-Chernoff-LB}}

For notational simplicity, let
\[
  \mu := \mathbb{E}_{y\sim P}\left[\log\frac{P\left(y\right)}{Q\left(y\right)}\right] 
   = \mathsf{KL}\left(P\hspace{0.3em}\|\hspace{0.3em}Q\right)\quad\text{and}\quad\sigma
   =  \sqrt{{ \mathsf{Var} }_{y\sim P}\left[\log\frac{P\left(y\right)}{Q\left(y\right)}\right]}.
\]
We first recall from our calculation in (\ref{eq:UB-LR-3}) that 
\[
  \max \left\{ \frac{\left|Q\left(y\right)-P(y)\right|}{Q(y)},\frac{\left|Q\left(y\right)-P(y)\right|}{P(y)}\right\} 
  \leq \frac{\sqrt{2\mu}}{\min\left\{ P(y),Q(y)\right\} } 
  = O ( \sqrt{\mu} ),
\]
where the last identity follows since $P(y)$ and $Q(y)$ are all bounded away from 0. 
Here and below,  the notation $f(\mu)=O(\sqrt{\mu})$ means $|f(\mu)|\leq c_{0}\sqrt{\mu}$ for some universal constant
$c_{0}>0$. 
This fact tells us that
\begin{equation}
  \frac{P(y)}{Q(y)},\frac{Q(y)}{P(y)} \in \left[1 \pm O(\sqrt{\mu}) \right],
  \label{eq:max-PQ-ratio}
\end{equation}
thus indicating that
\begin{align*}
  \log\left(\frac{ Q(y) }{ P(y)} \right)
  &= \log \left( 1+\frac{ Q(y)-P(y) }{ P(y)} \right)
  = \frac{Q(y)-P(y)}{P(y)}+O\left(\frac{\left|Q(y)-P(y)\right|^{2}}{P^{2}(y)}\right) \\
  &= \left(1 + O\left(\sqrt{\mu}\right)\right) \frac{Q\left(y\right)-P(y)}{P(y)}.
\end{align*}
All of this allows one to write
\begin{eqnarray*}
  \sigma^{2} & = & \mathbb{E}_{y\sim P}\left[\left(\log\frac{Q\left(y\right)}{P\left(y\right)}\right)^{2}\right]-\mu^{2}=\left(1+O\left(\sqrt{\mu}\right)\right)\mathbb{E}_{y\sim P}\left[\left(\frac{Q\left(y\right)-P(y)}{P\left(y\right)}\right)^{2}\right]-\mu^{2}\\
   & \overset{\text{(a)}}{=} & \left( 1 + O\left(\sqrt{\mu}\right)\right)\mathbb{E}_{y\sim Q}\left[\left(\frac{Q\left(y\right)-P(y)}{Q\left(y\right)}\right)^{2}\right]-\mu^{2} \\
   & = &\left(1+O\left(\sqrt{\mu}\right)\right)\chi^{2}\left(P\hspace{0.3em}\|\hspace{0.3em}Q\right)-\mu^{2}\\
   & = & 2\left(1+O\left(\sqrt{\mu}\right)\right)\mu. 
\end{eqnarray*}
Here, (a) arises due to (\ref{eq:max-PQ-ratio}), as the difference $Q(y)/P(y)$ can be absorbed into the prefactor $1+O(\mu)$ by adjusting the constant in $O(\mu)$ appropriately. 
%
The last line follows since, by  \cite[Proposition 2]{dragomir2000upper},
\[
  \frac{\mu}{\chi^{2}\left(P\hspace{0.3em} \| \hspace{0.3em}Q \right)}
  = \frac{\mathsf{KL}\left(P\hspace{0.3em} \| \hspace{0.3em}Q \right)}{\chi^{2}\left(P\hspace{0.3em}\|\hspace{0.3em}Q\right)}
  = \left(1+O\left(\max\left\{ \max_{y}\frac{Q(y)}{P(y)},\max_{y}\frac{P(y)}{Q(y)}\right\} \right)\right)\frac{1}{2}
  =  \left(1+O\left(\sqrt{\mu}\right)\right)\frac{1}{2}.
\]

Furthermore, it follows from \cite[Fact 1]{chen2015information} that
\[
  \left(4-\log R\right)\mathsf{H}^{2}\left(P,Q\right)
  \leq \mathsf{KL}\left(P\hspace{0.3em}\|\hspace{0.3em}Q\right)
  \leq \left(4+2\log R\right)\mathsf{H}^{2}\left(P,Q\right),
\]
where $R:=\max\left\{ \max_{y}\frac{P(y)}{Q(y)},\max_{y}\frac{Q(y)}{P(y)}\right\} $.
In view of (\ref{eq:max-PQ-ratio}), it is seen that $\log R=O\left(\sqrt{\mu}\right)$,
thus establishing (\ref{eq:lemma-H2-KL}).

\section{Proof of Lemma \ref{lem:random-matrix-norm}\label{sec:Proof-of-Lemma-Random-Matrix-Norm}}

It is tempting to invoke the matrix Bernstein inequality \cite{tropp2015introduction} to analyze random block matrices,
but it loses a logarithmic factor in comparison to the  bound
advertised in Lemma \ref{lem:random-matrix-norm}. As it turns out, it would be better
to resort to Talagrand's inequality \cite{talagrand1995concentration}. 

The starting point is to use the standard moment method  and reduce to the case with independent entries
(which has been studied in \cite{seginer2000expected,bandeira2014sharp}). Specifically,
a standard symmetrization argument \cite[Section 2.3]{tao2012topics}
gives 
\begin{equation}
  \mathbb{E}\left[\|\bm{M}\|\right]\leq\sqrt{2\pi}\mathbb{E}\left[\left\Vert \bm{B}\right\Vert \right],
  \label{eq:Gaussian-symm}
\end{equation}
where $\bm{B}=[\bm{B}_{i,j}]_{1\leq i,j\leq n}:=\left[g_{i,j}\bm{M}_{i,j}\right]_{1\leq i,j\leq n}$
is obtained by inserting i.i.d.~standard Gaussian variables
$\left\{ g_{i,j}\mid i\geq j\right\} $ in front of $\left\{ \bm{M}_{i,j}\right\} $.
In order to upper bound $\left\Vert \bm{B}\right\Vert $, we further
recognize that
$
  \|\bm{B}\|^{2p}\leq\mathsf{Tr}\left(\bm{B}^{2p}\right)
$
for all $p\in\mathbb{Z}$. 
Expanding $\bm{B}^{2p}$ as a sum over cycles of length $2p$ and
conditioning on $\bm{M}$, we have
\begin{eqnarray}
  \mathbb{E}\left[\mathsf{Tr}\left(\bm{B}^{2p}\right)\mid\bm{M}\right] & = & \sum_{1\leq i_{1},\cdots,i_{2p}\leq n}\mathbb{E}\left[\mathsf{Tr}\left(\bm{B}_{i_{1},i_{2}}\bm{B}_{i_{2},i_{3}}\cdots\bm{B}_{i_{2p-1},i_{2p}}\bm{B}_{i_{2p}i_{1}}\right)\mid\bm{M}\right]\nonumber \\
  & = & \sum_{1\leq i_{1},\cdots,i_{2p}\leq n}\mathsf{Tr}\left(\bm{M}_{i_{1},i_{2}}\cdots\bm{M}_{i_{2p},i_{1}}\right)\mathbb{E}\left[\prod\nolimits _{j=1}^{2p}g_{i_{j},i_{j+1}}\right]
  \label{eq:E-tr-B}
\end{eqnarray}
with the cyclic notation $i_{2p+1}=i_{1}$. The summands that are
non-vanishing are those in which each distinct edge is visited an
even number of times \cite{tao2012topics}, and these summands obey $\mathbb{E}\big[\prod_{j=1}^{2p}g_{i_{j},i_{j+1}}\big] \geq 0$.
As a result, 
\begin{eqnarray}
  (\ref{eq:E-tr-B}) & \leq & \sum_{1\leq i_{1},\cdots,i_{2p}\leq n}m \left( \prod\nolimits _{j=1}^{2p} \Vert \bm{M}_{i_{j},i_{j+1}} \Vert  \right)
  \mathbb{E}\left[ \prod\nolimits _{j=1}^{2p} g_{i_{j},i_{j+1}} \right].
  \label{eq:E-UB}
\end{eqnarray}

We make the observation that the right-hand side of (\ref{eq:E-UB}) is equal to $m\cdot\mathbb{E}\left[\mathsf{Tr}\left(\bm{Z}^{2p}\right) \mid \bm{M} \right]$,
 where $\bm{Z}:=\left[g_{i,j}\|\bm{M}_{i,j}\|\right]_{1\leq i,j\leq n}$.
Following the argument in \cite[Section 2]{bandeira2014sharp}
and setting $p=\log n$, one derives 
\[
  \left(\mathbb{E}\left[\mathsf{Tr}\left(\bm{Z}^{2p}\right)\mid\bm{M}\right]\right)^{\frac{1}{2p}}
  \lesssim \sigma + K\sqrt{2\log n},
\]
where $\sigma:=\sqrt{\max_{i}\sum_{j}\|\bm{M}_{i,j}\|^{2}}$, and $K$ is the upper bound on $\max_{i,j}\|\bm{M}_{i,j}\|$. Putting
all of this together, we obtain
\begin{eqnarray}
  \mathbb{E}\left[\|\bm{B}\|\mid\bm{M}\right] & \leq & \left(\mathbb{E}\left[\|\bm{B}\|^{2p}\mid\bm{M}\right]\right)^{\frac{1}{2p}}\leq\left(\mathbb{E}\left[\mathsf{Tr}\left(\bm{B}{}^{2p}\right)\mid\bm{M}\right]\right)^{\frac{1}{2p}}\nonumber \\
   & \leq & m^{\frac{1}{2p}}\cdot\left(\mathbb{E}\left[\mathsf{Tr}\left(\bm{Z}^{2p}\right)\mid\bm{M}\right]\right)^{\frac{1}{2p}}
  ~\lesssim~ \sigma+K\sqrt{\log n},
  \label{eq:E-tr-B-1}
\end{eqnarray}
where the last inequality follows since $m^{\frac{1}{\log n}}\lesssim1$
as long as $m=n^{O(1)}$. Combining (\ref{eq:Gaussian-symm}) and
(\ref{eq:E-tr-B-1}) and undoing the conditional expectation yield
\begin{equation}
  \mathbb{E}\left[\|\bm{M}\|\right] 
  ~\leq~  \sqrt{2\pi}\mathbb{E}\left[\left\Vert \bm{B}\right\Vert \right]
  ~\lesssim~ \mathbb{E} \big[\sigma+K\sqrt{\log n} \big].
\end{equation}
Furthermore, Markov's inequality gives $\mathbb{P}\left\{ \|\bm{M}\|\geq2\mathbb{E}\left[\|\bm{M}\|\right]\right\} \leq 1/2$
and hence 
\begin{equation}
  \mathsf{Median}\left[\|\bm{M}\|\right] 
  ~\leq~ 2\mathbb{E}\left[\|\bm{M}\|\right]
  ~\lesssim~ \mathbb{E}\big[ \sigma+K\sqrt{\log n} \big].
  \label{eq:Median-M}
\end{equation}

Now that we have controlled the expected spectral norm of $\bm{M}$,
we can obtain concentration results by means of Talagrand's inequality
\cite{talagrand1995concentration}. See \cite[Pages 73-75]{tao2012topics}
for an introduction.

\begin{prop}[\textbf{Talagrand's inequality}]
\label{prop:Talagrand-inequality}
Let
$\Omega=\Omega_{1}\times\cdots\times\Omega_{N}$ and $\mathbb{P}=\mu_{1}\times\cdots\times\mu_N$
form a product probability measure. Each $\Omega_{l}$ is equipped
with a norm $\|\cdot\|$, and $\sup_{x_{l}\in\Omega_{l}}\|x_{l}\|\leq K$ holds for all $1\leq l\leq N$.
Define $d\left(x,y\right):=\sqrt{\sum_{l=1}^{N}\|x_{l}-y_{l}\|^{2}}$
for any $x,y\in\Omega$, and let $f:\Omega\rightarrow\mathbb{R}$
be a 1-Lipschitz convex function with respect to $d\left(\cdot,\cdot\right)$.
Then  there exist some absolute constants $C,c>0$ such that
\begin{equation}
  \mathbb{P}\left\{ \left|f(x) - \mathsf{Median}\left[f\left(x\right)\right]\right| \geq \lambda K\right\}
  \leq C\exp\left(-c\lambda^{2}\right),\qquad \forall \lambda.
\end{equation}
\end{prop}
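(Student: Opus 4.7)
The plan is to prove Talagrand's inequality via his celebrated convex distance method. The strategy has three stages: introduce an auxiliary geometric quantity, establish a sharp exponential moment bound for it, and then use convexity of $f$ to translate this geometric bound into the claimed concentration around the median.

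First, for a set $A \subseteq \Omega$ and a point $x \in \Omega$, I would define Talagrand's convex distance by
\[
d_T(x, A) := \sup_{\bm{\alpha} \in \mathbb{R}^N_{\geq 0},\, \|\bm{\alpha}\|_2 \leq 1} \inf_{y \in A} \sum_{l=1}^{N} \alpha_l \, \mathbb{I}\{x_l \neq y_l\}.
\]
Equivalently, $d_T(x,A)$ is the Euclidean distance from the origin to the convex hull of the Hamming difference vectors $\{(\mathbb{I}\{x_l \neq y_l\})_{l=1}^N : y \in A\}$ in $\mathbb{R}^N$. The central claim to prove next is Talagrand's convex distance inequality
\[
\int_{\Omega} \exp\!\Big( \tfrac{1}{4} d_T(x, A)^2 \Big) \, d\mathbb{P}(x) \;\leq\; \frac{1}{\mathbb{P}(A)}.
\]
I would establish this by induction on the number of factors $N$: the base case $N=1$ reduces to a direct check on $\{0,1\}$-valued indicators, while the inductive step conditions on the last coordinate, splits $A$ into slices, and invokes Hölder's inequality combined with the elementary bound $r^{-s} \leq s/r + (1-s)$ to transfer the recursion through the product structure. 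This is the main technical obstacle; the geometric reinterpretation via convex hulls is what makes the induction close neatly, and without it the argument collapses.

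Once the convex distance inequality is in hand, I would exploit the hypotheses on $f$ to pass from $d_T$ to concrete tail bounds. Set $M := \mathsf{Median}[f]$ and $A := \{y \in \Omega : f(y) \leq M\}$, so that $\mathbb{P}(A) \geq 1/2$. Given any $x$, let $\bm{\alpha}^\star$ and measures $\{\mu_y\}_{y \in A}$ achieve the sup-inf defining $d_T(x,A)$ in their convex-hull form. Because $f$ is convex, $f(x) \leq \sum_y \mu_y f(y_{\mathsf{mix}})$ for any mixture replacing coordinates where $x$ and the $y$'s disagree; applying the 1-Lipschitz property with the distance $d(x,y) = \sqrt{\sum_l \|x_l - y_l\|^2}$ and using $\|x_l - y_l\| \leq 2K$ on disagreeing coordinates, this yields the Lipschitz-under-convex-distance bound
\[
f(x) - M \;\leq\; 2K\, d_T(x, A).
\]

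Finally, combining this pointwise bound with the convex distance inequality and Markov's inequality gives
\[
\mathbb{P}\{f(x) \geq M + \lambda K\} \;\leq\; \mathbb{P}\bigl\{d_T(x,A) \geq \lambda/2\bigr\} \;\leq\; e^{-\lambda^2/16}\,\mathbb{E}\!\left[e^{d_T(x,A)^2/4}\right] \;\leq\; 2 e^{-\lambda^2/16},
\]
and a symmetric argument applied to $A' := \{y : f(y) \geq M\}$ controls the lower tail $\mathbb{P}\{f(x) \leq M - \lambda K\}$. Union-bounding the two deviations yields the proposition with, for instance, $C = 4$ and $c = 1/16$ (the precise constants being unimportant for our applications). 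The delicate step is, as noted, the inductive proof of the convex distance inequality; the rest is a routine convex-geometry translation.
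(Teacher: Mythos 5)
The paper does not actually prove this proposition: it is imported verbatim from Talagrand's 1995 paper (with Tao's notes cited as an exposition), so there is no in-paper argument to compare yours against. Your sketch follows the canonical convex-distance route, and the skeleton is right: the definition of $d_T$, its reinterpretation as the Euclidean distance to the convex hull of the Hamming difference vectors, the exponential moment bound $\mathbb{E}\bigl[e^{d_T(x,A)^2/4}\bigr]\le 1/\mathbb{P}(A)$, the pointwise estimate $f(x)-M\le 2K\,d_T(x,A)$ (which correctly uses convexity of $f$, the $1$-Lipschitz property in $d$, and the per-coordinate diameter bound $\|x_l-y_l\|\le 2K$), and the Markov step. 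Two caveats. First, the heart of the theorem is the convex distance inequality itself, and you only describe the induction rather than carry it out; as written this is a proof outline, not a proof, and you acknowledge as much. Second, your lower-tail step is stated too loosely: a ``symmetric argument applied to $A':=\{y: f(y)\ge M\}$'' does not work by symmetry, because $-f$ is concave and the pointwise bound $f(x)-\inf_{A'}f\le 2K\,d_T(x,A')$ gives no information when $f(x)$ is small. The correct maneuver is to apply the convex distance inequality to the small sublevel set $B:=\{f\le M-\lambda K\}$, observe that every $x$ with $f(x)\ge M$ satisfies $d_T(x,B)\ge \lambda/2$ by the same pointwise bound, and conclude $\tfrac12\le \mathbb{P}\{f\ge M\}\le e^{-\lambda^2/16}/\mathbb{P}(B)$, i.e.\ $\mathbb{P}(B)\le 2e^{-\lambda^2/16}$. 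With that one-line repair and the induction filled in, your argument yields the proposition with explicit constants, which is more than the paper attempts.
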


 Let $\Omega_{1},\cdots,\Omega_{N}$ represent the sample spaces for
 $\bm{M}_{1,1},\cdots,\bm{M}_{n,n}$, respectively, and take $\|\cdot\|$
 to be the spectral norm. 
Clearly, for any $\bm{M},\tilde{\bm{M}}\in\mathbb{R}^{nm\times nm}$,
 one has 
 \[
   \left|\|\bm{M}\|-\|\tilde{\bm{M}}\|\right|\leq\|\bm{M}-\tilde{\bm{M}}\|\leq\sqrt{\sum\nolimits _{i\leq j}\|\bm{M}_{i,j}-\tilde{\bm{M}}_{i,j}\|^{2}}
  :=d\big(\bm{M},\tilde{\bm{M}} \big).
 \]
 %
Consequently, Talagrand's inequality together with (\ref{eq:Median-M})
implies that with probability $1-O(n^{-11})$,
\begin{equation}
  \|\bm{M}\| ~\lesssim~ \mathsf{Median}\left[\|\bm{M}\|\right]+K\sqrt{\log n} ~\lesssim~ \mathbb{E}\left[\sigma\right]+K\sqrt{\log n}.
  \label{eq:M-concentrate-2}
\end{equation}

Finally, if $\mathbb{P}\left\{ \bm{M}_{i,j}=\bm{0}\right\} =p_{\mathrm{obs}}$
for some $p_{\mathrm{obs}}\gtrsim\log n/n$, then the Chernoff bound when combined with the union bound
indicates that 
\[
  \sum\nolimits _{j}\|\bm{M}_{i,j}\|^{2}\leq K^{2}\sum\nolimits _{j}\mathbb{I}\left\{ \bm{M}_{i,j} = \bm{0}\right\} 
  \lesssim K^{2}np_{\mathrm{obs}}, \qquad 1\leq i\leq n
\]
with probability $1-O\left(n^{-10}\right)$, 
%
which in turn gives
\[
  \mathbb{E}\left[\sigma\right] ~\lesssim~ \left(1-O\left(n^{-10}\right)\right) \sqrt{K^2 np_{\mathrm{obs}}}+O\left(n^{-10}\right)K\sqrt{n}
  ~\lesssim~ K\sqrt{np_{\mathrm{obs}}}.
\]
This together with (\ref{eq:M-concentrate-2}) as well as the assumption
$p_{\mathrm{obs}}\gtrsim\log n/n$ concludes the proof.

\section{Proof of Lemma \ref{lem:deviation-Lhat}\label{sec:Proof-of-Lemma-deviation-Lhat}}

The first step is to see that
\[
  \bm{L}-\mathbb{E}\left[\bm{L}\right]=\bm{L}^{\mathrm{debias}}-\mathbb{E}\left[\bm{L}^{\mathrm{debias}}\right],
\]
where $\bm{L}^{\mathrm{debias}}$ is a debiased version of $\bm{L}$
given in (\ref{eq:debiased-MLE}). This can be shown by recognizing
that 
\begin{align*}
  \bm{1}^{\top}\bm{L}_{i,j}\bm{1}
	& = \sum_{1\leq\alpha,\beta\leq m}\ell\left(z_{i}=\alpha,z_{j}=\beta;\text{ }y_{i,j}\right)
  = \sum_{1\leq\alpha,\beta\leq m}\log\mathbb{P}\left(\eta_{i,j} = y_{i,j}-(\alpha-\beta)\right) \\
	& = m \sum_{z=0}^{m-1} \log  \mathbb{P}\left(\eta_{i,j} = z \right)
\end{align*}
for any $i\neq j$, which is a fixed constant irrespective of $y_{i,j}$. 

The main advantage to work with $\bm{L}^{\mathrm{debias}}$ is that
each entry of $\bm{L}^{\mathrm{debias}}$ can be written as a linear
combination of the log-likelihood ratios. Specifically, for any $1\leq\alpha,\beta\leq m$,
\begin{eqnarray}
  \left(\bm{L}_{i,j}^{\mathrm{debias}}\right)_{\alpha,\beta} 
	& = & \ell ( z_{i} =\alpha,z_{j}=\beta; ~y_{i,j} ) - \frac{1}{m}\sum_{l=0}^{m-1} \log P_{l}\left(y_{i,j}\right) \nonumber \\
	& = & \log\mathbb{P}_{\alpha-\beta}\left(y_{i,j}\right)-\frac{1}{m}\sum_{l=0}^{m-1}\log P_{\alpha-\beta+l} (y_{i,j})  \nonumber \\
	& = & \frac{1}{m}\sum_{l=1}^{m-1}\log\frac{P_{\alpha-\beta}\left(y_{i,j}\right)}{P_{\alpha-\beta+l} (y_{i,j}) } 
  	 =  \frac{1}{m}\sum_{l=1}^{m-1}\log\frac{P_{0}\left(y_{i,j}-\alpha+\beta\right)}{P_{l}\left(y_{i,j}-\alpha+\beta\right)}.
  \label{eq:Lij-ab-UB-1}
\end{eqnarray}
Since $\bm{L}_{i,j}^{\mathrm{debias}}$ is circulant, its spectral
norm is bounded by the $\ell_{1}$ norm of any of its column: 
\begin{eqnarray}
  \left\Vert \bm{L}_{i,j}^{\mathrm{debias}}\right\Vert  
   & \leq & \sum_{\alpha=1}^{m}\left|\left(\bm{L}_{i,j}^{\mathrm{debias}}\right)_{\alpha,1}\right|
  \leq  \frac{1}{m}\sum_{l=1}^{m-1}\sum_{\alpha=1}^{m}\left|\log\frac{P_{0}\left(y_{i,j}-\alpha+1\right)}{P_{l}\left(y_{i,j}-\alpha+1\right)}\right|\\
   & = & \frac{1}{m}\sum_{l=1}^{m-1}\sum_{y=1}^{m}\left|\log\frac{P_{0}\left(y\right)}{P_{l}\left(y\right)}\right|
  \leq  \frac{1}{m}\sum_{l=1}^{m-1}\left\Vert \log\frac{P_{0}}{P_{l}}\right\Vert _{1}. 
  \label{eq:debias-block-UB}
\end{eqnarray}
To finish up, apply Lemma \ref{lem:random-matrix-norm} to arrive at
\begin{eqnarray*}
  \left\Vert \bm{L}-\mathbb{E}[\bm{L}]\right\Vert 
	&=&  \left\Vert \bm{L}^{\mathrm{debias}} - \mathbb{E}[ \bm{L}^{\mathrm{debias}} ]\right\Vert 
   \lesssim  \left(\max_{i,j}\|\bm{L}_{i,j}^{\mathrm{debias}} \|\right) \sqrt{np_{\mathrm{obs}}}  \nonumber\\
	&\lesssim& \left(\frac{1}{m}\sum_{l=1}^{m-1}\left\Vert \log\frac{P_{0}}{P_{l}}\right\Vert _{1}\right)\sqrt{np_{\mathrm{obs}}}.
\end{eqnarray*}

\section{Proofs of Lemma \ref{lemma:r-norm-KL} and Lemma \ref{lemma:r-norm-KL-q}\label{sec:Proof-of-Lemma-r-norm-KL}}

(1) \textbf{Proof of Lemma \ref{lemma:r-norm-KL}.} In view of (\ref{eq:Lz-general}),
for each $1\leq i\leq n$ one has
\begin{eqnarray}
  \|\bm{r}_{i}\|_{\infty} & \leq & \big\|\mathbb{E}\left[\bm{L}\right]\bm{h}\big\|_{\infty}+\left\Vert \bm{g}_{i}\right\Vert ,
  \label{eq:Sep_w-KL-LB}
\end{eqnarray}
where $\bm{g}:=\tilde{\bm{L}}\bm{z}$. In what follows, we will look
at each term on the right-hand side of (\ref{eq:Sep_w-KL-LB}) separately.
\begin{itemize}
\item \emph{The 1st term on the right-hand side of (\ref{eq:Sep_w-KL-LB}).} The feasibility constraint $\bm{z}_i,\bm{x}_i \in \Delta$ implies $\bm{1}^{\top}\bm{h}_{j}=0$, which enables us to express the $i$th block
of $\bm{f}:=\frac{1}{p_{\mathrm{obs}}}\mathbb{E}[\bm{L}] \bm{h}$
as
\begin{align*}
  \bm{f}_{i} & =\sum_{j:j\neq i}\bm{K}\bm{h}_{j}=\sum_{j:j\neq i}\bm{K}^{0}\bm{h}_{j} ,
\end{align*}
with $\bm{K}^{0}$ defined in (\ref{eq:defn-K-intuition}). By letting
$\bm{h}_{j,\backslash1}:=\left[h_{j,2},\cdots,h_{j,m}\right]^{\top}$
and denoting by $\bm{K}_{:,l}$ (resp.~$\bm{K}_{l,:}$) the $l$th
column (resp.~row) of $\bm{K}$, we see that
\begin{eqnarray}
  \bm{f}_{i} & = & \bm{K}_{:,1}^{0}\left(\sum _{j:j\neq i}h_{j,1}\right)+\bm{K}^{0}\sum _{j:j\neq i}\left[\begin{array}{c}
  0\\
  \bm{h}_{j,\backslash1}
  \end{array}\right].
  \label{eq:fi_1}
\end{eqnarray}
Recall from the feasibility constraint that $h{}_{j,1}\leq0$ and
$\bm{h}_{j,\backslash1}\geq\bm{0}$. Since $\bm{K}^{0}$ is a non-positive
matrix, one sees that the first term on the right-hand side of (\ref{eq:fi_1}) is non-negative,
whereas the second term  is non-positive. As a
result, the $l$th entry of $\bm{f}_{i}$---denoted by $f_{i,l}$---is
bounded in magnitude by
\begin{eqnarray*}
  \left|f_{i,l}\right| 
  & = & \left|K_{l,1}^{0}\left(\sum _{j:j\neq i}h_{j,1}\right) + \bm{K}_{l,:}^{0} \sum _{j:j\neq i}\left[\begin{array}{c}
0\\
\bm{h}_{j,\backslash1}
\end{array}\right]\right|\\
 & \leq & \max\left\{ |K_{l,1}^{0}| \sum _{j:j\neq i} |h_{j,1}|,~\left|\bm{K}_{l,:}^{0}\cdot\sum _{j:j\neq i}\left[\begin{array}{c}
0\\
\bm{h}_{j,\backslash1}
\end{array}\right]\right|\right\} \\
 & \leq & \max\left\{ \mathsf{KL}_{\max}\sum _{j:j\neq i}\left|h_{j,1}\right|,\text{ }\mathsf{KL}_{\max}\sum _{j:j\neq i}\left|\bm{1}^{\top}\bm{h}_{j,\backslash1}\right|\right\} \\
 & = & \mathsf{KL}_{\max}\sum _{j:j\neq i}\left|h_{j,1}\right|,
\end{eqnarray*}
where the last identity arises since $\bm{1}^{\top}\bm{h}_{j,\backslash1}=-h_{j,1}$.
Setting $$\overline{\bm{h}} = \big[\overline{h}_i\big]_{1\leq i\leq m} := \frac{1}{n}\sum_{i=1}^{n}\bm{h}_{i},$$
we obtain
\begin{eqnarray*}
  \left\Vert \bm{f}_{i}\right\Vert _{\infty} 
  & \leq & \mathsf{KL}_{\max}\sum\nolimits _{j=1}^{n}\left|h_{j,1}\right|
  = n\mathsf{KL}_{\max} \left| \overline{h}_{1} \right|
  \leq n\mathsf{KL}_{\max}\|\overline{\bm{h}}\|_{\infty}.
\end{eqnarray*}
If we can further prove that
\begin{equation}
  \left\Vert \overline{\bm{h}}\right\Vert _{\infty}\leq\min\left\{ \frac{k}{n}, ~ \epsilon\right\} ,
  \label{eq:hbar-bound}
\end{equation}
then we will arrive at the upper bound
\begin{equation}
  \left\Vert \mathbb{E}\left[\bm{L}\right]\bm{h}\right\Vert _{\infty} 
  = \left\Vert p_{\mathrm{obs}}\bm{f}_{i}\right\Vert _{\infty}
  \leq np_{\mathrm{obs}}\mathsf{KL}_{\max}\|\overline{\bm{h}}\|_{\infty}
  \leq np_{\mathrm{obs}}\mathsf{KL}_{\max}\min\left\{ \frac{k}{n},\epsilon\right\} .
  \label{eq:Khbar}
\end{equation}
%
%
To see why (\ref{eq:hbar-bound}) holds, we observe that (i) the constraint
$\bm{z}_{i}\in\Delta$ implies $\left|h_{i,l}\right|\leq1$, revealing
that 
\[
  \left\Vert \overline{\bm{h}}\right\Vert _{\infty}
  \leq \max_{1\leq l\leq m}\frac{1}{n}\sum\nolimits _{i=1}^{n}\left|h_{i,l}\right|
  \leq \frac{1}{n}\|\bm{h}\|_{*,0}=\frac{k}{n},
\]
and (ii) by Cauchy-Schwarz, 
\begin{eqnarray*}
  \left\Vert \overline{\bm{h}}\right\Vert _{\infty} 
   =  \frac{1}{n}\left\Vert \left[\begin{array}{ccc}
  \bm{I}_{m} & \cdots & \bm{I}_{m}\end{array}\right]\bm{h}\right\Vert _{\infty}\leq\frac{1}{n}\left(\sqrt{n}\cdot\left\Vert \bm{h}\right\Vert \right)=\epsilon.
\end{eqnarray*}

\item It remains to bound \emph{the 2nd term on the right-hand side of (\ref{eq:Sep_w-KL-LB}).}
Making use of Lemma \ref{lem:deviation-Lhat} gives 
\begin{eqnarray}
  \left\Vert \bm{g}\right\Vert 
   & \leq & \|\tilde{\bm{L}}\|\left\Vert \bm{z}\right\Vert 
  \lesssim\left\{ \frac{1}{m}\sum_{l=1}^{m-1}\left\Vert \log\frac{P_{0}}{P_{l}}\right\Vert _{1}\right\} \sqrt{np_{\mathrm{obs}}}\sqrt{n}
  \label{eq:g-norm}
\end{eqnarray}
with probability $1-O\left(n^{-10}\right)$. Let $\|\bm{g}\|_{(1)}\geq\|\bm{g}\|_{(2)}\geq\cdots\geq\|\bm{g}\|_{(n)}$
denote the order statistics of $\|\bm{g}_{1}\|$, $\cdots$, $\|\bm{g}_{n}\|$.
Then, for any $0<\rho<1$, 
\begin{equation}
  \|\bm{g}\|_{(\rho k^{*})}^{2} \leq \frac{1}{\rho k^{*}} \sum_{i=1}^{\rho k^{*}}\|\bm{g}\|_{(i)}^{2}\leq\frac{1}{\rho k^{*}}\|\bm{g}\|^{2}
  \lesssim  \frac{n^{2}p_{\mathrm{obs}}\left\{ \frac{1}{m} \sum_{l=1}^{m-1}\left\Vert \log\frac{P_{0}}{P_{l}}\right\Vert _{1}\right\} ^{2}}{\rho k^{*}},
  \label{eq:g2-UB}
\end{equation}
where $k^{*}$ is defined in (\ref{eq:defn-k-star}). In addition,
we have  $k/n \geq \xi $ and $\epsilon^2 \geq \xi^2 $ for some constant $0<\xi<1$ in the large-error regime (\ref{eq:large-error-regime}), and hence $$k^{*} = \min\{k, \epsilon^2 n \} \geq \xi^2 n.$$  Substitution into (\ref{eq:g2-UB}) yields
\begin{equation}
  \|\bm{g}\|_{(\rho k^{*})}\lesssim\left\{ \frac{1}{m}\sum_{l=1}^{m-1}\left\Vert \log\frac{P_{0}}{P_{l}}\right\Vert _{1}\right\} n\sqrt{\frac{p_{\mathrm{obs}}}{\rho k^{*}}}
  \leq  \left\{ \frac{1}{m}\sum_{l=1}^{m-1}\left\Vert \log\frac{P_{0}}{P_{l}}\right\Vert _{1}\right\} \sqrt{\frac{p_{\mathrm{obs}}n}{ \xi^2 \rho}}.
  \label{eq:g_rho_general}
\end{equation}
Consequently, if we denote by $\mathcal{I}$ the index set of those blocks
$\bm{g}_{i}$ satisfying $$\|\bm{g}_{i}\|\lesssim\left\{ \frac{1}{m}\sum_{l=1}^{m-1}\left\Vert \log\frac{P_{0}}{P_{l}}\right\Vert _{1}\right\} \sqrt{\frac{p_{\mathrm{obs}}n}{\xi^2\rho}},$$
then one has 
\[
  |\mathcal{I}|\geq n-\rho k^{*}.
\]

\end{itemize}
We are now ready to upper bound $\|\bm{r}_{i}\|_{\infty}$. For each
$i\in\mathcal{I}$ as defined above, 
\begin{eqnarray}
  (\ref{eq:Sep_w-KL-LB}) & \leq & np_{\mathrm{obs}}\mathsf{KL}_{\max}\min\left\{ \frac{k}{n},\epsilon\right\} +O\left(\left\{ \frac{1}{m}\sum_{l=1}^{m-1}\left\Vert \log\frac{P_{0}}{P_{l}}\right\Vert _{1}\right\} \sqrt{\frac{np_{\mathrm{obs}}}{\xi^2\rho}}\right)\nonumber \\
   & \leq & np_{\mathrm{obs}}\mathsf{KL}_{\max}\min\left\{ \frac{k}{n},\epsilon\right\} 
    + \alpha np_{\mathrm{obs}}\mathsf{KL}_{\max}
  \label{eq:Sep-w-UB}
\end{eqnarray}
for some arbitrarily small constant $\alpha>0$, with the proviso that
\begin{equation}
  np_{\mathrm{obs}}\mathsf{KL}_{\max}
  \geq 
  c_9 \left\{ \frac{1}{m}\sum_{l=1}^{m-1}\left\Vert \log\frac{P_{0}}{P_{l}}\right\Vert _{1}\right\} \sqrt{\frac{p_{\mathrm{obs}}n}{\xi^2 \rho}}
  \label{eq:condition-KL}
\end{equation}
for some sufficiently large constant $c_9>0$.
Since $\xi>0$ is assumed to be a fixed positive constant, the condition (\ref{eq:condition-KL}) can be satisfied if  we pick $$\rho=c_{5}\frac{\left\{ \frac{1}{m}\sum_{l=1}^{m-1}\left\Vert \log\frac{P_{0}}{P_{l}}\right\Vert _{1}\right\} ^{2}}{np_{\mathrm{obs}}\mathsf{KL}_{\max}^{2}}$$
for some sufficiently large constant $c_{5}>0$. Furthermore, in order to guarantee
 $\rho<1$,
one would need 
\begin{equation}
  \frac{\mathsf{KL}_{\max}^{2}}{\left\{ \frac{1}{m}\sum_{l=1}^{m-1}\left\Vert \log\frac{P_{0}}{P_{l}}\right\Vert _{1}\right\} ^{2}}
  \geq
  \frac{c_{10}}{p_{\mathrm{obs}}n}
  \label{eq:condition-KL-2}
\end{equation}
for some sufficiently large constant $c_{10}>0$.

It is noteworthy that if $m$ is fixed and if $\min_{l,y}P_{l}(y)$
is bounded away from 0, then
\begin{eqnarray}
  \left\Vert \log\frac{P_{0}}{P_{l}}\right\Vert _{1} & \leq & \sum_{y:P_{0}(y)\geq P_{l}(y)}\log\left(1+\frac{P_{0}\left(y\right)-P_{l}\left(y\right)}{P_{l}\left(y\right)}\right)+\sum_{y:P_{0}(y)<P_{l}(y)}\log\left(1+\frac{P_{l}\left(y\right)-P_{0}\left(y\right)}{P_{0}\left(y\right)}\right)\\
   & \leq & \frac{1}{\min_{l,y}P_{l}(y)}\sum_{y}\left|P_{0}(y)-P_{l}(y)\right|\text{ }\asymp\text{ }\mathsf{TV}\left(P_{0},P_{l}\right)\\
   & \overset{(\text{a})}{\lesssim} & \sqrt{\mathsf{KL}\left(P_{0}\|P_{l}\right)} \leq \sqrt{\mathsf{KL}_{\max}},  \label{eq:log1-KL}
\end{eqnarray}
where (a) comes from Pinsker's inequality. Thus, in this case (\ref{eq:condition-KL-2})
would follow if $\mathsf{KL}_{\max}\gg1/(np_{\mathrm{obs}})$. 

\vspace{0.5em}

(2) \textbf{Proof of Lemma \ref{lemma:r-norm-KL-q}.} 
This part can be shown using
similar argument as in the proof of Lemma \ref{lemma:r-norm-KL}.
Specifically, from the definition (\ref{eq:decompose-w-s}) we have
\begin{eqnarray}
  \|\bm{q}_{i}\|_{\infty} & \leq & \left\Vert \mathbb{E}\left[\bm{L}\right]\bm{h}\right\Vert _{\infty}+\left\Vert \hat{\bm{g}}_{i}\right\Vert
  \leq np_{\mathrm{obs}}\mathsf{KL}_{\max}\min\left\{ \frac{k}{n},\epsilon\right\} +\left\Vert \hat{\bm{g}}_{i}\right\Vert ,
\end{eqnarray}
where $\hat{\bm{g}}:=\tilde{\bm{L}}\bm{h}$, and the last inequality
is due to (\ref{eq:Khbar}). Similar to (\ref{eq:g-norm}) and (\ref{eq:g2-UB}),
we get 
\begin{eqnarray*}
  \Vert \hat{\bm{g}} \Vert  
  & \leq & \|\tilde{\bm{L}}\| \Vert \bm{h} \Vert 
  \lesssim \left\{ \frac{1}{m}\sum_{l=1}^{m-1}  \left\Vert \log\frac{P_{0}}{P_{l}}\right\Vert _{1}  \right\} \sqrt{np_{\mathrm{obs}}} (\epsilon\sqrt{n}) , \\
  \| \hat{\bm{g}} \|_{(\rho k^{*})} & \leq & \frac{1}{\sqrt{\rho k^{*}}}\|\hat{\bm{g}}\|
  \asymp \frac{\left\{ \frac{1}{m}\sum_{l=1}^{m-1}\left\Vert \log\frac{P_{0}}{P_{l}}\right\Vert _{1}\right\} ( \epsilon\sqrt{n} )\sqrt{np_{\mathrm{obs}}}}{\sqrt{\rho k^{*}}} \\
	&~\overset{(\text{b})}{\lesssim}& 
  \left\{ \frac{1}{m}\sum_{l=1}^{m-1}  \left\Vert \log\frac{P_{0}}{P_{l}}  \right\Vert _{1}\right\} \sqrt{\frac{np_{\mathrm{obs}}}{\rho}},
\end{eqnarray*}
where (b) arises since
\begin{eqnarray*}
  \frac{\epsilon\sqrt{n}}{\sqrt{k^{*}}} & = & \frac{\|\bm{h}\|}{\sqrt{k}} ~\overset{\text{(c)}}{\leq} \frac{\sqrt{2\|\bm{h}\|_{*,0}}}{\sqrt{k}} \leq \sqrt{2},
  \quad\text{if } k ~\leq \epsilon^{2}n;  \\
  \frac{\epsilon\sqrt{n}}{\sqrt{k^{*}}} & = & \frac{\epsilon\sqrt{n}}{\sqrt{\epsilon^{2}n}} ~= 1,
  \qquad\qquad\qquad\quad\text{if } k > \epsilon^{2}n.
\end{eqnarray*}
Here, (c) results from Fact \ref{fact:H}. One can thus find an index set $\mathcal{I}$ with cardinality
$|\mathcal{I}|\geq n-\rho k^{*}$ such that
\begin{equation}
  \| \hat{\bm{g}}_{i} \| ~\lesssim~ 
  \left\{ \frac{1}{m}\sum_{l=1}^{m-1}  \left\Vert \log\frac{P_{0}}{P_{l}}\right\Vert _{1}\right\} \sqrt{\frac{np_{\mathrm{obs}}}{\rho}},
  \qquad i\in\mathcal{I},
  \label{eq:gi-hat}
\end{equation}
where the right-hand side of (\ref{eq:gi-hat}) is identical to that
of (\ref{eq:g_rho_general}). Putting these bounds together and repeating
the same argument as in (\ref{eq:Sep-w-UB})-(\ref{eq:log1-KL}) complete the proof.

\section{Proof of Lemma \ref{lem:sep-s}\label{sec:Proof-of-Lemma-Chernoff-UB}}

By definition, for each $1\leq i\leq n$ one has
\[
  s_{i,1} - s_{i,l} = \sum_{j: ~ (i,j)\in\Omega }  \log\frac{ P_{0}(y_{i,j}) }{ P_{l-1}(y_{i,j}) }
  , \qquad 2 \leq l \leq m,
\]
which is a sum of independent log-likelihood ratio statistics. The main
ingredient to control $s_{i,1}-s_{i,l}$ is to establish
the following lemma. 

\begin{lem}
\label{lem:Hellinger-UB}
Consider two sequences of probability
distributions $\left\{ P_{i}\right\} $ and $\left\{ Q_{i}\right\} $
on a finite set $\mathcal{Y}$. Generate $n$ independent random variables
$y_{i}\sim P_{i}$. 

(1) For any $\gamma\geq0$,
\begin{equation}
  \mathbb{P}\left\{ \sum\nolimits _{i=1}^{n}\log\frac{P_{i}\left(y_{i}\right)}{Q_{i}\left(y_{i}\right)}\leq\gamma\right\} \leq\exp\left\{ -\sum\nolimits _{i=1}^{n} \mathsf{H}^{2}\left(P_{i},Q_{i}\right) + \frac{1}{2}\gamma\right\} ,
  \label{eq:Tail-Hel}
\end{equation}
where  $\mathsf{H}^{2}(P_i,Q_i) := \frac{1}{2} \sum_{y} (\sqrt{P_i(y)} - \sqrt{Q_i(y)})^2$. 

(2) Suppose that $\left|\log\frac{Q_{i}\left(y_{i}\right)}{P_{i}\left(y_{i}\right)}\right|\leq K$.
If
\[
  \min_{1\leq i\leq n}\mathsf{KL}\left(P_{i}\hspace{0.3em}\|\hspace{0.3em}Q_{i}\right)
  \geq c_{2}\left\{ \frac{1}{n}\sqrt{\sum\nolimits _{i=1}^{n}{ \mathsf{Var} }_{y_{i}\sim P_{i}}\left[\log\frac{Q_{i}\left(y_{i}\right)}{P_{i}\left(y_{i}\right)}\right]\log\left(mn\right)}
  + \frac{K\log\left(mn\right)}{n}\right\} 
\]
for some sufficiently large constant $c_{2}>0$, then 
\begin{eqnarray}
  \sum\nolimits _{i=1}^{n}\log\frac{Q_{i}\left(y_{i}\right)}{P_{i}\left(y_{i}\right)} & \leq & -\frac{1}{2}n\min_{1\leq i\leq n}\mathsf{KL}\left(P_{i}\hspace{0.3em}\|\hspace{0.3em}Q_{i}\right)
  \label{eq:LB-LLR}
\end{eqnarray}
with probability at least $1-O\left(m^{-11}n^{-11}\right)$.\end{lem}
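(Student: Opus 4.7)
\medskip

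\noindent\textbf{Proof proposal.} The plan is to treat the two parts separately, using a Chernoff-style argument for Part (1) and a classical Bernstein bound for Part (2).

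For Part (1), I would exponentiate with exponent $-\tfrac12$ and apply Markov's inequality. The key identity is that, for any $y_i\sim P_i$,
\begin{equation*}
  \mathbb{E}\!\left[\exp\!\Big(-\tfrac{1}{2}\log\tfrac{P_i(y_i)}{Q_i(y_i)}\Big)\right]
  = \sum_{y} \sqrt{P_i(y)Q_i(y)}
  = 1-\mathsf{H}^2(P_i,Q_i),
\end{equation*}
where the last equality is just the definition $\mathsf{H}^2(P,Q)=\tfrac12\sum_y(\sqrt{P(y)}-\sqrt{Q(y)})^2 = 1-\sum_y\sqrt{P(y)Q(y)}$. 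Using $1-x\le e^{-x}$ and independence of the $y_i$'s, the moment generating function of $-\tfrac12\sum_i\log\tfrac{P_i(y_i)}{Q_i(y_i)}$ factorizes and is bounded by $\exp\{-\sum_i \mathsf{H}^2(P_i,Q_i)\}$. A one-line Markov step then yields \eqref{eq:Tail-Hel}.

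For Part (2), let $X_i:=\log\tfrac{Q_i(y_i)}{P_i(y_i)}$, so that $\mathbb{E}[X_i]=-\mathsf{KL}(P_i\|Q_i)$ and $|X_i|\le K$ by hypothesis. The goal \eqref{eq:LB-LLR} is exactly
\begin{equation*}
  \sum_{i=1}^{n}\big(X_i-\mathbb{E}[X_i]\big)
  ~\ge~ \sum_{i=1}^{n}\mathsf{KL}(P_i\|Q_i) - \tfrac{n}{2}\min_i \mathsf{KL}(P_i\|Q_i)
  ~\ge~ \tfrac{n}{2}\min_i \mathsf{KL}(P_i\|Q_i)\,=:\,t,
\end{equation*}
i.e.\ an upward deviation of at least $t$ from the mean. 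Applying Bernstein's inequality to the independent, bounded, centered variables $X_i-\mathbb{E}[X_i]$ gives
\begin{equation*}
  \mathbb{P}\!\left\{\sum_{i}(X_i-\mathbb{E}[X_i])\ge t\right\}
  \le \exp\!\left(-\,\frac{t^{2}/2}{\sum_{i}\mathsf{Var}(X_i)+\tfrac{2}{3}Kt}\right).
\end{equation*}
Plugging in $t=\tfrac{n}{2}\min_i\mathsf{KL}(P_i\|Q_i)$ and invoking the hypothesis of Part (2), the variance term and the range term of the Bernstein denominator are each dominated by a suitable power of $\log(mn)$: the first summand of the hypothesis controls the sub-Gaussian regime $t^2/\sum_i\mathsf{Var}(X_i)\gtrsim \log(mn)$, while the second summand controls the sub-exponential regime $t/K\gtrsim\log(mn)$. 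Either way, the exponent is $\gtrsim 11\log(mn)$ once $c_2$ is taken large enough, giving the desired $O(m^{-11}n^{-11})$ bound.

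The only mildly delicate point will be bookkeeping the constant $c_2$ so that both the variance and range pieces of Bernstein are simultaneously dominated by $\tfrac12 n\min_i\mathsf{KL}(P_i\|Q_i)$ — but this is routine, and no hidden obstruction is expected. Both parts use only independence across $i$ together with the fact that $P_i,Q_i$ live on a finite set (used implicitly through the bound $K$ in Part (2)), so the argument goes through in the generality stated.
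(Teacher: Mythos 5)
Your proposal is correct and follows essentially the same route as the paper: Part (1) is the identical Chernoff/Markov bound at exponent $1/2$ via the identity $\mathbb{E}_{y\sim P_i}\bigl[\sqrt{Q_i(y)/P_i(y)}\bigr]=1-\mathsf{H}^2(P_i,Q_i)$ followed by $1-x\le e^{-x}$, and Part (2) is the same Bernstein bound on the centered variables $\log\frac{Q_i(y_i)}{P_i(y_i)}+\mathsf{KL}(P_i\,\|\,Q_i)$, with the hypothesis on $c_2$ dominating both the variance and range terms exactly as in the paper. No gaps.
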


We start from the case where $m$ is fixed. When $p_{\mathrm{obs}}>c_{0}\log n/n$
for some sufficiently large $c_{0}>0$, it follows from the Chernoff
bound that 
\begin{equation}
  |\left\{ j:(i,j)\in\Omega\right\} |\geq(1-\zeta)np_{\mathrm{obs}},\qquad1\leq i\leq n
  \label{eq:concentration-deg}
\end{equation}
with probability at least $1-O(n^{-10})$, where $\zeta > 0$ is some small constant. Taken together, Lemma \ref{lem:Hellinger-UB}(1)
(with $\gamma$ set to be $2\zeta np_{\mathrm{obs}}\mathsf{H}_{\min}^{2}$),
(\ref{eq:concentration-deg}), and the union bound give 
\[
  s_{i,1} - s_{i,l}>2\zeta np_{\mathrm{obs}}\mathsf{H}_{\min}^{2}, \quad1\leq i\leq n, ~2\leq l\leq m
\]
or, equivalently,
\begin{equation}
  \mathscr{S}\left(\bm{s}_{i}\right)>2\zeta np_{\mathrm{obs}}\mathsf{H}_{\min}^{2},\qquad1\leq i\leq n
  \label{eq:sep-Hel}
\end{equation}
with probability exceeding $1-mn\exp\left\{ -\left(1-2\zeta\right)np_{\mathrm{obs}}\mathsf{H}_{\min}^{2}\right\} $.
As a result, (\ref{eq:sep-Hel}) would follow with probability at
least $1-\exp\left\{ -\zeta\log(mn)\right\} -O(n^{-10})$, as long
as
\begin{equation}
  \mathsf{H}_{\min}^{2}\geq\frac{1+\zeta}{1-2\zeta}\cdot\frac{\log n+\log m}{np_{\mathrm{obs}}}.
  \label{eq:Hel-condition-1-1}
\end{equation}

It remains to translate these results into a version based on the KL divergence.
Under Assumption \ref{assumption-Pmin}, it comes from \cite[Fact 1]{chen2015information}
that $\mathsf{H}^{2}\left(P_{0},P_{l}\right)$ and $\mathsf{KL}\left(P_{0}\hspace{0.3em}\|\hspace{0.3em}P_{l}\right)$
are orderwise equivalent. This allows to rewrite (\ref{eq:sep-Hel})
as
\begin{equation}
  \mathscr{S}\left(\bm{s}_{i}\right)>c_{5}\zeta np_{\mathrm{obs}}\mathsf{KL}_{\min},\qquad1\leq i\leq n
\end{equation}
for some constant $c_{5}>0$. In addition, Lemma \ref{lem:KL-Var-Hel} and \cite[Fact 1]{chen2015information} reveal that 
\[
  \mathsf{H}_{\min}^{2} \geq \frac{1}{4} {\left(1-c_{6}\sqrt{\mathsf{KL}_{\min}}\right)} {\mathsf{KL}_{\min}} \\
  \quad \text{and} \quad 
  \mathsf{H}_{\min}^{2} \geq c_8  {\mathsf{KL}_{\min}} 
\]
for some constants $c_{6}, c_8>0$. As a result,  (\ref{eq:Hel-condition-1-1})
would hold if
\begin{eqnarray}
  \frac{ \left(1-c_{6}\sqrt{\mathsf{KL}_{\min}}\right) \mathsf{KL}_{\min}}{4}
  \geq \frac{1+\zeta}{1-2\zeta}\cdot\frac{\log n+\log m}{np_{\mathrm{obs}}} 
  \label{eq:KL-LB-2}  \\
  \text{or} \quad
  c_8 \mathsf{KL}_{\min}
  \geq \frac{1+\zeta}{1-2\zeta}\cdot\frac{\log n+\log m}{np_{\mathrm{obs}}} 
\end{eqnarray}
When both $\frac{\log n}{np_{\mathrm{obs}}}$ and $\zeta$ are sufficiently
small, it is not hard to show that (\ref{eq:KL-LB-2}) is a consequence
of $\mathsf{KL}_{\min}\geq4.01\log n/(np_{\mathrm{obs}})$. 

Finally, the second part (\ref{eq:sep-si-1-2}) of Lemma \ref{lem:sep-s}
is straightforward by combining (\ref{eq:concentration-deg}), Lemma
\ref{lem:Hellinger-UB}(2), and the union bound. 

\begin{proof}[\textbf{Proof of Lemma \ref{lem:Hellinger-UB}}]

(1) For any $\gamma\geq0$, taking the Chernoff bound we obtain
\begin{eqnarray*}
  \mathbb{P}\left\{ \sum_{i=1}^{n}\log\frac{Q_{i}\left(y_{i}\right)}{P_{i}\left(y_{i}\right)}\geq-\gamma\right\}  
  & \leq & \frac{\prod_{i=1}^{n}\mathbb{E}_{y\sim P_{i}}\left[\exp\left(\frac{1}{2}\log\frac{Q_{i}(y)}{P_{i}\left(y\right)}\right)\right]}{\exp\left(-\frac{1}{2}\gamma\right)} \\
	&=& \exp\left(\frac{1}{2}\gamma\right)\prod_{i=1}^{n}\left(1-\mathsf{H}^{2}\left(P_{i},Q_{i}\right)\right),
\end{eqnarray*}
where the last identity follows since
\begin{align*}
 & \mathbb{E}_{y\sim P}\left[\exp\left(\frac{1}{2}\log\frac{Q(y)}{ P (y) } \right) \right] 
 = \mathbb{E}_{y\sim P}\left[\sqrt{\frac{Q(y)}{P(y)}}\right]=\sum_{y}\sqrt{P(y)Q(y)}\\
 & \quad=\sum_{y}\frac{1}{2}\left\{ P(y)+Q(y)-\left(\sqrt{P(y)}-\sqrt{Q(y)}\right)^{2}\right\} = 1-\mathsf{H}^{2}\left(P,Q\right).
\end{align*}
 The claim (\ref{eq:sep-si-1}) then follows by observing that $1-\mathsf{H}^{2}\left(P_{i},Q_{i}\right)\leq\exp\left\{ -\mathsf{H}^{2}\left(P_{i},Q_{i}\right)\right\} $. 

(2) Taking expectation gives 
\[
  \sum_{i=1}^{n}\mathbb{E}_{y_{i}\sim P_{i}}\left[ \log \frac{Q_{i} \left(y_{i}\right) }{ P_{i} (y_{i}) }\right]
  = - \mathsf{KL} ( P_{i}\hspace{0.3em}\|\hspace{0.3em}Q_{i} ).
\]
From our assumption $\left|\log\frac{Q_{i}\left(y_{i}\right)}{P_{i}\left(y_{i}\right)}\right|\leq K$,
the Bernstein inequality ensures the existence of some constants $c_{0},c_{1}>0$
such that
\begin{align}
 & \sum_{i=1}^{n}\left(\log\frac{Q_{i}\left(y_{i}\right)}{P_{i}\left(y_{i}\right)} + \mathsf{KL}\left(P_{i}\hspace{0.3em}\|\hspace{0.3em}Q_{i}\right)\right)\nonumber \\
 & \quad\leq\text{ }c_{0}\sqrt{\sum_{i=1}^{n}{\mathsf{Var}}_{y_{i}\sim P_{i}} 
 \left[ \log\frac{Q_{i}\left(y_{i}\right)}{P_{i}\left(y_{i}\right)} \right]\log\left(mn\right)} 
 + c_{1}\max_{i}\left|\log\frac{Q_{i}\left(y_{i}\right)}{P_{i}\left(y_{i}\right)}\right| \log (mn)
  \label{eq:UB4}
\end{align}
 with probability at least $1-O\left(m^{-11}n^{-11}\right)$. This
taken collectively with the assumption
\[
  n\min_{1\leq i\leq n}\mathsf{KL}\left(P_{i}\hspace{0.3em}\|\hspace{0.3em}Q_{i}\right)
  \gg
  \sqrt{\sum_{i=1}^{n}{\mathsf{Var}}_{y_{i}\sim P_{i}} \left[\log\frac{Q_{i}\left(y_{i}\right)}{P_{i}\left(y_{i}\right)}\right] \log\left(mn\right)} + K\log\left(mn\right)
\]
establishes (\ref{eq:LB-LLR}).
\end{proof}

\section{Proof of Lemma \ref{lem:simple-condition-general}\label{sec:Proof-of-Lemma-simple-condition}}

To begin with, (\ref{eq:KLmax-general-m}) is an immediate consequence
from (\ref{eq:SNR-condition-general-m}) and Assumption \ref{assumption-KL-D}. Next, 
\[
  \frac{\mathsf{KL}_{\max}}{\max_{l,y}\left|\log\frac{P_{0}\left(y\right)}{P_{l}\left(y\right)}\right|}
  \geq  \frac{\mathsf{KL}_{\max}}{\max_{1\leq l<m}\left\Vert \log\frac{P_{0}}{P_{l}}\right\Vert _{1}}\overset{(\text{a})}{\gg}\frac{1}{\sqrt{np_{\mathrm{obs}}}}
  \overset{(\text{b})}{\gtrsim}\frac{\log\left(mn\right)}{np_{\mathrm{obs}}}.
\]
where (a) arises from (\ref{eq:SNR-condition-general-m}) together with Assumption \ref{assumption-KL-D}, and (b)
follows as soon as $p_{\mathrm{obs}}\gtrsim\frac{\log^{2}(mn)}{n}$.
This establishes the second property of (\ref{eq:sep-si-1-2}).

Next, we turn to the first condition of (\ref{eq:sep-si-1-2}). If 
$
  \max_{y}P_{0}\left(y\right)\lesssim1/\log\left(mn\right)
$ 
holds, then we can derive
\begin{eqnarray*}
 {\mathsf{Var}}_{y\sim P_{0}}\left[\log\frac{P_{0}\left(y\right)}{P_{l} (y)}\right] 
  & \leq & \sum_{y}P_{0}(y)\left(\log\frac{P_{0}\left(y\right)}{P_{l}\left(y\right)}\right)^{2}
  \lesssim\frac{1}{\log\left(mn\right)}\sum_{y}\left(\log\frac{P_{0}\left(y\right)}{P_{l}\left(y\right)}\right)^2 \\
 & \leq & \frac{1}{\log\left(mn\right)}\left\Vert \log\frac{P_{0}}{P_{l}}\right\Vert ^{2},
\end{eqnarray*}
which together with (\ref{eq:SNR-condition-general-m}) implies that
\begin{equation}
  \frac{\mathsf{KL}_{\min}^{2}}{\max_{0\leq l<m}{\mathsf{Var}}_{y\sim P_{0}} \left[\log\frac{P_{0}\left(y\right)}{P_{l}\left(y\right)}\right]}
  \gtrsim  \log\left(mn\right)\cdot\frac{ \mathsf{KL}_{\min}^{2} }{ \max_{1\leq l<m}\left\Vert \log\frac{P_{0}}{P_{l}}\right\Vert ^{2} }
  \geq  \frac{c_{4}\log\left(mn\right)}{np_{\mathrm{obs}}}.
  \label{eq:general-2-1}
\end{equation}

Finally, consider the complement regime in which $\max_{y}P_{0}\left(y\right) \gg \frac{1}{\log\left(mn\right)}$, 
which obeys $\max_{y}P_{0}\left(y\right) \gg 1/m$ as long as $m \gtrsim \log(n)$. Suppose
 $P_{0}(y_{0}) =\max_{y}P_{0}(y)$ and $\frac{P_{0}(y_{0})}{P_{l}(y_{0})}=\max_{j,y}\frac{P_{0}(y)}{P_{j}(y)}$ hold for some $l$ and $y_0$, then it follows from the preceding inequality $\max_{y}P_{0}\left(y\right) \gg 1/m$ that $\frac{P_{0}(y_{0})}{P_{l}(y_{0})} \geq 2$.
Under Assumption \ref{assumption-KL-D}, the KL divergence is lower bounded by  
\[
  \mathsf{KL}_{\min}\asymp\mathsf{KL}_{l} 
  \gtrsim \mathsf{H}^{2}\left(P_{0},P_{l}\right) 
  \geq \frac{1}{2}\left(\sqrt{P_{0}(y_{0})}-\sqrt{P_{l}(y_{0})}\right)^{2}
  \asymp  P_{0}(y_{0})  \gg  \frac{1}{\log(mn)}.
\]
%
Moreover, the assumption $m=n^{O(1)}$ taken collectively with Assumption
\ref{assumption-Pmin} ensures 
\[
  \max_{j,y}\left|\log\frac{P_{0}(y)}{P_{j}(y)}\right| \lesssim \log n,
\]
allowing one to bound 
\[
  {\mathsf{Var}}_{y\sim P_{0}}\left[\log\frac{P_{0}\left(y\right)}{P_{l}\left(y\right)}\right] 
  \leq \max_{j,y}\left|\log\frac{P_{0}(y)}{P_{j}(y)}\right|^{2}
  \lesssim \log^{2}n.
\]
Combining the above inequalities we obtain
\[
  \frac{\mathsf{KL}_{\min}^{2}}{\max_{0\leq l<m}{ \mathsf{Var} }_{y\sim P_{0}}\left[\log\frac{P_{0}\left(y\right)}{P_{l}\left(y\right)}\right]}
  \gg \frac{\frac{1}{\log^{2}\left(mn\right)}}{\log^{2}n}
  \gtrsim \frac{1}{\log^{4}(mn)}
  \gtrsim \frac{\log\left(mn\right)}{np_{\mathrm{obs}}},
\]
as long as $p_{\mathrm{obs}}\gtrsim \frac{\log^{5}(mn)}{n}$, as claimed.

\section{Proof of Lemma \ref{lem:moderate-deviation}\label{sec:Proof-of-Lemma-MDP}}

For the sake of conciseness, we will only prove (\ref{eq:MDP-k}), as (\ref{eq:MDP-k-2}) can be shown using the same argument. 
We recognize that (\ref{eq:MDP-k}) can be established by demonstrating
the moderate deviation principle with respect to 
\[
  S_{n} = \sum_{i=1}^{n}\left(\zeta_{i,n}-\mathbb{E}\left[\zeta_{i,n}\right]\right)\qquad\text{where }\zeta_{i,n}
  :=  \frac{ \log\frac{Q_{n}\left(y_{i,n}\right) } { P_{n}\left(y_{i,n}\right)}}{\sigma_{n}}.
\]
To be precise, the main step is to invoke \cite[Theorem 6]{merlevede2009bernstein}
to deduce that
\begin{eqnarray}
  a_{n}\log\mathbb{P}\left\{ \sqrt{a_{n}}\frac{S_{n}}{\sqrt{n}}>\tau\right\}   =  -\left(1+o_{n}\left(1\right)\right)\frac{\tau^{2}}{2}
  \label{eq:MDP-Sk}
\end{eqnarray}
for any constant $\tau>0$, where
\begin{equation}
	a_{n}:= \frac{\sigma_{n}^{2}}{n\mu_{n}^{2}}.
  \label{eq:defn-ak}
\end{equation}
In fact, one can connect the event $\left\{ \sqrt{a_{n}}\frac{S_{n}}{\sqrt{n}}>\tau\right\} $
with the likelihood ratio test because
\begin{eqnarray*}
  \mathbb{P}\left\{ \sqrt{a_{n}}\frac{S_{n}}{\sqrt{n}}>\tau\right\}  
  & = & \mathbb{P}\left\{ \sum_{i=1}^{n}\frac{\log\frac{Q_{n}(y_{i,n})}{P_{n}(y_{i,n})}+\mu_{n}}{\sigma_{n}}>n\frac{\mu_{n}}{\sigma_{n}}\tau\right\}  \\
	& = & \mathbb{P} \left\{ \sum_{i=1}^{n}\log\frac{Q_{n}(y_{i,n})}{P_{n}(y_{i,n})}>\left(\tau-1\right)n\mu_{n}  \right\} .
\end{eqnarray*}
This reveals that (\ref{eq:MDP-Sk}) is equivalent to 
\begin{eqnarray}
  \mathbb{P}\left\{ \sum\nolimits _{i=1}^{n} \log \frac{Q_{n}(y_{i,n})}{P_{n}(y_{i,n})} > \left( \tau-1 \right)n\mu_{n} \right\}  
  & = & \exp \left( -\left(1+o_{n}( 1) \right)n\frac{\tau^{2}\mu_{n}^{2}}{2\sigma_{n}^{2}} \right)
  \label{eq:MDP-proof}
\end{eqnarray}
as claimed in the first identity of (\ref{eq:MDP-k}). Moreover, by Lemma \ref{lem:KL-Var-Hel}, it is seen that $2\mu_{n}=\left(1+O\left(\sqrt{\mu_{n}}\right)\right)\sigma_{n}^{2}$
 in the regime considered herein, leading to the second identity of (\ref{eq:MDP-k}). Hence, it suffices to prove (\ref{eq:MDP-Sk}). 

In order to apply \cite[Theorem 6]{merlevede2009bernstein}, we need to check that the double indexed
sequence $$\left\{ \zeta_{i,n}-\mathbb{E}\left[\zeta_{i,n}\right]:\text{ }i\leq n\right\} _{n\geq1}$$
satisfies the conditions required therein. First of all, the independence
assumption gives 
\[
  v^{2} := \sup_{i,n}\left\{ \mathsf{Var}  \left[\zeta_{i,n}\right] + 2 \sum\nolimits _{j>i}\left|{\mathsf{Cov}} \left(\zeta_{i,n},\zeta_{j,n}\right)\right|\right\} =\sup_{i,n}\left\{ { \mathsf{Var} }\left[\zeta_{i,n}\right]\right\} = 1 < \infty.
\]
Second, 
\[
  \frac{{\mathsf{Var}}\left[S_{n}\right]}{n}=\frac{\sum_{i=1}^{n}{ \mathsf{Var} } \left[ \zeta_{i,n} \right]}{n} = 1 > 0.
\]
Third, it follows from Lemma \ref{lemma:V-KL-UB} that
\begin{eqnarray}
  M_{n}:=\sup_{i}\left|\zeta_{i,n}\right| = \frac{\sup_{i}\left|\log\frac{Q_{n}\left(y_{i,n}\right)}{P_{n}\left(y_{i,n}\right)}\right|}{\sigma_{n}} & \leq & \frac{1}{\min\left\{ P_{n}\left(y_{i,n}\right),Q_{n}\left(y_{i,n}\right)\right\} } \frac{\sqrt{2\mu_{n}}}{\sigma_{n}}
  \asymp   \frac{\sqrt{\mu_{n}}}{\sigma_{n}}.
  \label{eq:bound-Mk}
\end{eqnarray}
The assumption $\frac{\mu_{n}^{2}}{\sigma_{n}^{2}}\asymp\frac{\log n}{n}$
further gives
\[
  a_{n} = \frac{ \sigma_{n}^{2} }{ n\mu_{n}^{2} } \asymp \frac{1}{\log n} = o_{n} (1).
\]
Moreover, making use of the bound (\ref{eq:bound-Mk}) as well as
the assumptions $\frac{\mu_{n}^{2}}{\sigma_{n}^{2}}\asymp\frac{\log n}{n}$
and $\mu_{n}\gtrsim\frac{\log n}{n}$, we derive 
\[
  \frac{na_{n}}{M_{n}^{2}\log^{4}n}  \gtrsim  \frac{n\frac{\sigma_{n}^{2}}{n\mu_{n}^{2}}}{\frac{\mu_{n}}{\sigma_{n}^{2}}\cdot\log^{4}n} = \frac{\sigma_{n}^{4}}{\mu_{n}^{3}\log^{4}n}=\frac{\sigma_{n}^{4}}{\mu_{n}^{4}}\cdot\frac{\mu_{n}}{\log^{4}n} \gtrsim \frac{n}{\log^{5}n} \rightarrow \infty.
\]
With these conditions in place, we can invoke \cite[Theorem 6]{merlevede2009bernstein}
to establish (\ref{eq:MDP-Sk}).

\bibliographystyle{IEEEtran}
\bibliography{bib_alignment}

\newcommand{\etalchar}[1]{$^{#1}$}
\begin{thebibliography}{MWCC17}

\bibitem[ABH16]{abbe2014exact}
E.~Abbe, A.~Bandeira, and G.~Hall.
\newblock Exact recovery in the stochastic block model.
\newblock {\em IEEE Transactions on Information Theory}, 62(1):471--487, 2016.

\bibitem[AS15]{abbe2015community}
E.~Abbe and C.~Sandon.
\newblock Community detection in general stochastic block models: fundamental
  limits and efficient recovery algorithms.
\newblock {\em arXiv preprint arXiv:1503.00609}, 2015.

\bibitem[BAJK11]{berglund2011two}
J.~Berglund, H.~Ahlstr{\"o}m, L.~Johansson, and J.~Kullberg.
\newblock Two-point dixon method with flexible echo times.
\newblock {\em Magnetic resonance in medicine}, 65(4):994--1004, 2011.

\bibitem[BBV16]{bandeira2016low}
A.~Bandeira, N.~Boumal, and V.~Voroninski.
\newblock On the low-rank approach for semidefinite programs arising in
  synchronization and community detection.
\newblock {\em arXiv preprint arXiv:1602.04426}, 2016.

\bibitem[BCSZ14]{bandeira2014multireference}
A.~S. Bandeira, M.~Charikar, A.~Singer, and A.~Zhu.
\newblock Multireference alignment using semidefinite programming.
\newblock In {\em Conference on Innovations in Theoretical Computer Science},
  pages 459--470, 2014.

\bibitem[BDG{\etalchar{+}}16]{brito2016recovery}
G.~Brito, I.~Dumitriu, S.~Ganguly, C.~Hoffman, and L.~V. Tran.
\newblock Recovery and rigidity in a regular stochastic block model.
\newblock In {\em Symposium on Discrete Algorithms}, pages 1589--1601, 2016.

\bibitem[BKR11]{blake2011markov}
A.~Blake, P.~Kohli, and C.~Rother.
\newblock {\em Markov random fields for vision and image processing}.
\newblock Mit Press, 2011.

\bibitem[BNS16]{bhojanapalli2016global}
S.~Bhojanapalli, B.~Neyshabur, and N.~Srebro.
\newblock Global optimality of local search for low rank matrix recovery.
\newblock {\em arXiv preprint arXiv:1605.07221}, 2016.

\bibitem[Bou16]{boumal2016nonconvex}
N.~Boumal.
\newblock Nonconvex phase synchronization.
\newblock {\em arXiv preprint arXiv:1601.06114}, 2016.

\bibitem[BvH14]{bandeira2014sharp}
A.~Bandeira and R.~van Handel.
\newblock Sharp nonasymptotic bounds on the norm of random matrices with
  independent entries.
\newblock {\em arXiv preprint arXiv:1408.6185}, 2014.

\bibitem[CC17]{chen2015solving}
Y.~Chen and E.~Candes.
\newblock Solving random quadratic systems of equations is nearly as easy as
  solving linear systems.
\newblock {\em Communications on Pure and Applied Mathematics}, 70(5):822 --
  883, May 2017.

\bibitem[CFG{\etalchar{+}}15]{chang2015shapenet}
A.~Chang, T.~Funkhouser, L.~Guibas, P.~Hanrahan, Q.~Huang, Z.~Li, S.~Savarese,
  M.~Savva, S.~Song, H.~Su, J.~Xiao, L.~Yi, and F.~Yu.
\newblock Shape{N}et: An information-rich 3{D} model repository.
\newblock {\em arXiv preprint arXiv:1512.03012}, 2015.

\bibitem[CGH14]{chen2014near}
Y.~Chen, L.~J. Guibas, and Q.-X. Huang.
\newblock Near-optimal joint object matching via convex relaxation.
\newblock {\em International Conference on Machine Learning (ICML)}, pages 100
  -- 108, 2014.

\bibitem[CGT12]{chaudhuri2012spectral}
K.~Chaudhuri, F.~C. Graham, and A.~Tsiatas.
\newblock Spectral clustering of graphs with general degrees in the extended
  planted partition model.
\newblock {\em Journal of Machine Learning Research}, 23:35--1, 2012.

\bibitem[CJSC13]{chen2013low}
Y.~Chen, A.~Jalali, S.~Sanghavi, and C.~Caramanis.
\newblock Low-rank matrix recovery from errors and erasures.
\newblock {\em IEEE Transactions on Information Theory}, 59(7):4324--4337,
  2013.

\bibitem[CKST16]{chen2016community}
Y.~Chen, G.~Kamath, C.~Suh, and D.~Tse.
\newblock Community recovery in graphs with locality.
\newblock {\em International Conference on Machine Learning}, June 2016.

\bibitem[CL16]{chi2016kaczmarz}
Y.~Chi and Y.~M. Lu.
\newblock {K}aczmarz method for solving quadratic equations.
\newblock {\em IEEE Signal Processing Letters}, 23(9):1183--1187, 2016.

\bibitem[CLM15]{cai2015optimal}
T.~Cai, X.~Li, and Z.~Ma.
\newblock Optimal rates of convergence for noisy sparse phase retrieval via
  thresholded {W}irtinger flow.
\newblock {\em arXiv preprint arXiv:1506.03382}, 2015.

\bibitem[CLMW11]{CanLiMaWri09}
E.~Cand\`{e}s, X.~Li, Y.~Ma, and J.~Wright.
\newblock Robust principal component analysis?
\newblock {\em Journal of ACM}, 58(3):11:1--11:37, Jun 2011.

\bibitem[CLS15]{candes2015phase}
E.~Candes, X.~Li, and M.~Soltanolkotabi.
\newblock Phase retrieval via wirtinger flow: Theory and algorithms.
\newblock {\em Information Theory, IEEE Transactions on}, 61(4):1985--2007,
  2015.

\bibitem[CO10]{coja2010graph}
A.~Coja-Oghlan.
\newblock Graph partitioning via adaptive spectral techniques.
\newblock {\em Combinatorics, Probability and Computing}, 19(02):227--284,
  2010.

\bibitem[CRV15]{chin2015stochastic}
P.~Chin, A.~Rao, and V.~Vu.
\newblock Stochastic block model and community detection in the sparse graphs:
  A spectral algorithm with optimal rate of recovery.
\newblock {\em arXiv preprint arXiv:1501.05021}, 2(4), 2015.

\bibitem[CSG16]{chen2015information}
Y.~Chen, C.~Suh, and A.~J. Goldsmith.
\newblock Information recovery from pairwise measurements.
\newblock {\em IEEE Transactions on Information Theory}, 62(10):5881 -- 5905,
  2016.

\bibitem[CSPW11]{chandrasekaran2011rank}
V.~Chandrasekaran, S.~Sanghavi, P.~Parrilo, and A.~Willsky.
\newblock Rank-sparsity incoherence for matrix decomposition.
\newblock {\em SIAM Journal on Optimization}, 21(2), 2011.

\bibitem[CSX12]{chen2012clustering}
Y.~Chen, S.~Sanghavi, and H.~Xu.
\newblock Clustering sparse graphs.
\newblock {\em NIPS}, December 2012.

\bibitem[CW15]{chen2015fast}
Y.~Chen and M.~J. Wainwright.
\newblock Fast low-rank estimation by projected gradient descent: General
  statistical and algorithmic guarantees.
\newblock {\em arXiv preprint arXiv:1509.03025}, 2015.

\bibitem[DM15]{deshpande2015finding}
Y.~Deshpande and A.~Montanari.
\newblock Finding hidden cliques of size $\sqrt{N/e}$ in nearly linear time.
\newblock {\em Foundations of Computational Mathematics}, 15(4):1069--1128,
  2015.

\bibitem[DMR14]{deshpande2014cone}
Y.~Deshpande, A.~Montanari, and E.~Richard.
\newblock Cone-constrained principal component analysis.
\newblock In {\em Advances in Neural Information Processing Systems}, pages
  2717--2725, 2014.

\bibitem[Dra00]{dragomir2000upper}
S.~Dragomir.
\newblock Upper and lower bounds for {C}sisz{\'a}r's f-divergence in terms of
  the {K}ullback-{L}eibler distance and applications.
\newblock {\em Inequalities for the Csisz{\'a}r's f-divergence in Information
  Theory}, 2000.

\bibitem[DSSSC08]{duchi2008efficient}
J.~Duchi, S.~Shalev-Shwartz, Y.~Singer, and T.~Chandra.
\newblock Efficient projections onto the $\ell_1$-ball for learning in high
  dimensions.
\newblock In {\em International conference on Machine learning}, pages
  272--279, 2008.

\bibitem[GBM16]{gao2016geometry}
T.~Gao, J.~Brodzki, and S.~Mukherjee.
\newblock The geometry of synchronization problems and learning group actions.
\newblock {\em arXiv preprint arXiv:1610.09051}, 2016.

\bibitem[GBP07]{giorgi2007shape}
D.~Giorgi, S.~Biasotti, and L.~Paraboschi.
\newblock Shape retrieval contest 2007: Watertight models track.
\newblock {\em {SHREC} competition}, 8, 2007.

\bibitem[GLM16]{ge2016matrix}
R.~Ge, J.~D. Lee, and T.~Ma.
\newblock Matrix completion has no spurious local minimum.
\newblock {\em arXiv preprint arXiv:1605.07272}, 2016.

\bibitem[GRSY15]{globerson2015hard}
A.~Globerson, T.~Roughgarden, D.~Sontag, and C.~Yildirim.
\newblock How hard is inference for structured prediction?
\newblock In {\em International Conference on Machine Learning}, pages
  2181--2190, 2015.

\bibitem[GV15]{guedon2015community}
O.~Gu{\'e}don and R.~Vershynin.
\newblock Community detection in sparse networks via {G}rothendieck's
  inequality.
\newblock {\em Probability Theory and Related Fields}, pages 1--25, 2015.

\bibitem[GVL12]{golub2012matrix}
G.~H. Golub and C.~F. Van~Loan.
\newblock {\em Matrix computations}, volume~3.
\newblock JHU Press, 2012.

\bibitem[GWL{\etalchar{+}}10]{ganesh2010dense}
A.~Ganesh, J.~Wright, X.~Li, E.~Candes, and Y.~Ma.
\newblock Dense error correction for low-rank matrices via principal component
  pursuit.
\newblock In {\em International Symposium on Information Theory}, pages
  1513--1517, 2010.

\bibitem[HB10]{hein2010inverse}
M.~Hein and T.~B{\"u}hler.
\newblock An inverse power method for nonlinear eigenproblems with applications
  in 1-spectral clustering and sparse pca.
\newblock In {\em Neural Information Processing Systems}, pages 847--855, 2010.

\bibitem[HCG14]{guibas2014scalable}
Q.~Huang, Y.~Chen, and L.~Guibas.
\newblock Scalable semidefinite relaxation for maximum a posterior estimation.
\newblock {\em International Conference on Machine Learning}, 2014.

\bibitem[HG13]{huang2013consistent}
Q.~Huang and L.~Guibas.
\newblock Consistent shape maps via semidefinite programming.
\newblock {\em Computer Graphics Forum}, 32(5):177--186, 2013.

\bibitem[HKHL10]{hernando2010robust}
D.~Hernando, P.~Kellman, J.~Haldar, and Z.-P. Liang.
\newblock Robust water/fat separation in the presence of large field
  inhomogeneities using a graph cut algorithm.
\newblock {\em Magnetic resonance in medicine}, 63(1):79--90, 2010.

\bibitem[HSG13]{huang2013fine}
Q.-X. Huang, H.~Su, and L.~Guibas.
\newblock Fine-grained semi-supervised labeling of large shape collections.
\newblock {\em ACM Transactions on Graphics}, 32(6):190, 2013.

\bibitem[HWX16]{hajek2015achieving}
B.~Hajek, Y.~Wu, and J.~Xu.
\newblock Achieving exact cluster recovery threshold via semidefinite
  programming.
\newblock {\em IEEE Transactions on Information Theory}, 62(5):2788--2797,
  2016.

\bibitem[JCSX11]{jalali2011clustering}
A.~Jalali, Y.~Chen, S.~Sanghavi, and H.~Xu.
\newblock Clustering partially observed graphs via convex optimization.
\newblock {\em International Conf. on Machine Learning (ICML)}, 2011.

\bibitem[JMRT16]{javanmard2016phase}
A.~Javanmard, A.~Montanari, and F.~Ricci-Tersenghi.
\newblock Phase transitions in semidefinite relaxations.
\newblock {\em Proceedings of the National Academy of Sciences},
  113(16):E2218--E2223, 2016.

\bibitem[JNRS10]{journee2010generalized}
M.~Journ{\'e}e, Y.~Nesterov, P.~Richt{\'a}rik, and R.~Sepulchre.
\newblock Generalized power method for sparse principal component analysis.
\newblock {\em Journal of Machine Learning Research}, 11(Feb):517--553, 2010.

\bibitem[JNS13]{jain2013low}
P.~Jain, P.~Netrapalli, and S.~Sanghavi.
\newblock Low-rank matrix completion using alternating minimization.
\newblock In {\em ACM symposium on Theory of computing}, pages 665--674. ACM,
  2013.

\bibitem[JV87]{jonker1987shortest}
R.~Jonker and A.~Volgenant.
\newblock A shortest augmenting path algorithm for dense and sparse linear
  assignment problems.
\newblock {\em Computing}, 38(4):325--340, 1987.

\bibitem[KLM{\etalchar{+}}12]{kim2012exploring}
V.~G. Kim, W.~Li, N.~J. Mitra, S.~DiVerdi, and T.~Funkhouser.
\newblock Exploring collections of 3d models using fuzzy correspondences.
\newblock {\em ACM Transactions on Graphics (TOG)}, 31(4):54, 2012.

\bibitem[KMO10a]{keshavan2010few}
R.~Keshavan, A.~Montanari, and S.~Oh.
\newblock Matrix completion from a few entries.
\newblock {\em IEEE Transactions on Information Theory}, 56(6):2980--2998,
  2010.

\bibitem[KMO10b]{keshavan2010matrix}
R.~Keshavan, A.~Montanari, and S.~Oh.
\newblock Matrix completion from noisy entries.
\newblock {\em Journal of Machine Learning Research}, 99:2057--2078, 2010.

\bibitem[KN11]{karrer2011stochastic}
B.~Karrer and M.~E. Newman.
\newblock Stochastic blockmodels and community structure in networks.
\newblock {\em Physical Review E}, 83(1):016107, 2011.

\bibitem[LLSW16]{li2016rapid}
X.~Li, S.~Ling, T.~Strohmer, and K.~Wei.
\newblock Rapid, robust, and reliable blind deconvolution via nonconvex
  optimization.
\newblock {\em arXiv preprint arXiv:1606.04933}, 2016.

\bibitem[LSJR16]{lee2016gradient}
J.~D. Lee, M.~Simchowitz, M.~I. Jordan, and B.~Recht.
\newblock Gradient descent only converges to minimizers.
\newblock In {\em Conference on Learning Theory}, pages 1246--1257, 2016.

\bibitem[LYS16]{liu2016estimation}
H.~Liu, M.-C. Yue, and A.~M.-C. So.
\newblock On the estimation performance and convergence rate of the generalized
  power method for phase synchronization.
\newblock {\em arXiv preprint arXiv:1603.00211}, 2016.

\bibitem[Mas14]{massoulie2014community}
L.~Massouli{\'e}.
\newblock Community detection thresholds and the weak {R}amanujan property.
\newblock In {\em Symposium on Theory of Computing}, pages 694--703. ACM, 2014.

\bibitem[MBM16]{mei2016landscape}
S.~Mei, Y.~Bai, and A.~Montanari.
\newblock The landscape of empirical risk for non-convex losses.
\newblock {\em arXiv preprint arXiv:1607.06534}, 2016.

\bibitem[MNS14]{mossel2014consistency}
E.~Mossel, J.~Neeman, and A.~Sly.
\newblock Consistency thresholds for binary symmetric block models.
\newblock {\em Arxiv preprint}, 2014.

\bibitem[MPR{\etalchar{+}}09]{merlevede2009bernstein}
F.~Merlev{\`e}de, M.~Peligrad, E.~Rio, et~al.
\newblock Bernstein inequality and moderate deviations under strong mixing
  conditions.
\newblock In {\em High dimensional probability V: the Luminy volume}, pages
  273--292. Institute of Mathematical Statistics, 2009.

\bibitem[MWCC17]{ma2017implicit}
C.~Ma, K.~Wang, Y.~Chi, and Y.~Chen.
\newblock Implicit regularization in nonconvex statistical estimation: Gradient
  descent converges linearly for phase retrieval, matrix completion and blind
  deconvolution.
\newblock {\em arXiv preprint arXiv:1711.10467}, 2017.

\bibitem[NJS13]{netrapalli2013phase}
P.~Netrapalli, P.~Jain, and S.~Sanghavi.
\newblock Phase retrieval using alternating minimization.
\newblock In {\em Advances in Neural Information Processing Systems}, pages
  2796--2804, 2013.

\bibitem[OH11]{oymak2011finding}
S.~Oymak and B.~Hassibi.
\newblock Finding dense clusters via" low rank+ sparse" decomposition.
\newblock {\em arXiv preprint arXiv:1104.5186}, 2011.

\bibitem[PKB{\etalchar{+}}16]{park2016provable}
D.~Park, A.~Kyrillidis, S.~Bhojanapalli, C.~Caramanis, and S.~Sanghavi.
\newblock Provable non-convex projected gradient descent for a class of
  constrained matrix optimization problems.
\newblock {\em arXiv preprint arXiv:1606.01316}, 2016.

\bibitem[PKS13]{PachauriKS13}
D.~Pachauri, R.~Kondor, and V.~Singh.
\newblock Solving the multi-way matching problem by permutation
  synchronization.
\newblock In {\em Advanced in Neural Information Processing Systems (NIPS)},
  2013.

\bibitem[RL06]{ravikumar2006quadratic}
P.~Ravikumar and J.~Lafferty.
\newblock Quadratic programming relaxations for metric labeling and {M}arkov
  random field {MAP} estimation.
\newblock {\em International Conference on Machine Learning}, pages 737--744,
  2006.

\bibitem[SBE14]{shechtman2014gespar}
Y.~Shechtman, A.~Beck, and Y.~C. Eldar.
\newblock {GESPAR}: Efficient phase retrieval of sparse signals.
\newblock {\em IEEE transactions on signal processing}, 62(4):928--938, 2014.

\bibitem[Seg00]{seginer2000expected}
Y.~Seginer.
\newblock The expected norm of random matrices.
\newblock {\em Combinatorics, Probability and Computing}, 9(02):149--166, 2000.

\bibitem[SHSS16]{Shen2016NIPS}
Y.~Shen, Q.~Huang, N.~Srebro, and S.~Sanghavi.
\newblock Normalized spectral map synchronization.
\newblock In {\em Neural Information Processing Systems (NIPS)}, 2016.

\bibitem[Sin11]{singer2011angular}
A.~Singer.
\newblock Angular synchronization by eigenvectors and semidefinite programming.
\newblock {\em Applied and computational harmonic analysis}, 30(1):20--36,
  2011.

\bibitem[SL15]{sun2015guaranteed}
R.~Sun and Z.-Q. Luo.
\newblock Guaranteed matrix completion via nonconvex factorization.
\newblock In {\em Symposium on Foundations of Computer Science (FOCS)}, pages
  270--289. IEEE, 2015.

\bibitem[SQW15a]{sun2015complete}
J.~Sun, Q.~Qu, and J.~Wright.
\newblock Complete dictionary recovery using nonconvex optimization.
\newblock In {\em Proceedings of the 32nd International Conference on Machine
  Learning}, pages 2351--2360, 2015.

\bibitem[SQW15b]{sun2015nonconvex}
J.~Sun, Q.~Qu, and J.~Wright.
\newblock When are nonconvex problems not scary?
\newblock {\em arXiv preprint arXiv:1510.06096}, 2015.

\bibitem[SQW16]{sun2016geometric}
J.~Sun, Q.~Qu, and J.~Wright.
\newblock A geometric analysis of phase retrieval.
\newblock {\em arXiv:1602.06664}, 2016.

\bibitem[SV15]{sason2015f}
I.~Sason and S.~Verd{\'u}.
\newblock f-divergence inequalities.
\newblock {\em arxiv preprint arXiv:1508.00335}, 2015.

\bibitem[SVV14]{si2014haplotype}
H.~Si, H.~Vikalo, and S.~Vishwanath.
\newblock Haplotype assembly: An information theoretic view.
\newblock In {\em Information Theory Workshop (ITW), 2014 IEEE}, pages
  182--186. IEEE, 2014.

\bibitem[Tal95]{talagrand1995concentration}
M.~Talagrand.
\newblock Concentration of measure and isoperimetric inequalities in product
  spaces.
\newblock {\em Publications Math{\'e}matiques de l'Institut des Hautes Etudes
  Scientifiques}, 81(1):73--205, 1995.

\bibitem[Tao12]{tao2012topics}
T.~Tao.
\newblock {\em Topics in random matrix theory}, volume 132.
\newblock AMS Bookstore, 2012.

\bibitem[TBSR15]{tu2015low}
S.~Tu, R.~Boczar, M.~Soltanolkotabi, and B.~Recht.
\newblock Low-rank solutions of linear matrix equations via procrustes flow.
\newblock {\em arXiv preprint arXiv:1507.03566}, 2015.

\bibitem[Tro15]{tropp2015introduction}
J.~A. Tropp.
\newblock An introduction to matrix concentration inequalities.
\newblock {\em to appear in Foundations and Trends in Machine Learning}, 2015.

\bibitem[Tsy08]{tsybakov2008introduction}
A.~B. Tsybakov.
\newblock {\em Introduction to nonparametric estimation}.
\newblock Springer Science \& Business Media, 2008.

\bibitem[WGE16]{wang2016solving}
G.~Wang, G.~B. Giannakis, and Y.~C. Eldar.
\newblock Solving systems of random quadratic equations via truncated amplitude
  flow.
\newblock {\em Advances in Neural Information Processing Systems}, 2016.

\bibitem[WLS12]{wang2012automated}
Z.~Wang, W.~Li, and Y.~Sun.
\newblock Automated network analysis by projected power method.
\newblock In {\em International Conference on Information and Automation},
  pages 626--630. IEEE, 2012.

\bibitem[WS13]{wang2013exact}
L.~Wang and A.~Singer.
\newblock Exact and stable recovery of rotations for robust synchronization.
\newblock {\em Information and Inference}, 2013.

\bibitem[YPCC16]{yi2016fast}
X.~Yi, D.~Park, Y.~Chen, and C.~Caramanis.
\newblock Fast algorithms for robust {PCA} via gradient descent.
\newblock {\em Advances in Neural Information Processing Systems}, 2016.

\bibitem[YZ13]{yuan2013truncated}
X.-T. Yuan and T.~Zhang.
\newblock Truncated power method for sparse eigenvalue problems.
\newblock {\em Journal of Machine Learning Research}, 14(Apr):899--925, 2013.

\bibitem[ZCB{\etalchar{+}}17]{zhang2016resolving}
T.~Zhang, Y.~Chen, S.~Bao, M.~Alley, J.~Pauly, B.~Hargreaves, and
  S.~Vasanawala.
\newblock Resolving phase ambiguity in dual-echo dixon imaging using a
  projected power method.
\newblock {\em Magnetic Resonance in Medicine}, 77(5):2066 -- 2076, 2017.

\bibitem[ZCL16]{zhang2016provable}
H.~Zhang, Y.~Chi, and Y.~Liang.
\newblock Provable non-convex phase retrieval with outliers: Median truncated
  {W}irtinger flow.
\newblock {\em International conference on machine learning}, June 2016.

\bibitem[ZL15]{zheng2015convergent}
Q.~Zheng and J.~Lafferty.
\newblock A convergent gradient descent algorithm for rank minimization and
  semidefinite programming from random linear measurements.
\newblock In {\em NIPS}, pages 109--117, 2015.

\bibitem[ZL16]{zhang2016reshaped}
H.~Zhang and Y.~Liang.
\newblock Reshaped {W}irtinger flow for solving quadratic systems of equations.
\newblock {\em Advances in Neural Information Processing Systems}, 2016.

\bibitem[ZWL15]{zhao2015nonconvex}
T.~Zhao, Z.~Wang, and H.~Liu.
\newblock Nonconvex low rank matrix factorization via inexact first order
  oracle.
\newblock Advances in Neural Information Processing Systems, 2015.

\end{thebibliography}

\end{document}